\newcommand{\xto}[1]{\xrightarrow{#1}}
\newcommand{\ppa}{\textup{PPA}}
\newcommand{\ccfg}{\textup{CCFG}}
\newcommand{\leven}{\textup{$\mathcal L_{even}$} }
\let\c@theorem\relax
\declaretheorem[name={Theorem},style=plain]{theorem}
\let\c@proposition\relax
\declaretheorem[name={Proposition},style=plain,numberlike=theorem]{proposition}
\let\c@lemma\relax
\declaretheorem[name={Lemma},style=plain,numberlike=theorem]{lemma}
\let\c@definition\relax
\declaretheorem[name={Definition},style=definition,numberlike=theorem]{definition}
\let\c@corollary\relax
\declaretheorem[name={Corollary},style=definition,numberlike=theorem]{corollary}
\let\c@remark\relax
\declaretheorem[name={Remark},style=remark,numberlike=theorem]{remark}
\let\c@example\relax
\declaretheorem[name={Example},style=remark,numberlike=theorem]{example}
\definecolor{itemi}{RGB}{60,60,61}
\title{Weakly-unambiguous Parikh automata and their link to holonomic series}
\newcommand{\crochet}[1]{\left[\kern-0.15em\left[  #1  \right]\kern-0.15em\right]}
\newcommand{\intint}[2]{\left[\kern-0.15em\left[  #1, #2  \right]\kern-0.15em\right]}
\newcommand{\NN}{\mathbb{N}}
\renewcommand{\AA}{\mathcal{A}}
\newcommand{\BB}{\mathcal{B}}
\newcommand{\CC}{\mathcal{C}}
\newcommand{\EE}{\mathcal{E}}
\newcommand{\OO}{\mathcal{O}}
\newcommand{\mdeg}{\deg_m}
\colorlet{darkred}{red!60!black}
\colorlet{darkgreen}{green!60!black}
\newcommand{\intro}[1]{\textcolor{darkred}{\emph{#1}}}
\newcommand{\vect}[1]{\boldsymbol{#1}}
\newcommand{\era}[1]{\xrightarrow{#1}}
\newcommand{\eRa}[1]{\xRightarrow{#1}}
\newcommand{\eRb}[2]{\xRightarrow[#2]{#1}}
\newcommand{\qinit}{q_I}
\newcommand{\normF}{S_\infty}
\newcommand{\eqdef}{:=}
\DeclareMathOperator{\RCM}{RCM}
\DeclareMathOperator{\RBCM}{RBCM}
\DeclareMathOperator{\PA}{PA}
\DeclareMathOperator{\lcm}{lcm}
\newif\ifdraft\draftfalse
\newcommand{\ac}[1]{\textcolor{darkgreen}{[ #1 - Arnaud]}}
\newcommand{\cn}[1]{\textcolor{blue}{[ #1 - Cyril]}}
\newcommand{\fk}[1]{\textcolor{red}{[ #1 - Florent]}}
\newcommand{\ab}[1]{\textcolor{orange}{[ #1 - Alin]}}
\newcommand\cn[1]{}
\newcommand\ac[1]{}
\newcommand{\fk}[1]{}
\newcommand{\ab}[1]{}
\newcommand{\reviewer}[1]{ \marginpar{\it\footnotesize #1}}
\titlerunning{Weakly-unambiguous Parikh automata and their link to holonomic series}
\author{Alin Bostan}{Inria and Universit\'e Paris-Saclay, 1 rue Honor\'e d'Estienne d'Orves, 91120 Palaiseau, France}{Alin.Bostan@inria.fr}{}{}
\author{Arnaud Carayol}{LIGM, Univ Gustave Eiffel, CNRS, F77454 Marne-la-Vallée, France}{arnaud.carayol@u-pem.fr}{}{}
\author{Florent Koechlin}{LIGM, Univ Gustave Eiffel, CNRS, F77454 Marne-la-Vallée, France}{florent.koechlin@u-pem.fr}{}{}
\author{Cyril Nicaud}{LIGM, Univ Gustave Eiffel, CNRS, F77454 Marne-la-Vallée, France}{cyril.nicaud@u-pem.fr}{}{}
\authorrunning{A. Bostan, A. Carayol, F. Koechlin and C. Nicaud}
\keywords{generating series, holonomy, ambiguity, reversal bounded counter machine, {P}arikh automata}
\begin{document}

\maketitle

\begin{abstract}
We investigate the connection between properties of formal languages and
properties of their generating series, with a focus on the class of
\emph{holonomic} power series. We first prove a strong version of a conjecture
by Castiglione and Massazza: weakly-unambiguous Parikh automata are equivalent
to unambiguous two-way reversal bounded counter machines, and
their multivariate generating series are holonomic. We then show that the
converse is not true: we construct a language whose generating series is
algebraic (thus holonomic), but which is inherently weakly-ambiguous as a
Parikh automata language. Finally, we prove an effective decidability result
for the inclusion problem for weakly-unambiguous Parikh automata, and provide an upper-bound on to its complexity.
 \end{abstract}
\newpage
\section{Introduction}
This article investigates the link between \emph{holonomic}  (or \emph{D-finite}) power series and formal languages. We consider the classical setting in which this connection is established via the generating series~$L(x)=\sum_{n\geq 0} \ell_n x^n$ counting the number  $\ell_n$ of words of length~$n$ in a given language~$\mathcal L$.

On the  languages side, the Chomsky–Schützenberger hierarchy~\cite{chomschut63} regroups  languages in classes of increasing complexity: regular, context-free, context-sensitive and recursively enumerable. For power series, a similar hierarchy exists, consisting of the rational, algebraic and holonomic series. The first two levels of each hierarchy share a strong connection,
as the generating series of a regular (resp. unambiguous context-free) language is a rational (resp. algebraic) power series.

This connection has borne fruits both in formal language theory and in combinatorics. In combinatorics, finite automata and unambiguous grammars are routinely used to establish rationality and algebraicity of particular power series. In formal languages, this connection was (implicitly) used to give polynomial-time algorithms for the inclusion and universality tests for unambiguous finite automata~\cite{Stearns85}. In~\cite{flajolet87}, Flajolet uses the connection between unambiguous context-free grammars and algebraic series to prove the inherent ambiguity  of certain context-free languages, solving several conjectures with this tool. Using analytic criteria on the series (for instance, the existence of infinitely many singularities), he establishes that the series of these context-free languages is not algebraic. Hence these languages cannot be described by unambiguous context-free grammars and are therefore inherently ambiguous.

\medskip

In this article we propose to extend the connection to holonomic series. Holonomic series enjoy non-trivial closure properties whose algorithmic counterparts are actively studied in computer algebra. Our aim is to show that these advances can be leveraged to obtain non-trivial results in the formal languages and verification worlds.
The work of extending the connection was already  initiated by Massazza in~\cite{Massazza93}, where he introduces two families of languages, named RCM and LCL, whose generating series are holonomic. These classes are, however, not  captured by well-known models of automata, and this limits their appeal.  Recently, Castiglione and Massazza addressed this issue and conjectured that RCM contains the languages accepted by deterministic one-way reversal bounded machines (RBCM for short)~\cite{Castiglione17}; Massazza proved the result for RBCM for two subclasses of one-way deterministic RBCM~\cite{Massazza17,Massazza18}.
This conjecture hints that the class RCM is related to models of automata such as  RBCM, which are used in  program verification.

\medskip
Our first contribution is to prove a stronger version of this conjecture. We show that RCM and LCL  respectively correspond to the languages accepted  by weakly-unambiguous\footnote{We use the term weakly-unambiguous here to avoid a possible confusion with the class of unambiguous PA defined in~\cite{Cadilhac13} which is strictly contained in our class. Our notion of non-ambiguity is the standard one: every word has at most one accepting computation (this is detailed in Remark~\ref{rmk:comp}).} version of Parikh automata (PA, for short) \cite{Klaedtke03} and pushdown Parikh automata. In terms of RBCM, these classes correspond to unambiguous two-way RBCM and unambiguous one-way RBCM enriched with a stack. Parikh automata are also  commonly used in program verification. In view of the literature, these results might seem expected but they still require a careful adaptation of the standard techniques in the absence of a stack and become even more involved when a stack is added.

After having established the relevance of the classes of languages under study, we provide two consequences of the holonomicity of their associated generating series.

The first consequence follows Flajolet’s approach mentioned previously and gives criteria to establish the inherent weak-ambiguity for languages accepted by PA or pushdown PA, by proving that their generating series are not holonomic. These criteria are sufficient but not necessary; this is not surprising as the inherent ambiguity is undecidable for languages accepted by PA. Yet, the resulting method captures non-trivial examples with quite short and elegant proofs. In contrast, we give an example of inherently weakly-ambiguous PA language having a holonomic series (and therefore not amenable to the analytic method) for which we prove inherent weak-ambiguity \emph{by hand}. The proof is quite involved but shows the inherent ambiguity of this language for a much larger class of automata (i.e., PA whose semi-linear sets are replaced by arbitrary recursive sets). 

The second consequence is of an algorithmic nature. We focus on the inclusion
problem for weakly-unambiguous PA, whose decidability can be deduced from
Castiglione and Massazza's work~\cite{Castiglione17}. Here our contribution is
an effective decidability result: we derive a concrete bound~$B$, depending on
the size of the representation of the two PAs, such that the inclusion holds
if and only if the languages are included when considering words up to the
length~$B$. This bound $B$ is obtained by a careful analysis of the proofs
establishing the closure properties of holonomic series (in several
variables), notably under Hadamard product and specialisation. We do this by
controlling various parameters (order, size of the polynomial coefficients,
\ldots) of the resulting partial differential equations.

\section{Primer on holonomic power series in several variables}
\label{sec:primer}

{In this section, we introduce power series in several variables and the classes of rational, algebraic and holonomic power series. We recall the connection with regular and context-free languages via the notions of generating series in one or several variables.}

Let $\mathbb{Q}[x_1,\ldots,x_k]$ be the ring of polynomials in the variables $x_1,\ldots,x_k$ with coefficients in $\mathbb{Q}$ and let $\mathbb{Q}(x_1,\ldots,x_k)$ be the associated field of rational fractions.

The \intro{generating series of a sequence} $(f_n)_{n\in\mathbb{N}}$ is the (formal) power series in the variable~$x$ defined by $F(x)=\sum_{n\in\mathbb{N}}f_nx^n$. More generally, the generating series of a sequence $a(n_1,\ldots,n_k)$ is a multivariate (formal) power series in the variables $x_1,\ldots, x_k$ defined by $A(x_1, \ldots, x_k)=\sum_{(n_1, \ldots, n_k)\in\mathbb{N}^k}a(n_1,\ldots,n_k)x_1^{n_1}\ldots x_k^{n_k}$. 
In this article, we only consider power series whose coefficients belong to the field~$\mathbb Q$. 
The set of such $k$-variate power series is denoted $\mathbb{Q}[[x_1,\ldots,x_k]]$.
Power series are naturally equipped with a sum  and a product which generalize those of polynomials, for which $\mathbb{Q}[[x_1,\ldots,x_k]]$ is a ring. 
We use the bracket notation for the coefficient extraction: $[x_1^{n_1}\ldots x_k^{n_k}]A(x_1, \ldots, x_k) = a(n_1,\ldots,n_k)$. 
The \intro{support} of $A\in\mathbb{Q}[[x_1,\ldots,x_k]]$ is the set of $(n_1, \ldots, n_k)$ such that $[x_1^{n_1}\ldots x_k^{n_k}]A\not= 0$. 
The inverse $1/A$ of a series $A\in \mathbb{Q}[[x_1,\ldots,x_k]]$ is well-defined when its constant term $[x_1^0 \ldots x_k^0]A$ is not zero. 
For instance, the inverse of $A(x_1,x_2)=1-x_1x_2^2$ is $\frac{1}{1-x_1x_2^2} = \sum_{n\geq 0}x_1^nx_2^{2n}$. 

The \intro{generating series of a language} $\mathcal L$  over the alphabet $\Sigma=\{a_1, \ldots, a_k\}$  is the univariate power series  $L(x)=\sum_{w\in\mathcal L}x^{|w|}=\sum_{n\in\mathbb{N}}\ell_n x^n$, where $\ell_n$ counts the number of words of length $n$ in $\mathcal L$. Similarly the \intro{multivariate generating series} of $\mathcal L$ defined by
$L(x_{a_1}, \ldots, x_{a_k})=\sum_{(n_1, \ldots, n_k)\in\mathbb{N}^k}\ell(n_1,\ldots,n_k)x_{a_1}^{n_1}\ldots x_{a_k}^{n_k}$ where $\ell(n_1,\ldots,n_k)$ denotes the number of words $w$ in~$\mathcal L$ such that $|w|_{a_1}=n_1$, $|w|_{a_2}=n_2$, \ldots, and $|w|_{a_k}=n_k$, and $|w|_a$ denotes the number of occurrences of $a\in\Sigma$ in $w$. This way, we create one dimension per letter, so that each letter $a\in\Sigma$ has a corresponding variable $x_a$. %

Observe that the univariate generating series of a language is exactly $L(x,\ldots,x)$, obtained by setting each variable to $x$ in its multivariate generating series.

\begin{example}
The generating series of the language $\mathcal P$ of well-nested parentheses defined by the grammar $S\rightarrow aSbS+\varepsilon$
is $P(x_a,x_b) = \frac{1-\sqrt{1-4x_ax_b}}{2x_ax_b}$ and its counting series is\footnote{As for the inverse, the square root of a power series with nonzero constant term can be defined using the usual  Taylor formula %
$\sqrt{1-x} = \sum _{n\geq 0} \frac {1}{(1-2\,n) \, {4}^n} {2\,n\choose n} x^n$.} $P(x) = \frac{1-\sqrt{1-4x^2}}{2x^2}$.

\end{example}

		A power series $A(x_1, \ldots, x_k)=\sum_{n_1, \ldots, n_k}a(n_1, \ldots, n_k)x_1^{n_1}\ldots x_k^{n_k}$ is \intro{rational} if it satisfies an equation of the form:
		$
		P(x_1, \ldots, x_k)A(x_1, \ldots, x_k)=Q(x_1, \ldots, x_k),
		$ with $P, Q \in \mathbb Q[x_1, \ldots, x_k]$ and $P \neq 0$.
		The generating series (both univariate and multivariate) of regular languages (i.e., languages accepted by a finite state automaton) are rational power series~\cite{Bousquet-Melou06}. 
        It is well-known that the generating series
		can be effectively computed from a deterministic automaton accepting the language (see for instance \cite[\S I.4.2]{Flajsedg} for a detailed proof).
		For example, the multivariate generating series of the regular language $(abc)^*$ is $\frac{1}{1-x_ax_bx_c}=\sum_{n\geq0} x_a^nx_b^nx_c^n$. Its univariate generating series is $\frac{1}{1-x^3}$.
		
		The connection between rational languages and rational power series is not tight. For instance, the generating series of
		the non-regular context-free language $\{a^n b^n \;:\; n \geq 0\}$ is $\frac{1}{1-x_a x_b}$, which is rational. In fact, it has the same generating series as the regular language $(ab)^*$. Also there exist rational power series with coefficients in $\mathbb{N}$ which are not the generating series of any rational language. It is the case for $\frac{x+5x^2}{1+x-5x^2-125x^3}$ as shown in \cite{Bousquet-Melou06}. 

		A power series $A(x_1, \ldots, x_k)$ is  \intro{algebraic} if there exists a non-zero polynomial $P\in\mathbb Q[x_1, \ldots, x_k, Y]$ such that $P(x_1, \ldots, x_k, A(x_1, \ldots, x_k))=0$. All rational series are algebraic.

		\begin{example} 
			The series $\frac{1}{1-x_1x_2}=\sum_{n \geq 0} x_1^nx_2^n$ is rational, as it satisfies the equation $(1-x_1x_2)A(x_1,x_2)=1$. The series $A(x_1,x_2)=\sqrt{1-x_1x_2}$ is algebraic but not rational, since $A(x_1,x_2)^2 + (x_1x_2-1) = 0$ and there is no similar algebraic equation of degree~$1$.
		\end{example}
		The reader is referred to \cite{Stanley,Flajsedg} for a  detailed account on  rational and algebraic series.

		In the same manner that rational series satisfy 
		linear equations and algebraic series satisfy polynomial equations, holonomic series satisfy
	    linear differential equations with polynomial coefficients. To give a precise definition, we need to introduce the formal partial derivation of power series.
	    The differential operator
	$\partial_{x_i}$ with respect to the variable $x_i$ is defined by	\[
		\partial_{x_i}A(x_1,\ldots, x_k) = \sum_{n_1,\ldots,n_k}n_i\,a(n_1,\ldots,n_k)x_1^{n_1}\ldots x_{i-1}^{n_{i-1}}x_{i}^{n_{i}\,-\,1}x_{i+1}^{n_{i+1}}\ldots x_k^{n_k}.
		\]
			The composed operator $\partial_{x_i}^j$ is inductively defined for $j\geq 1$ by $\partial_{x_i}^1=\partial_{x_i}$ and $\partial_{x_i}^{j+1}=\partial_{x_i}\circ\partial_{x_i}^j$.
		
\begin{definition}[see \cite{STANLEY1980, LIPSHITZ1989}]\label{def:holonom}
A power series $A(x_1, \ldots, x_k)$ is  \intro{holonomic} or \intro{D-finite}{\footnote{{A priori, these notions differ: 
a function $\, A(x_1,\ldots,x_k)$ is called \emph{D-finite\/} 
if all its partial derivatives $\, \partial_{x_1}^{n_1}\cdots \partial_{x_k}^{n_k} \cdot A $ 
generate a finite dimensional space over $\, \mathbb{Q}(x_1,\ldots,x_k)$, 
and \emph{holonomic\/} if the functions
$\, x_1^{\alpha_1}\,  \cdots \, x_k^{\alpha_k} \partial_{x_1}^{\beta_1}\,  \cdots \, \partial_{x_k}^{\beta_k} \, \cdot\,  A $
subject to the constraint 
$\, \alpha_1\,  +\,  \cdots \, + \alpha_k \, + \beta_1 \, + \cdots \, + \beta_k\,  \leq\,  N$ 
span a vector space whose dimension over~$\mathbb{Q}$ grows like $\, O(N^k)$. The
equivalence of these notions is proved by deep results of 
Bern{\v{s}}te{\u\i}n~\cite{Bernstein} and 
Kashiwara~\cite{Kashiwara,Takayama}.}}} if the $\mathbb Q(x_1, \ldots, x_k)$-vector space spanned by the family $\{\partial_{x_1}^{i_1}\ldots\partial_{x_k}^{i_k}A(x_1,\ldots,x_k)\ :\ (i_1, \ldots, i_k)\in\NN^k\}$ has a finite dimension. 
Equivalently, for every variable $z \in \{x_1,\ldots,x_k\}$, $A(x_1,\ldots,x_k)$ satisfies a linear differential equation of the form $P_{r}(x_1, \ldots, x_k)\partial_{z}^{r}A(x_1,\ldots,x_k)+\ldots + P_{0}(x_1, \ldots, x_k)A(x_1,\ldots,x_k)=0$, where the $P_{i}$'s are polynomials of $\mathbb Q[x_1, \ldots, x_k]$ with $P_{r}\neq 0$.
\end{definition}

		\begin{example}\label{ex:L3} A simple example of holonomic  series is $A(x)=e^{x^2}=\sum_{n \geq 0} \frac{x^{2n}}{n!} $. It is holonomic (in one variable) since it verifies $\partial_x A(x)-2xA(x)=0$.
		
		For a more involved example, consider the language $\mathcal{L}_3=
		\{w \in \{a,b,c\}^* \;:\; |w|_a=|w|_b=|w|_c \}$, containing the words having the same number of occurrences of $a$'s, $b$'s and $c$'s.
	This language is classically not context-free. Moreover there are $\binom{3n}{n,n,n}$ words of length $3n$ in~$\mathcal{L}_3$
and the power series $\sum_{n} \binom{3n}{n,n,n} x^n$ is transcendental~\cite[\S7]{flajolet87}. Its multivariate generating series $L_3(x_a,x_b,x_c)$ is equal to $\sum _{n=0}^{\infty }{\frac { \left( 3\,n \right) !\, \left( x_a x_b x_c
 \right) ^{n}}{ \left( n! \right) ^{3}}}$ and satisfies the partial differential equation: 
		\[
		(27x_a^2x_bx_c-x_a)\partial_{x_a}^2f(x_a,x_b,x_c) + (54x_ax_bx_c-1)\partial_{x_a}f(x_a,x_b,x_c) +6x_bx_cf(x_a,x_b,x_c)=0,
		\]
		and the symmetric ones for the other variables $x_b$ and $x_c$.
		\end{example}
		
		Holonomic series are an extension of the hierarchy we presented, as stated in the following  proposition  (see \cite{comtet} for a proof, and~\cite{BCLSS07} for bounds, algorithms and historical remarks).

		\begin{proposition}
			Multivariate algebraic power series are holonomic.
		\end{proposition}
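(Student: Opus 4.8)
The plan is to work over the field $K=\mathbb{Q}(x_1,\ldots,x_k)$ of rational fractions and to exploit that algebraicity over $K$ is an extremely rigid condition: the whole $K$-algebra generated by $A$ is already finite-dimensional, and — crucially — it is stable under every partial derivation $\partial_{x_i}$. First I would replace the given polynomial $P$ by the minimal polynomial $\mu(x_1,\ldots,x_k,Y)\in K[Y]$ of $A$ over $K$, obtained from $P$ by clearing denominators and discarding the irreducible factors that do not vanish at $A$; it is irreducible and of some degree $d\geq 1$ in $Y$. Since $A$ is algebraic over $K$, the ring $K[A]$ is a field, so $K(A)=K[A]$ is a $K$-vector space of dimension $d$, with basis $1,A,\ldots,A^{d-1}$. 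Throughout, $A$ is viewed as an element of the fraction field of $\mathbb{Q}[[x_1,\ldots,x_k]]$, on which each $\partial_{x_i}$ is a derivation (via the quotient rule) restricting to the usual partial derivative on $K$.

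The heart of the argument is to show that each partial derivative $\partial_{x_i}A$ again lies in $K(A)$. Applying $\partial_{x_i}$ to the identity $\mu(x_1,\ldots,x_k,A)=0$ and using the (formal) chain rule gives
\[
\mu_{x_i}(x_1,\ldots,x_k,A)+\mu_Y(x_1,\ldots,x_k,A)\,\partial_{x_i}A=0,
\]
where $\mu_{x_i}$ and $\mu_Y$ denote the partial derivatives of the polynomial $\mu$ with respect to $x_i$ and to $Y$. Here is the only genuine subtlety, and the one place where the hypothesis of \emph{characteristic zero} is used: one must check that $\mu_Y(x_1,\ldots,x_k,A)\neq 0$. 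Writing $\mu=\sum_{j=0}^{d}c_j(x_1,\ldots,x_k)\,Y^j$ with $c_d\neq 0$, we have $\mu_Y=\sum_{j=1}^{d}j\,c_j\,Y^{j-1}$, whose top term $d\,c_d\,Y^{d-1}$ is nonzero because $d\neq 0$ in $\mathbb{Q}$; thus $\mu_Y$ is a nonzero polynomial in $Y$ of degree $d-1<d=\deg_Y\mu$, so by minimality of $\mu$ it cannot vanish at $Y=A$. Consequently
\[
\partial_{x_i}A=-\,\frac{\mu_{x_i}(x_1,\ldots,x_k,A)}{\mu_Y(x_1,\ldots,x_k,A)}\in K(A).
\]

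With this in hand the conclusion is immediate. Since $\partial_{x_i}$ is a derivation sending $A$ into the ring $K(A)=K[A]$, the Leibniz rule shows that $\partial_{x_i}$ maps $K(A)$ into itself. Iterating over all indices, every iterated derivative $\partial_{x_1}^{i_1}\cdots\partial_{x_k}^{i_k}A$ lies in $K(A)$, so the family $\{\partial_{x_1}^{i_1}\cdots\partial_{x_k}^{i_k}A:(i_1,\ldots,i_k)\in\mathbb{N}^k\}$ spans a $K$-subspace of the $d$-dimensional space $K(A)$; in particular it is finite-dimensional, which is exactly the definition of holonomic. (Equivalently, to exhibit the linear differential equations from Definition~\ref{def:holonom}: for a fixed variable $z$ the $d+1$ elements $A,\partial_zA,\ldots,\partial_z^dA$ of the $d$-dimensional space $K(A)$ satisfy a nontrivial $K$-linear relation; taking one of least order and clearing denominators yields $P_r\partial_z^rA+\cdots+P_0A=0$ with $P_i\in\mathbb{Q}[x_1,\ldots,x_k]$ and $P_r\neq 0$.) The main obstacle is thus confined to the nonvanishing of $\mu_Y(x_1,\ldots,x_k,A)$; everything else is bookkeeping.
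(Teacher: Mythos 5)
Your proof is correct. Note that the paper does not prove this proposition itself---it only cites \cite{comtet} and \cite{BCLSS07}---and your argument is precisely the classical one found in those references: since $A$ is algebraic over $K=\mathbb{Q}(x_1,\ldots,x_k)$ with minimal polynomial $\mu$ of degree $d$, every iterated partial derivative of $A$ stays in the $d$-dimensional $K$-vector space $K[A]$, which is exactly the finite-dimensionality required by Definition~\ref{def:holonom}. You also treat the one genuinely delicate point correctly, namely the nonvanishing of $\mu_Y(x_1,\ldots,x_k,A)$, justified by minimality of $\mu$ together with $\mu_Y\neq 0$ in characteristic zero, and the passage from $\partial_{x_i}A\in K(A)$ to stability of $K[A]=K(A)$ under each $\partial_{x_i}$ via the Leibniz rule is sound.
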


In the univariate case\footnote{The
generalization of this equivalence to the multivariate case is not
straightforward (see \cite{LIPSHITZ1989} for more details) and will not be used in this article.}, a power series $A(x)=\sum_{n}a_nx^n$ is
holonomic if and only if its coefficients satisfy a linear recurrence of the
form $p_{r}(n)a_{n+r}+\ldots+p_0(n)a_n=0$, where every $p_i$ is a polynomial
with rational coefficients~\cite[Th.~1.2]{STANLEY1980}.  

In the sequel, except for Section~\ref{sec:complexity}, we rely on the closure properties of the holonomic series and will not need to go back to Definition~\ref{def:holonom}. We now focus  on these closure properties.

\begin{proposition}[{\cite{STANLEY1980}}]
	\label{prop:holonomicclosuresumproduct}
	Multivariate holonomic series are closed under sum and product.
\end{proposition}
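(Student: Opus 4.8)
The plan is to argue directly from the vector-space definition of holonomy (Definition~\ref{def:holonom}), rather than from the equivalent one-differential-equation-per-variable characterisation. Write $\partial^{\vect i}$ for $\partial_{x_1}^{i_1}\cdots\partial_{x_k}^{i_k}$, and for a holonomic series $A$ let $V_A\subseteq\mathbb Q(x_1,\ldots,x_k)[[\ldots]]$ denote the $\mathbb Q(x_1,\ldots,x_k)$-vector space spanned by $\{\partial^{\vect i}A : \vect i\in\NN^k\}$, which is finite-dimensional by hypothesis. The structural fact I would isolate first is that $V_A$ is stable under each derivation $\partial_{x_m}$: for a general element $\sum_j f_j\,\partial^{\vect{i_j}}A$ of $V_A$, the Leibniz rule gives $\partial_{x_m}\big(\sum_j f_j\,\partial^{\vect{i_j}}A\big)=\sum_j(\partial_{x_m}f_j)\,\partial^{\vect{i_j}}A+\sum_j f_j\,\partial^{\vect{i_j}+\vect e_m}A$, and every term on the right is again a rational multiple of a derivative of $A$, hence in $V_A$. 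Symmetrically, \emph{any} finite-dimensional $\mathbb Q(x_1,\ldots,x_k)$-subspace that is stable under all the $\partial_{x_m}$ and contains a series $C$ automatically contains $V_C$; so exhibiting such a subspace is exactly a holonomy certificate for $C$.

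For the sum, note that $\partial^{\vect i}(A+B)=\partial^{\vect i}A+\partial^{\vect i}B\in V_A+V_B$ for every $\vect i$, hence $V_{A+B}\subseteq V_A+V_B$, whose dimension is at most $\dim V_A+\dim V_B<\infty$. Therefore $A+B$ is holonomic.

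For the product, I would set $W$ to be the $\mathbb Q(x_1,\ldots,x_k)$-span of $\{u\cdot v : u\in\mathcal U,\ v\in\mathcal V\}$, where $\mathcal U$ is a finite basis of $V_A$ and $\mathcal V$ a finite basis of $V_B$; then $\dim W\le(\dim V_A)(\dim V_B)<\infty$. The key computation is that $W$ is stable under each $\partial_{x_m}$: by Leibniz, $\partial_{x_m}(uv)=(\partial_{x_m}u)\,v+u\,(\partial_{x_m}v)$, and since $\partial_{x_m}u\in V_A$ by the stability noted above, it is a $\mathbb Q(x_1,\ldots,x_k)$-combination of the elements of $\mathcal U$, so $(\partial_{x_m}u)\,v\in W$; the second term is handled symmetrically. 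Because $A\in V_A$ and $B\in V_B$ (zeroth-order derivatives), the product $AB$ expands as a $\mathbb Q(x_1,\ldots,x_k)$-combination of the $u\cdot v$ and hence lies in $W$; combined with the stability of $W$ under all derivations, this yields $V_{AB}\subseteq W$, so $AB$ is holonomic.

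The one point that genuinely needs care — and the step I regard as the substance of the argument rather than bookkeeping — is this stability of $W$ under differentiation: since $\partial_{x_m}$ is not $\mathbb Q(x_1,\ldots,x_k)$-linear, one cannot simply invoke "a finite-dimensional space maps to a finite-dimensional space", and it must be checked that the failure of linearity is entirely absorbed by the Leibniz rule, i.e.\ that differentiating a product of two basis vectors never escapes the span of products of basis vectors. Everything else (finiteness of the dimensions, coefficientwise validity of the Leibniz and sum rules for formal power series, and the reduction of "all derivatives of $C$ lie in a derivation-stable subspace" to a trivial induction on the order) is routine.
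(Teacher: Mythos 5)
Your argument is correct and is precisely the classical finite-dimensionality proof underlying the cited result: the paper itself states this proposition without proof, deferring to \cite{STANLEY1980} (and \cite{LIPSHITZ1989} for the multivariate setting), where the same sum/product argument via derivation-stable finite-dimensional $\mathbb{Q}(x_1,\ldots,x_k)$-subspaces is used. The only cosmetic point is that these spans must be taken inside a space on which rational functions act (e.g.\ the fraction field of $\mathbb{Q}[[x_1,\ldots,x_k]]$), which is already implicit in Definition~\ref{def:holonom}.
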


	Holonomic series are also closed under substitution by algebraic series as long as the resulting series is well-defined\footnote{Note that the substitution of a power series into another power series might not yield a power series: for instance substituting $x$ by $1+y$ in $\sum_{n \geq 0} x^n$ does not result in a power series as the constant term would be infinite.}.
		
		\begin{proposition}[{\cite[Prop. 2.3]{LIPSHITZ1989}}]\label{prop:substitution}
		    Let $A(x_1, \ldots, x_k)$ be a power series and let 
		    $g_i(y_1, \ldots, y_\ell)$ be algebraic power series such that $B(y_1, \ldots, y_\ell)=A(g_1(y_1, \ldots, y_\ell), \ldots, g_k(y_1, \ldots, y_\ell))$ is well-defined as a power series.  If $A$ is holonomic, then $B$ is also holonomic.
		\end{proposition}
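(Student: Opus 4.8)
The plan is to show directly, via Definition~\ref{def:holonom}, that all iterated partial derivatives of $B$ lie in a single finite-dimensional $\mathbb{Q}(y_1,\ldots,y_\ell)$-vector space. Since $A$ is holonomic, I would first fix finitely many iterated $x$-derivatives $h_1=A,h_2,\ldots,h_m$ of $A$ whose $\mathbb{Q}(x_1,\ldots,x_k)$-span contains every $\partial_{x_i}h_r$, and clear denominators to obtain a nonzero polynomial $P\in\mathbb{Q}[x_1,\ldots,x_k]$ and polynomials $Q_{i,r,s}$ with $P\,\partial_{x_i}h_r=\sum_s Q_{i,r,s}\,h_s$ for all $i,r$. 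On the algebraic side, let $K=\mathbb{Q}(y_1,\ldots,y_\ell)(g_1,\ldots,g_k)$: this is a finite extension of $\mathbb{Q}(y_1,\ldots,y_\ell)$ since each $g_i$ is algebraic, and it is stable under every $\partial_{y_j}$ because the derivative of an algebraic series is algebraic over the same field (differentiate a minimal polynomial). Fix a $\mathbb{Q}(y)$-basis $e_1=1,e_2,\ldots,e_d$ of $K$.

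The key claim would be that every $\partial_{y_1}^{j_1}\cdots\partial_{y_\ell}^{j_\ell}B$ lies in the $\mathbb{Q}(y)$-span $U$ of the finite set $\{e_t\,h_r(g_1,\ldots,g_k)\ :\ 1\le t\le d,\ 1\le r\le m\}$, proved by induction on the order of the derivative. The base case is $B=e_1\,h_1(g)$. For the inductive step I would use the chain rule: differentiating a $\mathbb{Q}(y)$-combination of the $e_t\,h_r(g)$ with respect to $y_j$ produces derivatives of the $\mathbb{Q}(y)$-coefficients (harmless), terms $\partial_{y_j}e_t$ (which lie in $K$, hence re-expand over the $e_t$), and terms $\partial_{y_j}\bigl(h_r(g)\bigr)=\sum_i(\partial_{x_i}h_r)(g)\cdot\partial_{y_j}g_i$. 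Here $\partial_{y_j}g_i\in K$, and substituting $x=g$ into the \emph{polynomial} identity $P\,\partial_{x_i}h_r=\sum_s Q_{i,r,s}\,h_s$ (legitimate because composing a polynomial with power series is always defined and ring-homomorphic) gives $P(g)\,(\partial_{x_i}h_r)(g)=\sum_s Q_{i,r,s}(g)\,h_s(g)$; when $P(g)\neq 0$ this expresses $(\partial_{x_i}h_r)(g)$ as a $K$-combination of the $h_s(g)$, so that $\partial_{y_j}\bigl(h_r(g)\bigr)$ is again a $K$-combination of the $h_s(g)$, and re-expanding the $K$-coefficients over $\{e_t\}$ keeps everything in $U$. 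As $U$ is finite-dimensional over $\mathbb{Q}(y)$ and independent of $j$, the claim gives that $B$ is holonomic, and closure under substitution of the $g_i$ one at a time is not needed.

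Two points need care, and I expect the second to be the real obstacle. First, one must make rigorous sense of the composed series $h_r(g)$ and of the chain-rule manipulations for formal power series: the hypothesis that $B=A(g)$ is well-defined is the lever, but it does not instantly guarantee that each derivative $h_r$ of $A$ can be substituted, so a short argument is needed to propagate substitutability from $A$ to its derivatives — automatic when all $g_i(0)=0$, and requiring separate treatment when some $g_i$ has nonzero constant term. Second, the degenerate case $P(g)=0$ must be dealt with: there the polynomial identity degenerates to a linear relation $\sum_s Q_{i,r,s}(g)\,h_s(g)=0$ among the $h_s(g)$, which morally only shrinks the spanning set but needs to be turned into a clean argument — e.g.\ by choosing the holonomic system of $A$ so that $P$ survives the substitution, or by a generic perturbation of the $g_i$ (adding a fresh parameter) followed by specialisation. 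Resolving this bookkeeping, rather than the linear algebra (which is routine), is where the work lies, and it is the same kind of care that Section~\ref{sec:complexity} will have to exercise when the explicit bound is extracted.
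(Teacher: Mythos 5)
You should first note that the paper does not prove this proposition at all: it is quoted from Lipshitz~\cite[Prop.~2.3]{LIPSHITZ1989}, and the only substitution-type statement the paper proves itself is the special case of specialization to $1$ (Proposition~\ref{prop:specialisation_holonome}), where the degeneracy problem is handled by dividing each equation by the largest power of $(y_\ell-1)$ before setting $y_\ell=1$. So there is no in-paper argument to match yours against; what matters is whether your blind argument is complete, and it is not.

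The linear-algebra skeleton you set up (the finite extension $K=\mathbb{Q}(y_1,\ldots,y_\ell)(g_1,\ldots,g_k)$, stability of $K$ under $\partial_{y_j}$, and the spanning set $\{e_t\,h_r(g)\}$) is the standard route and is fine as far as it goes, but the two points you defer are precisely the content of the proposition, and you leave both unresolved. The degenerate case $P(g_1,\ldots,g_k)=0$ is not ``bookkeeping'': when it occurs, the substituted identity collapses to a linear relation among the $h_s(g)$ and no longer expresses $(\partial_{x_i}h_r)(g)$ in terms of the spanning set, so the inductive step of your key claim simply does not go through; your two suggested repairs (choosing a system of $A$ whose leading polynomial ``survives'' the substitution, or perturbing the $g_i$ by a fresh parameter and specialising) are stated without any argument that such a system exists or that the specialisation step is itself legitimate --- note that specialisation of a parameter is exactly the kind of operation whose degeneracies the paper has to fight in Proposition~\ref{prop:specialisation_holonome}, and Remark~\ref{rmk:Massazza} explicitly flags that published proofs of closure under algebraic substitution do not explain what happens when the equations become trivial, so this is a known genuine difficulty rather than routine care. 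Likewise, the well-definedness of $h_r(g)$ for the derivatives $h_r$ of $A$ (when some $g_i(0)\neq 0$) is asserted to need ``a short argument'' but none is given, and the chain-rule manipulations you rely on presuppose it. As it stands the proposal is an honest sketch of the classical approach with its two critical steps acknowledged but missing, so it does not constitute a proof of the statement.
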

	    A sufficient condition for the substitution to be valid is that $g_i(0, \ldots, 0)=0$ for all $i$ (see \cite[Th.~2.7]{STANLEY1980}). For the case $g_1=\cdots=g_k=1$, called the \intro{specialization to 1},\reviewer{A detailed proof for the specialization  can be found in Appendix Prop.~\ref{prop:specialisation_holonome}.} a sufficient condition is that for every index $(i_1, \ldots, i_k)$, $[x_1^{i_1}\ldots x_k^{i_k}]A$ is a polynomial in $y_1, \ldots, y_\ell$.

The Hadamard product is the coefficient-wise multiplication of power series. If $A(x_1, \ldots, x_k)$ and $B(x_1,\ldots,x_k)$ are the generating series of the sequences $a(n_1,\ldots,n_k)$ and $b(n_1,\ldots,n_k)$,
 the \intro{Hadamard product} $A\odot B$ of $A$ and $B$ is the power series defined by
\[\hfill A\odot B (x_1, \ldots, x_k) = \sum_{n_1, \ldots, n_k\in\mathbb{N}^k}a(n_1,\ldots,n_k)b(n_1,\ldots,n_k)x_1^{n_1}\ldots x_k^{n_k}.\hfill\]
Observe that the support of $F\odot G$ is the intersection of the supports of $F$ and $G$. 

\begin{theorem}[\cite{LIPSHITZ1988}]\label{thm:hadamard}
	Multivariate holonomic series are closed under Hadamard product.
\end{theorem}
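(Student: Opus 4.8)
The plan is to reduce the statement to the closure of holonomic series under \emph{diagonals}. Write $x=(x_1,\dots,x_k)$ and introduce a fresh copy $y=(y_1,\dots,y_k)$. If $A$ and $B$ are holonomic then so are $A(x)$ and $B(y)$ viewed as series in the $2k$ variables $(x,y)$ — adjoining dummy variables does not change holonomy, directly from Definition~\ref{def:holonom} — hence by Proposition~\ref{prop:holonomicclosuresumproduct} the product $H(x,y)=A(x)\,B(y)=\sum_{m,n\in\NN^k}a(m)\,b(n)\,x^m y^n$ is holonomic in $(x,y)$. Now $A\odot B$ is exactly the full diagonal of $H$: keeping in $H$ only the monomials in which the exponent of $x_i$ equals that of $y_i$ for every $i$ and then identifying $y_i$ with $1$ yields $\sum_n a(n)b(n)x^n=(A\odot B)(x)$. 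So it suffices to show that the iterated diagonal of a holonomic series is again holonomic.

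For the diagonal closure I would follow one of two routes. The hands-on route (this is essentially Lipshitz's argument) starts from a finite system of linear differential equations witnessing the holonomy of $H$ and, introducing an auxiliary variable $t_i$ for each diagonalised pair so that the diagonal becomes the constant-term extraction $[t_1^0\cdots t_k^0]\,H(x_1t_1,\dots,x_kt_k,\,t_1^{-1},\dots,t_k^{-1})$, shows by an elimination/dimension argument that the $\mathbb Q(x)$-span of the derivatives of the result is again finite dimensional; the bookkeeping is to control how this span evolves under the substitution and under picking out the constant coefficient in $t$. The structural route is to observe that this diagonal is, up to harmless changes of coordinates, an iterated residue, i.e.\ the direct image of a holonomic module over the Weyl algebra along a linear projection; since direct images of holonomic modules are holonomic — by the deep results of Bernstein and Kashiwara already invoked after Definition~\ref{def:holonom} — the diagonal is holonomic, and therefore so is $A\odot B$.

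A third, more self-contained route sidesteps diagonals but uses an ingredient the rest of the paper avoids. It passes to coefficient arrays: the array of $A\odot B$ is the entrywise product $c(n)=a(n)\,b(n)$, and one shows it is \emph{P-recursive}, meaning that the $\mathbb Q(n_1,\dots,n_k)$-vector space $V_c$ spanned by all index-shifts $\{S_1^{j_1}\cdots S_k^{j_k}\,c:(j_1,\dots,j_k)\in\NN^k\}$ is finite dimensional, $S_i$ being the forward shift in the $i$-th index. The point is that on arrays the Hadamard product is simply the entrywise product, which is $\mathbb Q(n)$-bilinear and intertwines each $S_i$ with itself. One forms the finite-dimensional $\mathbb Q(n)$-space $V_a\otimes_{\mathbb Q(n)}V_b$ equipped with the commuting operators $S_i\otimes S_i$ (each $\sigma_i$-semilinear, for $\sigma_i\colon f(n)\mapsto f(n+e_i)$), notes that $a\otimes b$ generates under them a $\mathbb Q(n)$-subspace of $V_a\otimes V_b$, hence finite dimensional, and pushes it forward along the entrywise-multiplication map $V_a\otimes V_b\to\mathbb Q^{\NN^k}$, which is $\mathbb Q(n)$-linear and carries $S_i\otimes S_i$ to $S_i$; this exhibits $V_c$ inside the image, so $V_c$ is finite dimensional and $c$ is P-recursive. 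Translating P-recursiveness of $c$ back into linear differential equations for $A\odot B$ then finishes the proof.

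The single hard point, in every route, is the same in disguise: closure of holonomic series under diagonals — equivalently, closure of holonomic modules under direct images, equivalently (for the third route) the non-trivial equivalence between holonomy of a \emph{multivariate} power series and P-recursiveness of its coefficient array. This is where essentially all the work sits; everything else (adjoining dummy variables, Proposition~\ref{prop:holonomicclosuresumproduct}, the semilinear-operator bookkeeping, clearing denominators to pass between $\mathbb Q(n)$-relations and polynomial recurrences) is routine. Accordingly I would present the diagonal reduction as the main line and cite Lipshitz's diagonal theorem~\cite{LIPSHITZ1988} (or the Bernstein--Kashiwara machinery) for the crux.
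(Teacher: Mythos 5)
Your main line (reduce $A\odot B$ to a diagonal of the product $A(x)B(y)$ in $2k$ variables and invoke Lipshitz) is correct and is essentially the paper's own route: the theorem is stated with a citation to \cite{LIPSHITZ1988}, and the only place the paper actually works through the argument — Appendix Section~\ref{sec:hadamard}, in the rational case, to extract quantitative bounds — uses exactly the device you describe, writing the Hadamard product as the extraction $[y_1^{-1}\cdots y_n^{-1}]$ of $\tfrac{1}{y_1\cdots y_n}F_1(x_1/y_1,\ldots,x_n/y_n)F_2(y_1,\ldots,y_n)$ and then running Lipshitz's bounded-degree elimination to get a dependency among the $x^{\boldsymbol\alpha}\partial_y^{\boldsymbol\beta}\partial_{x_j}^{\gamma}F$. (The paper's remark there explains why one extracts the coefficient of $y^{-\vect 1}$ rather than a constant term: that exponent cannot be produced by differentiating monomials, which is what makes the final coefficient extraction clean.) So presenting the diagonal reduction and citing \cite{LIPSHITZ1988} for the crux is precisely what is intended.

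One caution about your ``third route'': it is not viable as written. In several variables, P-recursiveness of the coefficient array does \emph{not} imply holonomy of the generating series; only the converse implication holds (see \cite{LIPSHITZ1989}), and the paper itself flags in a footnote that this multivariate equivalence is not straightforward and is deliberately avoided. Hence the closing step ``translating P-recursiveness of $c$ back into linear differential equations for $A\odot B$'' has no general justification, and that route would have to be abandoned in favour of your first (or the D-module) argument.
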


\begin{example} \label{holonomicexample}
The generating series of the language $\mathcal L_3$ of Example~\ref{ex:L3}, which is not context-free, can be expressed using the Hadamard product: since $\frac{1}{1-x_ax_bx_c}$ is the support series of the subset $\{(n, n, n): n\in\mathbb{N}\}$, and since $\frac{1}{1-(x_a+x_b+x_c)}$ is the multivariate series of all the words on $\{a,b,c\}$, we have $L_3(x_a,x_b,x_c)=\frac{1}{1-(x_a+x_b+x_c)}\odot\frac{1}{1-x_ax_bx_c}$, which is not algebraic. 
\end{example}
 
One of our main technical contribution is to provide bounds on the sizes of the polynomials in the differential equations of the holonomic representation of the Hadamard product of two rational series\reviewer{See Section~\ref{sec:hadamard} in the Appendix.}
$\frac{P_1}{Q_1}$ and $\frac{P_2}{Q_2}$:
we prove that their maxdegree is at most $(kM)^{\OO(k)}$ and
that the logarithm of their largest coefficient is at most $(kM)^{\OO(k^2)}(1+\log\normF)$, where $M$ (resp. $\normF$) is the maxdegree plus one (resp. largest coefficient) in $P_1$, $Q_1$, $P_2$ and $Q_2$.
 
\section{Weakly-unambiguous Parikh automata}
\label{sec:wupa}

In this section, we introduce weakly-unambiguous Parikh automata and show that their multivariate generating series are holonomic. We establish that they accept the same languages as unambiguous two-way reversal bounded counter machines \cite{Ibarra78}. Finally, we prove that the class of accepted languages coincides with Massazza's RCM class \cite{Massazza93,Castiglione17}.

Parikh automata (PA for short) were introduced in
\cite{Klaedtke02,Klaedtke03}. Informally, a PA is a finite automaton whose
transitions are labeled by pairs $(a,\vect{v})$ where $a$ is a letter of the
input alphabet and $\vect{v}$ is a vector in $\mathbb{N}^d$. A run
$q_0 \era{a_1,\vect{v_1}} q_1 \era{a_2,\vect{v_2}} q_2 \cdots
q_{n-1} \era{a_{n},\vect{v_n}} q_n$ computes the word $a_1\cdots
a_{n}$ and the vector $\vect{v_1} + \cdots + \vect{v_{n}}$ where the sum is
done component-wise. The acceptance condition is given by a set of final
states and a semilinear set of vectors. A run is accepting if it reaches a
final state and if its vector belongs to the semilinear set.

\noindent\begin{minipage}{.35\textwidth}
\begin{tikzpicture}
\node (init) at (0,.8) {};
\node[draw,circle] (q0) at (0,0) {$0$};
\node[draw,circle] (q1) at (2,0) {$1$};
\node[draw,circle,accepting] (q2) at (4,0) {$2$};

\draw[->,thick] (q0) -- node[above] {\small $a\big(\substack{1\\0\\0}\big)$} (q1);
\draw[->,thick] (q1) -- node[above] {\small $a\big(\substack{1\\0\\0}\big)$} (q2);
\draw[->,thick] (q1) edge[loop, above] node[above] {\small $a\big(\substack{1\\0\\0}\big), b\big(\substack{0\\1\\0}\big), c\big(\substack{0\\0\\1}\big)$} (q1);

\draw[->, thick] (init) -- (q0);
\end{tikzpicture}
\end{minipage}
\begin{minipage}{.64\textwidth}
The PA depicted on the left, equipped with the semilinear constraint $\{ (n_1,n_2,n_1+n_2) \;:\; n_1,n_2 \geq 0\}$, accepts the set of words $w$ over $\{a,b,c\}$ that start and end with $a$ and that are such that $|w|_a+|w|_b=|w|_c$.
\end{minipage}

\vspace{-.4cm}
\subsection{Semilinear sets and their characteristic series}
\label{sec:semilinear}

A set $L \subseteq \mathbb{N}^d$ is \intro{linear} if it is of the form
$
\vect{c}+P^* := \{ \vect{c} + \lambda_1 \vect{p_1} + \cdots + \lambda_k \vect{p_k} \;\mid\; \lambda_1,\ldots,\lambda_k \in \mathbb{N} \}
$,
where $\vect{c} \in \mathbb{N}^d$ is the \intro{constant} of the set  and $P = \{\vect{p_1},\ldots,\vect{p_k}\} \subset \mathbb{N}$ its set of \intro{periods}. 
A set $C \subseteq \mathbb{N}^d$ is \intro{semilinear} if it is a finite union of linear sets. 
For example, the semilinear set $\{ (n_1,n_2,n_1+n_2) \;:\; n_1,n_2 \geq 0\}$ is in fact a linear set $(0,0,0) + \{ (1,0,1),(0,1,1) \}^*$.

In \cite{EILENBERG1969,Ito69}, it is shown that every semilinear set admits an unambiguous presentation. A presentation $\vect{c}+P^*$ with $P=\{\vect{p_1},\ldots,\vect{p_k}\}$ of a linear set $L$ is unambiguous if for all $\vect{x} \in L$, the $\lambda_i$'s such that $\vect{x} = \vect{c} + \lambda_1 \vect{p_1} + \cdots + \lambda_k \vect{p_k}$ are unique. An \intro{unambiguous presentation of a semilinear set} is given by a disjoint union of unambiguous linear sets. A bound on the size of the equivalent unambiguous presentation is given in \cite{Chistikov16}.

Semilinear sets are ubiquitous in theoretical computer science and admit numerous characterizations. They are the rational subsets of the commutative monoid $(\mathbb{N}^d,+)$, the unambiguous rational subset of $(\mathbb{N}^d,+)$ \cite{EILENBERG1969,Ito69}, the Parikh images of context-free languages \cite{Parikh1966}, the sets definable in Presburger arithmetic 
\cite{Pres29}, the sets defined by boolean combinations of linear inequalities, equalities and equalities modulo constants. 

For example, the semilinear set $\{ (2n,3n,5n) \;:\; n \geq 0 \}$ is equal to $(0,0,0) + \{(2,3,5)\}^*$. It is also the Parikh image of the regular (hence context-free) language $(aabbbccccc)^*$. In $(\mathbb{N},+)$, it is defined by the Presburger formula $\phi(x,y,z) = \exists w, x=w+w \wedge y=w+w+w \wedge z=w+w+w+w+w$. Finally it is characterized in $\mathbb{N}^3$ by the equalities $3x=2y$ and $5x=2z$.
   
For a semilinear set $C \subseteq \mathbb{N}^d$, we consider  
its  \intro{characteristic generating series} 
$C(x_1,\ldots,x_d) = \sum_{(i_1,\ldots,i_d) \in C} x_1^{i_1} \cdots x_d^{i_d}$. It is well-known \cite{EILENBERG1969,Ito69} that this power series is rational\footnote{%
The characteristic series  of an unambiguous linear set $\vect{c} + P^*\subseteq \mathbb{N}$ with $P=\{\vect{p_1},\ldots,\vect{p_k}\}$ is 
$x_1^{\vect{c}(1)}\ldots x_d^{\vect{c(d)}}\prod_{i=1}^k
\big(1-x_1^{\vect{p_i}(1)}\cdots x _d^{\vect{p_i}(d)}\big)^{-1}
$
and hence is rational. As an unambiguous semilinear set is the disjoint union of unambiguous linear sets, its characteristic series is the sum of their series and it is therefore rational.}.

\subsection{Weakly-unambiguous PAs and their generating series}
\label{sec:PA GS}

We now introduce PA and their weakly-unambiguous variant. We discuss the relationship with the class of unambiguous PA introduced by Cadilhac et al. in \cite{Cadilhac13} and the closure properties of this class.
	
A \intro{Parikh automaton} of dimension $d \geq 1$ is a tuple $\mathcal A=(\Sigma, Q, q_I, F, C, \Delta)$ where $\Sigma$ is the  alphabet, $Q$ is the set of states, $q_I \in Q$ is the initial state, $F \subseteq Q$ is the set of final states, $C \subseteq \mathbb{N}^d$ is the semilinear constraint and $\Delta \subseteq Q \times (\Sigma \times \mathbb{N}^d) \times Q$ is the transition relation.

A run of the automaton is a sequence $q_0 \era{a_1,\vect{v_1}} q_1 \era{a_2,\vect{v_2}} q_2 \cdots q_{n-1} \era{a_n,\vect{v_n}} q_n$
where for all $i \in [1,n]$, $(q_{i-1},(a_i,\vect{v_i}),q_i)$ is a transition in $\Delta$. The run is labeled by the pair $(a_1\cdots a_n, \vect{v_1}+\cdots+\vect{v_n}) \in \Sigma^* \times \mathbb{N}^d$. It is accepting if $q_0=q_I$, the state $q_n$ is final and if the vector $\vect{v_1}+\cdots+\vect{v_n}$ belongs to $C$. The word $w$ is then said to be accepted by $\mathcal{A}$. The language accepted by $\mathcal{A}$ is denoted by $\mathcal L(\mathcal A)$. %

To define a notion of size for a PA, we assume that the constraint set is given by an unambiguous presentation
$\uplus_{i=1}^p \vect{c_i} + P_i^*$. We denote by \intro{$|\AA|$} $ := |Q|+|\Delta|+p+\sum_{i} |P_i|$ and by \intro{$\|\AA\|_{\infty}$} the maximum coordinate of a vector appearing in $\Delta$, the $\vect{c_i}$'s and the $P_i$'s.  

\begin{definition}\label{def:weakly-unambiguous}
	A Parikh automaton is said to be \intro{weakly-unambiguous} if for every word there is at most one accepting run.
\end{definition}

A language is \intro{inherently weakly-ambiguous} if it is not accepted by any weakly-unambiguous PA.
The language $\mathcal{S}$ (defined in Section~\ref{sec:2 examples}) is an example\footnote{Using the equivalences between weakly-unambiguous PA and RCM established in Proposition~\ref{prop:unambparbcm} and PA and RBCM \cite{Klaedtke02}, it also gives an example of a language accepted by a RBCM with a non-holonomic generating series (strengthening Theorem 12 of \cite{Castiglione17}) and a witness for the strict inclusion of RCM in RBCM announced in Theorem 11 of \cite{Castiglione17}. Remark that their proof of this theorem only shows that there exists no recursive translation from RBCM to RCM.} of a language accepted by a non-deterministic PA which is inherently weakly-ambiguous.

\begin{remark}
\label{rmk:comp}
    We consider here the standard notion of unambiguity for 
finite state machines. However we decided to use the name \emph{weakly-unambiguous} to avoid the confusion with the class of unambiguous PA introduced by 
Cadilhac et al. in \cite{Cadilhac12a, Cadilhac13}. Their notion of unambiguity is more restrictive than ours:
 they call a Parikh automaton \emph{unambiguous} if the underlying automaton on letters, where the vectors have been erased, is unambiguous. Clearly such automata are weakly-unambiguous. However the converse is not true.\reviewer{More details on this automaton in Appendix~Sec.~\ref{app:rmk-comp}} \vspace{1mm}
\begin{minipage}{.4\textwidth}
\hspace{-.5cm}
\begin{tikzpicture}
\node (init) at (0,-.8) {};
\node[draw,circle] (q0) at (0,0) {$0$};
\node[draw,circle,accepting] (q1) at (2,0) {$1$};
\node[draw,circle,accepting] (q2) at (4,0) {$2$};

\draw[->,thick] (q0) edge node[below] {\small $a\big(\substack{0\\1\\0}\big), b\big(\substack{0\\0\\1}\big)$} (q1);
\draw[->,thick] (q1) edge node[below] {\small $a\big(\substack{0\\0\\0}\big), b\big(\substack{0\\0\\0}\big)$} (q2);

\draw[->,thick] (q0) edge[loop, above] node[above] {\small $c\big(\substack{1\\0\\0}\big)$} (q0);
\draw[->,thick] (q1) edge[loop, above] node[above] {\small $a\big(\substack{0\\1\\0}\big), b\big(\substack{0\\0\\1}\big)$} (q1);
\draw[->,thick] (q2) edge[loop, above] node[above] {\small $a\big(\substack{0\\0\\0}\big), b\big(\substack{0\\0\\0}\big)$} (q1);

\draw[->, thick] (init) -- (q0);
\end{tikzpicture}
  
 \end{minipage}
 \begin{minipage}{.59\textwidth}
 Consider 
 the language $\mathcal{L} = \{c^nw \;:\; w = x_1x_2\cdots x_m \in\{a,b\}^*\ \wedge\ n>0 \wedge\ |x_1x_2\ldots x_n|_{a} < |x_1x_2\ldots x_n|_{b}\} $ over the alphabet $\{a,b,c\}$. Using results from \cite{Cadilhac13}, one can show that it is not recognized by any unambiguous Parikh automata. However, it is accepted by the weakly-unambiguous automaton depicted on the left with the semilinear $\{ (n_1,n_2,n_3) \;:\; n_1=n_2+n_3 \;\text{and}\; n_2 < n_3 \}$.
  \end{minipage}

\vspace{1mm}

 The lack of expressivity of unambiguous PAs is counter-balanced by their closure under boolean operations, which is explained by their link with a class of determinisitic PA \cite{Filliot19}.
\end{remark}

Using a standard product construction when \reviewer{The construction for the intersection is given in Appendix Sec.\ref{app:inter-pa}.} 
 the vectors are concatenated and using the concatenation of the constraints, it is easy to show that weakly-unambiguous PA are closed under intersection. In \cite{Castiglione17}, the authors claim that the class\footnote{Actually, their claim is for the class RCM, which we will show to be equivalent in Section \ref{sec:rcm}.} is closed under union. However their construction has an irrecoverable flaw and we do not know if weakly-unambiguous PA are closed under union or under complementation.\reviewer{A counter-example to the construction of \cite{Castiglione17} is given in Apprendix Section~\ref{app:counter-example-massazza}.} 

We now give a very short proof of the fact that weakly-unambiguous languages in $\PA$ have holonomic generating series. The idea of the proof can be traced back to \cite{LIPSHITZ1988}. A similar proof was given in \cite{Massazza93} for languages in the class $\RCM$ but using the closure under algebraic substitutions instead of specialization (see Remark~\ref{rmk:Massazza}). %

Our approach puts into light a different multivariate power series associated with a weakly-unambiguous PA $\AA$ of dimension $d$. The multivariate \intro{weighted generating series} $G(x,y_1,\ldots,y_d)$ of $\AA$ is such that for all index $(n,i_1,\ldots,i_d)$,  $[x^ny_1^{i_1}\ldots y_d^{i_d}]G$  counts the number of words of length $n$ accepted by $\mathcal{A}$ with a run labeled 
by the vector $(i_1,\ldots,i_d)$.

\begin{proposition}\label{prop:unambPAholonomic}
	The generating series of the language recognized by a weakly-unambiguous Parikh automaton is holonomic.
\end{proposition}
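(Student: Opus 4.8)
The plan is to construct, from the weakly-unambiguous PA $\AA$ of dimension $d$, its weighted generating series $G(x,y_1,\ldots,y_d)$ as a \emph{rational} power series, then obtain the multivariate generating series of $\mathcal L(\AA)$ by intersecting (Hadamard product) with the characteristic series of the semilinear constraint $C$, specializing the $y_i$'s to $1$, and finally invoking the closure properties of holonomic series recalled in Section~\ref{sec:primer}.

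First I would argue that $G(x,y_1,\ldots,y_d)$ is rational. Because $\AA$ is weakly-unambiguous, every accepted word has exactly one accepting run, so $[x^n y_1^{i_1}\cdots y_d^{i_d}]G$ genuinely counts words (not runs with multiplicity) of length $n$ accepted along a run whose accumulated vector is $(i_1,\ldots,i_d)$. View $\AA$ as an ordinary finite automaton over the finite alphabet $\Sigma \times \{\vect v : \vect v \text{ occurs in } \Delta\}$, and weight each transition $(q,(a,\vect v),q')$ by the monomial $x\, y_1^{\vect v(1)}\cdots y_d^{\vect v(d)}$. The transfer-matrix method (Section~\ref{sec:primer}, or \cite[\S I.4.2]{Flajsedg}) then shows that the generating series summing these monomial-weights over all runs from $q_I$ to a final state is rational; and since distinct accepting runs are in bijection with distinct accepted words (weak unambiguity is exactly what is needed here, just as unambiguity of a finite automaton gives a rational \emph{language} generating series), this rational series is precisely $G$. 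Hence $G$ is rational, in particular holonomic.

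Next, let $C(y_1,\ldots,y_d) = \sum_{(i_1,\ldots,i_d)\in C} y_1^{i_1}\cdots y_d^{i_d}$ be the characteristic generating series of the semilinear constraint, which is rational by the discussion in Section~\ref{sec:semilinear}. Form the Hadamard product $H(x,y_1,\ldots,y_d) := G \odot_{y} C$, meaning the Hadamard product taken with respect to the variables $y_1,\ldots,y_d$ only (with $x$ treated as a parameter; formally one can take $C$ with a dummy variable for $x$ having support $\{0,1,2,\ldots\}$, i.e.\ multiply $C$ by $1/(1-x)$, and then apply the ordinary multivariate Hadamard product of Theorem~\ref{thm:hadamard}). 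By the observation that the support of a Hadamard product is the intersection of supports, $[x^n y_1^{i_1}\cdots y_d^{i_d}]H$ counts words of length $n$ accepted by $\AA$ along a run with vector $(i_1,\ldots,i_d)$ \emph{and} $(i_1,\ldots,i_d)\in C$ --- i.e.\ words actually accepted. Theorem~\ref{thm:hadamard} gives that $H$ is holonomic. Finally, summing over all admissible vectors is the specialization $y_1=\cdots=y_d=1$: the ordinary generating series of $\mathcal L(\AA)$ is $L(x) = H(x,1,\ldots,1)$. For each fixed power $x^n$, the coefficient $[x^n]H$ is a polynomial in $y_1,\ldots,y_d$ (only finitely many vectors $(i_1,\ldots,i_d)$ with $|{\bf i}|$ bounded in terms of $n$ can arise, since each transition adds a nonnegative vector and there are $n$ transitions), so the sufficient condition for the specialization to $1$ stated after Proposition~\ref{prop:substitution} applies, and $L(x)$ is holonomic. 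The same argument with the variables $x_a$ for $a\in\Sigma$ kept separate (weighting the transition on letter $a$ by $x_a$ instead of $x$) yields the multivariate generating series of $\mathcal L(\AA)$, also holonomic.

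The only genuinely delicate point is the first step: one must be careful that weak unambiguity --- at most one \emph{accepting} run per word --- is exactly what licenses identifying the rational transfer-matrix series with a bona fide word-counting series. A non-accepting run may revisit states or loop arbitrarily, but those runs contribute monomials whose vector part does not land in $C$, and after the Hadamard product with $C$ they are discarded; what matters is that among runs with vector in $C$ ending in $F$, each word is counted once, which is precisely Definition~\ref{def:weakly-unambiguous}. One subtlety worth spelling out is that a word could in principle have an accepting run \emph{and} a non-accepting run with the same accumulated vector in $C$ --- but a run ending in a final state with vector in $C$ \emph{is} by definition accepting, so this cannot happen, and no overcounting occurs. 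Everything else is a routine application of Propositions~\ref{prop:holonomicclosuresumproduct}, \ref{prop:substitution} and Theorem~\ref{thm:hadamard}.
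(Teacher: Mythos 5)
Your overall route is exactly the paper's: express the relevant multivariate series as a Hadamard product of a rational transfer-matrix series with $\frac{1}{1-x}$ times the characteristic series of $C$, invoke closure under Hadamard product (Theorem~\ref{thm:hadamard}), then specialize $y_1=\cdots=y_d=1$ via the sufficient condition following Proposition~\ref{prop:substitution}, using weak-unambiguity to turn run counts into word counts. In substance this is correct, but one sentence in your first step is wrong as stated: the transfer-matrix series is \emph{not} ``precisely $G$'', and $G$ is in general not rational. What the transfer-matrix method computes is the series $\overline{A}$ counting \emph{all} runs from $q_I$ to a final state, with no condition on the accumulated vector: a word may have many such (non-accepting) runs, and the support of this series is not contained in $\mathbb{N}\times C$, whereas $[x^n y_1^{i_1}\cdots y_d^{i_d}]G$ vanishes whenever $(i_1,\ldots,i_d)\notin C$. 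Weak-unambiguity restricts only accepting runs, so it cannot identify $\overline{A}$ with $G$; indeed $G=\overline{A}\odot\bigl(\tfrac{1}{1-x}C(y_1,\ldots,y_d)\bigr)$ is a genuinely multivariate Hadamard product and is only holonomic. Your closing paragraph already contains the correct bookkeeping --- runs ending in $F$ with vector outside $C$ are discarded by the Hadamard product, a run ending in $F$ with vector in $C$ is by definition accepting, and weak-unambiguity then leaves at most one surviving run per word --- so the repair is purely cosmetic: call the rational series $\overline{A}$ (a run-counting series), define $G$ as its Hadamard product with $\tfrac{1}{1-x}C$, and keep everything else unchanged; with that relabelling your argument coincides with the proof given in the paper.
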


\begin{proof}
Let $\mathcal{A}$ be a weakly-unambiguous PA with a constraint set $C \subseteq \mathbb{N}^d$. We first prove that its 
weighted series $G(x,y_1,\ldots,y_d)$ is holonomic. As holonomic series are closed under Hadamard product (see Theorem~\ref{thm:hadamard}), it suffices to express $G$ as the Hadamard product of two rational series $\overline{A}$ and $\overline{C}$ in the variables $x,y_1,\ldots,y_{d}$.

The first series $\overline{A}(x,y_1,\ldots,y_d)$ is such that for all $n,i_1,\ldots,i_d \geq 0$, $[x^ny_1^{i_1}\ldots y_d^{i_d}]\overline{A}$ counts the number of runs of $\mathcal{A}$ starting in $q_I$, ending in a final state and labeled with a word of length $n$ and the vector $(i_1,\ldots,i_d)$. Note that we do not require that $(i_1,\ldots,i_d)$ belongs to $C$. As this series simply counts the number of runs in an automaton, its rationality is proved via the standard translation of the automaton into a linear system of equations.

For the second series, we take $\overline{C}(x,y_1,\ldots,y_d) \eqdef \frac{1}{1-x}C'(y_1,\ldots,y_d)$ where $C'$ is the  support series of $C$, which is rational (see \cite{EILENBERG1969,Ito69}). A direct computation yields that for all $n,i_1,\ldots,i_d \geq 0$, $[x^ny_1^{i_1}\ldots y_d^{i_d}] \overline{C}$ is equal to $1$ if $(i_1,\ldots,i_d)$ belongs to $C$ and $0$ otherwise.

The Hadamard product of $\overline{A}$ and $\overline{C}$ counts the number of runs accepting a word of length~$n$ with the vector $(i_1,\ldots,i_d)$. As $\mathcal{A}$ is weakly-unambiguous, this quantity is equal to the number of
words of length $n$ accepted with this vector. Hence $G=\bar A \odot \bar C$.

The univariate series  $A(x)$ of $\mathcal{A}$ is equal to $G(x,1,\ldots,1)$. Indeed, for all $n\geq 0$, $[x^n]G(x,1,\ldots,1)$ is the sum over all vector $\vect{i} \in \mathbb{N}^d$ of the number of words of length $n$ accepted with the vector $\vect{i}$. As $\mathcal{A}$ is weakly-unambiguous, each word is accepted with at most one vector and this sum is therefore equal to the total number of accepted words of length $n$. 
Thanks to Proposition~\ref{prop:substitution}, $A(x)=G(x,1,\ldots,1)$ is holonomic.
\end{proof}

\subsection{Equivalence with unambiguous reversal bounded counter machines}

A \intro{$k$-counter machine} \cite{Ibarra78} is informally a Turing machine with one read-only tape that contains the input word, and $k$ counters. Reading a letter $a$ on the input tape, in a state $q$, the machine can check which of its counters are zero, increment or decrement its counters, change its state, and move its read head one step to the left or right, or stay on its current position. Note that the machine does not have access to the exact value of its counters. A $k$-counter machine is said \intro{$(m,n)$-reversal bounded} if its reading head can change direction between left and right at most $m$ times, and if every counter can alternate between incrementing and decrementing at most $n$ times each.  
Finally, a \intro{reversal bounded counter machine} (RBCM) is a $k$-counter machine which is $(m,n)$-reversal bounded for some $m$ and $n$. A RBCM is \intro{unambiguous} if for every word there is at most one accepting computation.

RBCM are known to recognize the same languages as Parikh automata (see \cite{Klaedtke03,Klaedtke02}). This equality does not hold anymore for their deterministic versions \cite[Prop. 3.14]{Cadilhac12a}. However, the proof of the equivalence for the general case can be slightly modified to preserve unambiguity.

\begin{restatable}{proposition}{equivparbcm}\label{prop:unambparbcm}
The class of languages accepted by unambiguous $\RBCM$ and weakly-unambiguous $\PA$ coincide.
\end{restatable}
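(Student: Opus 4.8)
The plan is to establish the two inclusions separately, reusing the classical equivalence between (nondeterministic) RBCM and PA from \cite{Klaedtke02,Klaedtke03} and checking that each direction of that construction can be made to preserve the "at most one accepting run/computation" property.

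For the direction from weakly-unambiguous PA to unambiguous RBCM, I would start from a weakly-unambiguous PA $\AA = (\Sigma,Q,q_I,F,C,\Delta)$ of dimension $d$, with its semilinear constraint $C = \uplus_{i=1}^p \vect{c_i}+P_i^*$ given in unambiguous form. The standard construction builds an RBCM that (i) guesses, letter by letter, a transition of $\AA$ and simulates the finite control of $\AA$, using one bounded-reversal counter per coordinate $j\in[1,d]$ to accumulate the $j$-th component of $\vect{v_1}+\cdots+\vect{v_n}$ (these counters only increment while reading the input, so one reversal suffices), and then (ii) after the input is consumed, verifies membership of the accumulated vector in $C$ by a separate bounded-reversal phase: for each of the finitely many linear pieces $\vect{c_i}+P_i^*$ with $P_i=\{\vect{p_{i,1}},\ldots,\vect{p_{i,k_i}}\}$, it guesses the coefficients $\lambda_1,\ldots,\lambda_{k_i}$ and checks, by decrementing the coordinate counters, that the vector equals $\vect{c_i}+\sum_\ell \lambda_\ell \vect{p_{i,\ell}}$. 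The key observation for unambiguity is twofold: first, an accepting computation of the RBCM induces exactly an accepting run of $\AA$ together with a choice of index $i$ and coefficients $(\lambda_\ell)$ witnessing membership in $C$; second, because $\AA$ is weakly-unambiguous the run is unique, and because the presentation $\uplus_i \vect{c_i}+P_i^*$ is disjoint and each $\vect{c_i}+P_i^*$ is an \emph{unambiguous} linear set, the index $i$ and the coefficients $(\lambda_\ell)$ are also unique. Hence the RBCM computation is unique; one just has to make the verification phase deterministic enough that it does not introduce spurious branching (e.g. commit to reading the guessed $\lambda_\ell$ in a fixed order), which is routine.

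For the converse, from an unambiguous RBCM $M$ I would run the classical simulation of an RBCM by a PA. Break each computation of $M$ into its $O(mn)$ maximal "sweeps" in which the reading head moves monotonically and each counter behaves monotonically; a PA of dimension proportional to the number of counters times the number of sweeps can be built whose run records, for each sweep, a PA-run fragment and whose Parikh coordinates track the net counter effects, with the semilinear constraint $C$ encoding the sweep boundary/zero-test/consistency conditions (including that decrements never drive a counter below zero, expressed as the familiar prefix-sum inequalities). Here again the point is that an accepting computation of $M$ maps bijectively to an accepting run of the PA together with the bookkeeping data, and since $M$ has at most one accepting computation per word, and the decomposition of a computation into sweeps is canonical, the PA has at most one accepting run per word — so it is weakly-unambiguous.

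The main obstacle is not the overall architecture (which is standard) but the bookkeeping needed to guarantee that \emph{no} step of either translation silently introduces a second accepting object: one must fix a canonical decomposition of RBCM computations into sweeps, make the zero-tests and reversal points deterministic on the PA/RBCM side, and — on the PA-to-RBCM direction — exploit the unambiguous \emph{and disjoint} semilinear presentation so that the coefficients $\lambda_\ell$ witnessing $\vect{v}\in C$ are unique. I expect the write-up to consist mostly of pointing to the constructions in \cite{Klaedtke02} and carefully verifying the injectivity of computation-to-run and run-to-computation maps, with the disjoint unambiguous presentation of $C$ doing the real work on one side and the canonical sweep decomposition doing it on the other.
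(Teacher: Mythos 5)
Your PA-to-RBCM direction is essentially the paper's: read the input while only incrementing the $d$ counters, then verify membership in $C$ in a second bounded-reversal phase. The paper does this by simulating an unambiguous automaton over $(\mathbb{N}^d,+)$ accepting $C$, which is exactly your ``guess the linear piece and its coefficients'' step made canonical; the unambiguous disjoint presentation gives uniqueness of the witness, as you say, so accepting computations are in bijection with accepting runs of the PA. That half is fine.

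The gap is in the $\RBCM$-to-$\PA$ direction. First, a PA is one-way and reads each letter exactly once, while the RBCM head makes up to $m$ reversals: ``recording a PA-run fragment for each sweep'' requires a crossing-sequence-style construction over a single left-to-right pass, or, as the paper does, first invoking Ibarra's reduction of two-way RBCM to one-way RBCM with $1$-reversal counters and checking that this reduction preserves unambiguity; your sketch never says how the one pass is organized, and this is where the classical equivalence does its real work. Second, and more seriously, an RBCM performs arbitrarily many steps that consume no input (stationary moves, and the whole verification phase at the end-marker), each of which may modify counters. A PA transition reads one letter and adds one fixed vector, so the natural simulation yields a PA \emph{with $\varepsilon$-transitions}, and eliminating these while preserving weak-unambiguity is precisely the step the paper isolates as delicate: it is handled via generalized PAs whose transitions carry semilinear sets summarizing the effects of $\varepsilon$-paths, together with an argument that the contraction of $\varepsilon$-paths is unique, so distinct accepting runs of the $\varepsilon$-free automaton lift to distinct accepting runs of the original. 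Your claim that ``the sweep decomposition is canonical, hence the PA is weakly-unambiguous'' silently assumes this compression; without it a machine computation with unboundedly many input-free steps does not correspond to any run of a finite, $\varepsilon$-free PA, and the injectivity of the run-to-computation map is exactly what must be re-proved after compressing. As written, the proposal does not establish the inclusion of unambiguous $\RBCM$ languages in weakly-unambiguous $\PA$ languages; you need either the one-way normalization plus an unambiguity-preserving $\varepsilon$-elimination (the paper's route) or an explicit substitute for both.
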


\begin{proof}[Proof sketch]\reviewer{A detailed proof can be found in Appendix~Sec.~\ref{app:unambparbcm}.}
Unambiguous RBCMs are shown to be equivalent to one-way unambiguous RBCMs. In turn these are shown to be equivalent 
to weakly-unambiguous PA with $\varepsilon$-transitions, which in turn are equivalent to weakly-unambiguous PA. This $\varepsilon$-removal step needs to be adapted to preserve weak-unambiguity.
\end{proof}

\subsection{Equivalence with RCM}
\label{sec:rcm}

If we fix an alphabet $\Gamma=\{a_1,\ldots,a_d\}$, we can associate with every semilinear set $C$ of dimension $d$, the language $[C]=\{ w \in \Gamma^* \;:\; (|w|_{a_1},\ldots,|w|_{a_d}) \in C\}$ of words whose numbers of occurrences of each letter satisfy the constraint expressed by $C$.  For instance, if we take the semilinear set $C_0 = \{ (n,m,n,m) \;:\; n,m \geq 0 \}$ and the alphabet $\{a,b,c,d\}$, $[C_0]$ consists of all words having as many $a$'s as $c$'s and as many $b$'s as $d$'s.

A language $\cal L$ over $\Sigma$ belongs to \intro{$\textnormal{RCM}$} if there exist  a regular language $\cal R$ over $\Gamma=\{a_1, \ldots, a_d\}$, a semilinear set\footnote{Our definition may seem a little more general than Massazza's, which only uses semilinear constraints without quantifiers, but it can be shown that the classes are equivalent.}\reviewer{See Appendix Sec.~\ref{app:eq-def-RCM} for a proof of the footnote.} $C \subseteq \mathbb N^d$ and a length preserving morphism $\mu:\Gamma^*\longrightarrow\Sigma^*$, that is injective over $\mathcal{R}\cap [C]$, so that $\mathcal{L}=\mu(\mathcal{R}\cap [C])$.
For example, $\mathcal L_{abab}=\{a^nb^ma^nb^m : n, m\in\mathbb N\}$ can be shown to be in RCM by taking $\Gamma=\{a,b,c,d\}, \Sigma=\{a,b\}$, $\mu(a)=\mu(c)=a, \mu(b)=\mu(d)=b$, $\mathcal{R}=a^*b^*c^*d^*$ and the semilinear set $C_0$ defined in the previous paragraph.

\begin{restatable}{theorem}{eqparcm}\label{prop:rcmpa}
$\mathcal L\in \RCM$ iff $\mathcal L$ is recognized by a weakly-unambiguous Parikh automaton.
\end{restatable}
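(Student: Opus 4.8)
The plan is to prove the two directions separately, in both cases constructing the target object essentially term-by-term from the source object. For the direction $\mathcal L\in\RCM \implies \mathcal L$ is recognized by a weakly-unambiguous PA: start from a witness $(\mathcal R, C, \mu)$ with $\mathcal R$ regular over $\Gamma=\{a_1,\ldots,a_d\}$, $C\subseteq\mathbb N^d$ semilinear, and $\mu:\Gamma^*\to\Sigma^*$ length-preserving and injective on $\mathcal R\cap[C]$. Take a deterministic finite automaton $\mathcal D$ for $\mathcal R$. Build a PA over $\Sigma$ of dimension $d$ whose states are those of $\mathcal D$, whose semilinear constraint is $C$, and which, on reading a letter $b\in\Sigma$ in state $q$, may take any $\Gamma$-transition $q\xrightarrow{a_j}q'$ of $\mathcal D$ such that $\mu(a_j)=b$, emitting the unit vector $\vect{e_j}$. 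An accepting run of this PA over a word $w\in\Sigma^*$ corresponds exactly to a pair (a $\Gamma$-word $u\in\mathcal R$ with $\mu(u)=w$, a valuation of the Parikh vector of $u$ lying in $C$); since $\mu$ is length-preserving, $|u|=|w|$, and since $\mathcal D$ is deterministic the run of $\mathcal D$ on $u$ is unique, so the run is determined by $u$. Hence $\mathcal L(\mathcal A)=\mu(\mathcal R\cap[C])=\mathcal L$, and the number of accepting runs on $w$ equals $|\{u\in\mathcal R\cap[C] : \mu(u)=w\}|$, which is at most $1$ by injectivity of $\mu$ on $\mathcal R\cap[C]$; so $\mathcal A$ is weakly-unambiguous.

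For the converse, start from a weakly-unambiguous PA $\AA=(\Sigma,Q,q_I,F,C,\Delta)$ of dimension $d$, with $\Delta\subseteq Q\times(\Sigma\times\mathbb N^d)\times Q$. The idea is to take $\Gamma$ to be the set of transitions $\Delta$ (one new letter $t$ per transition), let $\mathcal R\subseteq\Gamma^*$ be the regular language of transition-sequences that form a valid run of $\AA$ from $q_I$ to a final state (accepted by the automaton whose states are $Q$ and whose $t$-transition goes $q\to q'$ when $t=(q,(a,\vect v),q')$), and let $\mu:\Gamma^*\to\Sigma^*$ send the letter $t=(q,(a,\vect v),q')$ to $a\in\Sigma$; this is length-preserving. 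The semilinear constraint must force the accumulated Parikh vector to lie in $C$: the accumulated vector is a linear function $\sum_{t} |u|_t\, \vect v_t$ of the letter-counts $(|u|_t)_{t\in\Gamma}$, so the set of $u$-Parikh-vectors whose image under this linear map lies in $C$ is semilinear (preimage of a semilinear set under a monoid morphism $\mathbb N^\Gamma\to\mathbb N^d$ is semilinear), giving the desired $C'\subseteq\mathbb N^{|\Gamma|}$ with $[C']=\{u : \sum_t|u|_t\vect v_t\in C\}$. Then $\mathcal R\cap[C']$ is exactly the set of accepting runs of $\AA$, $\mu$ maps it onto $\mathcal L(\AA)$, and because $\AA$ is weakly-unambiguous each accepted word has exactly one accepting run, so $\mu$ is injective on $\mathcal R\cap[C']$. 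This exhibits $\mathcal L(\AA)\in\RCM$.

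The main obstacle — or rather, the point requiring the most care — is the semilinear preimage step in the converse: one must invoke that semilinear sets are closed under inverse images by $\mathbb N$-linear maps (equivalently, that the constraint "$\sum_t|u|_t\vect v_t\in C$" is Presburger-definable, which is clear since $C$ is Presburger-definable and the map is given by explicit linear expressions), and, if one wants the effective/size bounds implicit elsewhere in the paper, track the blow-up from the unambiguous presentation of $C$ to that of $C'$. A secondary subtlety is handling $\varepsilon$-transitions: the cleanest route is to note PA can be taken without $\varepsilon$-transitions (as already used in the proof sketch of Proposition~\ref{prop:unambparbcm}), so that $\mu$ really is length-preserving; alternatively one allows $\mu$ to erase letters, but then one must re-derive length-preservation, so restricting to $\varepsilon$-free PA is preferable. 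Everything else — determinism of the DFA for $\mathcal R$ giving uniqueness of the underlying $\Gamma$-run, and the bijection between accepting PA-runs and elements of $\mathcal R\cap[C']$ — is a routine unwinding of definitions.
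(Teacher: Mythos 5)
Your proposal is correct, and its overall shape is the same as the paper's: in the forward direction you use exactly the paper's construction (a DFA for $\mathcal R$ whose transition on $a_j$ is relabelled by $(\mu(a_j),\vect{e_j})$, keeping $C$ as the constraint, with weak-unambiguity following from determinism of the DFA together with injectivity of $\mu$ on $\mathcal R\cap[C]$), and in the converse you encode the accepting runs of the PA as a regular language on which $\mu$ is injective precisely because the PA is weakly-unambiguous. The only substantive difference is how the converse translates the semilinear constraint. The paper first normalizes the weakly-unambiguous PA so that all transition vectors are unit vectors $\vect{e_i}$ (this is done via the generalized-PA construction, which is also reused elsewhere in the appendix), and then takes $\Gamma=\Sigma\times\{\vect{e_i}\}_i$, so the constraint carries over with essentially no work. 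You instead take $\Gamma=\Delta$ (one letter per transition) and define $C'$ as the preimage of $C$ under the $\mathbb N$-linear map $u\mapsto\sum_{t}|u|_t\,\vect{v_t}$, invoking closure of semilinear (Presburger-definable) sets under such preimages. Both are valid; your route is shorter and avoids the normalization lemma, at the cost of relying on the preimage-closure fact and of a potentially larger dimension ($|\Delta|$ rather than the number of distinct vectors), whereas the paper's normalization is a reusable tool and keeps the constraint untouched. Your remarks on $\varepsilon$-transitions are harmless but unnecessary here, since the PA model of the statement has no $\varepsilon$-transitions by definition.
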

\begin{proof}[Proof sketch]\reviewer{See Appendix Sec.~\ref{app:rcm-wupa} for the complete proof.}Every language in RCM can be accepted by a weakly-unambiguous PA that guesses the underlying word over $\Gamma$: the weak-unambiguity is guaranteed by the injectivity of the morphism. Conversely a language accepted by a weakly-unambiguous PA is in RCM by taking for  $\mathcal R$ the set of runs of the PA and translating the constraint: the injectivity of the morphism is guaranteed by the weak-unambiguity of the PA.
\end{proof}

In \cite{Castiglione17}, the authors conjectured that the class RCM contains the one-way deterministic RBCM. %
From Theorem~\ref{prop:rcmpa} and Proposition~\ref{prop:unambparbcm} we get a stronger result:

\begin{corollary}
The languages in RCM are the languages accepted by unambiguous RBCM.
\end{corollary}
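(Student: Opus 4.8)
The plan is simply to compose the two characterisations established immediately above. By Theorem~\ref{prop:rcmpa}, a language $\mathcal L$ lies in $\RCM$ if and only if it is recognised by a weakly-unambiguous Parikh automaton; by Proposition~\ref{prop:unambparbcm}, a language is recognised by a weakly-unambiguous Parikh automaton if and only if it is accepted by an unambiguous $\RBCM$. Chaining these two biconditionals gives $\mathcal L \in \RCM$ if and only if $\mathcal L$ is accepted by an unambiguous $\RBCM$, which is exactly the claim.

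Concretely, for one inclusion I would invoke the ``$\Rightarrow$'' direction of Theorem~\ref{prop:rcmpa} to turn an $\RCM$ presentation $(\mathcal R, C, \mu)$ into a weakly-unambiguous PA, and then feed this PA into the ``$\Rightarrow$'' direction of Proposition~\ref{prop:unambparbcm} (which routes through one-way weakly-unambiguous PA with $\varepsilon$-transitions and then to a one-way, hence two-way, unambiguous $\RBCM$). For the reverse inclusion I would run the two ``$\Leftarrow$'' directions in the opposite order: an unambiguous $\RBCM$ is first converted to a weakly-unambiguous PA via Proposition~\ref{prop:unambparbcm}, and the resulting PA is then placed in $\RCM$ through the ``$\Leftarrow$'' direction of Theorem~\ref{prop:rcmpa}, taking $\mathcal R$ to be the regular set of runs and obtaining injectivity of $\mu$ from weak-unambiguity.

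There is essentially no obstacle here: all the genuine work lies in Theorem~\ref{prop:rcmpa} and Proposition~\ref{prop:unambparbcm}, and the only point to verify is that ``weakly-unambiguous PA'' is literally the same intermediate model in both statements, so the two equivalences glue without any reconciliation step. It is worth remarking, as a sanity check on the strength of the result, that this subsumes the conjecture of Castiglione and Massazza, which only asked that $\RCM$ contain the one-way \emph{deterministic} $\RBCM$: here determinism is relaxed to unambiguity and one-wayness to two-wayness, and the containment is upgraded to an equality.
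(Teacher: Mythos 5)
Your proof is correct and matches the paper exactly: the corollary is obtained there, as you do, by composing the equivalence $\RCM = $ weakly-unambiguous PA (Theorem~\ref{prop:rcmpa}) with the equivalence weakly-unambiguous PA $=$ unambiguous RBCM (Proposition~\ref{prop:unambparbcm}), with no additional argument needed.
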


\subsection{Weakly-unambiguous pushdown Parikh automata}
\label{sec:wuppa}	

\newcommand{\PPA}{$\mathbb{P}\mathrm{A}$\xspace} 
A \intro{pushdown Parikh automaton} (\PPA for short) is  a PA where the finite automaton is replaced by a pushdown automaton. A \intro{weakly-unambiguous \PPA} has at most one accepting run for each word.  Most results obtained previously  can be adapted for weakly-unambiguous \PPA.\reviewer{The proofs of non-closure are in Appendix Prop.~\ref{prop:non-closure-wuppa}} However, unsurprisingly, the class of languages accepted by weakly-unambiguous \PPA is not closed under union and  intersection. This can be shown using the inherent weak-ambiguity of the language $\mathcal{D}$ proved in Section~\ref{sec:2 examples}. The closure under complementation is left open.

\begin{proposition}\label{prop:PPA_holonomic}
The generating series of a weakly-unambiguous \PPA is holonomic.
\end{proposition}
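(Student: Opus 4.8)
The plan is to mirror the proof of Proposition~\ref{prop:unambPAholonomic}, replacing the role of the finite automaton by that of a pushdown automaton. Concretely, let $\AA$ be a weakly-unambiguous \PPA of dimension $d$ with semilinear constraint $C \subseteq \mathbb{N}^d$. As before, I would introduce the multivariate weighted generating series $G(x,y_1,\ldots,y_d)$, where $[x^n y_1^{i_1}\cdots y_d^{i_d}]G$ counts the number of words of length $n$ accepted by $\AA$ with a run labeled by the vector $(i_1,\ldots,i_d)$. The key point is again that, by weak-unambiguity, each accepted word contributes to exactly one coefficient of $G$, and the univariate generating series of $\mathcal{L}(\AA)$ equals $G(x,1,\ldots,1)$; this specialization is legitimate by Proposition~\ref{prop:substitution} since each coefficient $[x^n y_1^{i_1}\cdots y_d^{i_d}]G$ is bounded (there are finitely many words of length $n$), so $[x^n]G$ is a finite sum and $[x^n]G(x,1,\ldots,1)$ is a well-defined polynomial evaluation. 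So it suffices to prove $G$ holonomic.

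For that, I would write $G = \overline{A} \odot \overline{C}$ exactly as in Proposition~\ref{prop:unambPAholonomic}, where $\overline{C}(x,y_1,\ldots,y_d) = \frac{1}{1-x}\,C'(y_1,\ldots,y_d)$ with $C'$ the (rational) characteristic series of the semilinear set $C$. The only thing that changes is the first factor: $\overline{A}(x,y_1,\ldots,y_d)$ must count, for all $n,i_1,\ldots,i_d$, the number of runs of the underlying pushdown automaton from $q_I$ to a final state, reading a word of length $n$ and accumulating the vector $(i_1,\ldots,i_d)$, \emph{without} the constraint that $(i_1,\ldots,i_d)\in C$. Since holonomic series are closed under Hadamard product (Theorem~\ref{thm:hadamard}), and since $\overline{C}$ is rational hence holonomic, it is enough to show that $\overline{A}$ is holonomic — in fact algebraic.

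The substantive step is therefore: the series counting runs of a pushdown automaton, refined by the Parikh vector of the accumulated weights, is algebraic. This is a weighted/multivariate version of the Chomsky–Schützenberger theorem. I would argue it as follows. First, replace the \PPA by an equivalent context-free grammar $\GG$ over the alphabet $\Sigma \times \mathbb{N}^d$ (the standard triple construction of a grammar from a pushdown automaton, in its unambiguous-preserving form), so that $\overline{A}$ is the multivariate generating series of the (unambiguous) context-free language of ``decorated words'' $\{(a_1,\vect{v_1})\cdots(a_n,\vect{v_n})\}$ generated by $\GG$, where the variable $x$ tracks length and $y_j$ tracks the $j$-th coordinate of $\vect{v_1}+\cdots+\vect{v_n}$; note that the decorated alphabet is finite because only finitely many vectors appear in $\Delta$. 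The generating series of an unambiguous context-free grammar is algebraic: this is obtained by the Chomsky–Schützenberger method, turning the grammar into a system of polynomial equations in the nonterminal series with one indeterminate per terminal symbol, then substituting $x$ and the appropriate monomials in the $y_j$'s for those indeterminates, and invoking algebraic closure under such substitutions (Proposition~\ref{prop:substitution}, or directly elimination theory). Hence $\overline{A}$ is algebraic, thus holonomic, and $G = \overline{A}\odot\overline{C}$ is holonomic, and so is $G(x,1,\ldots,1)$.

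The main obstacle is purely bookkeeping: making sure the grammar–to–pushdown correspondence preserves unambiguity, so that the decorated language is genuinely unambiguous and its multivariate generating series really is algebraic (the classical triple construction does preserve the number of accepting computations, but this must be stated carefully, especially in the presence of $\varepsilon$-moves, paralleling the $\varepsilon$-removal care taken in Proposition~\ref{prop:unambparbcm}). A secondary point to be careful about is the legitimacy of the specialization $y_j \mapsto 1$: it must be justified via the coefficientwise-polynomial criterion recalled after Proposition~\ref{prop:substitution}, exactly as in the proof of Proposition~\ref{prop:unambPAholonomic}. Everything else is a verbatim transcription of that proof.
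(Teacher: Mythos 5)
You take essentially the same route as the paper: the weighted series is written as the Hadamard product $\overline{A}\odot\overline{C}$, only the first factor changes, and one specializes $y_j\mapsto 1$; the paper's proof is exactly this two-line reduction to Proposition~\ref{prop:unambPAholonomic}, with the algebraicity of $\overline{A}$ justified by the observation that the language of runs of a pushdown automaton is a deterministic (hence unambiguous) context-free language even when the automaton itself is non-deterministic.

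The one place where your write-up uses the wrong invariant is the claim that the grammar over the decorated alphabet generates an \emph{unambiguous} language of decorated words and that $\overline{A}$ is the generating series of that language. The series $\overline{A}$ must count \emph{runs}, and a decorated word $(a_1,\vect{v_1})\cdots(a_n,\vect{v_n})$ does not determine the run: two runs can read the same letters with the same vectors and differ only in the states visited or the stack moves. So the triple-construction grammar may well be ambiguous over decorated words, and weak-unambiguity of the pushdown Parikh automaton does not help, since it constrains only accepting runs. The fix is either the one you yourself hint at — derivations of the triple-construction grammar are in bijection with runs, and the derivation-counting series is algebraic because it is a component of the solution of the grammar's polynomial system, with no unambiguity assumption needed — or the paper's phrasing: encode runs as words over the transition alphabet (states included), so that distinct runs are distinct words, note that this language of runs is deterministic context-free and hence unambiguous, and then substitute the monomial $x\,y_1^{v_1}\cdots y_d^{v_d}$ for the letter corresponding to a transition carrying $(a,\vect{v})$. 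With that correction, your argument, including the care about $\varepsilon$-moves and the justification of the specialization to $1$, matches the paper's proof.
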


\begin{proof}%
The proof is almost identical to the proof of Proposition~\ref{prop:unambPAholonomic}. The only difference is that
the series $\overline{A}$ is algebraic and not rational. Indeed it counts the number of runs in a pushdown automaton and the language of runs is a deterministic context-free language even if the pushdown automaton is not deterministic. 
\end{proof}

\reviewer{More details can be found in Appendix Sec.~\ref{app:parikh-tree}.}
Remark that using the same techniques, we can prove that the generating series of weakly-unambiguous Parikh tree automata are holonomic. As we proved that all these series are also generating series of \PPA's, we do not elaborate on this model in this extended abstract.

RBCMs can be extended with a pushdown storage to obtain a \intro{RBCM with a stack} \cite{Ibarra78}.

\begin{restatable}{theorem}{weakpparbcm}
Weakly-unambiguous \PPA are equivalent to unambiguous one-way RBCM with a stack. 
\end{restatable}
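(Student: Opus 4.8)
The plan is to mirror the proof structure already used for Parikh automata (Proposition~\ref{prop:unambparbcm}), adapting each step to the presence of a pushdown storage and a stack, and keeping track of unambiguity throughout. Concretely, I would establish the two inclusions separately via a chain of equivalences: (i) unambiguous one-way RBCM with a stack $\equiv$ weakly-unambiguous \PPA with $\varepsilon$-transitions, and (ii) weakly-unambiguous \PPA with $\varepsilon$-transitions $\equiv$ weakly-unambiguous \PPA.

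For direction (i), from a \PPA one builds an RBCM with a stack that simulates the pushdown store directly on its own stack, and uses its $d$ reversal-bounded counters to accumulate, component by component, the vector $\vect{v_1}+\cdots+\vect{v_n}$ produced along the run; at the end of the input it verifies membership in the semilinear constraint $C$. Since a semilinear set is a finite union of linear sets $\vect{c_i}+P_i^*$, checking $\vect{v}\in C$ after the input has been consumed is the standard reversal-bounded computation: guess the linear component $i$, subtract $\vect{c_i}$, then repeatedly subtract periods from $P_i$; each counter only decreases in this phase, so at most one reversal per counter is added. One-wayness of the reading head is preserved because the \PPA reads the input left to right. Conversely, from a one-way RBCM with a stack one builds a \PPA: the pushdown part of the \PPA emulates the RBCM stack, while each counter of the RBCM — being $n$-reversal bounded — goes through a bounded number of increment/decrement phases, and one records in the \PPA's vector (of suitably enlarged dimension, one group of coordinates per phase per counter) the net contribution of each phase; the zero-tests are handled exactly as in the Parikh automaton construction of \cite{Klaedtke03}, by having the \PPA guess the sign pattern of each counter at each point and encode the consistency of these guesses into the semilinear constraint. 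In both directions the translation is a deterministic bijection on accepting computations, so ``at most one accepting run'' is preserved in each direction; this is the point that requires the most care, exactly as flagged in the proof sketch of Proposition~\ref{prop:unambparbcm}.

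For direction (ii), the $\varepsilon$-removal for pushdown automata is more delicate than in the finite-state case because one cannot simply contract $\varepsilon$-cycles. The standard route is to first put the pushdown automaton into a normal form in which every transition either pushes exactly one symbol or pops exactly one symbol (and reads at most one input letter), then apply the usual triple-construction $(p, Z, q)$ that records ``from state $p$ with $Z$ on top, reach state $q$ having popped $Z$'': this contracts all internal $\varepsilon$-computation into single transitions while summing the associated vectors. One must check that this preserves weak-unambiguity: distinct accepting runs of the $\varepsilon$-free machine would lift back to distinct accepting runs of the original, because the triple-construction runs are in bijection with (suitably bracketed) runs of the original machine. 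This is the main obstacle of the proof — the bookkeeping needed to show that the vector labels compose correctly under this bracketing and that no spurious ambiguity is introduced.

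I would also remark, in passing, that although the intermediate models carry $\varepsilon$-transitions, Proposition~\ref{prop:PPA_holonomic} only needs the $\varepsilon$-free model, and the equivalence just established shows the $\varepsilon$-free model loses no expressive power, so there is no tension with the holonomicity statement. The overall amount of new work beyond Proposition~\ref{prop:unambparbcm} is modest: it is the combination of the Parikh-automaton equivalence proof with the classical pushdown $\varepsilon$-removal, with unambiguity tracked at every step.
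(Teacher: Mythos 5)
Your direction (i) is fine and matches the paper: the equivalence between unambiguous one-way RBCM with a stack and weakly-unambiguous pushdown PA \emph{with} $\varepsilon$-transitions is a direct adaptation of the PA/RBCM argument, and that is exactly how the paper treats it. The genuine gap is in direction (ii), the $\varepsilon$-removal, which is precisely the part the paper calls ``quite involved''. Your plan --- normalize the pushdown so every move pushes or pops one symbol, then apply the triple construction $(p,Z,q)$, claiming it ``contracts all internal $\varepsilon$-computation into single transitions while summing the associated vectors'' --- does not work as stated, for two reasons. First, the triple construction contracts nothing: each production it creates mirrors a single transition of the machine, so the $\varepsilon$-transitions of the pushdown PA reappear verbatim as $\varepsilon$-productions (together with unit and non-Greibach productions) of the resulting grammar; the contraction must happen afterwards, during grammar normalization, and that is where the whole difficulty lies. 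Second, a contracted $\varepsilon$-segment does not contribute one vector that could be ``summed'' into a single transition: the set of vectors labelling the $\varepsilon$-subcomputations that erase a nonterminal (or pop a symbol from $p$ to $q$) is in general an infinite semilinear set (Parikh's theorem), so one is forced to pass to a generalized model whose productions or transitions carry semilinear sets rather than vectors.

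Concretely, the paper's route is: weakly-unambiguous pushdown PA $\rightarrow$ weakly-unambiguous constrained CFG via the triple construction (this step you do have, and the bijection between runs and leftmost derivations is indeed what preserves weak-unambiguity there); then elimination of nullable symbols and of unit rules, where the erased material is replaced by the semilinear sets $T_A$ and $T_{A,B}$ of vectors it could have contributed; then a Greibach-type normalization, carried out not by the classical left-recursion elimination but through an explicit injective transformation $\Phi$ on derivation contexts, exactly because one must certify both that no spurious ambiguity is introduced and that the semilinear labels are redistributed consistently; and finally a translation back to an $\varepsilon$-free pushdown PA that reads a letter on every step. Your sketch omits the nullable/unit-rule elimination and the left-recursion (Greibach) step altogether --- these, not the triple construction, are the ``main obstacle'' you allude to --- so the core of direction (ii) is missing.
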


\begin{proof}[Proof sketch]\reviewer{A detailed proof can be found in Appendix Sec.~\ref{sec:C}.}
We first establish that unambiguous one-way RBCM with a stack are equivalent
to weakly-unambiguous \PPA with $\varepsilon$-transitions. Contrarily to the PA case, the removal of $\varepsilon$-transitions is quite involved and uses weighted context-free grammars.  
\end{proof}

The class\footnote{In \cite{Massazza93}, LCL is defined without the injective morphism but we adapt it following \cite{Castiglione17}.} \intro{LCL} of \cite{Massazza93} is defined as RCM is, except that the regular language is replaced by an unambiguous context-free\footnote{Using deterministic context-free languages instead of unambiguous ones in the definition of LCL would result in the same class.} language. Similarly to the PA case, one can prove:

\begin{restatable}{proposition}{ppalcl}
LCL is the set of languages accepted by weakly-unambiguous \PPA.
\end{restatable}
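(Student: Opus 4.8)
The plan is to mimic the structure of the proof of Theorem~\ref{prop:rcmpa}, replacing the regular language by an unambiguous context-free one throughout, and to check that the equivalence with weakly-unambiguous \PPA goes through at each step. First I would prove the easy inclusion: given a language $\mathcal{L} = \mu(\mathcal{R} \cap [C])$ in LCL with $\mathcal{R}$ unambiguous context-free over $\Gamma = \{a_1,\ldots,a_d\}$, $C \subseteq \mathbb{N}^d$ semilinear, and $\mu$ length-preserving and injective on $\mathcal{R} \cap [C]$, I build a \PPA that runs an unambiguous pushdown automaton for $\mathcal{R}$, guessing at each step a letter $a_j \in \Gamma$, emitting $\mu(a_j)$ on the input tape, and adding the unit vector $\vect{e_j}$ to its counter; the semilinear constraint is $C$ itself. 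A word $w$ is accepted precisely when there is some $u \in \mathcal{R}$ with $\mu(u) = w$ and Parikh image of $u$ in $C$, i.e. when $w \in \mu(\mathcal{R} \cap [C])$. Weak-unambiguity follows exactly as in the PA case: two distinct accepting runs would project (via the underlying PDA run, which is unique since $\mathcal{R}$ is unambiguously context-free) to two distinct words $u \neq u'$ in $\mathcal{R} \cap [C]$ with $\mu(u) = \mu(u') = w$, contradicting injectivity of $\mu$ on $\mathcal{R}\cap [C]$.

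For the converse, let $\mathcal{A}$ be a weakly-unambiguous \PPA of dimension $d$ with underlying pushdown automaton $P$ and semilinear constraint $C$. Following the PA-case argument, I take $\Gamma$ to be the transition alphabet of $P$ (one symbol per transition of $P$), so that $\mathcal{R} \subseteq \Gamma^*$ is the language of accepting runs of $P$; the key point, already invoked in the proof of Proposition~\ref{prop:PPA_holonomic}, is that this run language is a deterministic — hence unambiguous — context-free language, since a run symbol records exactly which transition was taken and the PDA reading these symbols just replays the computation deterministically. The morphism $\mu : \Gamma^* \to \Sigma^*$ sends a run symbol to the input letter carried by that transition (length-preserving, as each transition reads exactly one letter — after the $\varepsilon$-removal of the companion theorem we may assume this). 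Finally, the semilinear constraint on $\Gamma$-words is obtained by composing $C$ with the linear map sending the Parikh image over $\Gamma$ (i.e. how many times each transition is used) to the resulting counter vector in $\mathbb{N}^d$; since semilinear sets are closed under inverse images of linear maps (they are exactly the Presburger-definable sets), the resulting set $C'$ with $[C'] = \{ u \in \Gamma^* : \text{counter vector of } u \in C\}$ is semilinear, and $\mathcal{L}(\mathcal{A}) = \mu(\mathcal{R} \cap [C'])$. Injectivity of $\mu$ on $\mathcal{R} \cap [C']$ is precisely weak-unambiguity of $\mathcal{A}$: two $\Gamma$-words in $\mathcal{R}\cap[C']$ with the same $\mu$-image are two distinct accepting runs of $\mathcal{A}$ on the same word.

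The main obstacle is bookkeeping around $\varepsilon$-transitions and the length-preserving requirement. The definition of LCL demands that $\mu$ be length-preserving, which forces every transition of the underlying PDA to read exactly one input letter; but a general \PPA (like a general PDA) has $\varepsilon$-moves, and the run language over $\Gamma$ then fails to be length-related to the input. One would want to invoke the $\varepsilon$-removal for weakly-unambiguous \PPA — which, as the sketch for Theorem~\ref{thm:weakpparbcm} warns, is "quite involved and uses weighted context-free grammars" — but that removal is for the automaton model, not obviously for the grammar-style presentation; alternatively one could allow $\mathcal{R}$ to record $\varepsilon$-labelled transitions and then compose $\mu$ with an erasing morphism, checking that erasing morphisms behave well here. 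I would handle this by citing the $\varepsilon$-removal result for weakly-unambiguous \PPA (available from the companion theorem) to reduce to the $\varepsilon$-free case, and then run the argument above; the remaining verifications (rationality of the characteristic series of $C'$, unambiguity of the run-PDA, and the two injectivity arguments) are routine adaptations of the proof of Theorem~\ref{prop:rcmpa}. A secondary point to check is the footnote's claim that replacing "unambiguous context-free" by "deterministic context-free" in the definition of LCL yields the same class — this follows because the run language of a (nondeterministic) PDA is deterministic context-free, so the converse direction already lands in the smaller class, while the forward direction only needs the larger one.
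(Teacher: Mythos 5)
Your proposal is correct and follows essentially the route the paper intends when it says ``similarly to the PA case'': simulate an unambiguous PDA for $\mathcal{R}$ while guessing $\Gamma$-letters for one direction, and take the (deterministic, hence unambiguous) run language of the \PPA{} with the transition-to-letter morphism for the other, with weak-unambiguity and injectivity of $\mu$ exchanging roles exactly as in Theorem~\ref{prop:rcmpa}, and the $\varepsilon$-removal machinery of Section~\ref{sec:C} handling the length-preserving requirement. Your only deviation is cosmetic: you pull the constraint back through the linear map from transition counts to counter vectors, where the paper instead normalizes transitions to unit vectors, and both are routine.
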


\section{Examples of inherently weakly-ambiguous languages}\label{sec:non-ambiguity}

\reviewer{See Appendix Prop.~\ref{prop:decidability-weak-unambiguity} for a description of the algorithm.}There is a polynomial-time algorithm to decide whether a given PA is weakly-unambiguous. But inherent weak-unambiguity is undecidable, as a direct application of a general theorem from~\cite{Greibach1968}.\reviewer{Undecidability is proved in Appendix Prop.~\ref{prop:undecidability-inherent-weak-ambiguity}} This emphasises that inherent weak-ambiguity is a difficult problem in general. 

\subsection{Two examples using an analytic criterion}\label{sec:2 examples}
 Following an idea from Flajolet~\cite{flajolet87} for context-free languages, the link between weakly-unambiguous PA and holonomic series yields sufficient criteria to establish inherent weak-ambiguity, of analytic flavor: the contraposition of Proposition~\ref{prop:unambPAholonomic} indicates that if $\mathcal L$ is recognized by a PA but its generating series is not holonomic, then $\cal L$ is inherently weakly-ambiguous.
  Hence, any criterion of non-holonomicity can be used to establish the inherent weak-ambiguity. Many such criteria can be obtained when considering the generating series as analytic functions (of complex variables). See~\cite{flajolet87, flajolet2005} for several examples. For the presentation of this method in this extended abstract,  we only rely on the following property:
  
  \begin{proposition}[\cite{STANLEY1980}]\label{thm:finite singularities}
  A holonomic function in one variable has finitely many singularities.
  \end{proposition}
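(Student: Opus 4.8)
The plan is to recover this from the classical fact that the singularities of a solution of a linear ODE are confined to the zeros of its leading coefficient. So suppose $A(x)=\sum_{n\geq 0} a_n x^n$ is holonomic and has positive radius of convergence, hence defines an analytic germ at the origin (in the situations of interest, such as generating series of languages, the coefficients grow at most exponentially, so this is automatic). By the equivalent formulation in Definition~\ref{def:holonom}, $A$ satisfies a linear differential equation
\[
P_r(x)\,\partial_x^r A(x) + P_{r-1}(x)\,\partial_x^{r-1} A(x) + \cdots + P_0(x)\,A(x) = 0,
\]
with $P_i \in \mathbb{Q}[x]$ and $P_r \neq 0$. Let $Z = \{\zeta \in \mathbb{C} : P_r(\zeta)=0\}$; since $P_r$ is a nonzero polynomial, $Z$ is finite, with $|Z| \leq \deg P_r$.

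The key step is to show that $A$ admits an analytic continuation (a priori multivalued) along every path contained in $\mathbb{C}\setminus Z$. First I would fix a point $x_0 \in \mathbb{C}\setminus Z$; since $P_r(x_0)\neq 0$, the equation can be normalized near $x_0$ as $\partial_x^r A = \sum_{i<r}(-P_i/P_r)\,\partial_x^i A$, where each coefficient $-P_i/P_r$ is analytic on a disc $D$ around $x_0$. By the Cauchy existence–uniqueness theorem for linear ODEs with analytic coefficients, the solution determined by the initial data $A(x_0),\partial_x A(x_0),\ldots,\partial_x^{r-1}A(x_0)$ (these values being supplied by the continuation performed so far) extends holomorphically to all of $D$; the identity theorem guarantees that the extended function still satisfies the same differential equation there. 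Then, given any path $\gamma$ from a point where $A$ is already defined to an arbitrary point of $\mathbb{C}\setminus Z$, I would cover $\gamma$ by finitely many such discs (compactness of $\gamma$) and chain the local extensions, obtaining an analytic continuation of $A$ along $\gamma$.

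It follows that the only points of $\mathbb{C}$ at which (some branch of) $A$ can fail to be analytic belong to $Z$, so $A$ has at most $\deg P_r$ singularities in $\mathbb{C}$; adjoining the point at infinity as a possible further singularity leaves the total finite. The main thing to be careful about is the notion of singularity in the presence of monodromy: the continuation produced above need not be single-valued, but this does not matter, since no new singular point in $\mathbb{C}\setminus Z$ is created — around every such point each branch is holomorphic — and that is exactly what "finitely many singularities" asserts. A secondary point worth spelling out is that the normalized equation genuinely transports along continuations, which is where the identity theorem is invoked; everything else is a routine compactness-and-covering argument.
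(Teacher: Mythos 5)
Your proof is correct and is exactly the classical argument that the paper's citation to Stanley points to (the paper itself gives no proof): the singularities of a D-finite function are confined to the finitely many zeros of the leading polynomial $P_r$, via analytic continuation of the normalized equation along paths in $\mathbb{C}\setminus Z$. Your caveat about positive radius of convergence is the right one to flag, since a holonomic \emph{power series} need not converge (e.g.\ $\sum n!\,x^n$), whereas the statement concerns holonomic \emph{functions} — and, as you note, for generating series of languages convergence is automatic.
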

 
Our first example is the language $\mathcal D$, defined over the alphabet $\{a,b\}$ as follows:
\[\mathcal D= \{\underline{n_1}\ \underline{n_2}\ldots\underline{n_{k}}\ :\ k\in\NN^*,\ n_1=1\ \textnormal{and}\ \exists j<k, n_{j+1} \not= 2n_{j} \},\text{ where }\underline{n}=a^nb.\]
This language is recognized by a weakly-ambiguous Parikh automaton, which guesses the correct $j$, and then verifies that $n_{j+1} \not= 2n_{j}$. Let $\overline{\mathcal D}=ab(a^*b)^*\setminus\mathcal D$, and suppose by contradiction that $\mathcal D$ can be recognized by a weakly-unambiguous PA. Then its generating series should be holonomic by Proposition~\ref{prop:unambPAholonomic}. Since the generating series of $\overline{\mathcal D}$ is
$\overline{D}(x_a,x_b)=\frac{x_ax_b}{1-\frac{x_b}{1-x_a}} - D(x_a,x_b)$, it should be holonomic too. Looking closely at the form of the words of $\overline{\cal D}$, we get that its generating series is $\sum_{k\geq 1}x_a^{2^k-1}x_b^k$. It is not holonomic as $x\overline{D}(x,1)+x$ has infinitely many singularities, see~\cite[p.~296--297]{flajolet87}.

	Our second example is Shamir's language $\mathcal{S}=\{a^nbv_1a^nv_2: n\geq 1, v_1, v_2\in\{a,b\}^*\}$.
	One can easily design a PA recognizing $\cal S$, where one coordinate stands for the length of the first run of $a$'s and the other one for the second run of $a$'s, the automaton guessing when the second run starts.
	Flajolet proved that $\cal S$ is inherently ambiguous as a context-free language, since its generating series $S(z)=\frac{z(1-z)}{1-2z}\sum_{n\geqslant 1}\frac{z^{2n}}{1-2z+z^{n+1}}$ has an infinite number of singularities~\cite[p.~296--297]{flajolet87}. This also yields its inherent weak-ambiguity as a PA language.

\subsection{Limit of the method: an example using pumping techniques}
	As already mentioned, the analytic method presented is not always sufficient to prove inherent ambiguity. In this section, we develop an example where it does not apply. We consider the
	following language \leven, which is accepted both by a deterministic pushdown automaton and a non-deterministic PA (where $\underline{n}=a^nb$ as in Section~\ref{sec:2 examples}): \[\leven=\left\{\underline{n_1}\ \underline{n_2}\ldots\underline{n_{2k}}\ :\ k\in\NN,\ \forall i\leq 2k, n_i>0, \textnormal{ and }   \exists j\leq k, n_{2j} = n_{2j-1} \right\}.\] 
In other words, \leven is the language of sequences of encoded numbers having two consecutive equal values,  the first one being at an odd position. This language is accepted by a non-deterministic PA but is also deterministic context-free.\reviewer{See in Appendix, Sec.~\ref{sec:leven} for more details.}  This means that its generating series is algebraic and hence holonomic. This puts it out of the reach of our analytic method. 

In this section we establish the following result:	
	
\begin{theorem}\label{pro:leven ambiguous}
The language \leven is inherently weakly-ambiguous as a PA language.
\end{theorem}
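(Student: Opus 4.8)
The plan is to argue by contradiction: assume a weakly-unambiguous PA $\mathcal{A}$ accepts $\leven$, and derive a combinatorial impossibility via a pumping argument on the counters of $\mathcal{A}$. Since the analytic method is explicitly ruled out here (the series is algebraic), the obstruction must come from the structure of accepting runs. The key observation is that to witness membership in $\leven$, the automaton must, for \emph{some} odd-position block $\underline{n_{2j-1}}$ and the following block $\underline{n_{2j}}$, verify $n_{2j-1} = n_{2j}$; and since PA counters only accumulate and are tested once at the end against a semilinear set, the automaton essentially has a bounded number of "strategies" (pairs of counter coordinates, or semilinear witnesses) to certify such an equality. The first step is to fix this intuition: by a standard argument, the number of distinct accepting-run shapes modulo the counter values is controlled by $|\mathcal{A}|$ and the number of linear pieces in the constraint $C$, so on a suitably chosen infinite family of words of $\leven$, infinitely many are accepted "in the same way" (same sequence of states at block boundaries, same linear piece of $C$ used, same choice of the witnessing index $j$ relative to some bounded pattern).

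Next I would choose the word family carefully. A natural choice is words of the form $w_{m,p} = \underline{m}\,\underline{m}\,\underline{p_1}\,\underline{p_2}\cdots$ or, to create genuine tension, words where the \emph{only} coincident pair is forced to be one particular pair and all other consecutive blocks at odd positions are pairwise distinct and far apart. By Ramsey/pigeonhole on the accepting runs, restrict to a subfamily where $\mathcal{A}$ threads the same states through the "boundary" positions and uses the same linear component of $C$. Now pump: increase $n_{2j-1}$ (the first of the witnessing pair) by the period length $\ell$ of the relevant loop of $\mathcal{A}$ reading that block, without changing $n_{2j}$. Because the run uses a fixed linear piece, pumping this loop changes the accumulated vector by a fixed period vector $\vect{p}$; the resulting word is then \emph{also} accepted by a run whose vector lies in $C$. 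But the resulting word has $n_{2j-1} \neq n_{2j}$ at that position; for it to still lie in $\leven$ there must be \emph{another} coincident pair at an odd position — and by the design of the family there is none, contradiction. The subtlety is that weak-unambiguity does \emph{not} directly forbid pumping; rather, the contradiction is with membership in $\leven$ itself, and weak-unambiguity is used (if at all) only to control the \emph{generating series} and hence to restrict which pumping patterns can occur, or alternatively to ensure that the "witness index" $j$ is a function of the word.

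The main obstacle — and where the proof will need real care — is that a Parikh automaton is \emph{not} forced to verify $n_{2j-1}=n_{2j}$ by comparing a single pair of counters in the naive way: it could encode the equality across several coordinates, or smear the information of many blocks into the counters simultaneously, so that pumping one loop is compensated by the semilinear constraint in some unexpected way. To close this, I would need a more robust pumping statement: something like a "two-dimensional" pumping lemma for PA runs that simultaneously pumps the witnessing block and a far-away block, using the fact that $C$ is semilinear (so its slices are eventually periodic) to guarantee that a double pump $(+\alpha\vect{p}, +\beta\vect{p}')$ staying in $C$ can be arranged with $\alpha \neq \beta$ chosen so that \emph{no} pair of consecutive blocks coincides in the pumped word. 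Here the weak-unambiguity hypothesis genuinely enters: it forbids $\mathcal{A}$ from having two accepting runs on the pumped word, which rules out the escape route where pumping "accidentally" creates a new witnessing pair that is certified by a \emph{different} run — so the new witness, if any, must be certified by the \emph{same} run, pinning down the coincidence position and making it impossible by construction. Assembling these pieces — the counting/pigeonhole setup, the careful word family with exactly one coincident pair, and the double-pump-plus-uniqueness argument — is the technical heart, and I expect it to occupy the bulk of the proof.
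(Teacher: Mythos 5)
Your overall skeleton (contradiction, a word family with exactly one coincident pair, Ramsey/pigeonhole on run data, then a pumping perturbation) matches the paper's strategy, but the central step of your argument has a genuine gap. You claim that pumping a loop of $\AA$ inside the witnessing block ``changes the accumulated vector by a fixed period vector $\vect{p}$'' and that the pumped word ``is then also accepted by a run whose vector lies in $C$.'' This does not follow: a loop of the automaton contributes the sum of the vectors on its transitions, and there is no reason whatsoever for that vector to be a period of the linear component of $C$ used by the run --- the constraint is only tested once, at the end, and the automaton may smear the comparison $n_{2j-1}=n_{2j}$ across coordinates in ways unrelated to the loop structure. You acknowledge this obstacle yourself, but the proposed repair (a ``two-dimensional pumping lemma'' exploiting eventual periodicity of slices of $C$, with pumping amounts $\alpha\neq\beta$) is exactly the unproved technical heart; as stated it is a hope, not an argument, and it is unclear it can be made to work, since staying inside $C$ after an asymmetric double pump requires information about $C$ that the semilinearity of slices alone does not give you. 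Your use of weak-unambiguity is also not yet an argument: saying it ``controls the generating series'' is a dead end here (the series is holonomic), and ``rules out a second accepting run on the pumped word'' is not by itself a contradiction.

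The paper closes precisely this gap by never changing the total vector at all. It decomposes each maximal $a$-subpath of the unique accepting run into a canonical form $w_1\sigma_1^{s_1}\cdots w_f\sigma_f^{s_f}w_{f+1}$ (the $\sigma_i$ being simple $a$-cycles), and uses weak-unambiguity already at this stage to show the $\sigma_i$ in one subpath have distinct origins, so there are only boundedly many \emph{signatures}. Ramsey's theorem applied to the signatures $\lambda_{ij}$ (signature of the $2j$-th $a$-subpath of the run of $w_i$) yields a monochromatic triangle $\alpha<\beta<\gamma$; monochromaticity guarantees that the same $a$-piece occurs in two different $a$-subpaths of the same run, so one can \emph{transfer} $s$ iterations of that piece from position $2\gamma$ to $2\beta$ (and, in the run of $w_\gamma$, from $2\alpha$ to $2\beta$). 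The transferred run computes exactly the same vector, so it is accepting no matter what $C$ is --- indeed the paper notes the argument works even if $C$ is an arbitrary recursive set. Weak-unambiguity is then used a second time, and crucially: the two modified runs accept the same word $w$, hence coincide, and comparing signatures at position $2\alpha$ shows one can move an $a$-piece from $2\alpha$ to $2\beta$ while preserving the vector and the endpoints, producing an accepted word outside $\leven$. If you want to salvage your proposal, you should replace your ``pump up and stay in $C$'' step by such a vector-preserving exchange of identical loops between two blocks of the \emph{same} run, which is what the Ramsey step is really for.
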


The remainder of this section is devoted to sketch the proof of this proposition. By contradiction, we 
suppose that $\leven$ is recognized by a weakly-unambiguous PA $\AA$.

An \intro{$a$-piece} $\omega$ of $\mathcal A$ is a non-empty simple path of  $a$-edges in $\mathcal A$, starting and ending at the same state: the states of the path are pairwise distinct, except for its extremities. The \intro{origin} of $w$ is its starting (and ending) state.
Let $\Pi(\mathcal{A})$ be the (finite) set of $a$-pieces in $\mathcal A$.

We see a run in $\mathcal A$ as a sequence of transitions forming a path in $\mathcal A$. An \intro{$a$-subpath} of a run $R$ in $\AA$ is a maximal consecutive subsequence of $R$ whose transitions are all labeled by $a$'s, that is, it cannot be extended further to the left nor to the right in $R$ using $a$'s.  

Let $R$ be an accepting run in $\AA$. One can show that every $a$-subpath $S$ of $R$ can be decomposed as $S=w_1\sigma_1^{s_1}w_2\sigma_2^{s_2}\cdots w_f\sigma_f^{s_f}w_{f+1}$, where the $\sigma_i$'s are $a$-pieces of  $\Pi(\mathcal{A})$, the $s_i$'s are positive integers, and the $w_i$'s are paths not using twice the same state. Moreover, this decomposition is unique if we add the condition that if $w_i=\varepsilon$, then $\sigma_i\neq\sigma_{i-1}$ and the only state in common in $w_i$ and $\sigma_i$ is the origin of $\sigma_i$. This is done by repeatedly following the path until a state $q$ is met twice, factorizing this segment of the form $w\sigma$, where $\sigma$ is an $a$-piece of origin $q$. We call this decomposition the \intro{canonical form} of $S$, and  the \intro{signature} of $S$ is the tuple $(w_1,\sigma_1,w_2,\ldots,w_f,\sigma_f,w_{f+1})$, i.e., we dropped the $s_i$'s of the canonical form.

From the weakly-unambiguity\reviewer{See Sec.~\ref{sec:leven} in the Appendix for the proof.} of $\AA$ we can prove that there are at most $c$ distinct possible signatures, where $c$ only depends on $\AA$, and that $f$ is always at most $|Q_\AA|$.

Ramsey's Theorem~\cite{Ramsey} guarantees that there exists an integer $r$ such that any complete undirected graph with at least $r$ vertices, whose edges are colored using $c^2$ different colors, admits a monochromatic triangle. We fix two positive integers $n$ and $k$ sufficiently large, which will be chosen later on, depending on $\AA$ only. For $\ell\in\{1,\ldots,r\}$, let $w_\ell$ be the word $w_\ell = \underline{n_1}\,\underline{n_2}\ldots\underline{n_{2r}}$, where $n_i=n$ for odd $i$ and $n_{2i}=n+k$ if $i\neq\ell$ and $n_{2\ell}=n$. Each $w_\ell$ is in \leven, with a match at position $2\ell$ only. By weak-unambiguity, each $w_\ell$ has a unique accepting run $\mathcal{R}_\ell$ in $\AA$, each such run having $2r$ $a$-subpaths by construction. For $i\neq j$ in $\{1,\ldots,r\}$, let $\lambda_{ij}$ be the signature of the $2j$-th $a$-subpath of $\mathcal R_i$, which is a path of length $n+k$ by definition. The complete undirected graph  of vertex set  $\{1,\ldots,r\}$ where each edge $ij$, with $i<j$, is colored by the pair $(\lambda_{ij},\lambda_{ji})$ admits a monochromatic triangle of vertices $\alpha < \beta <\gamma$. In particular, $\lambda_{\alpha\beta}=\lambda_{\alpha\gamma}$ and $\lambda_{\gamma\alpha}=\lambda_{\gamma\beta}$. %

We choose $k=\lcm(\{|\sigma|: \sigma\in\Pi(\mathcal A)\})$, and $n$ sufficiently large so that any $a$-subpath of an accepting run contains an $a$-piece $\sigma$ repeated at least $k+1$ times. This is possible as the $w_i$'s have bounded length, and there are at most $|Q_\AA|+1$ of them. Hence, the $2\gamma$-th $a$-subpath of $w_\alpha$ contains a $a$-piece $\sigma$ that is repeated more than $s$ times, where $s=k/|\sigma|$. As $\lambda_{\alpha\beta}=\lambda_{\alpha\gamma}$, the piece $\sigma$ is also in the $2\beta$-th $a$-subpath of $w_\alpha$. If we alter the accepting path $\mathcal R_\alpha$ into $\mathcal R'_\alpha$ by looping $s$  more times in $\sigma$ in the $a$-subpath at position $2\beta$ and $s$  less times in $\sigma$ at position $2\gamma$, we obtain a run for the word $w=\cdots\underline{n}\underset{2\alpha}{\underline{n}}\cdots\underline{n}\,\underset{2\beta}{\underline{n+2k}}\cdots\underline{n}\underset{2\gamma}{\underline{n}}\cdots$. This run is accepting as the PA computes the same vector as for $w_\alpha$, by commutativity of vectors addition. And the signatures remains unchanged, as there are sufficiently many repetitions of $\sigma$ at position $2\gamma$ in $w_\alpha$. Similarly, as $\lambda_{\gamma\alpha}=\lambda_{\gamma\beta}$, we can alter the accepting path $\mathcal R_\gamma$ into an accepting path $\mathcal R'_\gamma$ of same signatures as $\mathcal R_\gamma$ for the same word $w$, by removing $s'=k/|\sigma'|$ iterations of an $a$-piece $\sigma'$ at position $2\alpha$ and adding them at position $2\beta$. 

We have built two paths $\mathcal R'_\alpha$ and $\mathcal R'_\gamma$ that both accept the same word $w$. Therefore, as $\AA$ is weakly-unambiguous, they are equal. As the signatures have not changed, this implies that the signature at position $2\alpha$ in $\mathcal R'_\alpha$ is $\lambda_{\gamma\alpha}$,
which is equal to $\lambda_{\gamma\beta}$ (monochromaticity), 
which is equal to  $\lambda_{\alpha\beta}$ ($\mathcal R'_\alpha$ and $\mathcal R'_\gamma$ have same signatures). This is a contradiction as we could remove one $a$-piece at position $2\alpha$ in $w_\alpha$ and add it at position $2\beta$, while
computing the same vector with the same starting and ending states: but this word is not in $\leven$.

\begin{remark}
The proof relies on manipulations of paths in the automaton, and we only use the commutativity of the  addition for the vector part. Thus, it still holds if we consider  automata where we use a recursively enumerable set instead of a semilinear set for acceptance.
\end{remark}
\vspace{-4mm}
\section{Algorithmic consequence of holonomicity}
\label{sec:complexity}

Generating series of languages have already been used to obtain efficient
algorithms on unambiguous models of automata. For instance, they were used by
Stearns and Hunt as a basic tool to obtain bounds on the length of a word
witnessing the non-inclusion between two unambiguous word
automata~\cite{Stearns85}. More precisely, the proof in~\cite{Stearns85}
relies on the recurrence equation satisfied by the coefficients of the
generating series (which is guaranteed to exist by holonomicity in one
variable). In the rational case, this recurrence relation can be derived from
the automaton and does not require advanced results on holonomic series. In
this section, our aim is to obtain a similar bound for the inclusion problem
for weakly-unambiguous Parikh automata. {The inclusion problem for RCM (and hence for weakly-unambiguous PA) is shown to be 
decidable in~\cite{Castiglione17} but no complexity bound is provided.} Note that this problem is known to be
undecidable for non-deterministic PA~\cite{Ibarra78}. We follow the same approach as
for the rational case \cite{Stearns85} and for RCM \cite{Castiglione17}. In
stark contrast with the rational case, it is necessary to closely inspect
holonomic closure properties in order to give concrete bounds.

Fix $\AA$ and $\BB$ two weakly-unambiguous PA. We can construct a weakly-unambiguous PA $\CC$ accepting $L(\AA)\cap L(\BB)$. We rely on the key fact that the series $D(x)=A(x)-C(x)$ counts the number of words of length $n$ in $L(\AA) \setminus L(\BB)$. In particular, $L(\AA) \subseteq L(\BB)$ if and only if $D(x)=0$.

As $D(x)$ is the difference of two holonomic series, it is holonomic. As equality between holonomic series is decidable, \cite{Castiglione17} concludes that
the problem is decidable. But without further analysis, no complexity upper-bound can be derived. The coefficients of $D(x)=\sum_{n \geq 0} d_n x^n$ satisfy a recurrence equation of the form $p_0(n) d_n = \sum_{k=1}^{r} p_k(n) d_{n-k}$ for $n \geq r$ with $p_0(x)\neq 0$.
This equation fully determines $d_n$ in terms of its $r$ previous values $d_{n-1},\ldots,d_{n-r}$, 
\textbf{provided that} $p_0(n)\not=0$. In particular, if the $r$ previous values are all equal to $0$, then $d_n=0$.
Consequently, if $d_n$ is equal to $0$ for all $n \leq r+R$ where $R$ denotes the largest positive root of $p_0$ (which is bounded from above by its $\infty$-norm, as a polynomial on $\mathbb{Z}$) then\footnote{The proof of Theorem~7 in \cite{Castiglione17} wrongly
suggests that we can take the order of the differential equation for~$D$ as a
bound on the length of a witness for non-inclusion. In general, this is not
the case. For instance consider $D(x)=x^{1000}$ which satisfies the
first-order differential equation $1000 D(x) - x \partial_x D(x)=0$.
It is clear that the coefficients~$D_0=0$ and $D_1=0$ are not enough to decide
that $D$ is not zero.
}
 $D(x)=0$.

Taking $W := r+R$, we have that if $L(\AA) \not\subseteq L(\BB)$ then
there exists a word witnessing this non-inclusion of length at most $W$. We now aim at computing an upper-bound on $W$
on the size of the inputs $\AA$ and $\BB$. 

For this, we first bound the order of the linear
recurrence satisfied by $A(x)$ and $C(x)$, as well as the degrees and norm of
the polynomials involved. This is stated in
Proposition~\ref{prop:bound-rec-pa}, whose proof follows the one of
Proposition~\ref{prop:unambPAholonomic}, while establishing such bounds for
the multivariate Hadamard product and the specialization to 1.

\begin{restatable}{proposition}{propboundrecpa} \label{prop:bound-rec-pa} The generating series $A(x)$ of
a weakly-unambiguous PA $\AA$ of dimension $d \geq 1$ satisfies a non-trivial 
linear differential equation $q_s(x)\partial_x^s A(x)+\cdots+ q_0(x)A(x)=0$, 
with  $s \leq ((d+1)|A|\,\|A\|_\infty)^{O(d)}$ and for all $i \in [0,s]$,
$\deg(q_i)\leq ((d+1)|A|\,\|A\|_\infty)^{O(d)}$ and $\log \|q_i\|_\infty\leq ((d+1)|A|\,\|A\|_\infty)^{O(d^2)}$  using the notations of Section~\ref{sec:PA GS}. \end{restatable}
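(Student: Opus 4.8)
The plan is to follow the proof of Proposition~\ref{prop:unambPAholonomic} step by step, but instead of merely invoking the closure of holonomic series under Hadamard product and specialisation, to track quantitative parameters (order of the differential equation, degrees and $\infty$-norms of the polynomial coefficients) through each construction. First I would bound the parameters of the two rational series $\overline{A}$ and $\overline{C}$ built in the proof of Proposition~\ref{prop:unambPAholonomic}. The series $\overline{A}(x,y_1,\dots,y_d)$ is the generating series of runs of $\AA$, obtained by solving the linear system associated with the automaton; by Cramer's rule it is a quotient $P_1/Q_1$ of polynomials whose maxdegree and coefficient size are controlled polynomially in $|\AA|$ and $\|\AA\|_\infty$ (a determinant of a matrix of size $|Q|$ whose entries are monomials in $x,y_1,\dots,y_d$ of degree at most $1+\|\AA\|_\infty$). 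The series $\overline{C} = \frac{1}{1-x}C'(y_1,\dots,y_d)$ is likewise $P_2/Q_2$: using the unambiguous presentation $\uplus_{i=1}^p \vect{c_i}+P_i^*$, the series $C'$ is a sum of $p$ terms each a monomial over $(1-\text{monomial})$-factors, whence its maxdegree is $O(|\AA|\,\|\AA\|_\infty)$ and the logarithm of its largest coefficient is $O(|\AA|\log\|\AA\|_\infty)$ after clearing denominators. So both rational series are $P_j/Q_j$ with $M := \max(\text{maxdeg}) + 1 = O((d+1)|\AA|\,\|\AA\|_\infty)$ and $\log\normF = O((d+1)|\AA|\log\|\AA\|_\infty)$, in $k = d+1$ variables.

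Next I would apply the quantitative Hadamard-product bound announced at the end of Section~\ref{sec:primer} (and proved in the appendix): the Hadamard product $G = \overline{A}\odot\overline{C}$ satisfies, in each of the $k=d+1$ variables, a linear differential equation whose order and the degrees of its polynomial coefficients are at most $(kM)^{O(k)} = ((d+1)|\AA|\,\|\AA\|_\infty)^{O(d)}$, and the logarithm of whose largest coefficient is at most $(kM)^{O(k^2)}(1+\log\normF) = ((d+1)|\AA|\,\|\AA\|_\infty)^{O(d^2)}$. This already gives the bounds claimed in the statement for the multivariate weighted series $G$, provided the appendix lemma is stated with explicit constants, which is the point of that section.

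The last step is the specialisation $A(x) = G(x,1,\dots,1)$. Here I would use the quantitative version of the specialisation-to-$1$ argument (the reviewer-referenced Prop.~\ref{prop:specialisation_holonome} in the appendix): starting from the $d+1$ differential equations satisfied by $G$ — one in $x$ and one in each $y_j$ — one eliminates the variables $y_1,\dots,y_d$ one at a time, at each elimination using that the coefficient $[x^{i_1}y_1^{i_2}\dots y_d^{i_{d+1}}]G$ is polynomially bounded in the counting parameters so that the specialised series is well defined. Each elimination step is a linear-algebra / resultant computation on the space of operators, multiplying the order and degrees by a factor that is polynomial in the current order and degree and raising the coefficient size by a controlled amount; iterating $d$ times keeps the order and degrees in $((d+1)|\AA|\,\|\AA\|_\infty)^{O(d)}$ and the log-coefficient size in $((d+1)|\AA|\,\|\AA\|_\infty)^{O(d^2)}$. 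The main obstacle is precisely this bookkeeping: making sure that the exponents $O(d)$ and $O(d^2)$ are not inflated by the iteration — i.e., that the cost of each of the $d$ specialisation steps is \emph{multiplicative in a bounded-degree factor} rather than compounding — and that the constants hidden in the $O(\cdot)$ of the appendix lemmas are genuinely absolute and not themselves growing with $d$. Once the appendix provides the Hadamard-product and specialisation bounds with explicit (absolute-constant) dependence, assembling them as above yields the stated inequalities, and the differential equation is non-trivial because $A(x)\neq 0$ is the generating series of a language.
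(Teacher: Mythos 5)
Your overall architecture matches the paper's proof: bound the rational representations $P/Q$ of the two series $\overline{A}$ and $\overline{C}$ (via Cramer's rule and a Hadamard-type determinant bound, giving maxdegree $O(|\AA|\,\|\AA\|_\infty)$ and $\log$-coefficients polynomial in $|\AA|$), feed them into the quantitative Hadamard-product theorem in $k=d+1$ variables to get a differential equation in $x$ for the weighted series $G$, and then specialise $y_1=\cdots=y_d=1$. Up to the Hadamard step your plan is essentially the paper's.

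The gap is in the specialisation step, and you flag it yourself without resolving it. You propose to start from the $d+1$ equations satisfied by $G$ and eliminate $y_1,\ldots,y_d$ by resultant/linear-algebra computations on operators; such eliminations typically \emph{compound} (each step multiplies order and degree by quantities depending on the current order and degree), and nothing in your sketch shows the final exponents stay at $O(d)$ and $O(d^2)$ — this is exactly the bookkeeping you admit is the obstacle. The point is that no elimination is needed at all. One keeps only the single equation $p_r(x,\vect y)\partial_x^r G+\cdots+p_0(x,\vect y)G=0$ in the distinguished variable $x$, and for each $\ell=d,d-1,\ldots,1$ divides all the coefficients by the largest power $(y_\ell-1)^{k_\ell}$ that divides every $p_i$, then sets $y_\ell=1$. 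The division guarantees the resulting equation is non-trivial (otherwise $k_\ell$ was not maximal); the order can only drop ($s\le r$); degrees do not increase ($\deg(q_i)\le \mdeg(p_i)$); and since dividing by $(y_\ell-1)^{k_\ell}$ and evaluating at $1$ is $\tfrac{1}{k_\ell!}\partial_{y_\ell}^{k_\ell}(\cdot)\big|_{y_\ell=1}$, the coefficients grow only by binomial factors, giving $\|q_i\|_\infty\le\|p_i\|_\infty\,(\mdeg(p_i)+1)^d\,2^{\deg(p_i)}$, which is absorbed into the $((d+1)|\AA|\,\|\AA\|_\infty)^{O(d^2)}$ bound on $\log\|q_i\|_\infty$. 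Without this (or an equally cheap) mechanism, your iterated-elimination route does not deliver the stated bounds.
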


Finally, we transfer these bounds to the series $D(x)$ of $L(\AA)\setminus L(\BB)$ using the analysis of
\cite{Kauers2014} for the sum of holonomic series in one variable.

\begin{restatable}{theorem}{theoreminclusion}\label{thm:inclusionbound}
 Given two weakly-unambiguous PA $\AA$ and $\BB$ of respective dimensions $d_\AA$ and $d_\BB$, if $L(\AA)$ is not included in $L(\BB)$ then there exists a word in $L(\AA)\setminus L(\BB)$ of length at most $2^{2^{O(d^2\log(dM))}}$ where $d = d_\AA+d_\BB$ and $M= |\AA|\, |\BB|\, \|\AA\|_\infty\, \|\BB\|_\infty$.
\end{restatable}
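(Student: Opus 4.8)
The plan is to take up the reduction already laid out above and to bound the quantity $W = r + R$, where $r$ is the order of the linear recurrence $p_0(n)d_n = \sum_{k=1}^{r}p_k(n)d_{n-k}$ satisfied by the coefficients of $D(x) = A(x) - C(x)$ and $R$ is the largest positive root of its leading polynomial $p_0 \neq 0$. Since non-inclusion of $L(\AA)$ in $L(\BB)$ forces $d_n \neq 0$ for some $n \le W$, and since $R \le 1 + \|p_0\|_\infty$ by Cauchy's bound, it suffices to show that $r \le ((d+1)M)^{O(d)}$ and $\log\|p_0\|_\infty \le ((d+1)M)^{O(d^2)}$; this gives $W \le 2^{((d+1)M)^{O(d^2)}} = 2^{2^{O(d^2\log(dM))}}$, the claimed bound. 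Observe that the double exponential comes entirely from $R$ and not from the order $r$, which is precisely why one cannot use the differential (or recurrence) order itself as a witness length.

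First I would pin down the size of the product automaton. The weakly-unambiguous PA $\CC$ recognising $L(\AA)\cap L(\BB)$ is built by concatenating the vectors of $\AA$ and $\BB$ and taking the coordinate-wise concatenation of their unambiguous constraint presentations; this concatenated presentation is again unambiguous, so $\CC$ has dimension $d_\AA+d_\BB = d$, with $|\CC|$ polynomial in $|\AA|$ and $|\BB|$ and $\|\CC\|_\infty = \max(\|\AA\|_\infty,\|\BB\|_\infty)$; in particular $|\CC|\,\|\CC\|_\infty \le M^{O(1)}$. Then I apply Proposition~\ref{prop:bound-rec-pa} to $\AA$ (of dimension $d_\AA \le d$) and to $\CC$ (of dimension $d$): each of $A(x)$ and $C(x)$ satisfies a non-trivial linear differential equation whose order and coefficient degrees are at most $((d+1)M)^{O(d)}$ and the logarithm of whose coefficient norms is at most $((d+1)M)^{O(d^2)}$.

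Next I would combine these two equations into a single one for $D = A - C$, using the effective analysis of the sum of univariate holonomic functions of \cite{Kauers2014}: the order of an equation for $D$ is at most the sum of the two orders, and its coefficient degrees and the logarithms of its coefficient norms grow by at most a fixed polynomial in these quantities, so $D$ still satisfies an equation with order, coefficient degree, and log-norm bounded respectively by $((d+1)M)^{O(d)}$, $((d+1)M)^{O(d)}$, and $((d+1)M)^{O(d^2)}$. Finally I convert this equation into the recurrence for $(d_n)$ through the classical substitution turning each operator $x^j\partial_x^i$ into an operator on the coefficient sequence built from shifts and falling factorials. The recurrence order $r$ is bounded by the differential order plus the maximal coefficient degree, hence by $((d+1)M)^{O(d)}$, while the leading recurrence polynomial $p_0$ is a $\mathbb{Z}$-linear combination of products of falling factorials with the coefficients of the differential equation, so expanding it (via Stirling numbers) adds only $O(r\log r)$ to the logarithm of the norm, giving $\log\|p_0\|_\infty \le ((d+1)M)^{O(d^2)}$. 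Plugging $r$ and $\|p_0\|_\infty$ into $W = r + R$ yields the theorem.

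The genuinely delicate part is the quantitative bookkeeping concentrated in Proposition~\ref{prop:bound-rec-pa}, whose proof mirrors that of Proposition~\ref{prop:unambPAholonomic} but now threads every parameter through the multivariate Hadamard product (in $d+1$ variables) and the specialisation to $1$. One must first check that the rational series $\overline{A}$ extracted from the automaton by Cramer's rule and the rational support series $\overline{C}$ of the constraint have representations $P/Q$ with $\deg(P),\deg(Q) \le M^{O(1)}$ and logarithm of coefficients $\le M^{O(1)}$; one then feeds these into the Hadamard-product bounds announced at the end of Section~\ref{sec:primer} — maxdegree $(kM')^{O(k)}$ and log-coefficient $(kM')^{O(k^2)}(1+\log\normF)$ — and must verify that the resulting estimates, together with those from the specialisation, from the sum via \cite{Kauers2014}, and from the differential-to-recurrence conversion, all collapse into the single exponent $((d+1)M)^{O(d^2)}$ and nothing worse. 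Keeping this exponent at $O(d^2\log(dM))$ inside the double exponential, without losing an extra logarithm of $M$ or a power of $d$ at one of the intermediate steps, is where all the care goes.
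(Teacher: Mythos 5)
Your proposal is correct and follows essentially the same route as the paper's proof: build the product PA $\CC$ for the intersection, apply Proposition~\ref{prop:bound-rec-pa} to $A(x)$ and $C(x)$, combine via the effective sum bound of \cite{Kauers2014}, convert the differential equation for $D=A-C$ into a recurrence, and bound the witness length by the recurrence order plus the largest root of the leading polynomial, itself controlled by its $\infty$-norm (your ``Cauchy bound'' is the paper's Lemma~\ref{lem:bound poly}). Your observation that the double exponential stems from the norm of the leading polynomial rather than from the order matches the paper's own remark on why the order alone cannot serve as a witness bound.
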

Using dynamic programming to compute the number of accepted words, we can decide inclusion.

\begin{restatable}{corollary}{corcomplexite} \label{cor:complexite}
	 Given two weakly-unambiguous PA $\AA$ and $\BB$ of dimensions $d_\AA$ and $d_\BB$, we can decide  if $L(\AA) \subseteq L(\BB)$ in time $2^{2^{O(d^2\log(dM))}}$ where $d = d_\AA+d_\BB$ and $M= |\AA|\, |\BB|\, \|\AA\|_\infty\, \|\BB\|_\infty$.
\end{restatable}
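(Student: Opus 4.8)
The plan is to reduce the inclusion test, via Theorem~\ref{thm:inclusionbound}, to counting accepted words of bounded length, and to do the counting by dynamic programming. Recall from the beginning of Section~\ref{sec:complexity} that one can build a weakly-unambiguous PA $\CC$ of dimension $d = d_\AA + d_\BB$ with $L(\CC) = L(\AA) \cap L(\BB)$, of size $|\CC|$ polynomial in $|\AA|$ and $|\BB|$ and with $\|\CC\|_\infty \le \max(\|\AA\|_\infty, \|\BB\|_\infty)$, and that $D(x) = A(x) - C(x) = \sum_n (a_n - c_n) x^n$, where $a_n$ and $c_n$ count the length-$n$ words of $L(\AA)$ and of $L(\CC)$ respectively. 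Hence $L(\AA) \subseteq L(\BB)$ if and only if $a_n = c_n$ for every $n$, and by Theorem~\ref{thm:inclusionbound} it suffices to check this for every $n \le W := 2^{2^{O(d^2\log(dM))}}$, where $M = |\AA|\,|\BB|\,\|\AA\|_\infty\,\|\BB\|_\infty$.

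For a fixed $n$ I would compute $a_n$ as follows (the value $c_n$ being computed identically on $\CC$). Since $\AA$ is weakly-unambiguous, $a_n$ equals the number of accepting runs of length $n$, and every run of length $n$ accumulates a vector in the box $\{0,\ldots,n\|\AA\|_\infty\}^d$. So one fills, by the obvious transition-by-transition recurrence, a table $N(q,m,\vect v)$ recording the number of paths of length $m \le n$ from $q_I$ to $q$ with accumulated vector $\vect v$ in that box, and then sets $a_n = \sum_{q \in F} \sum_{\vect v \in C} N(q,n,\vect v)$, the inner sum ranging over the finitely many $\vect v$ of the box that lie in the semilinear constraint $C$. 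The table has at most $|\AA|\cdot(n+1)\cdot(n\|\AA\|_\infty+1)^d$ entries and is filled in time polynomial in its size and in $|\AA|$; deciding whether a given $\vect v$ of the box belongs to $C$ reduces to a bounded search over the coefficients of the unambiguous presentation of $C$, each such coefficient being at most $\|\vect v\|_\infty \le n\|\AA\|_\infty$.

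Summing over all $n \le W$, the running time is $W$ times a per-length cost that is polynomial in $|\CC|$ and $\|\CC\|_\infty$, at most $(n\|\CC\|_\infty)^{O(d)}$ from the table and vector enumeration, plus the cost of the semilinear-membership tests. Since $\log_2 W = 2^{O(d^2\log(dM))}$, this quantity already dominates and absorbs all of these polynomial, $2^{O(d)}$ and membership-test overheads, so the total running time collapses to $2^{2^{O(d^2\log(dM))}}$, which proves the corollary.

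The deep content is entirely upstream: Theorem~\ref{thm:inclusionbound} — via Proposition~\ref{prop:bound-rec-pa} and the quantitative forms of the holonomic closure properties on which it rests — is what makes the bound $W$ finite and explicit. Given $W$, the only thing that could go wrong in the corollary is that the naive word-counting be too slow, so the plan's one genuine verification is that the dynamic-programming tables and the semilinear-membership tests all fit within the stated double-exponential budget; this holds precisely because $\log W$ already has the form $2^{O(d^2\log(dM))}$ and hence swallows every polynomial and $2^{O(d)}$ factor introduced by the counting.
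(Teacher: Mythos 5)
Your proposal is correct and follows essentially the same route as the paper: invoke Theorem~\ref{thm:inclusionbound} to bound the length $W=2^{2^{O(d^2\log(dM))}}$ of a potential non-inclusion witness, then count, for each length up to $W$, the words of $L(\AA)$ and of a weakly-unambiguous product automaton for $L(\AA)\cap L(\BB)$ by dynamic programming over (state, length, accumulated vector), all overheads being absorbed by $W$. The only divergence is an implementation detail: the paper decides membership of the box vectors in the semilinear constraint by a second dynamic program on an unambiguous vector automaton built from the unambiguous presentation, which yields the clean per-length cost $(n\,\|\CC\|_\infty\,|\CC|)^{O(d)}$, whereas your brute-force search over the presentation's coefficients is exponential in the number of periods rather than in $d$ --- coarser, but still swallowed by the doubly exponential budget, as you argue.
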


\begin{remark}\label{rmk:Massazza} In \cite{Castiglione17}, the author proposes a different
construction to prove the holonomicity of the generating series of languages
in RCM. This proof uses the closure of holonomic series under Hadamard product
and algebraic substitution $x_1 = x_2 = \cdots = x_n=x$. It is natural to
wonder if this approach would lead to better bounds in
Proposition~\ref{prop:bound-rec-pa} (using the equivalence between
weakly-unambiguous PA and RCM). It turns out that the operation $x_1 = x_2 =
\cdots = x_n=x$ is more complicated than it seems at a first glance. Indeed,
to our knowledge, no proof of the closure under algebraic substitution
explains what happens if, during the substitution process, the equations
become trivial. This issue can be overcome by doing the substitution step by
step: $x_2 = x_1$, then $x_3=x_1$, etc. However, this naive approach would
produce worse bounds. 
\end{remark}

\section{Perspectives}

The bounds obtained in Section~\ref{sec:complexity} are derived directly from constructions given in the proofs of the closure properties. In particular, we did not use any information on the special form of our series. The bounds  are certainly perfectible using more advanced tools from computer algebra. Also it seems that the complexity of the closure under the algebraic substitution deserves more investigation, as discussed in Remark~\ref{rmk:Massazza}.

A more ambitious perspective is to find larger classes of automata whose generating series are holonomic. This would certainly require new ideas, as for instance any holonomic power series with coefficients in $\{0,1\}$ is known to be the characteristic series of some semilinear \cite{bell}.

\bibliography{main.bib}

\newpage
\appendix
\section{Proof omitted in Section~\ref{sec:primer}}

\begin{proposition}\label{prop:specialisation_holonome}
			Let $A(x_1,\ldots, x_k, y_1, \ldots, y_\ell)$ be an holonomic series such that for every index $(i_1, \ldots, i_k)$, $[x_1^{i_1}\ldots x_k^{i_k}]A$ is a polynomial in $y_1, \ldots, y_\ell$. Then the series defined by $B(x_1, \ldots, x_k)=A(x_1, \ldots, x_k, 1, \ldots, 1)$ is also holonomic.
		\end{proposition}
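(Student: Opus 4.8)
The statement asserts that specialization to $1$ of certain variables preserves holonomy, under the hypothesis that each coefficient $[x_1^{i_1}\cdots x_k^{i_k}]A$ is a \emph{polynomial} in $y_1,\ldots,y_\ell$ (so that substituting $y_j = 1$ yields a well-defined power series in $x_1,\ldots,x_k$). The plan is to derive, for each variable $x_m$, a linear differential equation in $x_m$ with polynomial coefficients satisfied by $B$, starting from one satisfied by $A$. First I would fix $m \in \{1,\ldots,k\}$ and invoke the equivalent formulation in Definition~\ref{def:holonom}: $A$ satisfies
\[
P_r(x_1,\ldots,x_k,y_1,\ldots,y_\ell)\,\partial_{x_m}^r A + \cdots + P_0(x_1,\ldots,x_k,y_1,\ldots,y_\ell)\, A = 0
\]
with $P_r \neq 0$ and the $P_i \in \mathbb{Q}[x_1,\ldots,x_k,y_1,\ldots,y_\ell]$. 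The naive move is to substitute $y_1 = \cdots = y_\ell = 1$ directly, but this is exactly the dangerous step flagged in the paper's footnotes: it might collapse the leading coefficient $P_r$ to zero, destroying the equation. So a direct substitution argument will not do; the hypothesis on polynomiality of coefficients must be used to avoid this.

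The key idea I would use is the one behind the "sufficient condition" quoted after Proposition~\ref{prop:substitution}: under the polynomiality hypothesis, $B = A(x_1,\ldots,x_k,1,\ldots,1)$ is obtained from $A$ by a linear operation that commutes with $\partial_{x_m}$ and preserves holonomy because it is, coefficient-wise, a finite-dimensional linear map. Concretely, I would argue that the $\mathbb{Q}(x_1,\ldots,x_k,y_1,\ldots,y_\ell)$-vector space $V$ spanned by $\{\partial_{x_1}^{i_1}\cdots\partial_{x_k}^{i_k} A\}$ is finite-dimensional, say of dimension $D$; choose a basis $A = B_1, B_2,\ldots, B_D$ among the $\partial^{\,\vect i} A$, so that each $\partial_{x_m} B_t = \sum_s R_{t,s}\, B_s$ for rational fractions $R_{t,s}\in\mathbb{Q}(x_1,\ldots,x_k,y_1,\ldots,y_\ell)$. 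Clearing denominators and then specializing $y_j=1$ is still problematic if a denominator vanishes, so instead I would argue on the side of \emph{supports}: since each coefficient $[x_1^{i_1}\cdots x_k^{i_k}]A$ is a polynomial in $\vect y$, the same is true of each $\partial_{x_m}^{j} A$ (differentiation in $x_m$ acts coefficient-wise and multiplies by integers), hence the specialization $y_j=1$ of each $\partial_{x_m}^j A$ is a well-defined power series, and specialization is a ring homomorphism $\mathbb{Q}[x_1,\ldots,x_k][[{\vect y}]]$-style on the coefficient polynomials. The cleanest route is therefore: take the differential equation for $A$ in $x_m$; multiply through by a suitable polynomial so that, after the substitution $y_j=1$, the leading coefficient does not vanish. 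To guarantee such a multiplier exists, I would pick the equation of \emph{minimal} order $r$ among all linear differential equations in $x_m$ satisfied by $A$ with coefficients in $\mathbb{Q}[x_1,\ldots,x_k,y_1,\ldots,y_\ell]$, write $P_r = \sum_{\alpha} c_\alpha(x_1,\ldots,x_k)\,\vect y^{\alpha}$, and observe $P_r|_{\vect y = 1} = \sum_\alpha c_\alpha(x_1,\ldots,x_k)$; if this is nonzero we substitute directly, and if it is zero we instead work with the operator-valued argument below.

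The robust argument, which I expect to be the technical heart, avoids the leading-coefficient issue entirely. Consider the left ideal $I$ of differential operators in $\partial_{x_m}$ (with coefficients in $\mathbb{Q}(x_1,\ldots,x_k,y_1,\ldots,y_\ell)$) annihilating $A$; holonomy of $A$ gives $I \neq 0$. One shows that the map sending an operator $L = \sum_j p_j(\vect x,\vect y)\partial_{x_m}^j$ annihilating $A$ to $L|_{\vect y=1} := \sum_j p_j(\vect x, 1)\partial_{x_m}^j$ — after first choosing a representative with \emph{polynomial} coefficients $p_j$ whose specializations $p_j(\vect x,1)$ are not all zero — produces an operator annihilating $B$. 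Such a representative always exists: given any nonzero $L\in I$ with polynomial coefficients, write each $p_j(\vect x,\vect y) = \sum_\beta p_{j,\beta}(\vect x)\vect y^\beta$; if every $\sum_\beta p_{j,\beta}(\vect x)$ is zero, one can subtract off and reorganize — more carefully, since $A$ has coefficients polynomial in $\vect y$, one applies $L$ to $A$ and reads off that the $\vect y$-homogeneous components of the relation must each annihilate $A$ modulo lower-order $\vect y$-degree data; a cleaner phrasing is to specialize $y_j = 1 + t_j$ with fresh variables $t_j$ and take the $t$-constant term, which is a legitimate operation because $A$'s coefficients are polynomials in $\vect y$. This produces the desired annihilating operator for $B$ in the variable $x_m$, and repeating for every $m \in \{1,\ldots,k\}$ shows $B$ is holonomic by the equivalent characterization in Definition~\ref{def:holonom}. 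The main obstacle, as anticipated, is making the "avoid vanishing leading coefficient after specialization" step rigorous; the polynomiality hypothesis on the $\vect y$-coefficients of $A$ is precisely what makes the $y_j = 1 + t_j$ expansion and constant-term extraction well-defined, and I would lean on exactly that.
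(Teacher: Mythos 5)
You correctly identified the crux — that naively setting $y_j=1$ in a differential equation for $A$ can render it trivial — but the mechanism you propose to get around it does not actually do so. Taking the $t$-constant term of the relation after the shift $y_j = 1+t_j$ is \emph{exactly} the same operation as substituting $y_j=1$: since all exponents in $t$ are nonnegative, the constant term of each product $\tilde p_j(\vect x,\vect t)\,\partial_{x_m}^j\tilde A$ is $p_j(\vect x,\vect 1)\cdot\partial_{x_m}^j B$, so in the problematic case where every $p_j(\vect x,\vect 1)=0$ you recover only the trivial identity $0=0$ and learn nothing about $B$. Your other suggestion, that the ``$\vect y$-homogeneous components of the relation must each annihilate $A$ modulo lower-order $\vect y$-degree data,'' is not a proof as stated: the graded pieces of the relation mix graded pieces of the coefficients with graded pieces of $\tilde A$, and the hand-waved ``subtract off and reorganize'' is precisely where the missing argument lives. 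So the non-triviality of the specialized equation — the technical heart of the statement — is never established.

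The paper closes this gap with a simple preliminary normalization that your proposal lacks: it specializes \emph{one variable at a time} ($y_\ell=1$ first), and before substituting it divides the whole equation by the largest power of $(y_\ell-1)$ dividing all of its polynomial coefficients. After this division at least one coefficient is not divisible by $(y_\ell-1)$, hence survives the substitution as a nonzero polynomial, and the fact that specialization commutes with $\partial_z$ for every remaining variable $z$ (which you did justify, via polynomiality of the $\vect y$-coefficients) turns the divided equation into a non-trivial equation for $C=A|_{y_\ell=1}$; since $C$ again has polynomial coefficients in the remaining $y$'s, one iterates. If you prefer your $t$-shift language, the correct operation is not the constant term but the \emph{lowest} nonzero $t$-valuation graded component: when all $\tilde p_j$ have $t$-valuation at least $v$ and some have valuation exactly $v$, the degree-$v$ part of the relation has no cross terms (because $\tilde A$ has nonnegative valuation), and reading off the coefficient of any surviving monomial $t^{\alpha}$ with $|\alpha|=v$ yields a non-trivial annihilating operator for $B$. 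Either repair is short, but without one of them your argument does not go through.
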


		\begin{proof}
					We write $A(x_1,\ldots, x_k, y_1, \ldots, y_\ell)=\sum_{i_1, \ldots, i_k}P_{i_1, \ldots, i_k}(y_1,\ldots, y_\ell)x_1^{i_1}\ldots x_k^{i_k}$, where each $P_{i_1, \ldots, i_k}$ is a polynomial in $y_1$, \ldots, $y_\ell$. 
					
		First observe that $B(x_1, \ldots, x_k)$ is well defined, as $[x_1^{i_1}\ldots x_k^{i_k}]B$ is the evaluation at $(y_1,\ldots,y_\ell)=(1,\ldots,1)$ of  $P_{i_1, \ldots, i_k}$.

	We prove the result by setting each $y_i$ to $1$, one at a time. So we focus on the operation $y_\ell=1$.
			Let $C(x_1,\ldots, x_k, y_1, \ldots, y_{\ell-1})=A(x_1,\ldots, x_k, y_1, \ldots, y_{\ell-1},1)$.
			
			By a direct induction, for every non-negative integer $k$ we have that for every $z\in\{x_1,\ldots, x_n, y_1, \ldots, y_{d-1}\}$: \[(\partial_z^k A)(x_1,\ldots, x_k, y_1, \ldots, y_{\ell-1},1)=\partial_z^k C(x_1,\ldots, x_k, y_1, \ldots, y_{\ell-1}).\]

			Then setting $y_\ell=1$ in the system of differential equations of $A$ (after removing the differential equation in $y_\ell$)  yields a system of equations for $C$. To ensure that none of these equations is trivial when setting  $y_\ell=1$, we first divide each of them by $(y_\ell-1)$, as many times as possible (that is, by the largest power of $(y_\ell-1)$ that divides all the polynomial coefficients of the equation).
		\end{proof}
\section{Proof omitted in Section~\ref{sec:wupa}}
\label{app:wupa}

\subsection{Proofs and details for Remark~\ref{rmk:comp}}
\label{app:rmk-comp}

Recall 
 the language $\mathcal{L} = \{c^nw \;:\; w = x_1x_2\cdots x_m \in\{a,b\}^*\ \wedge\ n>0 \wedge\ |x_1x_2\ldots x_n|_{a} < |x_1x_2\ldots x_n|_{b}\} $ over the alphabet $\{a,b,c\}$. 
 
 It is accepted by the weakly-unambiguous automaton depicted below with the semilinear $\{ (n_1,n_2,n_3) \;:\; n_1=n_2+n_3 \;\text{and}\; n_2 < n_3 \}$. This automaton  stores the number $n$ of $c$'s seen in its 1st dimension and upon seeing the first $a$ or $b$ starts storing the number of $a$'s in its 2nd dimension and the number of $b$'s in its 3rd dimension. At any point, it can non-deterministically choose to stop counting the $a$'s and $b$'s and move to an accepting state. The semilinear constraint ensures that there is only one correct guess for when to stop counting the $a$'s and $b$'s and hence that the automaton is weakly-unambiguous. Remark that if we erase the vectors, the resulting automaton is ambiguous and therefore this automaton is not unambiguous in the sense of Cadilhac et al.

 \begin{center}
\begin{tikzpicture}
\node (init) at (0,-.8) {};
\node[draw,circle] (q0) at (0,0) {$0$};
\node[draw,circle,accepting] (q1) at (2,0) {$1$};
\node[draw,circle,accepting] (q2) at (4,0) {$2$};

\draw[->,thick] (q0) edge node[below] {\small $a\big(\substack{0\\1\\0}\big), b\big(\substack{0\\0\\1}\big)$} (q1);
\draw[->,thick] (q1) edge node[below] {\small $a\big(\substack{0\\0\\0}\big), b\big(\substack{0\\0\\0}\big)$} (q2);

\draw[->,thick] (q0) edge[loop, above] node[above] {\small $c\big(\substack{1\\0\\0}\big)$} (q0);
\draw[->,thick] (q1) edge[loop, above] node[above] {\small $a\big(\substack{0\\1\\0}\big), b\big(\substack{0\\0\\1}\big)$} (q1);
\draw[->,thick] (q2) edge[loop, above] node[above] {\small $a\big(\substack{0\\0\\0}\big), b\big(\substack{0\\0\\0}\big)$} (q1);

\draw[->, thick] (init) -- (q0);
\end{tikzpicture}
 \end{center} 
 
\begin{proposition}
The language $\mathcal{L} = \{c^nw \;:\; w = x_1x_2\cdots x_m \in\{a,b\}^*\ \wedge\ n>0 \wedge\ |x_1x_2\ldots x_n|_{a} < |x_1x_2\ldots x_n|_{b}\} $ over the alphabet $\{a,b,c\}$ is not accepted by any unambiguous PA in the sense of \cite{Cadilhac13}.
\end{proposition}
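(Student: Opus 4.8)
The goal is to show that the language $\mathcal{L} = \{c^nw \;:\; w = x_1x_2\cdots x_m \in\{a,b\}^* \wedge n>0 \wedge |x_1x_2\ldots x_n|_{a} < |x_1x_2\ldots x_n|_{b}\}$ is not recognized by any \emph{unambiguous} PA in the sense of Cadilhac et al.\ \cite{Cadilhac13}, i.e.\ one for which the underlying letter-automaton (obtained by erasing all vectors from the transitions) is unambiguous. The plan is to argue by contradiction: suppose some such PA $\AA = (\{a,b,c\}, Q, q_I, F, C, \Delta)$ recognizes $\mathcal{L}$, and derive a structural consequence of unambiguity of the letter-automaton that is incompatible with the ``prefix-counting'' nature of the membership condition.

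\textbf{Step 1: Normalising the runs on the $c$-prefix.} First I would observe that every word of $\mathcal{L}$ begins with a block $c^n$ with $n>0$, and that the suffix $w \in \{a,b\}^*$ that follows is essentially unconstrained except for the condition on its length-$n$ prefix $x_1\cdots x_n$. Reading $c^n$ drives the letter-automaton through a sequence of states; for $n$ large (larger than $|Q|$) this sequence must cycle, so there is a state $q$ reachable from $q_I$ by some $c^{n_0}$ and a simple $c$-cycle at $q$ of length $\ell$. Because the letter-automaton is unambiguous, for each length $m$ there is \emph{at most one} accepting path on any fixed word, hence the run on the $c$-prefix of a given accepted word is uniquely determined by that word; I would use this to pin down, for each sufficiently long $n$, a canonical state $q_n$ the automaton is in after reading $c^n$.

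\textbf{Step 2: Exploiting the prefix condition via a pumping/exchange argument.} The heart of the matter is that membership of $c^n w$ depends on the comparison $|x_1\cdots x_n|_a < |x_1\cdots x_n|_b$ — that is, the automaton must, after reading exactly $n$ symbols of $c$, ``know'' to start comparing the $a$-count and $b$-count over precisely the next $n$ letters. I would fix two large values $n$ and $n' = n + \ell$ (with $\ell$ the $c$-cycle length from Step 1) so that $q_n = q_{n'} =: q$, and then build a word $w$ over $\{a,b\}$ such that $c^n w \in \mathcal{L}$ but $c^{n'} w \notin \mathcal{L}$ — concretely, choose $w$ so that among its first $n$ letters the $b$'s strictly dominate, while among its first $n'$ letters the $a$'s catch up or overtake (e.g.\ $w$ starts with a long run of $b$'s of length just under $n$, then a compensating run of $a$'s). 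Since the underlying letter-automaton is in the same state $q$ after $c^n$ and after $c^{n'}$, any accepting \emph{path} for $c^n w$ gives an accepting \emph{path} for $c^{n'} w$ of the same shape on the $w$-part; the Parikh constraint $C$ sees the same accumulated vector on the $w$-part in both cases, and the only difference is the vector contributed by the extra $\ell$ copies of $c$ in the cycle. By adjusting $n,n'$ through several full cycles (or using two words whose $c$-prefixes differ by the same cyclic vector) I would arrange that the constraint-vector difference is controlled, forcing $c^{n'} w$ to also be accepted — contradicting $c^{n'} w \notin \mathcal{L}$. If the cycle vector is nonzero one instead plays the symmetric exchange with a third length $n''$ to cancel it, in the spirit of the pumping arguments used elsewhere in the paper.

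\textbf{Main obstacle.} The delicate point is handling the Parikh constraint: erasing vectors makes the letter-automaton unambiguous, but the vectors accumulated along the $c$-cycle need not be zero, so naive pumping of the $c$-prefix changes the constraint vector and may legitimately change acceptance. The real work is to choose the two (or three) prefix lengths and the common suffix $w$ so that the suffix already forces a \emph{fixed} accepting path shape (using unambiguity to rule out alternative paths), the prefix-count condition genuinely flips between the chosen lengths, yet the net change in the constraint vector is absorbable — either because the relevant coordinates are untouched by $c$-transitions, or by a two-sided exchange that cancels the cycle contribution. Making these choices simultaneously compatible, while keeping all lengths above the pumping thresholds, is where the argument requires care; once that is set up, the contradiction with the defining condition of $\mathcal{L}$ is immediate.
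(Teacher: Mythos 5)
There is a genuine gap here, in fact two. First, your Step 1 does not go through as stated: unambiguity in the sense of Cadilhac et al.\ only says that each \emph{whole word} has at most one accepting run in the letter-automaton, so the state reached after the block $c^n$ depends on the suffix $w$ as well as on $n$; there is no canonical state $q_n$ attached to $c^n$ alone, and the nondeterminism on the $c$-block (and on where the ``counting phase'' ends) can be resolved differently for different suffixes. Second, and more importantly, the exchange in Step 2 does not produce a contradiction. From an accepting run of $c^nw$ and a $c$-cycle at the common state you only obtain an accepting \emph{path of the letter-automaton} on $c^{n+\ell}w$; since $c^{n+\ell}w\notin\mathcal L$ merely means that this path's vector need not lie in $C$, nothing is contradicted. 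Indeed a correct PA for $\mathcal L$ will typically count the $c$'s in a coordinate of the vector (as the weakly-unambiguous automaton in the paper does), so the cycle contribution is exactly the information the constraint uses to tell $c^nw$ and $c^{n+\ell}w$ apart. The ``absorption/cancellation'' of this cycle vector that you defer to the ``Main obstacle'' paragraph is precisely the mathematical content the proof needs, and it is not supplied; as described (swapping prefixes between two words, or a three-length exchange) it still changes the vector of each individual word, and acceptance is a per-word condition, so the argument does not close.

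For comparison, the paper's proof avoids all of this by a two-line reduction: if $\mathcal L$ were accepted by an unambiguous PA, then by the closure of that class under intersection and under left quotient by PA-recognizable (in particular regular) languages \cite[Prop.~7]{Cadilhac13}, the language $\bigl((c^*)^{-1}\mathcal L\bigr)\cap\{a,b\}^*$ — i.e.\ the words over $\{a,b\}$ having a prefix with strictly more $b$'s than $a$'s — would be accepted by an unambiguous PA, contradicting \cite[Prop.~11]{Cadilhac13}. If you want a self-contained combinatorial proof instead of this reduction, you would need an argument of the calibre of the Ramsey-based exchange used for $\mathcal L_{even}$ in Section~4.2 of the paper, where the vector bookkeeping is done explicitly; your current sketch names the difficulty but does not resolve it.
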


\begin{proof}
Let's suppose that $\mathcal{L}$ is recognized by an unambiguous PA in the sense of  \cite{Cadilhac13}. Then, by stability of unambiguous PA (see \cite[Proposition 7]{Cadilhac13}) under intersection and left quotient by any language recognized by a PA (in particular any regular language), $(\{c^*\}^{-1}L)\cap \{a,b\}^*$ should be unambiguous. But it is not, by Proposition 11 of \cite{Cadilhac13}. 
\end{proof}

\subsection{Proof of closure under intersection for weakly-unambiguous PA}
\label{app:inter-pa}

\begin{proposition}\label{prop:intersectionPA}
Given two weakly-unambiguous PA $\mathcal{A}$ and $\mathcal{B}$, the language $L(\mathcal{A})\cap L(\mathcal{B})$ is accepted by a weakly-unambiguous PA $\mathcal{C}$. Moreover $\CC$ can be constructed so that $|\CC| \leq |\AA| |\BB|$, $\|\CC\|_\infty = \max(\|\AA\|_\infty,\|\BB\|_\infty)$ and its dimension is the sum of the dimensions of $\AA$ and $\BB$.  
\end{proposition}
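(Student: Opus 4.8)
The plan is to use the standard product construction for automata, adapted to handle the Parikh vector component and the semilinear constraints. Concretely, given $\AA = (\Sigma, Q_\AA, q_I^\AA, F_\AA, C_\AA, \Delta_\AA)$ of dimension $d_\AA$ and $\BB = (\Sigma, Q_\BB, q_I^\BB, F_\BB, C_\BB, \Delta_\BB)$ of dimension $d_\BB$, I would build $\CC$ on state set $Q_\AA \times Q_\BB$ with initial state $(q_I^\AA, q_I^\BB)$, final states $F_\AA \times F_\BB$, and dimension $d_\AA + d_\BB$. A transition $((p,p'), (a, \vect{u} \cdot \vect{v}), (q,q'))$ is placed in $\Delta_\CC$ whenever $(p, (a,\vect{u}), q) \in \Delta_\AA$ and $(p', (a,\vect{v}), q') \in \Delta_\BB$, where $\vect{u} \cdot \vect{v}$ denotes the concatenation of the two vectors. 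For the constraint, I would take $C_\CC = C_\AA \times C_\BB = \{ \vect{u} \cdot \vect{v} : \vect{u} \in C_\AA,\ \vect{v} \in C_\BB \} \subseteq \mathbb{N}^{d_\AA + d_\BB}$; this is semilinear since if $\vect{c_i} + P_i^*$ and $\vect{c'_j} + {P'_j}^*$ are the linear components of $C_\AA$ and $C_\BB$, then $C_\CC$ is the union over $i,j$ of the linear sets $(\vect{c_i} \cdot \vect{c'_j}) + (\{\vect{p} \cdot \vect{0} : \vect{p} \in P_i\} \cup \{\vect{0} \cdot \vect{p'} : \vect{p'} \in P'_j\})^*$, and this presentation is unambiguous when the original ones are.

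The key verification step is a bijection between accepting runs of $\CC$ on a word $w$ and pairs consisting of an accepting run of $\AA$ on $w$ and an accepting run of $\BB$ on $w$. This follows by induction on $|w|$: a run of $\CC$ projects componentwise to a run of $\AA$ and a run of $\BB$ reading the same word, the accumulated vector of $\CC$ being exactly the concatenation of the two accumulated vectors; conversely two runs of $\AA$ and $\BB$ on $w$ (necessarily of the same length, both models being real-time) can be zipped into a single run of $\CC$. The final-state and constraint conditions match up precisely because $F_\CC = F_\AA \times F_\BB$ and $\vect{u} \cdot \vect{v} \in C_\CC$ if and only if $\vect{u} \in C_\AA$ and $\vect{v} \in C_\BB$. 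Hence $w \in L(\CC)$ iff $w \in L(\AA) \cap L(\BB)$, and moreover $w$ has at most one accepting run in $\CC$ because it has at most one in each of $\AA$ and $\BB$ by weak-unambiguity; this gives that $\CC$ is weakly-unambiguous.

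For the size bounds: $|Q_\CC| = |Q_\AA||Q_\BB|$ and $|\Delta_\CC| \leq |\Delta_\AA||\Delta_\BB|$, while the unambiguous presentation of $C_\CC$ has $p_\AA p_\BB$ linear components and $\sum_{i,j}(|P_i| + |P'_j|) = p_\BB \sum_i |P_i| + p_\AA \sum_j |P'_j|$ periods, so a routine bookkeeping check gives $|\CC| \leq |\AA||\BB|$. The $\infty$-norm of $\CC$ is $\max(\|\AA\|_\infty, \|\BB\|_\infty)$ since every coordinate appearing in $\Delta_\CC$, the constants or the periods of $C_\CC$ already appears (as a coordinate, the new ones being $0$) in $\AA$ or in $\BB$. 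I do not expect a genuine obstacle here — the only mild subtlety is confirming that the product presentation of $C_\CC$ is still \emph{unambiguous} in the sense of Section~\ref{sec:semilinear}, which is immediate since a decomposition of $\vect{u} \cdot \vect{v}$ in the product presentation restricts to decompositions of $\vect{u}$ and of $\vect{v}$ in the original presentations, whose coefficients are unique by hypothesis; and secondarily, making sure the size-counting inequality $|\CC| \le |\AA|\,|\BB|$ is bookkept correctly given the product-of-sums form of the period count.
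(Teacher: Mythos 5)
Your proposal is correct and matches the paper's own proof essentially step for step: the same product construction with concatenated vectors, the same unambiguous product presentation of the constraint (constants concatenated, periods padded with zeros), the same run-bijection argument for weak-unambiguity, and the same bookkeeping for $|\CC|$ and $\|\CC\|_\infty$. No gaps to report.
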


\begin{proof}
This is done by a classical construction of a product automaton. We create the automaton $\CC$, of dimension $d_\CC=d_\AA+d_\BB$, set of states $Q_\CC=Q_{\AA}\times Q_{\BB}$, such that for every pair of transitions $(t_1, t_2)\in \delta_\AA\times\delta_\BB$, labeled by the same letter $a\in\Sigma$, with $t_1=(q_1,(a,\vect{v_1}),q'_1)$ and $t_2=(q_2,(a,\vect{v_2}),q'_2)$, then $((q_1,q_2),(a,(\vect{v_1}, \vect{v_2})),(q'_1,q'_2))\in\delta_\CC$. We also pose $F_\CC=F_\AA\times F_\BB$, $q_{0,\CC}=(q_{0,\AA},q_{0,\BB})$.
Finally, we take the semilinear set $C_\CC$ to be 
\[
\{(x_1,\ldots,x_{d_\AA},x_{d_\AA+1},\ldots,x_{d_\CC}) \;:\; (x_1,\ldots,x_{d_\AA}) \in C_\AA \;\text{and}\; (x_{d_\AA+1},\ldots,x_{d_\CC}) \in C_\BB  \}
\]
The runs of $\CC$ are classically in bijection with the couples of runs in $\AA$ and $\BB$ labeled with the same word, so that $\CC$  is also weakly-unambiguous, and recognizes the intersection of the two languages.

If $C_\AA$ and $C_\BB$ are given by unambiguous presentations:
\[
\begin{array}{lcl}
C_\AA & =&  \biguplus_{i \in [1,k_\AA]} \vect{c_i} + P_i^* \\
C_\BB & = & \biguplus_{i \in [1,k_\BB]} \vect{d_i} + R_i^* \\
\end{array}
\]
An unambiguous presentations for the semilinear set $C$ can be constructed as follows:
\[
C = \biguplus_{i \in [1,k_\AA] \\ j \in [1,k_\BB]} \vect{c_i} \cdot \vect{d_j} + Q_{i,j}^*
\]
where for $i \in [1,k_\AA]$ and $j \in [1,k_\BB]$, \[
Q_{i,j} = \{ p \cdot \underbrace{(0,\ldots,0)}_{\text{$d_\BB$ times}} \;:\; p \in P_i\} \cup \{  \underbrace{(0,\ldots,0)}_{\text{$d_\AA$ times}} \cdot r \;:\; r \in R_j\}
\]
From the definition, we have $|\CC| = |Q_\CC| + |\delta_\CC| + k_\AA k_\BB + \sum_{i,j} |P_i| + |R_j|$. As $|Q_\CC|=|Q_\AA| |Q_\BB|$ and $|\delta_\CC| \leq |\delta_\AA| |\delta_\BB|$, $|\CC| \leq |\AA| |\BB|$.  We also have that $\|\CC\|_\infty = \max(\|\AA\|_\infty,\|\BB\|_\infty)$.

\end{proof}

\subsection{A counter-example construction for union announced in \cite{Castiglione17}}
\label{app:counter-example-massazza}

We give a counter-example for the proof of the closure of union of RCM in \cite{Castiglione17}.

We follow the construction for the union of $L=(R,C,\mu)$ and 
$L'=(R',C',\mu)$ with:
\begin{itemize}
    \item $R = \{a,A\}$ over $\Gamma=\{a,A\}$,
    \item $C = (\mathbb{N} \times \{0\}) \setminus (0,0)$,
    \item $\Sigma=\{c\}$,
    \item $\mu : \Gamma^* \mapsto \Sigma^*$ defined by $\mu(a)=\mu(A)=c$.
    \item $R' = \{b\}$ over $\Gamma'=\{b\}$,
    \item $C' = \mathbb{N} \setminus \{0\}$,
    \item $\mu' : (\Gamma')^* \mapsto \Sigma^*$ defined by $\mu(b)=c$.
\end{itemize}
As a result $L=\mu(R \cap [C]) = \{c\}$ and $L'=\mu'(R' \cap [C'])=\{c\}$. 

The language $M = \mu(R) \cap \mu(R')=\{c\}$. Hence $R_2=\emptyset$, $R_2'=\emptyset$, $R_1=R$ and $R_1'=R'$. The construction produces $M'=\{\tau_{ab},\tau_{Ab}\}$ and $\mu''(\tau_{ab})=\mu''(\tau_{Ab})=c$ and $C''=\mathbb{N}^2\setminus (0,0)$. The union is announced to be 
$\mu''(M' \cap [C''])$ but the morphism is not injective on $M' \cap [C'']$.

Notice that this only invalidates the proof given in \cite{Castiglione17}. We do not know yet if RCM is closed under union. 

\subsection{Omitted results and definitions on PA, generalized PA and PA with $\varepsilon$-transitions}

In this section, we introduce a model, we call generalized PA, in which transitions are labeled by semilinear sets instead of vectors and show that it accepts the same languages as standard $\PA$ in the weak-unambiguous case (see Section~\ref{sec:generalized-pa}). Using the model of generalized $\PA$, we show $\varepsilon$-transitions can be eliminated in weakly-unambiguous PA while preserving weak-unambiguity (see Section~\ref{sec:eps-pa}). We start with formal definitions omitted in the main text.

\textbf{Related work.}
This model is implicit in \cite{Klaedtke02,Klaedtke03}. We make it explicit here for two reasons. First, it allows for a cleaner presentation of the $\varepsilon$-removal procedure. Second, the proof of equivalence between generalized $\PA$ and $\PA$ is quite generic and will be reused for \PPA. The $\varepsilon$-removal procedure detailed in Section~\ref{sec:eps-pa} is essentially the one of \cite{Klaedtke03} but tailored to ensure the preservation of weak-unambiguity.

\subsubsection{Runs of PA}

A \intro{run of a Parikh automaton} $\AA=(\Sigma, Q, q_I, F, C, \Delta)$ of dimension $d$ is a sequence 
\[
\pi=p_0 \,(a_0,\vect{v_0})\, p_1 \cdots p_{n-1}\, (a_{n-1},\vect{v}_{n-1})\,p_n
\]
 with $n \geq 0$, $p_i \in Q$, $a_i \in \Sigma$ and $\vect{v}_i \in \NN^d$ such that for all $i \in [0,n-1]$, $(p_i,(a_i,\vect{v_i}),p_{i+1})$ is a transition of $\AA$. Such a run will be denoted by  $p_0 \era{a_0,\vect{v_0}} p_1 \cdots p_{n-1} \era{a_{n-1},\vect{v_{n-1}}} p_n$
 and is said to start from $p_0$ and to end in $p_n$ and labeled by $(a_0\cdots a_{n-1},\vect{v_0}+\cdots+\vect{v_{n-1}})$. To convey this information in a synthetic manner, we simply write $p \eRb{w,\vect{v}}{\pi} q$ where $w=
 a_0\cdots a_{n-1}$ and $\vect{v}=\vect{v_0}+\cdots+\vect{v_{n-1}}$.
  Remark that $\pi=q$ is a valid run starting and ending in $q$ and labeled by $(\varepsilon,\vect{0})$ where $\vect{0}$ denotes the constant vector of dimension $d$ equal to $0$.

For a run $\pi_1$ starting in $p$ and ending in $q$ and a run $\pi_2$ starting in $q$ and ending in $r$, we let $\pi_1 \cdot \pi_2$ denote the concatenation of the two runs. Formally if $\pi_1=p_0 \era{a_0,\vect{v_0}} p_1 \,\cdots\, p_{n-1} \era{a_{n-1},\vect{v_{n-1}}} p_n$ and $\pi_2=q_0 \era{b_0,\vect{w_0}} q_1\, \cdots\, q_{m-1} \era{b_{n-1},\vect{v_{n-1}}} q_n$, we define:
\[
\pi_1 \cdot \pi_2 = p_0 \era{a_0,\vect{v_0}} p_1 \cdots p_{n-1} \era{a_{n-1},\vect{v_{n-1}}} p_n  \era{b_0,\vect{w_0}} q_1 \cdots q_{m-1} \era{b_{n-1},\vect{w_{n-1}}} q_n.
\]
In particular, if $p \eRb{u_1,\vect{v_1}}{\pi_1} q$ and $q \eRb{u_2,\vect{v_2}}{\pi_2} r$ then $p \eRb{u_1\cdot u_2, \vect{v_1} + \vect{v_2}}{\pi_1 \cdot \pi_2} r$.

\subsubsection{Generalized $\PA$}
\label{sec:generalized-pa}

We consider a variant of Parikh automata, called \intro{generalized Parikh automata} (generalized PA for short), in which transitions are no longer labeled by a vector in $\NN^d$ but by a semilinear set in $\NN^d$. As a consequence, a run $\pi$ of a generalized PA with is labeled by a couple $(w, S)$, where $w$ is a word and $S$ is a semilinear equal to the sum of all the semilinear sets labelling the transitions of $\pi$. A run $\pi$ labeled by $(w,S)$  is accepting if it starts in the initial state and ends in a final state and $S \cap C \neq \emptyset$ where $C$ is the semilinear constraint of the generalized PA.

The model of generalized PA is implicitly used in work of \cite{Klaedtke02,Klaedtke03}.

\begin{definition}[generalized Parikh automaton]
	A generalized Parikh automaton $\AA$ is given by a tuple  
	$(\Sigma, Q, q_I, F, C, \Delta)$  where $\Sigma$ is the input alphabet, $Q$ is the finite set of states, $q_I \in Q$ is the initial state, $F \subseteq Q$ is the set of final states, $C$ is a semilinear set in $\NN^d$ and $\Delta$ is the set of transitions of the form $p \era{a,S} q$ with $p$ and $q \in Q$, $a \in \Sigma$ and $S$ a semilinear set in $\mathbb{N}^d$.
	
A run $p_0 \era{a_0,S_0} p_1 \cdots p_{n-1} \era{a_{n-1},S_{n-1}} p_n$
is labeled by $(w,S)$ with $w=a_0 \cdots a_{n-1}$, $S=S_0+\cdots+S_{n-1}$. By convention, a run $\pi=q$ is labeled by $(\varepsilon,\{\vect{0}\})$.

It is accepting if it starts in the initial state (i.e., $p_0=q_I$) and ends in an accepting state (i.e., $p_n \in F$) and $S \cap C \neq \emptyset$.	

	A generalized $\PA$  is called (weakly)-unambiguous if every word has at most one accepting run.
\end{definition}

A $\PA$ can be seen as a generalized $\PA$ by replacing every transition $(p,(a,\vect{v}),q)$ with $(p,(a,\{\vect{v}\}),q)$.
This transformation preserves both the accepted language and weak-unambiguity. We will give in Proposition~\ref{prop:equivalence-PA-generalized-PA} a converse translation from generalized $\PA$ to $\PA$ which preserves weak-unambiguity.

For this, we need a technical lemma on semilinear sets.
For a  subset $L$  of $\NN^d$, we define $L^r$ by $L^0=\{\vect{0}\}$ and $L^{r+1}=L^{r}+L$ for $r \geq 0$. In other terms, $L^r$ contains the vectors that can be obtained as the sum of $r$ vectors in $L$. 

\begin{lemma}[\cite{Klaedtke02}]
\label{lem:semilinear-relation}
For a semilinear set $S$ in $\NN^d$, the relation $\vect{v} \in S^r$ is semilinear, meaning that there exists a Presburger formula $\varphi(x,y_1,\ldots,y_d)$ such that for all $r,v_1,\ldots,v_{n-1}$ and $v_n \in \NN$, $\varphi[r,v_1,\ldots,v_n]$ holds if and only if $(v_1,\ldots,v_d)$ belongs to $S^r$. 	
\end{lemma}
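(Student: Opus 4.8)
The statement to prove is Lemma~\ref{lem:semilinear-relation}: for a semilinear set $S \subseteq \NN^d$, the ternary relation ``$r \in \NN$ and $\vect v \in S^r$'' is Presburger-definable. The plan is to reduce everything to the case of a single linear set and then handle the summation count $r$ by a direct arithmetic description of the coefficients.

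First I would reduce to the linear case. Write $S = \bigcup_{j=1}^m L_j$ where each $L_j = \vect{c_j} + P_j^*$ with $P_j = \{\vect{p_{j,1}},\ldots,\vect{p_{j,k_j}}\}$. A vector $\vect v$ lies in $S^r$ exactly when it can be split as $\vect v = \sum_{j=1}^m \vect{u_j}$ with $\vect{u_j} \in L_j^{r_j}$ and $\sum_j r_j = r$, where each $r_j \geq 0$. So it suffices to show that ``$\vect u \in L^t$'' is Presburger-definable for a single linear set $L = \vect c + P^*$, $P = \{\vect{p_1},\ldots,\vect{p_k}\}$; the full relation is then obtained by an existential quantification over the finitely many $r_j$ and $\vect{u_j}$ together with the linear equations $\sum r_j = r$ and $\sum \vect{u_j} = \vect v$, all of which is expressible in Presburger arithmetic. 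The base cases $t=0$ (forcing $\vect u = \vect 0$) are harmless, being first-order definable.

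The core step is the single-linear-set case. Here $L^t = \{\, t\,\vect c + \mu_1 \vect{p_1} + \cdots + \mu_k \vect{p_k} \;:\; \mu_1,\ldots,\mu_k \geq 0 \,\}$, because summing $t$ vectors of the form $\vect c + \sum_i \lambda_i^{(s)} \vect{p_i}$ gives $t \vect c + \sum_i (\sum_s \lambda_i^{(s)})\vect{p_i}$, and conversely any choice of $\mu_i = \sum_s \lambda_i^{(s)}$ can be realized (for $t \geq 1$; and for $t=0$ the set is $\{\vect 0\}$). Thus
\[
\vect u \in L^t \iff \exists \mu_1,\ldots,\mu_k \geq 0 : \vect u = t\,\vect c + \mu_1 \vect{p_1} + \cdots + \mu_k \vect{p_k}.
\]
The only non-linearity is the product $t\,\vect c$, but $\vect c$ is a \emph{fixed} constant vector, so $t\,\vect c$ is a tuple of linear terms in the single variable $t$ (namely $(t\,\vect c(1),\ldots,t\,\vect c(d))$), which is perfectly Presburger. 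The right-hand side is therefore a quantifier-prefixed conjunction of linear equations and inequalities over $t, u_1,\ldots,u_d,\mu_1,\ldots,\mu_k$, hence defines a semilinear relation. Collecting this through the reduction of the previous paragraph and using closure of Presburger-definable sets under the boolean operations and projection yields the desired formula $\varphi(x,y_1,\ldots,y_d)$.

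The main obstacle — really the only subtle point — is making sure the ``sum of $t$ copies'' description of $L^t$ is correct at the level of the \emph{coefficients} and not just set-theoretically: one must check that the monoid generated by $P$ combines correctly under iterated Minkowski sum, i.e.\ that the exponents $\mu_i$ range over \emph{all} of $\NN$ independently once $t \geq 1$, and that the ``$t\,\vect c$'' shift is exactly additive in $t$. This is elementary but should be stated carefully because it is where the assumption that we are iterating a fixed linear set (rather than an arbitrary set) is used. Everything else is bookkeeping with the standard closure properties of Presburger arithmetic, and the quantifier alternation introduced (one existential block over the decomposition data) does not cause any difficulty since Presburger arithmetic admits quantifier elimination. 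An alternative route, if one prefers to avoid re-deriving the coefficient description, is to invoke the characterization of semilinear sets as Parikh images of regular languages and build a regular language whose Parikh image over an enlarged alphabet encodes $(t,\vect v)$; but the direct arithmetic argument above is shorter and self-contained.
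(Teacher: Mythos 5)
Your proposal is correct and follows essentially the same route as the paper's proof: reduce to a single linear set $L=\vect c+P^*$ by existentially guessing a split $r=\sum_j r_j$, $\vect v=\sum_j \vect{u_j}$ with $\vect{u_j}\in L_j^{r_j}$, and in the linear case use that $L^t = t\vect c + P^*$ for $t\geq 1$ (i.e.\ $(P^*)^t=P^*$), with $t\vect c$ Presburger-expressible since $\vect c$ is a constant and the $t=0$ case forcing $\vect u=\vect 0$. The coefficient-level verification you flag as the subtle point is exactly the fact the paper invokes, so nothing is missing.
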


\begin{proof}
We first consider the case where $S=\vect c + P^*$ is a linear set. For all $\vect{v} \in \NN^d$ and $r > 0$, we have $\vect{v} \in S^r$ if and only if $\vect{v}-r \vect{c} \in P^*$. This is due to the fact that $(P^*)^r=P^*$ when $r>0$. Hence the formula $\varphi_S$, given below, defines the relation $\vect{v} \in S^r$.
\[
\begin{array}{lcl}
\varphi_S(x,y_1,\ldots,y_d) & \eqdef & x=0 \rightarrow v_1=\cdots=v_d=0 \\
&&  \wedge\; x\neq0 \rightarrow \varphi_P(v_1-c_1 x,\ldots,v_d-c_d x)  \\
\end{array}
\]
where $\varphi_P(x_1,\ldots,x_d)$ is a Presburger formula defining $P$. Remark that the multiplication by is $c_i$  allowed as the vector $\vect{c}$ is a constant.

Let us now handle the general case where $S$ is semilinear, of the form $S=\cup_{i=1}^n S_i$ where the $S_i$'s are linear sets. We define: 
	\begin{align*}
		\varphi_{S}(r,\vect v)&=\exists \ell_1, \ldots, \ell_n,\ \exists \vect{z_1}, \ldots, \vect{z_n},\\
		& \phantom{\wedge}\ \ \ell_1+\ldots+ \ell_n=r \wedge \vect v = \vect{z_1}+ \ldots + \vect{z_n} \\
		& \wedge \bigwedge_{i=1}^n \varphi_{S_i}(\ell_i,\vect{z_i})
	\end{align*}
	where $\vect{z_i}$ designates the vector of $d$ variables $(z_{i,1}, \ldots, z_{i,d})$.
	
	Let $\vect v\in\NN^d, r\in\NN$ such that $\vect v\in S^r$.
	If $r=0$ then $\vect v=\vect 0$ and $\varphi_{S}[\vect v, r]$ holds.
	Otherwise, $r>0$ and $\vect v$ is the sum of $r$ vectors from $S$: we can write $\vect v= \vect{v_1}+\ldots+\vect{v_r}$, where $\vect{v_j}\in S$ for $1\leq j\leq r$. 
	
	For $1\leq i\leq n$, let's denote $D_i$ the disjoint sets of vectors among $\vect{v_1}, \ldots, \vect{v_r}$ that belong to $L_i$, and don't belong to $L_k$ for $k<i$. Reordering the sum $\vect v= \vect{v_1}+\ldots+\vect{v_r}$ by grouping the vectors according to the sets $D_i$, we can rewrite it into $\vect{v}= \vect{z_1}+\ldots+\vect{z_n}$, where $\vect{z_i}$ is the sum of every vectors in $D_i$ (and is the null vector if $D_i$ is empty). Let's denote $\ell_i=|D_i|$. Then $\vect{z_i}\in L_i^{\ell_i}$ and $\ell_1+\ldots +\ell_n=r$. So $\varphi_{S}[r,\vect v]$ is true.
	
	Reciprocally, let $r\in\NN$ and $\vect v\in\NN^d$ satisfying $\varphi_{S}$.
	Then we can find $\ell_1, \ldots, \ell_n \in\NN$ such that $\ell_1+\ldots+ \ell_n=r$, and vectors $\vect{z_1}, \ldots, \vect{z_n}\in\NN^d$ such that $\vect v = \vect{z_1}+ \ldots + \vect{z_n}$ and $\vect{z_i}\in L_i^{\ell_i}\subseteq S^{\ell_i}$.
	
	If $r=0$, then every $\ell_i$ equals $0$, so $\vect{z_i}=\vect 0$ for every $i$. Then $\vect v=\vect 0$, and consequently $\vect v\in S^r$.

	If $r>0$, then some $\ell_i$ are not null. The equality $\vect v = \vect{z_1}+ \ldots + \vect{z_n}$ can be rewritten  as $\vect v= \vect{v_1}+\ldots+\vect{v_r}$, where $\vect v_i\in S$. So $\vect v\in S^r$.

\end{proof}

We can now provide a translation from generalized $\PA$ to $\PA$ which preserves weak-unambiguity.

\begin{proposition}
\label{prop:equivalence-PA-generalized-PA}
	Let $\AA$ a generalized $\PA$ (resp. weakly-unambiguous generalized $\PA$), the language $L(\AA)$ is recognized by a PA (resp. weakly-unambiguous $\PA$) $\BB$ with the same number of states and whose dimension is equal to the number of different semilinear sets appearing in the transitions of $\AA$.
\end{proposition}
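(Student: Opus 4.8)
The plan is to convert each transition label $(a,S)$ of the generalized PA $\AA$ into a genuine vector, by paying for the semilinearity with extra counters. Suppose the transitions of $\AA$ use exactly $m$ distinct semilinear sets $S_1,\ldots,S_m$, each living in $\NN^{d}$. I would let the new automaton $\BB$ have dimension $m$ and keep the same state set, initial state and final states as $\AA$. Each transition $p\era{a,S_j}q$ of $\AA$ is replaced in $\BB$ by the transition $p\era{a,\vect e_j}q$, where $\vect e_j\in\NN^m$ is the $j$-th unit vector. Thus a run $\pi$ of $\AA$ labeled by $(w,S_{j_1}+\cdots+S_{j_n})$ becomes a run of $\BB$ labeled by $(w,\vect r)$ where $r_j$ is the number of times the set $S_j$ is used along $\pi$. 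This is a length-preserving bijection between runs of $\AA$ and runs of $\BB$ over the same word, which already shows that weak-unambiguity is preserved once we get the languages to match.

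The remaining point is to build the semilinear constraint $C_\BB\subseteq\NN^m$ so that $L(\BB)=L(\AA)$. The run $\pi$ of $\AA$ above is accepting iff $(S_{j_1}+\cdots+S_{j_n})\cap C\neq\emptyset$; grouping repeated sets, the sum is $r_1\cdot S_1+\cdots+r_m\cdot S_m$ in the sense of the $L^r$ notation, i.e.\ $S_1^{r_1}+\cdots+S_m^{r_m}$. Here I would invoke Lemma~\ref{lem:semilinear-relation}: for each $i$ there is a Presburger formula $\varphi_{S_i}(x,z_1,\ldots,z_d)$ asserting $(z_1,\ldots,z_d)\in S_i^{x}$. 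Let $\psi_C(z_1,\ldots,z_d)$ be a Presburger formula defining $C$. Then define
\[
\Phi(r_1,\ldots,r_m)\;\eqdef\;\exists\,\vect{z_1},\ldots,\vect{z_m},\;\Big(\bigwedge_{i=1}^{m}\varphi_{S_i}(r_i,\vect{z_i})\Big)\;\wedge\;\psi_C\!\Big(\textstyle\sum_{i=1}^{m}\vect{z_i}\Big),
\]
where $\vect{z_i}=(z_{i,1},\ldots,z_{i,d})$. Since the sets definable in Presburger arithmetic are exactly the semilinear sets (cited already in Section~\ref{sec:semilinear}), $\Phi$ defines a semilinear set $C_\BB\subseteq\NN^m$, and by construction $\vect r\in C_\BB$ iff $(S_1^{r_1}+\cdots+S_m^{r_m})\cap C\neq\emptyset$.

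Putting this together: a word $w$ is in $L(\AA)$ iff $\AA$ has an accepting run $\pi$, iff (via the bijection) $\BB$ has the corresponding run labeled by the profile vector $\vect r$ of $\pi$ with $\vect r\in C_\BB$, iff $w\in L(\BB)$. The bijection on runs is over a fixed word, so if every word has at most one accepting run in $\AA$ the same holds in $\BB$; hence the construction preserves weak-unambiguity. Finally $\BB$ has the same number of states as $\AA$ and dimension $m=$ number of distinct semilinear sets on the transitions of $\AA$, as claimed. The only genuinely non-trivial ingredient is Lemma~\ref{lem:semilinear-relation}, which has already been proved; everything else is bookkeeping — the main thing to be careful about is writing the acceptance condition as the \emph{sum} of $S_i^{r_i}$ (not of individual copies), so that the Presburger formula has bounded length independent of $w$, and checking that the empty run (labeled $(\varepsilon,\{\vect 0\})$) is handled correctly, i.e.\ $\vect 0\in C_\BB$ iff $\vect 0\in C$, which follows since $S_i^0=\{\vect 0\}$.
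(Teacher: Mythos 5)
Your construction is exactly the one in the paper: same states, transitions $p\era{a,\vect e_j}q$ replacing $p\era{a,S_j}q$, and the constraint defined by the Presburger formula $\exists\vect{z_1},\ldots,\vect{z_m}$ with $\bigwedge_i\varphi_{S_i}(r_i,\vect{z_i})$ (via Lemma~\ref{lem:semilinear-relation}) and $\sum_i\vect{z_i}\in C$, together with the run-level bijection that transfers acceptance and weak-unambiguity. The argument is correct and essentially identical to the paper's proof, with your remark on the empty run being a harmless extra check.
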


\begin{proof}
    Fix a generalized $\PA$ $\AA=(Q, q_0, F, \Sigma, \Delta, S)$. Let $S_1,\ldots,S_r$ be an enumeration of the different semilinear sets appearing in the transitions of $\AA$.

    We construct a $\PA$ $\BB=(Q,q_0,F,\Sigma,\Delta',S')$ of dimension $r$ accepting $L(\AA)$. The automaton $\BB$ shares the same structure as $\AA$ (i.e., the same set of states, initial state, final states) and only differs by the transitions and semilinear constraint.  For every transitions $p \xrightarrow{a,S_i} q$ of $\AA$, there is a
    transition $p \xrightarrow{a,\vect{e_i}}q$, where $\vect{e_i}$ is the vector that has a $1$ in its $i$th coordinate, $0$ everywhere else. The semilinear constraint  is $S':=\{(n_1, \ldots, n_r)\in\NN^r\ :\ \exists \vect{v}\in S\cap \sum_{i=1}^r S_i^{n_i}\}$. It is indeed semilinear, since it is recognized by the formula:
		\[\varphi_{S'}(n_1, \ldots, n_r)=\exists \vect{v_1},\ldots,\vect{v_r},\ \vect{v_1}+\ldots+\vect{v_r}\in S\ \wedge\  \bigwedge_{i=1}^r \vect{v_i}\in S_i^{n_i}\]
	where $\vect{v_i}\in S_i^{n_i}$ is syntactic sugar for the formula constructed in Lemma~\ref{lem:semilinear-relation}.

	We denote by $\varphi$ the transformation taking a run $\pi$ of $\AA$ and translating it into a run of $\BB$ by replacing every transition in $\Delta$ by its associated transition in $\Delta'$. The mapping $\varphi$ is a bijection between the runs of $\AA$ and the runs of $\BB$ preserving the states and the input word.

To show that $\AA$ and $\BB$ accept the same languages, we are going to prove that for every run $\pi$ of $\AA$, $\pi$ is accepting for $\AA$ if and only if $\varphi(\pi)$ is accepting for $\BB$. In particular, as $\varphi$ is a bijection preserving the input word, this implies both that $L(\AA)=L(\BB)$ and that if $\AA$ is weakly unambiguous then so is $\BB$.
	
Let $\pi$ be an accepting run in $\AA$ labeled by $(w, S_\pi)$. Let $m_i$ denote the number of occurrences of $S_i$ in the transitions of $\pi$, for $1\leq i\leq r$ and let $\vect{m}$ denote the vector $(m_i)_{i \in [1,r]}$. By definition,  the semilinear set $S_\pi$ labelling the run $\pi$ is $S_\pi=\sum_{i=1}^r S_i^{m_i}$. The run $\varphi(\pi)$ of $\BB$ is labeled by $(w, \vect m)$ and starts in the initial state and ends in a final state. To show that $\varphi(\pi)$ is accepting for $\BB$, we need to show that $\vect{m}\in S'$. By definition of $S'$, this is equivalent to  $S \cap \sum_{i=1}^r S_i^{m_i} = S \cap S_\pi \neq \emptyset$ which holds as $\pi$ is accepting for $\AA$.

Let $\pi'$ be an accepting run in $\BB$ labeled by $(w, \vect{m})$, with $\vect{m}=(m_i)_{i} \in S'$. Then $\pi=\varphi^{-1}(\pi')$ is a run in $\AA$ labeled by $(w, S_\pi)$ and starting in the initial state and ending in a final state. Furthermore $m_i$ is the number of occurrences of the semilinear set $S_i$ in the transition of $\pi$ and consequently $S_\pi=\sum_iS_i^{m_i}$. To show that $\pi$ is accepting for $w$, we need to prove that $S_\pi \cap S\not=\emptyset$. By definition, as $\vect{m}$ belongs to $S'$, $\sum_iS_i^{m_i} \cap S = S_\pi \cap S \neq \emptyset$.
\end{proof}

\subsection{$\PA$ with $\varepsilon$-transitions}
\label{sec:eps-pa}

A $\PA$ with $\varepsilon$-transitions is defined is the same fashion as a $\PA$ except that we allow transitions of the form $(p,\varepsilon,\vect{v},q)$ where $\varepsilon$ denotes the empty-word. The notions of run and accepted languages are adapted in the usual way. 

In the non-deterministic case, it is known   that adding $\varepsilon$-transitions does not increase the expressivity of the model \cite{Klaedtke03}. A little more care is needed to ensure that the $\varepsilon$-removal preserves weak-unambiguity.

\begin{proposition}
\label{prop:eps-removal-wupa}
For every weakly-unambiguous $\PA$ with $\varepsilon$-transitions, there (effectively) exists an equivalent weakly-unambiguous $\PA$ $\BB$ without $\varepsilon$-transitions. Furthermore $\BB$ can be constructed with $n$ states and dimension $n^2 t$ if $\AA$ has $n$ states and $t$ transitions. 
\end{proposition}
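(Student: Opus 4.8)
The plan is to eliminate $\varepsilon$-transitions in the usual way, by precomputing, for each ordered pair of states $(p,q)$, the effect of all $\varepsilon$-labeled runs from $p$ to $q$, and then absorbing this effect into the ordinary (letter-reading) transitions leaving $q$. The subtlety, compared with the classical construction of \cite{Klaedtke03}, is twofold: the vector accumulated along an $\varepsilon$-path must be remembered (not just the reachability relation), and weak-unambiguity must be preserved, so we cannot collapse several distinct $\varepsilon$-paths into one transition without introducing ambiguity. The device to handle this is to pass through a \emph{generalized} $\PA$ (Section~\ref{sec:generalized-pa}): first I would build an equivalent $\varepsilon$-free generalized $\PA$ whose transitions are labeled by semilinear sets that record \emph{which} $\varepsilon$-runs lead where, and then invoke Proposition~\ref{prop:equivalence-PA-generalized-PA} to convert it back into an ordinary weakly-unambiguous $\PA$ with the stated size bounds.

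Concretely, fix a weakly-unambiguous $\PA$ $\AA$ with $\varepsilon$-transitions, with state set $Q$, $|Q|=n$, and $t$ transitions. For each pair $(p,q)\in Q\times Q$, let $E_{p,q}\subseteq\NN^d$ be the set of vectors labeling some $\varepsilon$-run from $p$ to $q$ in $\AA$; this is the Parikh image of a regular language over the alphabet of $\varepsilon$-transitions, hence semilinear, and effectively computable. Now build a generalized $\PA$ $\BB'$ on the same state set $Q$: its transitions are, for every letter-transition $(q,(a,\vect v),q')$ of $\AA$ and every state $p$ with $E_{p,q}\neq\emptyset$, a transition $p\era{a,\;\vect v+E_{p,q}}q'$ (using $\vect v+E_{p,q}=\{\vect v\}+E_{p,q}$); the initial state stays $q_I$; the final states are enlarged to include every $p$ such that some $\varepsilon$-run goes from $p$ to a final state with a vector lying in $C$ — more precisely, to make this compatible with the semilinear constraint, I would instead keep the final state set as the set of $p$ from which a final state is $\varepsilon$-reachable and fold the corresponding vector set into a fresh $\varepsilon$-free "tail" by a standard normal-form trick, or simpler: route everything through an initial $\varepsilon$-closure at $q_I$ and a final $\varepsilon$-closure, each absorbed into the first/last letter-transition, plus a special case for the empty word. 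The constraint $C$ of $\BB'$ is that of $\AA$. One then checks that runs of $\BB'$ reading a word $w$ correspond bijectively to runs of $\AA$ reading $w$ (each maximal $\varepsilon$-segment of an $\AA$-run is summarized by the semilinear label it contributes), and that the labeling vector-sets match up so that acceptance is preserved; since the correspondence is a bijection respecting the read word, $\BB'$ is weakly-unambiguous as well. Finally, Proposition~\ref{prop:equivalence-PA-generalized-PA} turns $\BB'$ into an ordinary weakly-unambiguous $\PA$ $\BB$ with the same $n$ states and dimension equal to the number of distinct semilinear sets on the transitions of $\BB'$; since these sets are among $\{\vect v+E_{p,q}\}$ indexed by a letter-transition target and a state $p$, there are at most $n^2 t$ of them, giving the claimed bound.

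The main obstacle I expect is bookkeeping the empty word and the interaction between the initial/final $\varepsilon$-closures and the semilinear constraint without double-counting vector contributions — that is, making sure that the vector of a run in $\BB$ equals the vector of the corresponding run in $\AA$ on the nose (including the $\varepsilon$-prefix before the first letter and the $\varepsilon$-suffix after the last letter), so that membership in $C$ is preserved and, crucially, so that the run-correspondence stays a \emph{bijection} and weak-unambiguity is not lost. The cleanest way to discharge this is probably to treat $w=\varepsilon$ separately (decide acceptance of $\varepsilon$ directly, adding at most one state or handling it via initial-finality), and for $|w|\geq 1$ to attach the $\varepsilon$-prefix set $E_{q_I,q}$ to the first letter-transition out of $q$ and an $\varepsilon$-suffix set $E_{q',f}\cap(\text{reaching a final }f)$-type data to the last letter-transition; a careful but routine case analysis then shows the vector sums coincide. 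Everything else — semilinearity of the $E_{p,q}$'s, the bijection of runs, the dimension count — is a direct application of the tools already established (Lemma~\ref{lem:semilinear-relation}, Proposition~\ref{prop:equivalence-PA-generalized-PA}).
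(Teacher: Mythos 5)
Your plan is essentially the paper's proof: compute, for each pair of states $(p,q)$, the semilinear set of vectors labelling $\varepsilon$-runs from $p$ to $q$, absorb these sets into the letter-reading transitions of a \emph{generalized} $\PA$ on the same state set, argue that accepting runs correspond, and finish with Proposition~\ref{prop:equivalence-PA-generalized-PA}, which also yields the $n^2t$ dimension count. The only real difference is the direction of absorption, and this is where your write-up is looser than the paper's. You attach the $\varepsilon$-\emph{prefix} set $E_{p,q}$ to each letter transition, which forces you to also handle the $\varepsilon$-\emph{suffix} after the last letter; your first idea (enlarging the final states to the states from which a final state is $\varepsilon$-reachable "with a vector lying in $C$") is incorrect, since $C$ constrains the total vector of the run and the suffix contribution cannot simply be tested or dropped, and the alternatives you then list are left as a fork rather than a single construction. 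The paper takes the mirror route: it assumes w.l.o.g.\ that $q_I$ has no incoming transitions, attaches the $\varepsilon$-suffix set $C^{\varepsilon}_{q,r}$ of each letter transition $(p,(a,\vect d),q)$ to that transition (retargeted to $r$), and adds special transitions out of $q_I$ carrying $C^{\varepsilon}_{q_I,p}+C^{\varepsilon}_{q,r}$; this way runs end in genuine final states of $\AA$, no final-closure bookkeeping or extra "tail" state is needed, and the state set and constraint are untouched. Adopting this (or its exact dual, normalizing final states instead of the initial one) closes the gap you flag about the empty word and the prefix/suffix interaction.

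One further point to tighten: the run correspondence is \emph{not} a bijection, as you claim, but a many-to-one contraction — each run of $\AA$ contracts to a unique $\varepsilon$-free run of $\BB$, while one run of $\BB$ may be the contraction of many runs of $\AA$ (that is precisely why transitions carry semilinear sets rather than vectors). The weak-unambiguity argument therefore goes: an accepting run of $\BB$ on $w$ lifts to at least one accepting run of $\AA$ on $w$ (choose vectors witnessing $S\cap C\neq\emptyset$ inside the summands $C^{\varepsilon}_{\cdot,\cdot}$); two distinct accepting $\BB$-runs on $w$ would lift to accepting $\AA$-runs on $w$, which coincide by weak-unambiguity of $\AA$, and since contraction is a well-defined function this forces the two $\BB$-runs to be equal. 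With these two repairs your argument matches the paper's.
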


\begin{proof}
Let $\AA=(\Sigma, Q, \qinit, F, C, \Delta)$ be a weakly-unambiguous PA with $\varepsilon$-transitions. Without loss of generality, we assume that the initial state $\qinit$  has no incoming transitions. Thanks to Proposition~\ref{prop:equivalence-PA-generalized-PA}, it is enough to construct a  weakly-unambiguous generalized $\PA$ $\BB$ without $\varepsilon$-transitions accepting $L(\AA)$.

For every pair of states $(p,q)$ of $\AA$, we consider the 
set, denoted by $C_{p,q}^\varepsilon$, of vectors $\vect{v}$ in $\NN^d$ which can label a run $\pi$ of the form $p \eRa{\varepsilon,\vect{v}} q$. 

Notice that $C_{p,q}^\varepsilon$ is semilinear. Indeed, it is recognized by the automaton over $(\NN^d,+)$ obtained from $\AA$ by setting the initial state to $p$, the final state to $q$, and  keeping only the $\varepsilon$-transitions. 

The automaton $\BB$ is defined as follows: it has the same set of states, initial and final states as $\AA$ does.

For every pair $(q, r)$ of states of $\AA$, for every transition $(p, (a, \vect{d}), q)\in\Delta$, where $a\not=\varepsilon$ and $p\not=q_I$, we add to $\Delta'$ the transition $(p, (a, (\{\vect d\}+C_{q,r}^\varepsilon), r)$, and the transition $(q_I, (a, (\{\vect{d}\}+C_{q_I,p}^\varepsilon+ C_{q,r}^\varepsilon), r)$. Since $q_I$ has no incoming transition, there can only be exactly one use of a transition of the form $(q_I, (a, (\{\vect{d}\}+C_{q_I,p}^\varepsilon+ C_{q,r}^\varepsilon), r)$ in a run of $\AA'$, at the very beginning.

Let us show that $L(\AA)\subseteq L(\BB)$. Let $w=w_1\ldots w_r\in L(\AA)$, and $q_I\eRb{(w, \vect d)}{\pi} q_f$ its accepting run  in $\AA$, with $\vect d\in S$. We can decompose $\pi=\pi^0_{\varepsilon}\pi^1_{w_1}\ldots \pi^{r-1}_{\varepsilon}\pi^r_{w_r}\pi^r_{\varepsilon}$, where each $\pi^i_{w_i}$ is a transition that reads the letter $w_i$, and each $\pi^i_{\varepsilon}$ is a (possibly empty) path of $\varepsilon$-transitions.
\\
At the beginning of the run, we have a run of the form $q_I\eRb{(\varepsilon, \vect{v_0})}{\pi^0_{\varepsilon}}p\xrightarrow{(w_1,\vect{d_1})}q\eRb{(\varepsilon, \vect{v_1})}{\pi^1_{\varepsilon}}r$, with by definition $\vect{v_0}\in C_{q_I,p}^\varepsilon$ and $\vect{v_1}\in C_{q,r}^\varepsilon$. By construction, we have in $\Delta'$ the transition $(q_I, (w_1, (\{\vect{d}\}+C_{q_I,p}^\varepsilon+ C_{q,r}^\varepsilon), r)$. For written convenience we write $L_0=C_{q_I,p}^\varepsilon$ and $L_1=C_{q,r}^\varepsilon$.
\\
For any $2\leq j\leq r$, $\pi^j_{w_j}$ and $\pi^j_{\varepsilon}$ are of the form $\pi^j_{w_j}=p\xrightarrow{w_j,\vect{d_j}}q$ and  $\pi^j_{\varepsilon}=q\eRa{(\varepsilon, \vect{v_j})} r$, where $p,q,r$ are three states and $\vect{v_j}\in C_{q,r}^\varepsilon$, so that in $\Delta'$ we have the corresponding transition $(p, (w_j, (\{\vect{d_j}\}+C_{q,r}^\varepsilon), r)$. For written convenience we write $L_j=C_{p,q}^\varepsilon$. Notice that this decomposition is unambiguous, as there is only one way to decompose $\pi$ under this form.

This leads to a run in $\BB$ of the form $q_I\eRb{(w, L)}{\pi_\BB} q_f$ with $L=L_0+\ldots+L_r+\{\vect{d_1}\}+\ldots \{\vect{d_r}\}$. It is accepting if $L\cap S\not = \emptyset$. It is immediate since $\vect{v_0}+\ldots + \vect{v_r}+\vect{d_1}+\ldots +\vect{d_r}=\vect d \in S$, and $\vect{v_i}\in L_i$ for $0\leq i \leq r$.

Let us now show that $L(\BB)\subseteq L(\AA)$. 
Let $w=w_1\ldots w_r\in L(\BB)$, and $q_I\eRb{(w, L)}{\pi'} q_f$ its accepting run in $\BB$, with $S\cap L\not = \emptyset$. We can decompose $\pi'=\pi'_1\pi'_2\ldots \pi'_r$, where $\pi'_1$ is of the form $q_I \xrightarrow{w_1, \{\vect{d_1}\}+L_0+L_1}r$, where $p,q,r$ are three states, $L_0=C_{q_I,p}^\varepsilon$ and $L_1=C_{q,r}^\varepsilon$. Likewise, for $j\leq 2$, each $\pi'_j$ is of the form $p \xrightarrow{w_j, \{\vect{d_j}\}+L_j}r$ for $p,q,r$ three states, and $L_j=C_{q,r}^\varepsilon$.

As $S\cap L\not = \emptyset$, let's take a $\vect d \in S\cap L$. As $L=L_0+\ldots+L_r+\{\vect{d_1}\}+\ldots+ \{\vect{d_r}\}$, we can find $\vect{v_i}\in L_i$ for $0\leq i \leq r$ such that $\vect{d}=\vect{d_1}+\ldots+ \vect{d_r}+\vect{v_0}+\ldots +\vect{v_r}$.

As $\vect{v_0}\in L_0$ and $\vect{v_1}\in L_1$, it means that from $\pi'_1$ we can build a run in $\AA$ of the form  $\pi_1=q_I\eRb{(\varepsilon, \vect{v_0})}{\pi^0_{\varepsilon}}p\xrightarrow{(w_1,\vect{d_1})}q\eRb{(\varepsilon, \vect{v_1})}{\pi^1_{\varepsilon}}r$.

Likewise, for $2\leq j\leq r$, as $\vect{v_j}\in L_j$, we can write from $\pi'_j$ a run in $\AA$ of the form $\pi_j=p \xrightarrow{w_j,\vect{d_j}}q\eRb{(\varepsilon, \vect{v_j})}{\pi_j^\varepsilon} r$.

This leads to run in $\AA$, $q_I\eRb{w, \vect d}{\pi}q_f$, which is accepting since $\vect d \in S$. So $w\in L(\AA)$.

Notice that we have defined a construction that transforms a run $\pi$ of $\AA$ into a run $\pi'$ of $\BB$, that is obtained by contracting the $\varepsilon$ paths in $\pi$. This construction is well defined, there is only one way to contract the run into a run of $\BB$. Reciprocally, every run from $\BB$ is the contraction of at least one run in $\AA$.

Suppose that $w\in L(\BB)$ is accepted by two different runs $\pi'_1, \pi'_2$ in $\BB$. Then we can create two accepting runs $\pi_1, \pi_2$ in $\AA$ labeled by $w$, such that $\pi'_1$ is the contraction of $\pi_1$, and $\pi'_2$ is the contraction of $\pi_2$. By weak unambiguity of $\AA$, $\pi_1=\pi_2$. As there is only one way to contract a run of $\AA$ into a run of $\BB$, $\pi'_1=\pi'_2$.
\end{proof}

\subsection{Proof of Proposition~\ref{prop:unambparbcm}}
\label{app:unambparbcm}

The aim of this section is to prove that the class of languages accepted by unambiguous (two-way) RBCM and by weakly-unambiguous PA coincide. This equivalence is stated in Proposition~\ref{prop:unambparbcm} which is restated for convenience below.

\equivparbcm*

The result is known to hold in the non-deterministic case \cite{Klaedtke03}. The proof in the unambiguous case follows the same general outline but some care is need to ensure that the different steps preserve unambiguity/weak-unambiguity. Using the normalizations of RBCM obtained in \cite{Ibarra78}, we can show that unambiguous RBCM accept the same languages as weakly-unambiguous PA with $\varepsilon$-transitions.

\begin{proposition}
The unambiguous RBCM and the weakly-unambiguous PA with $\varepsilon$-transitions accept the same languages.
\end{proposition}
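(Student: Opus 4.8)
The plan is to prove the equivalence in both directions, in each case carefully tracking a bijection between accepting computations/runs so that unambiguity is preserved. The overall strategy mirrors the classical non-deterministic proof of Klaedtke and Rüße, but at every step we replace ``there is a computation iff there is a run'' by ``the accepting computations are in bijection with the accepting runs (for a fixed input word)''.

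\smallskip
\textbf{From unambiguous RBCM to weakly-unambiguous PA with $\varepsilon$-transitions.} First I would invoke the normal forms for RBCM from~\cite{Ibarra78}: a $k$-counter $(m,n)$-reversal bounded machine can be converted to one where each counter makes at most one reversal, and where the machine first runs in ``non-decreasing'' mode then in ``non-increasing'' mode on each counter; moreover, by splitting each counter at its reversal point into two counters that are only incremented, one reduces to the case where every counter is increment-only (i.e.\ the value is just the total number of increments). These transformations are standard and can be checked to be computation-preserving in the strong sense (the new machine's accepting computations are in bijection with the old one's, on every fixed input), hence they preserve unambiguity. Once every counter is increment-only, the counter values at the end of a computation form a vector $\vect v \in \mathbb{N}^d$; the accepting condition ``all counters zero'' (or whatever the RBCM final test is) becomes a semilinear constraint on $\vect v$. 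The two-way motion of the reading head is simulated by a PA using the classical crossing-sequence / Shepherdson construction: since the head reversal number $m$ is bounded, the set of head trajectories is finite-branching in a controlled way, and one guesses a sequence of $m{+}1$ passes over the input, verifying consistency via the finite control — this is exactly where the $\varepsilon$-transitions enter, to glue the passes together. The key point is that this simulation is \emph{deterministic given the RBCM computation}: each accepting computation of the RBCM determines a unique accepting run of the PA and vice versa, so weak-unambiguity transfers.

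\smallskip
\textbf{From weakly-unambiguous PA with $\varepsilon$-transitions to unambiguous RBCM.} Conversely, given a weakly-unambiguous PA (with $\varepsilon$-transitions) of dimension $d$ with semilinear constraint $C$, I would build a one-way RBCM that maintains the PA's current state in its finite control and maintains one pair of counters per coordinate (or per linear piece of an unambiguous presentation of $C$, following Section~\ref{sec:semilinear}) to accumulate the vector $\vect v_1 + \cdots + \vect v_n$. Since each transition of the PA adds a fixed vector, the counters are only incremented, so the machine is trivially reversal bounded on the counters; it reads the input left to right, so the head makes zero reversals. At the end, the machine must check $\vect v \in C$; using the unambiguous presentation $C = \uplus_i \vect c_i + P_i^*$, the machine nondeterministically picks the (unique, on an accepted input) index $i$ and the (unique) multiplicities $\lambda_j$, and verifies $\vect v = \vect c_i + \sum_j \lambda_j \vect p_j$ by a further reversal-bounded counting phase — this adds a bounded number of reversals. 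Because the PA is weakly-unambiguous and the semilinear presentation is unambiguous, every accepted word has exactly one accepting run of the PA and exactly one valid choice of $(i, \lambda_j)$, hence exactly one accepting computation of the RBCM; so the RBCM is unambiguous. Finally, the $\varepsilon$-transitions of the PA are simulated by the RBCM simply by not advancing its reading head on the corresponding moves, which is allowed since RBCM heads may ``stay in place''; the bijection of runs is unaffected.

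\smallskip
\textbf{Main obstacle.} The delicate part is the two-way direction: showing that the crossing-sequence simulation of a two-way RBCM by a (one-way, $\varepsilon$-enabled) PA can be arranged so that accepting computations correspond \emph{bijectively} — not merely existentially — to accepting runs. In the classical construction one guesses, for each tape cell, the sequence of states in which the head crosses it, and then checks local consistency; a priori several guessed crossing sequences could certify the same input, which would break unambiguity. The fix is to read off the crossing sequences deterministically from the actual RBCM computation (they are a function of the computation), and to have the PA's nondeterminism range exactly over these objects with a consistency check strong enough that only the ``real'' crossing sequences pass — so that the map from PA-runs back to RBCM-computations is well-defined and inverse to the forward map. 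Carrying out this bookkeeping, together with the verification that the normal-form reductions of~\cite{Ibarra78} are computation-preserving rather than merely language-preserving, is the technical heart of the argument; the remaining steps are routine adaptations of known constructions. Combining this proposition with the $\varepsilon$-removal of Proposition~\ref{prop:eps-removal-wupa} then yields Proposition~\ref{prop:unambparbcm}.
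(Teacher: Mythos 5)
Your converse direction (simulate the PA one-way with $d$ increment-only counters accumulating the vector, then check membership in $C$ via the unambiguous presentation, treating $\varepsilon$-moves as stationary head moves) is essentially the paper's construction and is fine. The forward direction, however, has a genuine gap, and it sits exactly where the paper's proof does its real work. An RBCM tests its counters for zero \emph{during} the computation: every transition is guarded by the zero/non-zero status of the counters. Your reduction ``split each $1$-reversal counter into two increment-only counters, so the counter values are just a final vector and acceptance becomes a semilinear constraint'' only converts the end-of-run test; it gives the simulating PA (which never has access to its accumulated vector) no way to decide which zero-guarded transitions are enabled mid-run, and an increment-only machine cannot test $c^{+}=c^{-}$ either, so the ``reduction'' is not computation-preserving. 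The paper's construction is precisely about this point: the PA carries a phase $\tau\in\{0,1,2,3\}$ (counter never incremented / incrementing / decrementing but non-zero / zero after decrementing), restricts which guards it may simulate according to $\tau$, nondeterministically \emph{guesses} the moment the counter returns to zero, and uses the semilinear constraint $x_1=x_2 \Leftrightarrow (x_3=0 \vee x_3=3)$ (the third coordinate recording $\tau$ at acceptance) to validate the guess; weak-unambiguity is preserved by the discipline that the transition into $\tau=3$ may only occur on a decrement, which makes the guess point unique along each accepting computation. None of this appears in your proposal, and without it the forward simulation is simply not correct.

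A second, related divergence: you propose to handle the two-way head motion yourself at the PA level via a crossing-sequence (Shepherdson-style) simulation, and you concede that making it bijective is the ``technical heart'' while leaving it undone. For a counter machine this is harder than you suggest: a crossing sequence of control states does not determine local consistency, because which guarded transitions are available at a crossing depends on the counter values, i.e.\ on what happens in the other passes, so a purely finite-control consistency check cannot certify only the ``real'' crossing sequences. The paper sidesteps this entirely by performing the two-way-to-one-way conversion at the RBCM level using Ibarra's construction (observing that it preserves unambiguity with little adaptation) and only then translating the resulting one-way, $1$-reversal machine into a PA with $\varepsilon$-transitions. If you keep your route you must either redo Ibarra's conversion anyway or solve the guarded-crossing-sequence problem, in addition to supplying the missing treatment of in-run zero tests described above.
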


\begin{proof}
\newcommand{\MM}{\mathcal{M}}

For the direct inclusion, consider an unambiguous two-way RBCM $\MM$ with $k$-counters each making at most $r$ reversals. In \cite{Ibarra78}, the author constructs an equivalent one-way RBCM $\MM'$ with $k'\geq k$ counters, each making at most one reversal. With very little adaptation, this construction can be shown to preserve unambiguity. To simplify the presentation, we present the construction of an equivalent weakly-unambiguous $\PA$ $\AA$ in the case of one counter. The adaptation to several counters is straightforward. 

The weakly-unambiguous PA $\AA$ is of dimension 3 and has states of the form $(q,\tau)$ with $q$ a state of $\MM'$ and $\tau \in \{0,1,2\}$. Intuitively, $\AA$ simulates $\MM'$ using the first coordinate of the output vector to store the number of increments, and the second coordinate stores the number of decrements. In particular, the value of the counter is obtained by subtracting the second coordinate to the first one. During the simulation, $\AA$ keeps track using the $\tau$-component of whether $\MM'$ is incrementing or decrementing. More precisely, 
\begin{itemize}
	\item $\tau=0$ if the machine $\MM'$ has never incremented (and the counter is equal to $0$),
	\item $\tau=1$ if the machine has incremented its counter but has never decremented,
	\item $\tau=2$ if the machine has decremented but the counter is not-null,
	\item $\tau=3$ if the machine has decremented and the counter is null.
\end{itemize}
The automaton $\AA$ increments its third component by $\tau$ upon accepting.

When $\tau=0$ or $\tau=3$, $\AA$ can only simulate transitions of $\MM'$ guarded by $c=0$ and when $\tau=1$ or $\tau=2$, only transitions guarded by $c\neq 0$.

The automaton $\AA$ can deterministically compute if $\tau=0$ or $1$ but it needs to non-deterministically guess if and when, it should enter $3$. The semilinear constraint of $\AA$ is used to check that the guess on value of $\tau$ is correct. More precisely, the semilinear constraint is given by the formula $\phi(x_1,x_2,x_3) \eqdef x_1 = x_2 \Leftrightarrow x_3=0 \vee x_3=3$. To ensure that the simulation is weakly-unambiguous, we ask that the automata enters $\tau=3$ from $\tau=1$ and $\tau=2$ only when performing a decrement on the counter.

For the converse inclusion, consider a weakly-unambiguous $\PA$ $\AA=(\Sigma, Q, q_I, F, C, \Delta)$ with $\varepsilon$-transitions of dimension $d$. We construct an unambiguous RBCM $\MM$ accepting $L(\AA)$. The machine $\MM$ is one-way and has $d$ counters which are used to store the coordinates of the vector in $\mathbb{N}^d$ computed by $\AA$. The machine $\MM$ works in two phases: it first simulates $\AA$ on the input word only incrementing the counters and then checks that the computed vector belongs to the semilinear set $C$ by simulating an unambiguous automaton over $\mathbb{N}^d$ accepting $C$. During the first phase, for every transition $(p^\AA,a,\vect{v},q^\AA)$ of $\AA$, the machine $M$ in state $p^\AA$ on the letter $a$ can increment the counter $i$ by $\vect{v}(i)$ and move the head to right and entering state $q^\AA$. Upon reaching the end-marker in a final state of $\AA$,
the machine $\MM$ enters the second phase. In this phase, it simulates an unambiguous automaton over $\mathbb{N}^d$ accepting $C$ \cite{EILENBERG1969}. The machine $\MM$ enters the initial state of $\CC$ and for each transition $(p^\CC,\vect{v},q^\CC)$ of $\CC$, the machine in state $p^\CC$ on the end-marker the machine can decrement the $i$-th counter by $\vect{v}(i)$. If such a decrement would decrease the value of a counter below 0, the machine rejects.
The machine $\MM$ accepts on the end-marker in a final state of $\CC$ if the value of all $d$ counters is 0. 

As the automaton accepting $C$ is assumed to be unambiguous, the accepting runs of $\MM$ are in bijection with the accepting runs of $\AA$. Hence $\MM$ accepts $L(\AA)$ and is  unambiguous as $\AA$ is weakly-unambiguous.
\end{proof}

Combined with Proposition~\ref{prop:eps-removal-wupa}, the previous proposition proves Proposition~\ref{prop:unambparbcm}.

\subsection{Proof omitted in Section~\ref{sec:rcm}}
\subsubsection{Proof of equivalence between RCM and weakly-unambiguous PA}
\label{app:rcm-wupa}
\eqparcm*
We prove it by showing how to recognize a language from RCM by a weakly-unambiguous PA, and how to do the reverse. Both directions are quite syntactical, since the two classes are very similar in their definition. The key idea is to see that the constraint of injectivity of the morphism for RCM corresponds to the unicity of an accepting run in the world of Parikh automata.

\paragraph*{From RCM to PA.} This direction is easy. Let $L\in RCM$. Let $\Gamma=\{a_1\ldots a_r\}$, $\Sigma$ two alphabets, $R$ a regular language over $\Gamma$, $C$ some semilinear set in $\NN^r$ regarded as constraints over the number of occurrences of the symbols of $\Gamma$, and finally $\mu:\Gamma^*\longrightarrow\Sigma^*$ a length preserving morphism that is injective over $R\cap [C]$ (where $[C]$ denotes the set of words verifying the constraints in $C$), such that  $L=\mu(R\cap [C])$.

We will write in the following for a word $w\in\Gamma^*$,  $\bm{\pi_{kh}}(w)=(|w|_{a_1}, \ldots, |w|_{a_r})$ its Parikh image.
We suppose that $R$ is recognized by a deterministic automaton $\mathcal A$, and that $C$ is represented by a Presburger formula $\Phi_c$. With these notations $[C]=\{w : \Phi_c(\bm{\pi_{kh}}(w))\}$.

The Parikh automaton recognizing $L$ can be easily obtained from this by simply replacing in $\mathcal A$ every transition of the form $(q_1, a_i, q_2)$ with $(q_1, (\mu(a_i), \vect{e_{a_i}}), q_2)$ where $\vect{e_{a_i}}$ is a vector in dimension $|\Gamma|=r$ filled with $0$ except for the $i$th coordinate, which is $1$. We keep the same Presburger formula for $P$: $\Phi_C$.

There is a bijection between the transition of $\mathcal A$ and $P$, and consequently between the paths in each automaton. For every run in $P$ labeled by $(w\in\Sigma^*,\vect d\in\NN^r)$, there exists a run in $\mathcal A$ taking exactly the same states, labeled by a word $w_2\in\Gamma^*$ such that $\mu(w_2)=w$ and $\vect{d}=\bm{\pi_{kh}}(w_2)$. On the other direction, every run in $\mathcal A$ labeled by $w$ can be translated in a run in $P$ labeled by $(\mu(w), \bm{\pi_{kh}}(w))$, taking the same states.

\begin{lemma}$L=\mathcal L(P)$ and $P$ is weakly-unambiguous.\end{lemma}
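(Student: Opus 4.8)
The plan is to prove the two assertions $L = \mathcal L(P)$ and that $P$ is weakly-unambiguous by exploiting the bijection between runs of the deterministic automaton $\mathcal A$ for $R$ and runs of the constructed Parikh automaton $P$ that has already been described just above the lemma statement. First I would establish the language equality $L = \mathcal L(P)$ by a double inclusion, treating both directions uniformly via this run correspondence. For $\mathcal L(P) \subseteq L$: take an accepting run of $P$ labelled by $(w,\vect d) \in \Sigma^* \times \NN^r$ with $\vect d \in C$ (i.e.\ $\Phi_C(\vect d)$ holds) ending in a final state; the corresponding run of $\mathcal A$ visits exactly the same states, hence is accepting, and is labelled by some $w_2 \in \Gamma^*$ with $\mu(w_2) = w$ and $\bm{\pi_{kh}}(w_2) = \vect d$; thus $w_2 \in R$ and $w_2 \in [C]$, so $w = \mu(w_2) \in \mu(R \cap [C]) = L$. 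For $L \subseteq \mathcal L(P)$: take $w \in L$, so $w = \mu(w_2)$ for some $w_2 \in R \cap [C]$; the (unique, since $\mathcal A$ is deterministic) accepting run of $\mathcal A$ on $w_2$ translates to a run of $P$ visiting the same states, labelled by $(\mu(w_2), \bm{\pi_{kh}}(w_2)) = (w, \vect d)$ with $\Phi_C(\vect d)$ holding since $w_2 \in [C]$; this run is accepting in $P$, so $w \in \mathcal L(P)$.

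For weak-unambiguity of $P$, I would argue by contradiction: suppose some word $w \in \mathcal L(P)$ has two distinct accepting runs $\rho_1 \neq \rho_2$ in $P$. Via the run bijection, these correspond to two runs $\rho_1', \rho_2'$ of $\mathcal A$ on words $w_1, w_2 \in \Gamma^*$ respectively, with $\mu(w_1) = \mu(w_2) = w$ and $\bm{\pi_{kh}}(w_i) \in C$, hence $w_1, w_2 \in R \cap [C]$. Since $\rho_1 \neq \rho_2$ as sequences of transitions, and the bijection between transitions of $P$ and $\mathcal A$ preserves the letter read (a transition on $(\mu(a_i), \vect e_{a_i})$ in $P$ corresponds to the transition on $a_i$ in $\mathcal A$), the underlying input words differ: $w_1 \neq w_2$. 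But then $\mu$ is not injective on $R \cap [C]$ — contradicting the defining hypothesis of the RCM presentation. Hence $P$ is weakly-unambiguous.

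The only slightly delicate point — and the step I would be most careful about — is the claim that $\rho_1 \neq \rho_2$ forces $w_1 \neq w_2$. Since $\mathcal A$ is deterministic, two runs on the \emph{same} input word are identical; contrapositively, two distinct runs (starting from the unique initial state) must read distinct words. I would spell this out explicitly rather than leave it implicit, since it is exactly where determinism of $\mathcal A$ is used and where the injectivity hypothesis on $\mu$ gets leveraged. Everything else is a routine transcription of the run correspondence, so I would keep those parts brief and point back to the paragraph preceding the lemma for the construction of the bijection.

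Alternatively — and perhaps more cleanly — I would phrase the whole argument around the single observation that accepting runs of $P$ are in bijection with pairs $(\text{accepting run of } \mathcal A \text{ on some } w_2 \in R,\ \text{witness that } \bm{\pi_{kh}}(w_2) \in C)$, and since $\mathcal A$ is deterministic such accepting runs are in bijection with words of $R \cap [C]$; then $\mathcal L(P) = \mu(R \cap [C]) = L$ is immediate, and $P$ is weakly-unambiguous precisely because $\mu$ restricted to $R \cap [C]$ is injective, so each $w \in L$ has exactly one preimage in $R \cap [C]$ and hence exactly one accepting run in $P$.
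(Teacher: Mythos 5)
Your proof is correct and follows essentially the same route as the paper's: both directions of the language equality are handled via the transition-level bijection between $\mathcal A$ and $P$, and weak-unambiguity is derived from the determinism of $\mathcal A$ together with the injectivity of $\mu$ on $R\cap[C]$ (the paper argues directly that two accepting runs must coincide, while you take the contrapositive, which is the same argument). No gaps.
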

\begin{proof}
By double inclusion.
\begin{itemize}
\item[\textcolor{itemi}{\bm{$\subseteq$}}] Let $w\in L$. There exists a word $w_2\in R$ such that $\mu(w_2)=w$ and $\phi_c(\bm{\pi_{kh}}(w_2)))$ is true. So there exists an accepting run in $\mathcal A$ for $w_2$, then a run in $P$ with the same states labeled by $(w, \bm{\pi_{kh}}(w_2))$. As $\phi_c(\bm{\pi_{kh}}(w_2)))$ is true, this run is accepting in $P$. So $w\in\mathcal L(P)$.
\item[\textcolor{itemi}{\bm{$\supseteq$}}] Let $w=b_1\ldots b_s\in\mathcal L(P)$. There exists an accepting run in $P$ labeled by $((b_1,\vect{d_1}), \ldots,$ $(b_s,\vect{d_s}))$ so that $\phi_c(\vect d)$ is true, with $\vect{d}=\vect{d_1}+ \ldots+ \vect{d_{s}}$. Then there exists a path in $\mathcal A$ labeled by a word $w_2\in\Gamma^*$ such that $\mu(w_2)=w$ and $\vect{d}=\bm{\pi_{kh}}(w_2)$, going through exactly the same states.  This run is accepting in $\mathcal A$, so $w_2\in R$. Consequently $w_2\in R\cap[C]$ and finally $w\in\mu(R\cap[C])=L$.

Let's suppose that there exists an other accepting run in $P$ labeled by $((b_1,\vect{d'_1}), \ldots, (b_s,\vect{d'_s}))$ such that $\phi_c(\vect d')$ is true with $\vect{d'}=\vect{d'_1}+ \ldots+ \vect{d'_{s}}$. As there is a bijection between the transitions of $\mathcal A$ and $P$, this path is in bijection with a path in $\mathcal A$, labeled by a word  $w_3\in R \cap [C]$. As $\mu(w_3)=\mu(w_2)=w$, the injectivity of $\mu$ over $R\cap[C]$ leads to $w_3=w_2$. Consequently $\vect{d'_1}=\vect{d_1}$, \ldots, $\vect{d_s}=\vect{d'_s}$. Besides, $\mathcal A$ is deterministic so both paths go through the same states. So the two runs in $P$ are identical.

For every word there is at most one accepting run. So $P$ is weakly-unambiguous.
\end{itemize}
\end{proof}

\paragraph*{From PA to RCM}

Let $P$ be a weakly-unambiguous Parikh automaton, $\Phi_P$ its formula.

\begin{lemma}\label{vecteurunitepa}
There exists a weakly-unambiguous PA $P'$ recognizing the same language as $P$ such that its transitions are labeled with vectors of the form $\vect{e_i}$, \textit{e.g.} every transition only changes one coordinate by incrementing it. 
\end{lemma}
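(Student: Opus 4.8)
The plan is to obtain $P'$ by a simple relabelling of the transitions of $P$, at the price of enlarging the dimension. Write $P=(\Sigma,Q,q_I,F,C,\Delta)$ and let $d'=|\Delta|$. I would give $P'$ the same states, the same initial state $q_I$ and the same set $F$ of final states, and I would use one coordinate per transition of $P$. For each transition $t=(p,(a,\vect{v_t}),q)\in\Delta$, put into $P'$ the transition $(p,(a,\vect{e_t}),q)$, where $\vect{e_t}\in\NN^{d'}$ is the unit vector supported on the coordinate indexed by $t$; note that this labels every transition of $P'$ with a vector of the form $\vect{e_i}$, i.e. increments exactly one coordinate. This yields a bijection between the edges of $P$ and those of $P'$, hence a bijection $\rho\mapsto\rho'$ between runs of $P$ and runs of $P'$ that preserves the underlying word and the starting and ending states. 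Moreover, if $\rho=t_1\cdots t_m$, then the vector computed by $\rho$ is $\vect{v_{t_1}}+\cdots+\vect{v_{t_m}}=M\vect{n}$, where $\vect{n}\in\NN^{d'}$ records how many times each transition is used, which is exactly the vector computed by $\rho'$, and $M$ is the $d\times d'$ matrix whose column indexed by $t$ is $\vect{v_t}$.

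It remains to pick the constraint of $P'$ so that this bijection sends accepting runs to accepting runs. I would take $C'\eqdef\{\vect{n}\in\NN^{d'} : M\vect{n}\in C\}$. If $\Phi_P(v_1,\dots,v_d)$ is a Presburger formula defining $C$, then $C'$ is defined by $\Phi_P\big(\sum_{t\in\Delta} v_{t,1}\,n_t,\ \dots,\ \sum_{t\in\Delta} v_{t,d}\,n_t\big)$, where $v_{t,i}$ denotes the (fixed, non-negative integer) $i$-th coordinate of $\vect{v_t}$; since multiplication by these constants is available in Presburger arithmetic, $C'$ is again semilinear, and by \cite{EILENBERG1969,Ito69} (with the size bound of \cite{Chistikov16}) it admits an effectively computable unambiguous presentation, which completes the description of $P'$. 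By construction, $\rho$ is accepting in $P$ iff it ends in $F$ and its vector $M\vect{n}$ lies in $C$, iff $\rho'$ ends in $F$ and its vector $\vect{n}$ lies in $C'$, iff $\rho'$ is accepting in $P'$. Therefore each word has the same number of accepting runs in $P$ and in $P'$; in particular $L(P')=L(P)$, and $P'$ is weakly-unambiguous because $P$ is.

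The construction is essentially bookkeeping, so there is no real obstacle: the only point that needs an argument is that pulling $C$ back along the linear map $\vect{n}\mapsto M\vect{n}$ preserves semilinearity, which is an instance of the closure of Presburger-definable sets under substitution of variables by integer-linear terms. One could be more economical — for instance by letting transitions carrying the same vector share a coordinate, or by splitting a weight-$s$ vector into a chain of unit-vector $\varepsilon$-transitions — but the relabelling above is the cleanest for an existence statement and, crucially, introduces no $\varepsilon$-transitions, which is what is needed for the subsequent RCM encoding where the morphism $\mu$ must be length preserving.
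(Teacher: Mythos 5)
Your proof is correct and is essentially the paper's argument: the paper proves this lemma by viewing $P$ as a generalized PA and invoking the construction of Proposition~\ref{prop:equivalence-PA-generalized-PA}, which is precisely your relabelling by unit vectors (one coordinate per transition label) together with the Presburger-definable pullback of the constraint along the resulting linear map. The only cosmetic difference is that the paper uses one coordinate per distinct transition label rather than one per transition, which does not change the argument.
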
 
\begin{proof}
It suffices to see $P$ as a generalized PA and apply the construction of  Proposition~\ref{prop:equivalence-PA-generalized-PA}, which only uses vectors of the announced form. 
\end{proof}

We can now suppose that $P$ is a weakly-unambiguous PA, whose transitions are of the form $(p, (a, \vect{e_i}),q)$ with $i\in\intint{1}{d}$, and $d$ is the dimension of $P$.

We then create the automaton $\mathcal A$ over the alphabet $\Gamma=\Sigma\times\{\vect{e_i}\}_{i\in\intint{1}{d}}$, simply obtained from $P$ by regarding it as a regular automaton over the alphabet $\Gamma$ (seeing the vectors as letters to concatenate). We define the morphism $\mu$ as the projection over the first coordinate: $\mu((a, \vect{e_i}))=a$.

Let $R$ be the language recognized by $\mathcal A$. A word in $R$ is nothing else than a path in $P$, and a word in $R\cap C$ is an accepting path of $P$. So $\mathcal L(P)=\mu(R\cap [C])$.

Let's finally prove that $\mu$ is injective over $R\cap [C]$. Let $w_1, w_2\in R\cap [C]$ such that $\mu(w_1)=\mu(w_2)$. If $w_1\not=w_2$, then we would have two different accepting runs in $P$ that would be labeled by the same word $\mu(w)$, contradicting that $P$ is weakly-unambiguous.

\subsubsection{Equivalence of definitions for RCM}
\label{app:eq-def-RCM}

The definition of RCM in \cite{Castiglione17} actually differs from the definition given in this article (see 
Section~\ref{sec:rcm}). In \cite{Castiglione17}, the RCM class is defined with a subset of the semilinear sets: the authors only consider semilinear sets which can be expressed as boolean combinations of inequalities: they do not allow quantifiers nor equalities modulo constants. For this section, we will call such semilinear sets \intro{modulo-free semilinear sets}.
Modulo-free semilinears form a strict subclass of the semilinears.
For instance the semilinear set $\{2n \;|\; n \geq 0\}$ cannot be defined using only inequalities and is therefore not a modulo-free semilinear. However this difference does not impact the resulting RCM class.

To show this, we will use the equivalence between RCM and weakly-unambiguous PA. Remark that the translation in both directions transforms a modulo-free semilinear set into a modulo-free semilinear set. Hence to prove the equivalence of RCM defined with general semilinears and RCM defined with modulo-free semilinears, it is enough to establish the same equivalence for weakly-unambiguous PA. A similar idea can be found for affine PA in \cite{Cadilhac12a}.

\begin{lemma}
For every weakly-unambiguous PA $P=(Q,\{q_I\}, F, \Delta, \mathcal S)$, there exists an equivalent 
weakly-unambiguous PA $P'$ with a modulo-free semilinear constraints set.
\end{lemma}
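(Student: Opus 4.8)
The plan is to reduce general semilinear constraints to modulo-free ones by absorbing the modular information into the structure of the automaton. Recall that a semilinear set $\mathcal S \subseteq \mathbb N^d$ can be written as a boolean combination of linear inequalities, equalities and congruences modulo constants; by the earlier remarks, equalities can be split into pairs of inequalities, so the only feature to eliminate is the congruences. Let $\mathcal S$ be given by a Presburger formula $\Phi$ built from atoms that are either modulo-free (inequalities) or congruences of the form $t(x_1,\ldots,x_d) \equiv c \pmod{m}$ for integer-linear terms $t$ and constants $c,m$. Let $m^\star$ be the least common multiple of all the moduli appearing in $\Phi$. The key observation is that whether $\mathcal S$ contains a vector $\vect v$ depends only on the residue class $\vect v \bmod m^\star$ together with the truth values of finitely many inequalities in $\vect v$; and the residue class of the vector computed along a run depends only on the residue classes of the vectors labelling the transitions, which are fixed constants.

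First I would build $P'$ whose state set is $Q \times (\mathbb Z/m^\star\mathbb Z)^d$: a state $(q,\vect r)$ records the current automaton state $q$ of $P$ together with the running sum of the transition vectors taken modulo $m^\star$, componentwise. For each transition $(p,(a,\vect e_i),q)$ of $P$ (we may assume, by Lemma~\ref{vecteurunitepa}, that all labels are unit vectors, although this is not essential) we add the transition $((p,\vect r),(a,\vect e_i),(q,\vect r + \vect e_i \bmod m^\star))$. This is a deterministic relabelling of $P$ with a bijection on runs preserving the input word and the computed $\mathbb N^d$-vector; in particular $P'$ is weakly-unambiguous iff $P$ is, and it accepts the same language provided we choose the new constraint correctly. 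Now for each residue vector $\vect r \in (\mathbb Z/m^\star\mathbb Z)^d$, the formula $\Phi$ restricted to vectors $\vect v$ with $\vect v \equiv \vect r \pmod{m^\star}$ has all its congruence atoms evaluate to constants (true or false), so it becomes equivalent to a modulo-free formula $\Phi_{\vect r}$ (a boolean combination of inequalities, with the constant atoms substituted). Take the new constraint to be $\mathcal S' = \bigcup_{\vect r} \{\vect v : \vect v \text{ is the label of some run reaching } (q,\vect r) \text{ with } q\in F\}$; more cleanly, since acceptance in $P'$ is decided per final state, one attaches to the final state $(q,\vect r)$ (for $q \in F$) precisely the modulo-free semilinear set $\mathcal S_{\vect r}$ defined by $\Phi_{\vect r}$. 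Formally, since the model of PA uses a single global constraint rather than per-state constraints, I would instead introduce one extra coordinate that is incremented by a fixed constant depending only on $\vect r$ upon reaching a final state (exactly as in the RBCM simulation in the proof of Proposition~\ref{prop:unambparbcm}, where a $\tau$-component is added and checked at acceptance), and let the global constraint be the finite union $\biguplus_{\vect r}\ \mathcal S_{\vect r} \times \{k_{\vect r}\}$ over residue classes, which is modulo-free because each $\mathcal S_{\vect r}$ is and the selector coordinate uses only equalities to explicit constants; one final cosmetic step turns the selector equalities into pairs of inequalities.

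Correctness is then routine: a run of $P'$ on $w$ ending in a final state and labelled by $(\vect v, k_{\vect r})$ exists iff a run of $P$ on $w$ exists ending in a final state, labelled by $\vect v$, with $\vect v \equiv \vect r \pmod {m^\star}$; and such a $\vect v$ satisfies $\Phi$ iff it satisfies $\Phi_{\vect r}$, i.e.\ iff $\vect v \in \mathcal S_{\vect r}$. Since runs of $P'$ correspond bijectively to runs of $P$ with the same underlying word, weak-unambiguity transfers. The main obstacle — really the only subtlety — is bookkeeping: making sure the residue coordinate of $P'$ is genuinely determined by the run (so that the state blow-up by $(\mathbb Z/m^\star\mathbb Z)^d$ does not destroy unambiguity), and phrasing the ``per-residue-class'' constraint within the single-constraint PA formalism without reintroducing a modulus. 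Both are handled by the extra-coordinate trick already used elsewhere in the paper, so no new ideas are needed.
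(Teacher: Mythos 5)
Your proposal is correct and follows essentially the same route as the paper: track residue information in the states, transfer it into a fresh coordinate via a nondeterministically guessed accepting transition into a sink final state, and check it with a modulo-free constraint, with weak-unambiguity preserved because the residue component is determined by the run. The only difference is granularity — you eliminate all congruences at once by tracking the full residue vector modulo the lcm and specializing the formula per residue class with a selector constant, whereas the paper removes one modulo predicate at a time, tracking only the value of its linear term modulo $K$ and replacing the predicate by $x_{d+1}=0$ — but the underlying construction is the same.
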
 
\begin{proof}
 
 Without loss of generality, we can assume that the semilinear constraint of $P$ is defined by a Presburger formula $\Phi_P$ with no quantifier \cite{Pres29}, but is using predicates of the form $t \equiv 0 [K]$, where $t(x_1, \ldots, x_d)= a_0 + a_1x_1+ \ldots + a_dx_d $ is a semilinear term.
	
	Let suppose that $\Phi_P$ contains a predicate $\mathcal P$ of the form $t \equiv 0 [K]$, where $t(\vec x)=a_0 + a_1x_1+ \ldots + a_dx_d $.
	
	We construct an equivalent weakly-unambiguous automaton $P'$, verifying that $\Phi_{P'}$ is simply obtained from $\Phi_{P}$ by replacing the predicate $\mathcal P$ by a predicate without modulo. Iterating this transformation over every predicate that is using a modulo in $\Phi_P$ leads to the result.
	
	So we focus on eliminating the modulo in the predicate $\mathcal P$. The idea to build $P'$ is to add in the states of $P$ the value of $t[n_1, \ldots, n_d] \mod K$, where $(n_1, \ldots, n_d)$ denotes the current values of the counters of the automaton. It relies on the form of $t$, so that we only need to know the value of  $t[\vec n] \mod K$ to know the value of $t[\vec n+\vec d] \mod K$, for any $\vec d$ appearing in a transition of $P$. Indeed, $t[\vec n +\vec d]\equiv t[\vec n]+t[\vec d]-a_0 \mod K$.
	
	More formally, $P'=(Q', \{(q_I,a_0)\}, F', \Delta', \mathcal S')$ is a PA of dimension $d+1$. The set of states of $P'$ is $Q'= \{q_f\}\cup (Q_P\times \intint{0}{K-1})$, the final state is $q_f$, and the set of transitions is $\Delta'=\Delta_1\cup \Delta_2$, where $\Delta_1$ is defined by:
	\[\Delta_1=\{\big((p,i), (a, (\vect{d},0)), (q,k)\big)\,:\, (p,(a,\vect{d}),q)\in\Delta, k=(i+t[\vec d]-a_0\hspace{-0.2cm}\mod K)\}\]
	and $\Delta_2$ is defined by:
	\[\Delta_2=\{\big((p,i), (a, (\vect{d},k)), q_f\big)\,:\, q\in F, (p,(a,\vect{d}),q)\in\Delta, k=(i+t[\vec d]-a_0\hspace{-0.2cm}\mod K)\}.\]
	The role of $\Delta_2$ is just to use the new dimension to be able to reach in the semilinear the value of $t[n_1, \ldots, n_d] \mod K$, where $(n_1, \ldots, n_d)$ denotes the values of the counters when entering the accepting state.
	
 Hence $P'$ simulates $P$ by storing in its state the value of $t[n_1, \ldots, n_d] \mod K$, and non deterministically transfers this value in its new dimension before entering the final state, from which there are no outgoing transitions. The Presburger formula $\Phi_{P'}$ defining $\mathcal S'$ is obtained by replacing the predicate $\mathcal P$ in $\Phi_P$ by the predicate $x_{d+1}=0$.
 
 It is straightforward that $L(P)=L(P')$  and that the new automaton $P'$ is weakly-unambiguous.
\end{proof}
\section{Proof omitted in Section~\ref{sec:wuppa}}\label{sec:C}

\begin{proposition}
\label{prop:non-closure-wuppa}
The class of weakly-unambiguous \PPA is not closed under union nor intersection. 
\end{proposition}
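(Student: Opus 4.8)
The plan is to combine the non-holonomicity obstruction of Proposition~\ref{prop:PPA_holonomic} with the two languages $\mathcal{D}$ and $\overline{\mathcal{D}}$ of Section~\ref{sec:2 examples}. The generating series of both $\mathcal{D}$ and $\overline{\mathcal{D}}$ are non-holonomic (for $\overline{\mathcal D}$ this is proved in Section~\ref{sec:2 examples}, since the corresponding specialisation is a lacunary series with infinitely many singularities, and $\mathcal D$ differs from it only by a rational series); hence, by Proposition~\ref{prop:PPA_holonomic}, neither $\mathcal D$ nor $\overline{\mathcal D}$ is accepted by a weakly-unambiguous \PPA, i.e.\ both are inherently weakly-ambiguous as \PPA languages. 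It thus suffices to exhibit $\mathcal D$ as a union, and $\overline{\mathcal D}$ as an intersection, of two languages that \emph{are} accepted by weakly-unambiguous \PPA.

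Write $w=\underline{n_1}\,\underline{n_2}\cdots\underline{n_k}$ with $\underline n=a^nb$, so that $ab(a^*b)^*$ is the set of such words with $k\ge 1$ and $n_1=1$, and recall $\overline{\mathcal D}=\{\,\underline 1\,\underline 2\,\underline 4\cdots\underline{2^{k-1}}:k\ge 1\,\}$. For the intersection, put
\[
L_{\mathrm{odd}}=\{\,w\in ab(a^*b)^*:\ n_{j+1}=2n_j\text{ for every odd }j\le k-1\,\},\quad
L_{\mathrm{ev}}=\{\,w\in ab(a^*b)^*:\ n_{j+1}=2n_j\text{ for every even }j\le k-1\,\}.
\]
Then $L_{\mathrm{odd}}\cap L_{\mathrm{ev}}$ forces $n_{j+1}=2n_j$ for all $j$, hence equals $\overline{\mathcal D}$. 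Each of $L_{\mathrm{odd}}$ and $L_{\mathrm{ev}}$ is accepted by a weakly-unambiguous \PPA (indeed, it is deterministic context-free): a pushdown automaton reads the blocks pair by pair, pushing $2n_{2i-1}$ markers while scanning block $2i-1$, popping one marker per letter of block $2i$, and checking that the stack returns exactly to its bottom marker; the state records the parity of the current block, and an unpaired last block (possible when $k$ is odd) carries no constraint.

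For the union, split according to the parity of the position of the \emph{first} violated doubling:
\[
L'_{\mathrm{odd}}=\{\,w\in ab(a^*b)^*:\ \exists\text{ odd }j\le k-1,\ n_{j+1}\ne 2n_j\,\},\quad
L'_{\mathrm{ev}}=\{\,w\in ab(a^*b)^*:\ \exists\text{ even }j\le k-1,\ n_{j+1}\ne 2n_j\,\}.
\]
Then $L'_{\mathrm{odd}}\cup L'_{\mathrm{ev}}=\mathcal D$. Again each is accepted by a deterministic pushdown automaton: the automaton above scans the blocks verifying the doublings at the relevant pairs, and on detecting the first failure (a stack underflow, meaning the second block of a pair is too long, or a non-empty stack after a block, meaning it is too short) it empties its stack by $\varepsilon$-moves and switches to a mode that reads the remaining suffix in $(a^*b)^*$ unconditionally; if the scan terminates with no failure, it rejects. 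This makes the accepting run unique. Since $\overline{\mathcal D}$ and $\mathcal D$ are inherently weakly-ambiguous as \PPA languages, this proves non-closure under intersection and under union.

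The step I expect to be the crux is verifying that $L_{\mathrm{odd}},L_{\mathrm{ev}},L'_{\mathrm{odd}},L'_{\mathrm{ev}}$ genuinely are weakly-unambiguous \PPA languages: that a single stack suffices to compare consecutive blocks sequentially (relying on the stack being reset to its bottom marker after each correctly doubled pair) and that the ``first failure'' variants are realisable deterministically, with due care for the boundary cases (an odd number of blocks, empty blocks $\underline{0}=b$, and the forced initial block $ab$). The set identities and the passage through Proposition~\ref{prop:PPA_holonomic} are routine.
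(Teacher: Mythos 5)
Your proof is correct, and the intersection half is essentially the paper's own argument: your $L_{\mathrm{odd}},L_{\mathrm{ev}}$ are (up to the placement of the $n_1=1$ condition) the paper's deterministic context-free languages $\mathcal I_1,\mathcal I_2$, whose intersection is $\overline{\mathcal D}$, ruled out by non-holonomicity via Proposition~\ref{prop:PPA_holonomic}. For the union, however, you take a genuinely different route. The paper keeps the same pair $\mathcal I_1,\mathcal I_2$ and argues by inclusion–exclusion on generating series: $L(x)=I_1(x)+I_2(x)-\overline D(x)$, so holonomicity of $\mathcal I_1\cup\mathcal I_2$ would force holonomicity of $\overline D(x)$ by closure under sum — no new automata are needed, only the series identity. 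You instead build two new ``violation'' languages $L'_{\mathrm{odd}},L'_{\mathrm{ev}}$ whose union is exactly $\mathcal D$, and use the non-holonomicity of $D(x)$ itself (which, as you say, follows from that of $\overline D(x)$ by subtracting the rational series of $ab(a^*b)^*$ — this is the same reduction the paper makes in Section~\ref{sec:2 examples}). The price of your route is the extra verification that $L'_{\mathrm{odd}}$ and $L'_{\mathrm{ev}}$ are deterministic context-free, i.e.\ the ``detect the first failed doubling, then accept the suffix unconditionally'' DPDA; your sketch of it is sound (once the flag is set the stack can simply be abandoned, so determinism and weak-unambiguity are clear, and DCFL membership gives a weakly-unambiguous \PPA exactly as the paper assumes for $\mathcal I_1,\mathcal I_2$). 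The benefit is that the counterexample to closure under union is $\mathcal D$ itself, a language already shown inherently weakly-ambiguous in the PA setting, whereas the paper's benefit is avoiding any new automaton construction at the cost of the inclusion–exclusion step. Both arguments are valid; neither has a gap.
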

\begin{proof}
    Recall the language $\mathcal D$ over the alphabet $\{a,b\}$, defined in Section~\ref{sec:2 examples} by:
\[\mathcal D= \{\underline{n_1}\ \underline{n_2}\ldots\underline{n_{k}}\ :\ k\in\NN^*,\ n_1=1\ \textnormal{and}\ \exists j<k, n_{j+1} \not= 2n_{j} \},\text{ where }\underline{n}=a^nb,\]
 and the language $\overline{\mathcal D}=ab(a^*b)^*\setminus\mathcal D$. It has been explained in Section~\ref{sec:2 examples} that the generating series of $\overline{\mathcal D}$ is not holonomic.
 
To show that the class of weakly-unambiguous \PPA is not closed under union and intersection, we use some languages introduced in \cite{flajolet87}, that have been useful to prove inherent ambiguity for context-free languages. Both the languages $\mathcal I_1=\{\underline{n}_1\ldots\underline{n}_k\ |\  (n_1=1 \textnormal{ and }\forall j, n_{2j}=2n_{2j-1})$ and $\mathcal I_2=\{\underline{n}_1\ldots\underline{n}_k\ |\ (\forall j, n_{2j+1}=2n_{2j})\}$ are deterministic context-free languages, so they are recognized by two weakly-unambiguous \PPA. Furthermore their generating series are holonomic (and even algebraic).

Notice that the language $\mathcal I_1\cap \mathcal I_2$ is exactly the language $\overline{\mathcal D}$, so its series is not holonomic. So by Proposition~\ref{prop:PPA_holonomic}, $\mathcal I_1\cap \mathcal I_2$ is not recognized by a weakly-unambiguous \PPA.

For the non-closure under union, 
consider the language $\mathcal L=\mathcal I_1\cup \mathcal I_2$. If we denote by $L(x)$ (resp. $I_1(x)$, $I_2(x)$ and $ \overline D(x)$) the generating series of $\mathcal L$ (resp. $\mathcal I_1$, $\mathcal I_2$ and $\overline{\mathcal D}$), we have that:\[L(x)=I_1(x)+I_2(x)-\overline D(x).\]

If $L(x)$ was holonomic, as $I_1(x)+I_2(x)$ is algebraic hence holonomic, by closure under sum $\overline D(x)$ would be holonomic, which is not the case. So $L(x)$ is not holonomic, and $\mathcal I_1\cup \mathcal I_2$ is not recognized by a weakly-unambiguous \PPA. 

So the class of weakly-unambiguous \PPA is not closed under union or intersection.

Notice that the language $\mathcal I_1\cup \mathcal I_2$ is recognized by an ambiguous \PPA, so the arguments above show that it is inherently weakly-ambiguous.
\end{proof}

\weakpparbcm*

Notice that the result is  easy if we allow $\varepsilon$-transitions in the definition of a \PPA, and a direct adaptation of the equivalence of PA and RBCM with $\varepsilon$-transitions. Hence, the heart of the proof is to eliminate $\varepsilon$-transitions in a weakly-unambiguous \PPA.

\subsection{Removing the $\varepsilon$-transitions in a weakly-unambiguous \PPA}

To remove  $\varepsilon$-transitions in \PPA, 
we consider the equivalent formalism of constrained context-free grammars, CCFG for short (i.e., context-free grammars with semilinear constraints). We show in Proposition~\label{prop:PPAtoccfg}
 that every language accepted by a weakly-unambiguous \PPA is a accepted by a weakly-unambiguous CCFG. Then we show that CCFG can be transformed in Greibach normal form while preserving weak-unambiguity (see Proposition~\ref{prop:greibach}). Finally of weakly-unambiguous CCFG in Greibach normal form can be translated back to a weakly-unambiguous \PPA without $\varepsilon$-transitions.

Here we need to adapt standard proofs for context-free grammars to the world of weighted grammars with semilinear constraints, and ensure that they preserve (weak)-unambiguity. The Greibach normal form has already been studied for weighted grammars, over semirings (see for instance \cite{10.1007/Bonsangue2012} or \cite{STANAT1972}), or in the particular case of probabilistic grammars (see \cite{Abney1999}, or  \cite{ponty2012}). In an other direction, the survival of the unambiguity property during the normalization process has been studied for context-free grammars in \cite{cheung}. 

In this proof, we study the normalization process in the case where the weights are vectors or semilinear sets in $\NN^d$.

\subsubsection{\ccfg: constrained context-free grammars}

\begin{definition} A \intro{constrained context-free grammar} of dimension $d$ is a tuple $(N,\Sigma, S, D, \mathcal S)$ where:
\begin{enumerate}
\item $N$ is the set of non terminals, $\Sigma$ the set of terminal letters,
\item $S\in N$ is the axiom
\item $\mathcal S\subseteq \NN^d$ is a semilinear set
\item $D$ is a set of production rules of the form $A\xrightarrow{\vect v}A_1\ldots A_r$, where $A\in N$, $r\geq 0$, $\vect v\in\NN^d$, and $A_i\in N\cup\Sigma$ for $1\leq i\leq r$.
\end{enumerate}

In order to give the definition of what a derivation tree is in this formalism, we need to introduce briefly what a tree labeled by semilinear set is. Let's fix a dimension $d$.

\begin{definition}
	Let $P$ be a set of symbols. The set $Tree_{P}$ of trees labeled by semilinear sets over $\NN^d$, with nodes in $P$, is defined inductively:
	\begin{itemize}
		\item every symbol $v\in P$ is an ordered semilinear-labeled tree over $P$
		\item if $r\geq 1$, $v\in P$, $L$ is a semilinear set of dimension $d$, and $t_1, \ldots, t_r$ are in $Tree_{P}$, then $(v, L, t_1, \ldots, t_r)\in Tree_{P}$. 
	\end{itemize}
\end{definition}
\begin{definition} We define by induction:
	\begin{itemize}
		\item for $v\in P$, $root(v)=v$ and if $t=(v, L, t_1, \ldots, t_r)$, $root(t)=v$.
		\item for $v\in P$, $S(v)=\{0\}$ and if $t=(v, L, t_1, \ldots, t_r)$, $S(t)=L+\sum_{i=1}^r S(i)$ denotes the sum of semilinear sets appearing in $t$.
		\item for $v\in P$, $left(v)=v$ and if $t=(v, L, t_1, \ldots, t_r)$, $left(t)=left(t_1)$ denotes the left-most symbol in the tree.
		\item for $v\in P$, $Fr(v)=v$ and if $t=(v, L, t_1, \ldots, t_r)$, $Fr(t)=Fr(t_1)\ldots Fr(t_r)$ denotes the frontier of the tree.
		\item for $v\in P$, $subtrees(v)=\{v\}$ and if $t=(v, L, t_1, \ldots, t_r)$, $subtrees(t)=\{t\}\cup subtrees(t_1)\cup\ldots\cup subtrees(t_r)$ denotes the set of subtrees of $t$.
		\item for $v\in P$, $ht(v)=0$ and if $t=(v, L, t_1, \ldots, t_r)$, $ht(t)=1+\max(ht(t_1),\ldots, ht(t_r))$ denotes the height of a tree.
	\end{itemize}
\end{definition}
\begin{figure}[h]
	\centering
\usetikzlibrary{shapes.geometric}
\tikzset{lefttria/.style={
  draw,shape border uses incircle,shape border rotate=90,
  isosceles triangle, yshift=0.645cm, xshift=0.08cm}}
 \tikzset{lefttriaun/.style={
  draw,shape border uses incircle,shape border rotate=90,
  isosceles triangle, yshift=0.645cm, xshift=0.07cm}}
  \tikzset{righttria/.style={
  draw,shape border uses incircle,shape border rotate=90,
  isosceles triangle, yshift=0.645cm, xshift=-0.08cm}}

\begin{tikzpicture}[thick,scale=0.6, every node/.style={scale=0.6}]
    		[level 1/.style={sibling distance=1.5cm}]
			\node[label={-90:\color{blue}$L$}] {$v$}
  				child{ node{}
  						child{node[lefttria] {$t_1$} edge from parent[draw=none]}
       			}
       			child{ node{}
  						child{node[lefttriaun] {$t_2$} edge from parent[draw=none]}
				}
       			child{ node {$\cdots$}
       			}
  				child{ node{}
  						child{node[righttria] {$t_r$} edge from parent[draw=none]}
				};
			\end{tikzpicture}
			\caption{Representation of a tree of root $v$ with $r$ children, labeled by a semilinear set $L$}
\end{figure}
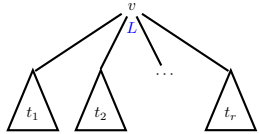

\begin{definition}[derivation tree]
	Let $G=(N, \Sigma, S, D, \mathcal S)$ a constrained context-free grammar.
	For all $v\in N\cup\Sigma$, the set of derivation trees rooted in $v$, denoted $Tree_G(v)$, is defined by induction:
			\begin{itemize}
				\item for $v\in N\cup \Sigma$, $v\in Tree_G(v)$
				\item if $v\in N$, and $v\xrightarrow{L}v_1\ldots v_r \in D$ is a production rule, if $t_i\in Tree_G(v_i)$ for $1\leq i \leq r$, then $(v, L, t_1, \ldots, t_r)\in Tree_G(v)$.
			\end{itemize}Let's denote $Tree_G=\cup_{v\in N\cup\Sigma}Tree_G(v)$ the set of all derivation trees of $G$.
\end{definition}

A word $w\in\Sigma^*$ is generated by the constrained grammar if there exists a derivation tree $t\in Tree_G(S)$ such that $Fr(t)=w$ and the sum of the vectors appearing in $t$ belongs to $\mathcal S$.

A \ccfg\ is said \intro{weakly-unambiguous} if for every word $w$ there exists at most one derivation tree accepting $w$. 
\end{definition}

\begin{proposition}\label{prop:PPAtoccfg}
	Every weakly-unambiguous \PPA can be turned into an equivalent weakly-unambiguous \ccfg{} of same dimension.
\end{proposition}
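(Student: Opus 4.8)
The approach is the classical \emph{triple construction} translating a pushdown automaton into a context-free grammar, lifted so as to carry the $\NN^d$-labels along the derivations and, most importantly, checked to preserve weak-unambiguity and the dimension. First I would normalise the weakly-unambiguous \PPA: up to a standard transformation preserving the accepted language, the vector labelling each run, weak-unambiguity and the dimension $d$, we may assume the \PPA $\mathcal M=(Q,\Sigma,\Gamma,q_I,Z_0,\Delta,C)$ accepts by empty stack, i.e.\ a run is accepting iff it starts in $(q_I,Z_0)$, ends in some $(q,\varepsilon)$, and its accumulated vector lies in $C$. This is obtained by adding a fresh bottom marker together with an erasing phase reached from the old final states through $\varepsilon$-moves labelled by $\vect 0$; since $\mathcal M$ had at most one accepting run per word and the erasing moves carry the zero vector, the normalised machine still has at most one accepting run per word and gives each run the same vector. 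Each transition of $\Delta$ then has the form $(p,a,A)\mapsto(q,\beta)$ with $p,q\in Q$, $a\in\Sigma\cup\{\varepsilon\}$, $A\in\Gamma$, $\beta\in\Gamma^{*}$, and carries a vector $\vect v\in\NN^{d}$.

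Next I would define the \ccfg{} $G$ of dimension $d$ with semilinear constraint $C$, non-terminals $\{S\}\cup(Q\times\Gamma\times Q)$ (writing $[p,A,q]$ for a triple), axiom $S$, and the following productions: $S\xrightarrow{\vect 0}[q_I,Z_0,q]$ for every $q\in Q$; and, for every transition $(p,a,A)\mapsto(q,B_1\cdots B_k)$ of $\Delta$ with vector $\vect v$ and every tuple $(r_1,\dots,r_k)\in Q^{k}$, the production $[p,A,r_k]\xrightarrow{\vect v}a\,[q,B_1,r_1][r_1,B_2,r_2]\cdots[r_{k-1},B_k,r_k]$, with the conventions that $a$ is dropped when $a=\varepsilon$ and that for $k=0$ this reads $[p,A,q]\xrightarrow{\vect v}a$. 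Every transition thus contributes its vector to exactly one grammar production per use, so the total vector carried by a derivation equals the total vector carried by the corresponding computation.

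The heart of the proof is the standard inductive correspondence lemma: for all $p,q\in Q$, $A\in\Gamma$, $u\in\Sigma^{*}$ and $\vect v\in\NN^{d}$, there is a bijection, independent of $\alpha\in\Gamma^{*}$, between the derivation trees of $G$ rooted at $[p,A,q]$ with frontier $u$ and vector sum $\vect v$, and the computations $(p,A\alpha)\xRightarrow{u,\vect v}(q,\alpha)$ of $\mathcal M$ during which the stack never shrinks down to $\alpha$ except at the very end. One proves it by induction on the height of the tree (equivalently on the length of the computation), peeling off the production used at the root and the matching first move, and using that the decomposition of such a computation into a first move followed by one sub-computation per pushed symbol is unique. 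Prefixing with an axiom production turns this into a bijection between derivation trees of $G$ rooted at $S$ with frontier $w$ and vector sum $\vect v$, and computations of $\mathcal M$ from $(q_I,Z_0)$ to $(q,\varepsilon)$ reading $w$ with accumulated vector $\vect v$. Restricting to $\vect v\in C$ on both sides yields at once $L(G)=L(\mathcal M)$ and the fact that $G$ is weakly-unambiguous, since a second accepting derivation tree for a word $w$ would produce a second accepting run of $\mathcal M$ on $w$; the dimension and the constraint $C$ are unchanged by construction.

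I expect the main difficulty to lie not in the combinatorics of the triple construction, which is textbook, but in the bookkeeping needed to make the correspondence a genuine \emph{bijection} that simultaneously respects the read word and the accumulated vector, so that the classical ``$L(G)=L(\mathcal M)$'' statement gets upgraded for free to ``weak-unambiguity is preserved''. A secondary point requiring care is the normalisation step: the acceptance-mode conversion must not itself create a second accepting run, which is why the $\varepsilon$-erasing phase is arranged to empty a dedicated bottom marker that is reached exactly once along any accepting run.
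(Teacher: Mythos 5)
Your proposal is correct and follows essentially the same route as the paper: normalise the \PPA to acceptance by empty stack via $\vect 0$-labelled $\varepsilon$-moves (checking this keeps weak-unambiguity and the run vectors), then apply the weighted Ginsburg triple construction with the same constraint set, and conclude via the one-to-one correspondence between derivations rooted at the triples and the pop-computations of the machine, which preserves both the read word and the accumulated vector. The only cosmetic differences are that the paper first normalises pushes to length at most one (so right-hand sides have at most two triples) and states the correspondence in terms of left-derivations rather than derivation trees.
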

\begin{proof}
 We simply adapt the classical Ginsburg triplets method. We consider a weakly-unambiguous \PPA of dimension $d$, $\BB=(Q,q_I,F, \Sigma,\Gamma,\bot,\delta,\mathcal S)$, where $Q$ is the set of states, $q_I$ is the initial state, $F$ the set of final states, $\Sigma$ the set of letters, $\Gamma$ the set of stack symbols, $\bot\in\Gamma$ the starting stack symbol, $\delta$ the set of transitions, and $\mathcal S$ the semilinear set.
 
 It is not difficult to transform $\BB=(Q,q_I,F, \Sigma,\Gamma,\bot,\delta,\mathcal S)$ into an equivalent automaton having the additional property that when it accepts a word, the stack of the automaton is empty: the transformation consists in adding two new states $q_{\emptyset}$ and $q_f$ to $Q$, and replace the set of final states $F$ by $F'=\{q_f\}$. The automaton adds a new symbol $\$\notin \Gamma$ in the stack, then simulates $\BB$. Whenever the automaton is in a state $q\in F$, it can non deterministically go to state $q_{\emptyset}$ by an $(\varepsilon,\vect 0)$-transition. From state $q_{\emptyset}$, the automaton no longer reads any letter, and only uses $(\varepsilon,\vect 0)$-transitions. It empties its stack until it sees $\$$, then goes to $q_f$, which has no outgoing transitions. It is not difficult to see that this transformation preserves the recognized language and the weak-unambiguity of the original automaton $\BB$.

Without loss of generality, we can also suppose that the transitions of the \PPA add to the stack at most one symbol, by adding intermediate states.

So without loss of generality, we suppose that $\BB$ adds to its stack at most one symbol, and only accepts with an empty stack.

The following transformation into a grammar is an adaptation of the classical triplets method, with weights. The corresponding grammar is defined by $G_{\BB}=(N,\Sigma, S, D, \mathcal S)$ where $N\subseteq\{S\}\cup\{[q,q',z]\,:\,q,q'\in Q, z\in\Gamma\cup\{\varepsilon\} \}$, and $D$ is defined by:
\begin{itemize}
\item $[q, q', z] \xto{\vect d} y$ if $(q,z \xto{(y,\vect d)}q', \varepsilon) \in \delta$
\item $[q, q', z] \xto{\vect d} y[q'', q', z_1]$ if $(q,z \xto{(y,\vect d)}q'', z_1) \in \delta$
\item $[q, q', z] \xto{\vect d} y[q_1, q_2, z_1][q_2, q', z_2]$, for every $q_2\in Q$, if $(q,z \xto{(y,\vect d)}q_1, z_1z_2) \in \delta$
\item $S \xto{\vect 0}[q_I,q_F,\bot]$ for every $q_F\in F$
\end{itemize}

A left-derivation of a word $w$ from a symbol $V\in N$ is a sequence $w_1\xrightarrow{r_1} \ldots \xrightarrow{r_{n-1}} w_n$, where $w_1=V$, $w_n=w$, and for $i\in[n]$, $w_i\in(\Sigma\cup N)^*$  and $w_{i+1}$ is obtained from $w_i$ by applying the rule $r_i\in D$ to the leftmost non terminal symbol in $w_i$.

For any symbol $V\in N$, we define $D_{left}(V)$ the set of every left derivations starting from $V$.

It is classical to show by induction that to every element of $D_{left}([q,q',z])$ of the form $[q,q',z]\xrightarrow{r_1}w_2 \ldots \xrightarrow{r_{n-1}} w_n=w$, we can associate a run in $\BB$ that starts in state $q$, with top stack symbol $z$, such that this stack symbol will be popped for the first time when arriving in state $q'$, after reading exactly the word $w$, with vector value the sum of the vectors of $r_1, \ldots, r_n$. Furthermore this correspondence is a one-to-on correspondence, in the sense that the sequence of left derivations simulates exactly the runs of the automaton.

Hence, the left derivations of the grammar run the pushdown automaton, and every derivation of the grammar is a valid run of the pushdown automaton, both having the same weight. Consequently the notion of weak-unambiguity is preserved during this transformation.
\end{proof}

In the following, the weakly-unambiguous $\ccfg$ are under the form $G=(N,\Sigma, S, D, \mathcal S)$, and verify conditions $(H)$:
\begin{enumerate}
	\item  every non terminal symbols is productive, and such that every rule has at most three symbols in its right member, the first symbol being a letter in $\Sigma \cup \{\varepsilon\}$;
	\item without loss of generality the axiom $S$ cannot be found in any production rule of the grammars
\end{enumerate}

Let us modify the rules of such grammars to obtain an equivalent weakly-unambiguous grammar such that the head symbol of every right member is a letter in $\Sigma$ (this is the Greibach normal form).

\subsubsection{Removal of nullable symbols}

A symbol $A$ is said nullable if $A\not = S$ and if $A\to^*\varepsilon$. Notice that $\varepsilon$ is a nullable.

We want to modify the rules of a grammar in order to remove the nullable symbols, by using the usual algorithm.

If  $A$ is nullable, we denote by $T_{A}$ the set of vectors that label a derivation $A\to^*\varepsilon$.

\begin{claim}
	$T_{A}$ is semilinear.
\end{claim}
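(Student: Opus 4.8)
The plan is to realize $T_A$ as the set of vectors reaching a designated accepting configuration in a finite device over $(\mathbb{N}^d,+)$, and to invoke the fact that such sets are semilinear (i.e. Parikh images of context-free languages, or equivalently rational subsets of the commutative monoid, which we may take as known from the excerpt's discussion in Section~\ref{sec:semilinear}).

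Concretely, I would first observe that whether a nullable symbol $B$ has an $\varepsilon$-derivation is a purely syntactic property, computable by the usual fixpoint: $B$ is nullable if some rule $B\xrightarrow{\vect v} B_1\cdots B_r$ has all its $B_i$ (which, under conditions $(H)$, means $B_1 \in \{\varepsilon\}$ after removing the leading terminal slot, and the remaining symbols) nullable. So the set $\mathcal N$ of nullable symbols is finite and known. The key step is then to build a context-free grammar $G'$ over a one-letter alphabet $\{z\}$ — or better, directly a context-free grammar whose terminal alphabet is $\{z_1,\ldots,z_d\}$ — whose nonterminals are the nullable symbols of $G$, whose axiom is $A$, and whose rules are: for each rule $B\xrightarrow{\vect v} B_1\cdots B_r$ of $G$ with $B$ nullable and every $B_i$ either a nullable nonterminal or $\varepsilon$, add the rule $B \to z_1^{v_1}\cdots z_d^{v_d}\, C_1\cdots C_s$ where $C_1,\ldots,C_s$ is the subsequence of $B_1,\ldots,B_r$ consisting of the nullable nonterminals (dropping the $\varepsilon$'s, and dropping the leading terminal letter of the $(H)$-normalized rule since that letter cannot appear in an $\varepsilon$-derivation). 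Then a derivation $A \Rightarrow^* \varepsilon$ in $G$ with accumulated vector $\vect w$ corresponds exactly to a derivation of the word $z_1^{w_1}\cdots z_d^{w_d}$ in $G'$, up to reordering the terminal letters; hence $T_A$ is precisely the Parikh image of the context-free language $L(G')$. By Parikh's theorem~\cite{Parikh1966}, this Parikh image is semilinear, which is exactly the claim.

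The main (and essentially only) obstacle is bookkeeping: making sure the correspondence between $\varepsilon$-derivations of $A$ in $G$ and derivation trees of $G'$ is a clean bijection on the multiset of vectors used, so that summing the labels along the derivation matches counting letters in the produced word. This requires checking that conditions $(H)$ are used correctly — in particular that only nullable symbols (and $\varepsilon$) occur in a derivation witnessing $A \Rightarrow^* \varepsilon$, so that restricting the rules of $G'$ to nullable nonterminals loses nothing, and that the leading terminal of an $(H)$-rule is genuinely absent from $\varepsilon$-derivations. An equivalent and perhaps slicker phrasing avoids Parikh's theorem by noting directly that $T_A = \bigcup \{\vect v + T_{C_1} + \cdots + T_{C_s} : A\xrightarrow{\vect v} \alpha \text{ a rule, } C_1\cdots C_s \text{ its nullable nonterminals}\}$ gives a system of equations over semilinear sets of the shape solved by a least fixpoint, and semilinear sets are closed under the operations involved (sum, union) with the fixpoint reached in finitely many steps; but invoking Parikh's theorem on the one-letter-per-dimension grammar $G'$ is the shortest route and I would present that.
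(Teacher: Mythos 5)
Your construction is essentially the paper's own proof: the paper also builds a context-free grammar over a $d$-letter alphabet with axiom $A$, discards the constraint and the rules producing letters of $\Sigma$, encodes each vector label $\vect v$ as the terminal block $a_1^{v_1}\cdots a_d^{v_d}$, and concludes by the one-to-one correspondence with $\varepsilon$-derivations of $G$ together with Parikh's theorem. Your extra trimming to nullable nonterminals (and the alternative fixpoint phrasing) changes nothing essential, so the argument is correct and matches the paper.
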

\begin{claimproof}
Let us create a simple context-free grammar $G'=(N,\Sigma', A, D')$ from $G$ over the alphabet $\Sigma'=\{a_1, \ldots, a_d\}$.

	The grammar $G'$ is obtained from $G$ by changing the axiom to $A$, and by removing from $D$ any production rule writing a letter in $\Sigma$. Finally the weights are suppressed by discarding $\mathcal S$, and changing every production rule of the form $B\xrightarrow{\vect v}B_1\ldots B_r$ by the rule $B\to a_1^{v_1}\ldots a_d^{v_d}B_1\ldots B_r$ where $r\geq 0$, $B_i\in N$ for $1\leq i\leq r$. A terminal derivation tree of $G'$ simulates the derivation of $G$ starting from $A$ and producing the word $\varepsilon$, and there is a one-to-one correspondence between them. Furthermore the Parikh image of the word recognized by a derivation tree of $G'$ corresponds to the vector labelling $\varepsilon$ in $G$. By Parikh's theorem, it is semilinear, so by the one-to-one correspondence, $T_{A}$ is semilinear.
\end{claimproof}

\begin{proposition}\label{prop:removalnullable}
	Let $G=(N,\Sigma, S, D, \mathcal S)$ a weakly-unambiguous $\ccfg$ verifying condition $(H)$. Then we can build a  generalized weakly-unambiguous $\ccfg$ $G'$, having no nullable symbol, still verifying condition $(H)$, such that $L(G')=L(G)\backslash\{\varepsilon\}$.
\end{proposition}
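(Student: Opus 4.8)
The plan is to run the textbook $\varepsilon$-removal for context-free grammars, upgraded so that weights are accounted for: when an occurrence of a nullable symbol $A$ is deleted from the right-hand side of a rule, the vector it "would have produced" ranges over the set $T_A$, which is semilinear by the Claim above, so we can absorb it into the label of the new rule — at the price of allowing semilinear labels, that is, of producing a \emph{generalized} \ccfg.

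\textbf{Construction.} First I would compute the set $N_\varepsilon$ of nullable symbols by the usual fixpoint. Then, for every rule $A\xto{\vect{v}} s\,X_1\cdots X_m$ of $G$ (by $(H)$, $s\in\Sigma\cup\{\varepsilon\}$ and $X_1,\dots,X_m\in N$) and for every subset $J$ of the positions $i$ with $X_i\in N_\varepsilon$ such that deleting the occurrences in $J$ does not empty the right-hand side, I would add to $G'$ the rule $A\xto{\ \{\vect{v}\}+\sum_{i\in J}T_{X_i}\ } s\,(X_i)_{i\notin J}$. I would then discard every original rule $A\to\varepsilon$, keep the constraint $\mathcal S$ untouched, and finally remove non-productive non-terminals together with the rules mentioning them (a standard clean-up, needed because a symbol whose only derivations led to $\varepsilon$ becomes useless; it changes neither the language nor weak-unambiguity). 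The leading symbol $s$ is never among the deleted occurrences, so every rule of $G'$ still has length at most three and starts with a letter of $\Sigma\cup\{\varepsilon\}$, and $S$ still does not occur on any right-hand side; hence $G'$ satisfies $(H)$. Since no rule of $G'$ has an empty right-hand side, a shortest derivation to $\varepsilon$ would have no rule to finish with, so $G'$ has no nullable symbol — and, by the same token, every derivation tree of $G'$ has a non-empty frontier.

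\textbf{Correctness via a contraction map.} Rather than building a bijection between derivation trees of $G$ and of $G'$, I would set up a single map and argue by surjectivity. To a derivation tree $t$ of $G$ with $Fr(t)\neq\varepsilon$, associate the tree $c(t)$ of $G'$ obtained by deleting, bottom-up, every maximal subtree whose frontier is $\varepsilon$ and replacing, at each surviving node, the rule $A\to s X_1\cdots X_m$ it uses by the contracted rule whose index set $J$ is the set of positions of the deleted children. A surviving node cannot lose all its non-terminal children — if $s=\varepsilon$ it would then be a deleted $\varepsilon$-subtree, and if $s\in\Sigma$ the part $s$ survives — so $c(t)$ is a well-defined tree of $G'$ with the same frontier, and by definition of the $T_{X_i}$ the vector $\mathrm{val}(t)$ carried by $t$ lies in $S(c(t))$. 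Hence $c$ sends an accepting $G$-tree with frontier $w$ (one with $\mathrm{val}(t)\in\mathcal S$) to an accepting $G'$-tree with frontier $w$. The crucial point is that this restriction of $c$ is \emph{onto}: given an accepting $G'$-tree $t'$ for $w$, pick $\vect{v}\in S(t')\cap\mathcal S$, write $\vect{v}$ as a sum of one vector from each node's label, realize the contributions of the deleted positions by genuine $G$-subderivations $X_i\to^*\varepsilon$ carrying exactly the prescribed vectors (which exist, by the very definition of $T_{X_i}$), and splice them back in; this expands $t'$ into an accepting $G$-tree $t$ with $Fr(t)=w$, $\mathrm{val}(t)=\vect{v}\in\mathcal S$, and $c(t)=t'$ (the spliced-in subtrees have empty frontier, whereas the subtrees coming from $t'$ have non-empty frontier, so contraction recovers $t'$ exactly). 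Since $G$ is weakly-unambiguous, the set of accepting $G$-trees over $w$ has at most one element; a surjection from a set of size $\le 1$ forces the set of accepting $G'$-trees over $w$ to have size $\le 1$, i.e. $G'$ is weakly-unambiguous. Combining $L(G')\subseteq\Sigma^+$ with this surjection and with "$c$ maps accepting $G$-trees to accepting $G'$-trees" gives $L(G')=L(G)\setminus\{\varepsilon\}$.

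\textbf{Main obstacle.} The delicate step is the weak-unambiguity argument, and the reason the naive approach fails is worth isolating: expanding a $G'$-tree back into a $G$-tree is genuinely non-unique, since the reinserted empty subderivations are constrained only by their total vector and not as trees, so there is no treewise bijection to lean on. The fix is exactly to avoid a bijection — to use only that $c$ is well-defined and surjective on accepting trees — and to let the semilinear constraint $\mathcal S$ do the disambiguation: among all $G$-derivation trees with frontier $w$, the hypothesis provides at most one that is accepting, and that is all the argument needs. A handful of routine points still have to be checked along the way: that the "$J=\emptyset$" instances retain every original non-$\varepsilon$ rule, that distinct $(\text{rule},J)$ sources yielding literally the same contracted rule are merged harmlessly, and that the final removal of non-productive symbols is safe.
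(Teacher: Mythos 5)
Your proof is correct and follows essentially the same route as the paper's: the same construction (absorbing the semilinear sets $T_{X_i}$ of erased nullable occurrences into generalized rule labels, merely phrased uniformly via subsets $J$ instead of the paper's case analysis under condition $(H)$), the same contraction/expansion of maximal $\varepsilon$-frontier subtrees for language equality, and the same unambiguity argument — your surjectivity phrasing is just the contrapositive of the paper's "two distinct accepting $G'$-trees would yield two distinct accepting $G$-trees". Your explicit removal of non-productive symbols is a harmless extra clean-up that the paper leaves implicit.
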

\begin{proof}
	
We adapt the classical ideas for context-free grammars. The $\ccfg$ $G'$ is defined by $G'=(N,\Sigma, S, D', \mathcal S)$. Notice that the only changes come from the derivation rules.

The set $D'$ is defined as follows:
\begin{itemize}
\item For every rule of $D$ of the form $t_i= A\xrightarrow{\vect{d_i}} \alpha$ with $\alpha\in\Sigma$ (\textit{i.e.} $\alpha\neq\varepsilon$), we add to $D'$ the rule $t_i^{\scriptscriptstyle 1}=A\xrightarrow{\{\vect{d_i}\}} \alpha$. This deletes every $\varepsilon$-production from the grammar.
\item  For every rule of $D$ of the form $t_i= A\xrightarrow{\vect{d_i}} \alpha B$ with $\alpha\in\Sigma\cup\{\varepsilon\}$ and $B\in N$, we add the following rules to $D'$:
	\begin{itemize}
	\item  $t_i^{\scriptscriptstyle 1}= A\xrightarrow{\{\vect{d_i}\}} \alpha B$;
	\item  $t^{\scriptscriptstyle 2}_i=A\xrightarrow{\{\vect{d_i}+T_B\}} \alpha$, if $B$ is nullable in $G$ and $\alpha\neq\varepsilon$.
	\end{itemize}
\item For every rule of $D$ of the form $t_i= A\xrightarrow{\vect{d_i}} \alpha BC$, with $B,C\in N$, and $\alpha\in \Sigma\cup\{\varepsilon\}$,  we add to $D'$ the following rules:
\begin{itemize}
	\item $t_i^{\scriptscriptstyle 1} = A\xrightarrow{\{\vect{d_i}\}} \alpha BC$;
	\item $t_i^{\scriptscriptstyle 2} = A\xrightarrow{\{\vect{d_i}\}+T_{B}} \alpha C$, if $B$ is nullable;
	\item $t_i^{\scriptscriptstyle 3} = A\xrightarrow{\{\vect{d_i}\}+T_{C}} \alpha B$, if $C$ is nullable;
	\item $t_i^{\scriptscriptstyle 4} = A\xrightarrow{\{\vect{d_i}\}+T_{B}+T_{C}} \alpha$, if $B$ and $C$ are nullable and $\alpha\neq \varepsilon$ (otherwise it would introduce an $\varepsilon$-production)
\end{itemize}
\end{itemize}

By construction, $G'$ has no nullable symbol, and still verifies condition $(H)$.

It is classical that any terminal derivation tree $T$ for a word $w\in\Sigma^*$ in $G$ can be turned into a terminal derivation tree $T'$ for $w$ in $G'$: it is obtained by erasing every maximal subtree of $T$ having $\varepsilon$ as a frontier, and by replacing the rules accordingly. It is clear that if $T$ is labeled by $\vect{d}$, then $T'$ is labeled by a semilinear set $L$ such that $\vect{d}\in L$, so that if $w\in L(G)$ then $w\in L(G')$.

Reciprocally, we can transform any terminal derivation tree $T'$ of $G'$ for a word $w\in\Sigma^*$, labeled by a semilinear set $L$, into a terminal derivation tree $T$ for $w$ in $G$, labeled by $\vect{d}$ such that $\vect{d}\in L$: we replace any rule of the form $t_i^r$ by the rule $t_i$ it comes from, and expand the tree by deriving, with rules in $D$, every symbol $B$ appearing in $t_i$ and not in $t_i^r$ into a tree of frontier $\varepsilon$ (which is labeled by a vector in $T_B$). If furthermore $w\in L(B)$, and $T'$ is an accepting derivation of $w$, then let $\vect{d}\in L\cap \mathcal S$. By definition, $L$ is the sum of every semilinear set appearing in its derivation rules. So we can choose wisely the values of the vector labelling the trees of frontier $\varepsilon$ when creating $T$ from $T'$ to obtain a derivation tree of vector value $\vect{d}\in \mathcal S$, which proves that $w\in L(G)$.

Finally, $G'$ is weakly-unambiguous: by the construction above, two different accepting derivation trees for a word in $G'$ would otherwise lead to two different accepting derivation trees for the same word in $G$, which is weakly-unambiguous.
\end{proof}

\subsubsection{Removal of unit rules}

Let a generalized weakly-unambiguous $\ccfg$ $G=(N,\Sigma, S, D, \mathcal S)$ of dimension $d$, verifying condition $(H)$, without any nullable symbol.

A unit rule is a rule of the form $A\to B$, where $A$ and $B$ are non terminal symbols.
For every couple of non terminal symbols $(A, B)\in N^2$, we denote by $T_{(A,B)}$ the union of every semilinear sets labelling a derivation in $G$ of the form $A\xto{unit^+}B$. 

\begin{claim}
The set $T_{A,B}$ is semilinear.	
\end{claim}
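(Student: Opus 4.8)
The set $T_{A,B}$ collects all vectors labelling a derivation $A \xto{unit^+} B$ consisting only of unit rules. Since there are only finitely many unit rules in $D$, each of the form $X \xrightarrow{\vect{v}} Y$ with $X,Y \in N$ (note that after nullable removal every unit rule carries a single vector, by condition $(H)$, rather than a semilinear set), I would model unit derivations by a finite weighted automaton over $(\mathbb{N}^d, +)$ whose states are the non-terminals. First I would build the finite automaton $\mathcal{B}_{A,B}$ over the commutative monoid $(\mathbb{N}^d,+)$: take $N$ as the set of states, set the initial state to $A$ and the unique final state to $B$, and for each unit rule $X \xrightarrow{\vect{v}} Y$ of $D$ add a transition $X \xrightarrow{\vect{v}} Y$. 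By construction, a vector $\vect{w}$ is accepted by $\mathcal{B}_{A,B}$ if and only if it labels some sequence of unit rules deriving $B$ from $A$, i.e.\ $\vect{w} \in T_{A,B}$.

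\textbf{Semilinearity.} The set of vectors accepted by a finite automaton over $(\mathbb{N}^d,+)$ is a rational subset of $(\mathbb{N}^d,+)$, and by the classical characterisation (already recalled in Section~\ref{sec:semilinear}) the rational subsets of $(\mathbb{N}^d,+)$ are exactly the semilinear sets. Alternatively, and staying fully within the toolbox already used in this section, I would observe that $T_{A,B}$ is the Parikh image of a regular (hence context-free) language: introduce the alphabet $\{a_1,\ldots,a_d\}$, and from $\mathcal{B}_{A,B}$ form the ordinary finite automaton in which the transition $X \xrightarrow{\vect{v}} Y$ is relabelled by the word $a_1^{v_1}\cdots a_d^{v_d}$; the language $R_{A,B}$ it accepts is regular, and Parikh's theorem gives that $\{(\,|w|_{a_1},\ldots,|w|_{a_d}\,) : w \in R_{A,B}\}$ is semilinear. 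Since a vector $\vect{w}$ lies in this Parikh image exactly when $\vect{w} \in T_{A,B}$, the set $T_{A,B}$ is semilinear. This mirrors the argument already used above for $T_A$ (the nullable case), where we also passed through a context-free grammar and Parikh's theorem, so it is the natural continuation.

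\textbf{Main obstacle.} There is no serious obstacle here; the only point that requires a moment of care is that a unit derivation $A \xto{unit^+} B$ may revisit the same non-terminal several times (the unit graph on $N$ may contain cycles), so $T_{A,B}$ is genuinely a rational — not merely a finite — set, which is precisely why one needs the full strength of "rational subsets of $(\mathbb{N}^d,+)$ are semilinear" (equivalently, Parikh's theorem on the auxiliary regular language) rather than a naive finite union over simple paths. Once this is noted, the claim follows immediately. (In the sequel, after removing unit rules one replaces each rule $A \xrightarrow{\vect{v}} \alpha\cdots$ reachable via a unit chain from $A$ by a rule carrying the semilinear weight $T_{A,A'} + \{\vect{v}\}$, using the generalized-grammar formalism; the semilinearity of $T_{A,B}$ established here is exactly what makes this construction stay within the class of constrained grammars.)
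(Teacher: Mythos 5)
Your overall strategy is the same as the paper's: build a finite automaton over the monoid $(\mathbb{N}^d,+)$ with state set $N$, initial state $A$ and final state $B$, whose accepted set is $T_{A,B}$, and conclude by the fact that rational subsets of $(\mathbb{N}^d,+)$ are semilinear (equivalently via Parikh's theorem on an auxiliary regular language). However, your parenthetical premise --- ``after nullable removal every unit rule carries a single vector, by condition $(H)$, rather than a semilinear set'' --- is false, and the construction as you wrote it breaks on exactly the cases that matter. At this stage of the normalization the grammar is a \emph{generalized} weakly-unambiguous CCFG produced by the elimination of nullable symbols, and its rules are labelled by semilinear sets: for instance, a rule $A\xrightarrow{\vect{d}}\varepsilon BC$ of the original grammar with $B$ nullable yields the unit rule $A\xrightarrow{\{\vect{d}\}+T_{B}}C$, whose label $\{\vect{d}\}+T_{B}$ is in general an infinite semilinear set. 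This is also visible in the very definition of $T_{A,B}$ in the paper, which is the union of the \emph{semilinear sets} labelling unit derivations $A\xrightarrow{unit^+}B$. Consequently your automaton $\mathcal{B}_{A,B}$, whose transitions carry single vectors, does not recognize $T_{A,B}$, and the relabelling of a transition by the word $a_1^{v_1}\cdots a_d^{v_d}$ in your Parikh variant is not available as stated.

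The gap is small and the repair is precisely what the paper does: for each unit rule $X\xrightarrow{L}Y$, instead of a single transition, splice in an automaton over $(\mathbb{N}^d,+)$ recognizing the semilinear set $L$, linked to $X$ and to $Y$ by $\vect{0}$-transitions; the resulting automaton recognizes exactly $T_{A,B}$, which is therefore a rational, hence semilinear, subset of $(\mathbb{N}^d,+)$. (In your Parikh phrasing, replace the letter-word relabelling by a regular language whose Parikh image is $L$ --- one exists for every semilinear $L$ --- and use closure of regular languages under such substitutions.) With that one correction your argument coincides with the paper's proof; your closing remark that cycles in the unit graph force a genuinely rational, not finite, description remains a valid and useful observation.
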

\begin{claimproof}
	It suffices to consider the automaton of dimension $d$ described as follows. The set of state contains $N$, the initial state is $A$ and the final state is $B$. For every unit rule of the form $B\xrightarrow{L}C$ in $G$, we link $B$ by a $\vect{0}$-transition to an automaton recognizing $L$, then we connect the final states of this automaton to state $C$ by a $\vect{0}$ transition too. It is immediate that the set of vectors recognized by this automaton is $T_{A,B}$. Hence $T_{A,B}$ is semilinear.
\end{claimproof}

\begin{proposition}\label{prop:removalunit}
	Let a generalized weakly-unambiguous $\ccfg$ $G=(N,\Sigma, S, D, \mathcal S)$ of dimension $d$, verifying condition $(H)$, without nullable symbols (in particular $\varepsilon\notin L(G)$). Then we can build an equivalent generalized weakly-unambiguous $\ccfg$ $G'$ without nullable symbols nor unit-rule.
\end{proposition}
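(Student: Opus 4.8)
The plan is to mimic the classical unit-rule elimination for context-free grammars, carrying the semilinear weights along the unit-rule chains. First I would set $G'=(N,\Sigma,S,D',\mathcal S)$, where $D'$ consists of: (i) every non-unit rule $B\xrightarrow{L}\rho$ of $D$, kept verbatim; and (ii) for every ordered pair $(A,B)\in N^2$ with $T_{A,B}\neq\emptyset$ and every non-unit rule $B\xrightarrow{L}\rho$ of $D$, the new rule $A\xrightarrow{T_{A,B}+L}\rho$. All unit rules of $D$ are discarded. Since $T_{A,B}+L$ is semilinear, $G'$ is a generalized $\ccfg$; every right-hand side of $G'$ is a right-hand side of a non-unit rule of $G$, hence has length at most three, begins with a symbol in $\Sigma\cup\{\varepsilon\}$, and does not contain $S$, so condition $(H)$ is preserved; by construction $G'$ has no unit rule; and since no rule of $G'$ is an $\varepsilon$-rule and $\varepsilon\notin L(G)$, an easy induction on the height of derivation trees shows that no non-terminal of $G'$ derives $\varepsilon$, i.e. $G'$ has no nullable symbol.

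The core is a correspondence between derivation trees. In one direction I would define a contraction map $c\colon Tree_G\to Tree_{G'}$ operating locally on a tree $T\in Tree_G$: each application of a non-unit rule $A_m\xrightarrow{L}\rho$ is considered together with the maximal chain $A_0\to A_1\to\cdots\to A_m$ of unit-rule applications sitting immediately above it (such a chain is finite, and $m=0$ is allowed); if $m\ge 1$ this whole block is replaced by a single application of the new rule $A_0\xrightarrow{T_{A_0,A_m}+L}\rho$, and if $m=0$ it is kept unchanged. Unit-rule applications produce no terminal, so $c$ preserves frontiers; and because $T_{A_0,A_m}$ contains the sum of the labels of the collapsed unit rules, one has $S(T)\subseteq S(c(T))$ for every $T$. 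Consequently, if $T$ is an accepting tree for a word $w$, so is $c(T)$, which yields $L(G)\subseteq L(G')$.

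Conversely, given an accepting tree $T'\in Tree_{G'}(S)$ for $w$, I would fix a vector $\vect d\in S(T')\cap\mathcal S$. This choice decomposes, for each node of $T'$ using a new rule $A\xrightarrow{T_{A,B}+L}\rho$, the corresponding summand of $\vect d$ as $\vect t+\vect\ell$ with $\vect t\in T_{A,B}$ and $\vect\ell\in L$; by definition of $T_{A,B}$, the vector $\vect t$ is the sum of the labels of some chain of unit rules $A\to^{+}B$ of $G$. Expanding each such node into that chain followed by the original rule $B\xrightarrow{L}\rho$ yields a tree $T\in Tree_G(S)$ with $Fr(T)=w$, $\vect d\in S(T)$, and $c(T)=T'$; in particular $T$ is accepting, so $w\in L(G)$. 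Hence $L(G')=L(G)$.

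It remains to transfer weak-unambiguity, and this is the one delicate point. The expansion step above is not canonical — it depends both on the choice of $\vect d$ and on the choice of unit-rule chains realising the $\vect t$'s — so one cannot push unambiguity from $G$ to $G'$ directly along it. Instead I would argue through the well-defined map $c$: if some word $w$ admitted two distinct accepting trees $T'_1\neq T'_2$ in $G'$, the construction of the previous paragraph would produce accepting trees $T_1,T_2$ in $G$ with $Fr(T_i)=w$ and $c(T_i)=T'_i$; as $c$ is a function and $T'_1\neq T'_2$, necessarily $T_1\neq T_2$, contradicting the weak-unambiguity of $G$. Thus $G'$ is weakly-unambiguous, which completes the proof. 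The main obstacles are precisely this last step and the bookkeeping of semilinear weights it rests on, namely ensuring that any vector witnessing acceptance on one side of the correspondence can be re-assembled on the other.
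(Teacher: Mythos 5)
Your construction of $D'$ is exactly the paper's (keep non-unit rules, add $A\xrightarrow{T_{(A,B)}+L}\rho$ for each non-unit rule $B\xrightarrow{L}\rho$ and each $A$ with $T_{(A,B)}\neq\emptyset$), and your correctness argument — contraction of maximal unit-rule chains, re-expansion guided by a chosen vector $\vect d\in S(T')\cap\mathcal S$, and pulling two distinct accepting $G'$-trees back to two distinct accepting $G$-trees to transfer weak-unambiguity — is precisely the reasoning the paper invokes by analogy with its nullable-symbol removal. So the proposal is correct and takes essentially the same approach, merely spelling out details the paper leaves implicit.
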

\begin{proof}
	We adapt the classical ideas for context-free grammars. Let $G'=(N,\Sigma, S, D', \mathcal S)$. Here again, the only difference comes from $D'$, which is defined as follows:
	\begin{itemize}
	\item Add to $D'$ every rule of $D$ that is not a unit-rule.
	\item For every non terminal symbol $B$, for every non unit rule $t_i=B\xto{L_i}\beta$ with $\beta\in (N\cup\Sigma)^+$ and $\beta\not\in N$, and for every symbol $A\in N$ such that $T_{(A,B)}\not=\emptyset$, add to $D'$ the rule: $t_i^A=A\xto{L_i+T_{(A,B)}}\beta$.
\end{itemize}
It is straightforward that the derivation trees of $G'$ are obtained by contracting the chains of unit rules in the derivation trees of $G$. The definition of $D'$ is such that the semilinear sets labelling the rules witnessing these contractions replace exactly the semilinear sets that label the removed chains of unit-rules. The previous reasoning for the elimination of nullable symbols can easily be transposed to this case to prove that $L(G)=L(G')$ and that $G'$ is weakly-unambiguous.\end{proof}

\subsection{Greibach normalization}

The removal of left-recursion is more delicate, since we want to preserve weak-unambiguity. To simplify the proof, we consider a variant of the classical way to eliminate left-recursion.

In the following we fix a dimension $d\in\NN^+$.
			
\newcommand{\context}{Context}
			
\begin{definition}
	Let $G=(N, \Sigma, S, D, \mathcal S)$ a constrained context-free grammar.
	\begin{itemize}
		\item A derivation tree $C\in Tree_G$ is called a $V$-context for $V\in N\cup \Sigma$ if $left(t)=V$ and $Fr(t)\in V\Sigma^*$. %
		\item if $C\in Tree_G$ is a $V$-context with $V\in N$%
, and $t\in Tree_G(V)$ is a derivation tree of root $V$, we write $C[t]\in Tree_G$ the derivation tree defined by:
			\begin{itemize}
				\item $C[t]=t$ if $C=V$
				\item $C[t]=(v, L, t_1[t], t_2, \ldots, t_r)$ if $C=(v, L, t_1, \ldots, t_r)$. Notice that in this case $t_1$ is a $V$-context.
			\end{itemize} Notice that $C[t]$ is a $left(t)$-context.
		\item We write $\context_G[V]$ the set of all $V$-contexts for $V\in N\cup\Sigma$, and more generally $\context_G[A]$ the set of all $V$-contexts for $V\in A$, where $A$ is a set of symbols (in practice we will only take $A$ equals to $N$, $\Sigma$ or $N\cup \Sigma$).
		\item We write $\context_G^P[V]$ for $P,V\in N\cup \Sigma$ the set of $V$-contexts having $P$ as a root, and $\context_G^P[\Sigma]$ the set of terminal derivation trees having $P$ as a root.
	\end{itemize}
\end{definition}

In the following $G=(N, \Sigma, S, D, \mathcal S)$ is a weakly-unambiguous constrained context-free grammar, free of unit-rule and $\varepsilon$-productions.

\begin{proposition}\label{prop:decompostree} Let $V\in N\cup\Sigma$ and $t$ a $V$-context of height greater than $1$. Then there exists  a production rule of the form $B\xrightarrow{L}V v_1\ldots v_r$ with $r\geq 0$, derivation trees $t_i\in Tree_G(v_i)$, and $t'$ a $B$-context such that $t$ can be decomposed $t=t'[(B,L,V,t_1, \ldots, t_r)]$. The choice of the decomposition is unique. Furthermore:
\begin{itemize}
	\item $r\geq 1$ if $V\in N$, since $G$ is unit-free 
	\item if $r\geq 1$, for $1\leq i\leq r$, $Fr(t_i)\in\Sigma^+$. In particular $left(t_i)\in\Sigma$.
\end{itemize}
\end{proposition}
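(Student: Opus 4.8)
\emph{Plan.} The plan is to obtain the decomposition by peeling off the node of $t$ lying immediately above its leftmost leaf, and then to check the claimed properties one by one. Since $ht(t)>1$, the tree $t$ is not a single leaf, so its leftmost leaf — which carries the label $V=left(t)$ — has a parent node $\nu$ in $t$; write $B$ for the label of $\nu$. As the leftmost child of $\nu$ is that leaf $V$, the rule applied at $\nu$ has the form $B\xrightarrow{L}V v_1\cdots v_r$ with $r\geq0$. I would then take $t_1,\dots,t_r$ to be the subtrees of $t$ hanging from the remaining children of $\nu$ (so $t_i\in Tree_G(v_i)$), set $s:=(B,L,V,t_1,\dots,t_r)$ for the subtree of $t$ rooted at $\nu$, and let $t'$ be $t$ with $s$ replaced by the single leaf $B$. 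The identity $t=t'[s]$ then follows by unfolding the definition of $C[\cdot]$ along the leftmost spine (a short induction on the depth of $\nu$ in $t$).

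Next I would verify that $t'$ is a $B$-context. Since $\nu$ lies on the leftmost spine of $t$, the word $Fr(s)=V\,Fr(t_1)\cdots Fr(t_r)$ is a prefix of $Fr(t)\in V\Sigma^*$; writing $Fr(t)=Fr(s)\,w$, we get $Fr(t_1)\cdots Fr(t_r)\in\Sigma^*$ and $w\in\Sigma^*$. In $t'$ the inserted leaf $B$ occupies the leftmost position, so $left(t')=B$, and $Fr(t')=B\,w\in B\Sigma^*$; hence $t'\in\context_G[B]$.

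For uniqueness, I would argue that in any decomposition $t=t''[s'']$ with $t''$ a $root(s'')$-context and $s''$ of the form $(B'',L'',V,\dots)$ arising from a rule whose first right-hand symbol is $V$, the subtree $s''$ sits at the position of the leftmost leaf of $t''$ inside $t$, and its root has the leaf $V$ as its leftmost child; as $V$ is exactly the leftmost leaf of $t$, the root of $s''$ must be its parent $\nu$, which forces $s''=s$ and then $t''=t'$. Finally, for the extra clauses: if $V\in N$ and $r=0$, the rule at $\nu$ would be the unit rule $B\xrightarrow{L}V$, contradicting that $G$ is unit-free, so $r\geq1$; and when $r\geq1$, each $Fr(t_i)\in\Sigma^*$ by the computation above, and it is nonempty because a leaf $v_i\in\Sigma$ has frontier $v_i$ while a nonterminal $v_i\in N$ cannot derive $\varepsilon$ ($G$ has no nullable symbols). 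Hence $Fr(t_i)\in\Sigma^+$, and in particular $left(t_i)\in\Sigma$.

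The hard part will be the bookkeeping rather than any real idea: making the identity $t=t'[s]$ fully rigorous through the recursive definition of $C[\cdot]$ when $\nu$ is buried deep inside $t$, and wording the uniqueness step so that it genuinely uses that $\nu$ is the \emph{parent} of the leftmost leaf and not merely some node on the leftmost spine. Both are routine but delicate in this heavily notational setting.
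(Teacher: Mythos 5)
Your argument is correct, and it is essentially the proof the paper has in mind: the paper disposes of this statement with ``It is immediate by induction,'' and your identification of the parent of the leftmost leaf, the substitution $t=t'[(B,L,V,t_1,\ldots,t_r)]$ along the leftmost spine, and the uniqueness/unit-freeness/nullable-freeness checks are exactly that induction written out. No gap; the remaining work is only the bookkeeping you already flag.
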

\begin{proof}
	It is immediate by induction.
\end{proof}

\begin{figure}[h]
\centering
\includegraphics[scale=0.4]{./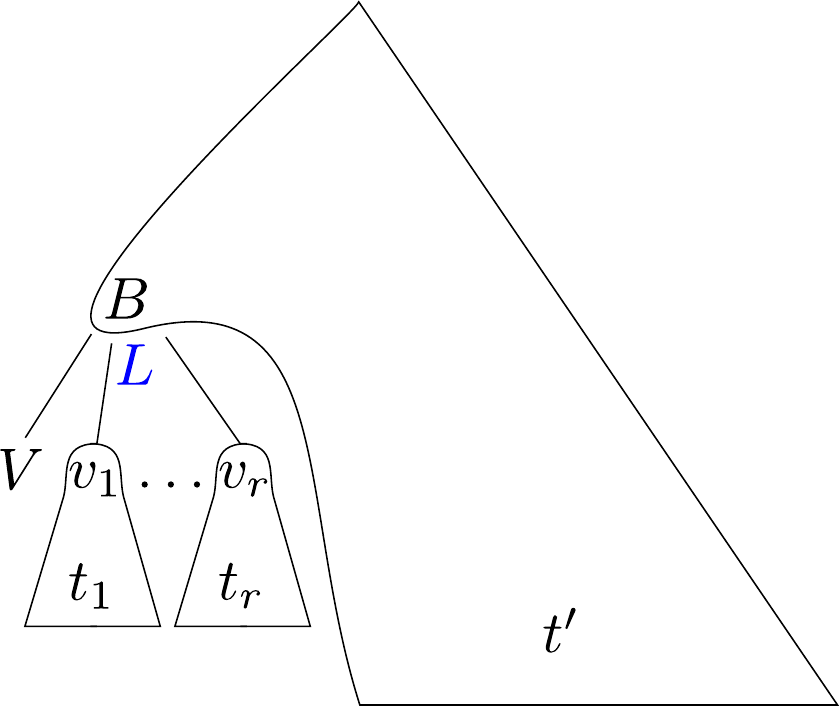}
\caption{Decomposition of a $V$-context under the form $t=t'[(B,L,V,t_1, \ldots, t_r)]$}
\end{figure}
  
\newcommand{\boxx}[1]{\langle #1 \rangle}

\begin{definition}
	Let's define $\Phi$ an application from $\context_G[\Sigma\cup N]$ to $Tree_{\Sigma\cup N}$ as follows:
	\begin{itemize}
		\item for $v\in\Sigma\cup N$, $\Phi(v)=v$
		\item if $ht(t)\geq 1$, then it has a unique decomposition of the form $t=t'[(B,L,V,t_1, \ldots, t_r)]$ following proposition \ref{prop:decompostree}, where $r\geq 0$, $B\xrightarrow{L}V v_1\ldots v_r \in D$, $t_i\in Tree_G(v_i)$, and $t'$ is a $B$-context:
			\begin{itemize}
				\item if $left(t)\in\Sigma$, $\Phi(t)=(\boxx{root(t')},L, left(t), \Phi(t_1), \ldots, \Phi(t_r), \Phi(t'))$, where we omit the last child $\Phi(t')$ if $t'$ is reduced to $left(t')$.
				\item if $left(t)\in N$, $\Phi(t)=(\boxx{root(t'),left(t)},L, \Phi(t_1), \ldots, \Phi(t_r), \Phi(t'))$, where we omit the last child $\Phi(t')$ if $t'$ is reduced to $left(t')$. Recall that if $left(t)\in N$ then $r\geq 1$.
			\end{itemize}			
	\end{itemize}
\end{definition}

\begin{figure}[h]
\centering
\includegraphics[scale=0.4]{./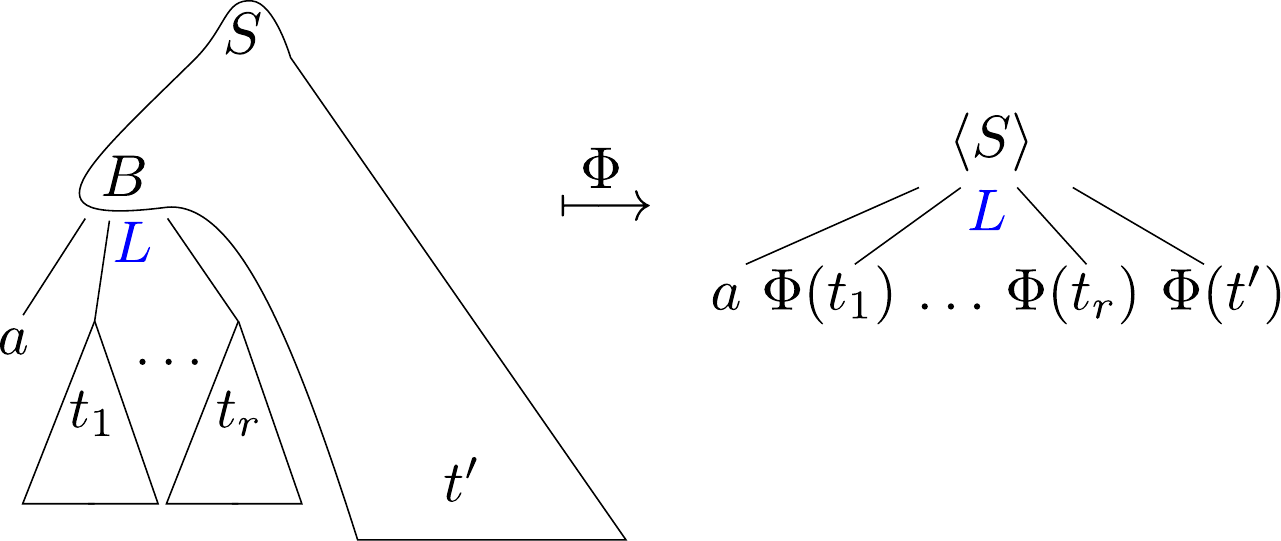}
\caption{Action of $\Phi$ on $t$ an $a$-context, where $left(t)=a\in\Sigma$, and $t'$ is not reduced to $root(t)=S$}
\end{figure}

\begin{figure}[h]
\centering
\includegraphics[scale=0.4]{./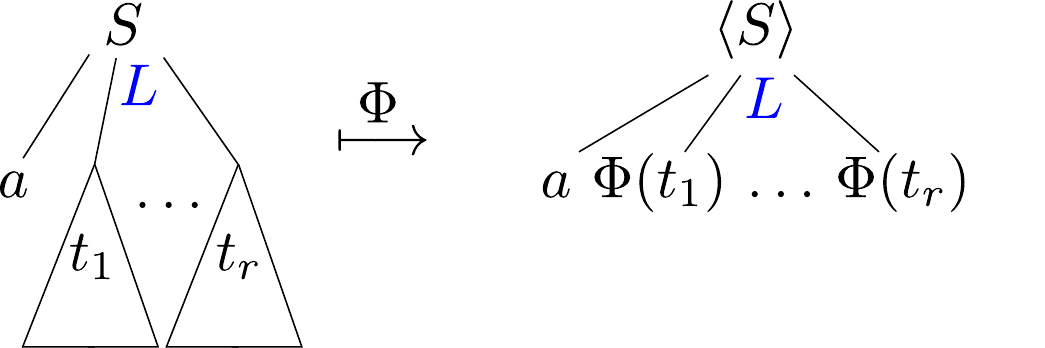}
\caption{Action of $\Phi$ on an $a$-context, where $a\in\Sigma$, and $t'$ is reduced to $S$}
\end{figure}

\begin{figure}[h]
\centering
\includegraphics[scale=0.4]{./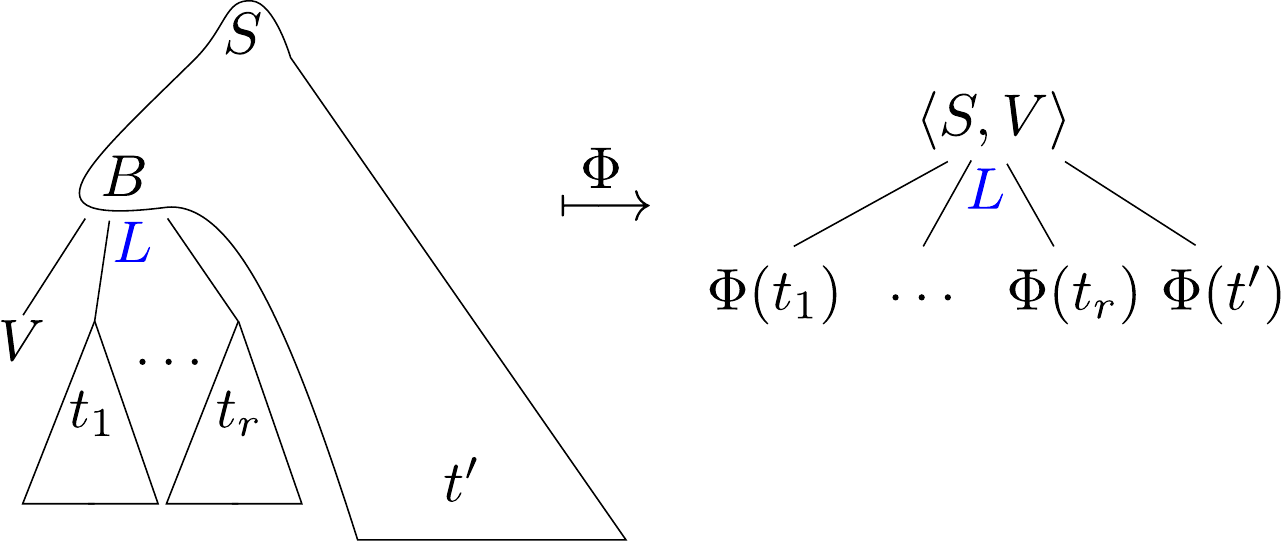}
\caption{Action of $\Phi$ on a $V$-context, where $V\in N$, and $t'$ is not reduced to $S$}
\end{figure}

\begin{figure}[h!]
\centering
\includegraphics[scale=0.4]{./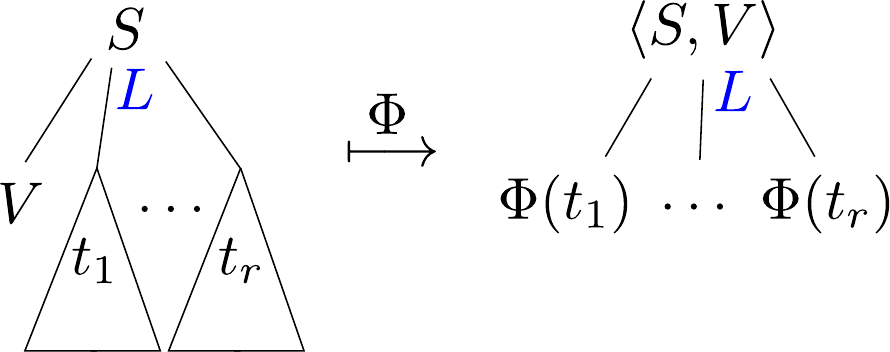}
\caption{Action of $\Phi$ on a $V$-context, where $V\in N$, and $t'$ is reduced to $S$}
\end{figure}

The following proposition is easily checked by induction: 
\begin{proposition} If $t\in \context_G[\Sigma\cup N]$, $t_\Sigma\in \context_G[\Sigma]$, and  $t_N\in \context_G[N]$:
	\begin{itemize}
		\item $|\Phi(t_\Sigma)|=|t_\Sigma|$ and $|\Phi(t_N)|=|t_N|-1$ where $|t|$ denotes the number of symbols in a tree.
		\item $\mathcal S(\Phi(t))=\mathcal S(t)$, and $ht(\Phi(t))=ht(t)$.
		\item $Fr(\Phi(t_\Sigma))=Fr(t_\Sigma)$, and $Fr(\Phi(t_N))=left(t_N)^{-1}Fr(t_N)$ if $ht(t_N) \geq 1$.
		\item if $t'$ is a subtree of $\Phi(t)$, then $t'$ is the image of a subtree of $t$ by $\Phi$.
	\end{itemize}
\end{proposition}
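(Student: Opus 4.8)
The plan is a structural induction on the context tree~$t$ --- equivalently, an induction on $ht(t)$ --- that mirrors the recursive definition of~$\Phi$. Concretely, for an arbitrary $t\in\context_G[\Sigma\cup N]$ one proves simultaneously: $\mathcal{S}(\Phi(t))=\mathcal{S}(t)$ and $ht(\Phi(t))=ht(t)$; if $left(t)\in\Sigma$ then $|\Phi(t)|=|t|$ and $Fr(\Phi(t))=Fr(t)$; if $left(t)\in N$ then $|\Phi(t)|=|t|-1$ and, when $ht(t)\geq1$, $Fr(\Phi(t))=left(t)^{-1}Fr(t)$; and every subtree of~$\Phi(t)$ is again of the form $\Phi(u)$ for some context~$u$ occurring inside~$t$. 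The base case $ht(t)=0$ is immediate: then $t=v$ with $v\in\Sigma\cup N$ and $\Phi(v)=v$, so all the equalities hold trivially and the only subtree of $\Phi(t)$ is $v=\Phi(v)$; when $v\in N$ the two ``$N$-rooted'' clauses are vacuous since they only concern trees of height $\geq1$.

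For the inductive step ($ht(t)\geq1$), apply Proposition~\ref{prop:decompostree} to write $t=t'[(B,L,V,t_1,\ldots,t_r)]$ uniquely, with $B\xrightarrow{L}V v_1\cdots v_r$ a rule of~$G$, each $t_i\in Tree_G(v_i)$ satisfying $Fr(t_i)\in\Sigma^+$ (so $t_i\in\context_G[\Sigma]$, $ht(t_i)<ht(t)$), $t'$ a $B$-context with $B\in N$ (so $t'\in\context_G[N]$, $ht(t')<ht(t)$), and $r\geq1$ whenever $left(t)=V\in N$. Substituting the rule-node at the left-most leaf of~$t'$ records the identities $|t|=(|t'|-1)+(2+\sum_{i=1}^r|t_i|)$, $\mathcal{S}(t)=\mathcal{S}(t')+L+\sum_{i=1}^r\mathcal{S}(t_i)$ (the substituted leaf~$B$ and the leaf~$V$ each contributing $\{\vect{0}\}$), and $Fr(t)=V\cdot Fr(t_1)\cdots Fr(t_r)\cdot u$ with $u$ obtained from $Fr(t')$ by deleting its leading letter~$B$. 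One then plugs the induction hypotheses for $t_1,\ldots,t_r,t'$ into the explicit expression for $\Phi(t)$ given by the definition --- $(\boxx{root(t')},L,V,\Phi(t_1),\ldots,\Phi(t_r),\Phi(t'))$ if $V\in\Sigma$, and $(\boxx{root(t'),V},L,\Phi(t_1),\ldots,\Phi(t_r),\Phi(t'))$ if $V\in N$, with $\Phi(t')$ omitted precisely when $t'=B$ (then $\Phi(t')=B$ contributes nothing to size, semilinear sum, height, or frontier) --- and reads off each invariant: $|\Phi(t)|$ reassembles to $|t|$ if $V\in\Sigma$ and to $|t|-1$ if $V\in N$ (the missing unit being the symbol $V$, now carried by the compound root label); $\mathcal{S}(\Phi(t))$ reassembles to $\mathcal{S}(t)$; $ht(\Phi(t))=1+\max\{0,ht(\Phi(t_1)),\ldots,ht(\Phi(t_r)),ht(\Phi(t'))\}$ is computed from $ht(\Phi(t_i))=ht(t_i)$ and $ht(\Phi(t'))=ht(t')$; $Fr(\Phi(t))$ comes out as $Fr(t)$ if $V\in\Sigma$ and as $V^{-1}Fr(t)=left(t)^{-1}Fr(t)$ if $V\in N$; and the proper subtrees of $\Phi(t)$ are the leaf $V=\Phi(V)$, the trees $\Phi(t_1),\ldots,\Phi(t_r),\Phi(t')$, and their subtrees, which are $\Phi$-images by the induction hypothesis.

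The one genuinely delicate point is the bookkeeping in the $left(t)\in N$ branch, where $V=left(t)$ is absorbed into the compound root label instead of being kept as a child: one must check that this absorption happens \emph{exactly once} as the recursion descends the left spine, and that the two sub-cases of the definition of~$\Phi$ interlock correctly while the recursion alternates between $\Sigma$-rooted and $N$-rooted contexts --- this is what produces both the ``$-1$'' in $|\Phi(t_N)|=|t_N|-1$ and the leading-letter deletion in $Fr(\Phi(t_N))=left(t_N)^{-1}Fr(t_N)$. One must also apply the convention ``omit the child $\Phi(t')$ when $t'$ is reduced to $left(t')$'' uniformly, so that every formula stays valid in the boundary case $t'=B$; and, for the last clause, track precisely which subtree of~$t$ a given subtree of~$\Phi(t')$ pulls back to. All of these reduce to routine case analysis.
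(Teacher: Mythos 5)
Your overall route -- an induction that follows the recursive definition of $\Phi$ through the unique decomposition of Proposition~\ref{prop:decompostree} -- is the same as the paper's (whose proof is only ``by immediate induction''), and your bookkeeping for the size, the semilinear sum and the frontier is correct. But the induction, as you set it up, does not go through, and the height clause cannot be ``read off'' the way you claim. You justify applying the induction hypothesis to $t'$ by asserting $ht(t')<ht(t)$; this is false in general, because the decomposition $t=t'[(B,L,V,t_1,\ldots,t_r)]$ removes the rule at the \emph{bottom} of the left spine, so if the height of $t$ is realised away from the left spine then $ht(t')=ht(t)$. The correct well-founded measure is the number of symbols, since $|t'|=|t|-1-\sum_i|t_i|<|t|$ and $|t_i|<|t|$; ``structural induction, equivalently induction on $ht(t)$'' is not equivalent to that and does not cover the recursive call on $t'$.

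More seriously, your height step implicitly uses the identity $ht(t)=1+\max\{0,ht(t_1),\ldots,ht(t_r),ht(t')\}$, which is not valid: the inner node is grafted at the leftmost leaf of $t'$, which may sit at positive depth, so $ht(t)$, $ht(t')$ and the $ht(t_i)$ are not related in this way. Concretely, take the rules $A\xrightarrow{L}AT$, $A\xrightarrow{L_b}b$, $T\xrightarrow{L_c}c$ and the derivation tree $t=(A,L,(A,L_b,b),(T,L_c,c))$ of the word $bc$: then $ht(t)=2$, the decomposition gives $t'=(A,L,A,(T,L_c,c))$ with $ht(t')=2=ht(t)$, and unwinding the definition of $\Phi$ yields $\Phi(t)=(\langle A\rangle,L_b,b,(\langle A,A\rangle,L,(\langle T\rangle,L_c,c)))$, which has height $3$. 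So with the stated definitions the clause $ht(\Phi(t))=ht(t)$ fails on this example, and in particular your derivation of it is not a proof; the remaining clauses are unaffected (and the later injectivity argument can be run on $|t|$ instead of height), but this point needed to be flagged rather than treated as routine. A smaller issue of the same kind: $\Phi(t')$ is a subtree of $\Phi(t)$ while $t'$ is \emph{not} a subtree of $t$ in the sense of the paper's $subtrees(\cdot)$, so your last bullet only shows that subtrees of $\Phi(t)$ are $\Phi$-images of contexts extracted from $t$, not literally of subtrees of $t$; that gap is exactly the ``pull back'' you defer, and it should be addressed explicitly.
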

\begin{proof}
	By immediate induction.
\end{proof}

\begin{proposition}\label{prop:phiinjective}
	$\Phi$ is injective on $\context_G[N\cup \Sigma]$.
\end{proposition}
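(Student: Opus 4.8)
The plan is to prove that $\Phi$ is injective by strong induction on the number of nodes $|t|$ of the context $t$; the key is that the tree $T:=\Phi(t)$ carries enough information to reconstruct the unique canonical decomposition of $t$ supplied by Proposition~\ref{prop:decompostree}. Note that weak-unambiguity of $G$ is irrelevant here: injectivity of $\Phi$ is a purely combinatorial statement about trees.

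For the base case, a context of height $0$ is a single symbol $v\in N\cup\Sigma$ with $\Phi(v)=v$, whereas $\Phi$ sends every context of height $\geq 1$ to a tree whose root is a bracketed symbol $\boxx{\cdots}$, which does not belong to $N\cup\Sigma$. Hence $\Phi(t)=\Phi(s)$ forces $t$ and $s$ to have the same height class, and in the height-$0$ case $t=\Phi(t)=\Phi(s)=s$.

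For the inductive step, assume $\Phi(t)=\Phi(s)=T$ with $t,s$ of height $\geq 1$, and write the canonical decompositions $t=t'[(B,L,V,t_1,\dots,t_r)]$ and $s=s'[(B',L',V',s_1,\dots,s_{r'})]$. I would first isolate an auxiliary invariant of $\Phi$, provable by the same kind of induction as the other properties of $\Phi$: the root of $\Phi(u)$ is a doubly-bracketed symbol $\boxx{\cdot,\cdot}$ if and only if $left(u)\in N$. Then one reads off $T$ directly: the arity of the bracketed root of $T$ tells whether $V\in\Sigma$ or $V\in N$ (and, when $V\in N$, its components are exactly $root(t')$ and $V$), and the semilinear set at the root of $T$ is $L$. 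For the remaining children, Proposition~\ref{prop:decompostree} gives $left(t_i)\in\Sigma$ for $1\leq i\leq r$, so by the invariant no $\Phi(t_i)$ has a doubly-bracketed root, whereas $left(t')=B\in N$ (it is the left-hand side of a production rule), so when $t'$ is non-trivial the root of $\Phi(t')$ is doubly-bracketed. Consequently $t'$ is trivial precisely when the last child of $T$ fails to be doubly-bracketed -- in which case $t'=B$ equals the first component of the root of $T$ -- and otherwise the last child of $T$ is $\Phi(t')$ while the children that precede it (past the leaf $V$ when $V\in\Sigma$) are exactly $\Phi(t_1),\dots,\Phi(t_r)$.

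Applying this analysis to both $t$ and $s$ against the single tree $T$ yields $V=V'$, $L=L'$, $r=r'$, $\Phi(t_i)=\Phi(s_i)$ for every $i$, and either $t'=s'$ (both trivial, equal to the first component of the root of $T$) or $\Phi(t')=\Phi(s')$. Since the substituted subtree $(B,L,V,t_1,\dots,t_r)$ has at least two nodes, $|t'|<|t|$ and $|t_i|<|t|$, so the induction hypothesis gives $t_i=s_i$ and $t'=s'$, hence $B=left(t')=left(s')=B'$; the two decompositions therefore coincide and $t=s$. I expect the main obstacle to be precisely the previous paragraph: proving rigorously that the list of children of $T$ splits in only one way into the ``$\Phi(t_i)$-part'' and the optional ``$\Phi(t')$-part'' -- equivalently, that the integer $r$ is recoverable from $T$ -- which is exactly what the doubly-bracketed-root invariant buys us; once that invariant is recorded as an addendum to the properties of $\Phi$, the rest is a routine case split on $V\in\Sigma$ versus $V\in N$ and on whether $t'$ is trivial.
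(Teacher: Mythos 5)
Your proof is correct, and it follows the same overall strategy as the paper's argument: recover the unique decomposition of Proposition~\ref{prop:decompostree} from the image tree and conclude by induction. Two implementation choices differ, and both are defensible improvements. First, you separate the optional last child $\Phi(t')$ from the $\Phi(t_i)$'s purely syntactically, via the observation that the root of $\Phi(u)$ is a two-component symbol $\langle\cdot,\cdot\rangle$ exactly when $left(u)\in N$ (for $u$ of height at least one; a height-zero tree is mapped to a bare symbol, but in your application the $t_i$'s have $left(t_i)\in\Sigma$ and $t'$ is only invoked when non-trivial, so the directions you actually use hold). The paper instead rules out the mixed case by applying the induction hypothesis to the two last children and deriving a contradiction from $Fr(T_r)\in\Sigma^+$ versus $Fr(t_2')\in N\Sigma^*$; your root-shape criterion reconstructs $r$, $L$, $V$, the $\Phi(t_i)$'s and (when present) $\Phi(t')$ directly from $T$, which is cleaner. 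Second, you induct on the number of nodes rather than on the height. This is more than cosmetic: in the decomposition $t=t'[(B,L,V,t_1,\ldots,t_r)]$ one always has $|t'|<|t|$ and $|t_i|<|t|$, whereas $ht(t')$ can equal $ht(t)$ when the height of $t$ is realized inside $t'$ away from its leftmost leaf, so the paper's appeal to its height-based induction hypothesis for $\Phi(t'_1)=\Phi(t'_2)$ is not literally covered by the hypothesis as stated; your size induction sidesteps this. Your remark that weak-unambiguity of $G$ plays no role is also accurate: only the unit-freeness and $\varepsilon$-freeness assumptions enter, through Proposition~\ref{prop:decompostree}.
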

\begin{proof}
	By induction over the height of the trees in $\context_G[N\cup \Sigma]$. As $\Phi$ preserves height, it suffices to check it with trees of the same height.
	\begin{itemize}
		\item $\Phi$ is the identity over the trees of height $0$.
		\item Let $t_1, t_2 \in \context_G[N\cup \Sigma]$ of same height bigger than 1, such that $\Phi(t_1)=\Phi(t_2)$. By proposition \ref{prop:decompostree}, $t_1$ can be decomposed by $t_1=t'_1[(B_1,L_1,V_1,T_1, \ldots, T_r)]$, and $t_2=t'_2[(B_2,L_2,V_2,T'_1, \ldots, T'_R)]$.
	As $\Phi(t_1)=\Phi(t_2)$, we have automatically that $L_1=L_2$, and that they have the same root.
	
	If it is of the form $\boxx{B_1}$, this means that $B_1=B_2$, and that the first child of $\Phi(t_1)$ is a letter. By construction, it means that $left(t_1)=left(t_2)\in\Sigma$.
	Otherwise it is of the form $\boxx{root(t_1),left(t_1)}$. So $B_1=B_2$ and $left(t_1)=left(t_2)\in N$. So in both cases $left(t_1)=left(t_2)$ and $B_1=B_2$.
	
	Let's show that $t'_1=B_1$ if and only if $t'_2=B_2$. Without loss of generality let's suppose that $t'_1=B_1$ but $t'_2\not = B_2$. In this case, $r\geq 1$, and the last child of $\Phi(t_1)$ is $\Phi(T_r)$. The last child of $\Phi(t_2)$ is $\Phi(t'_2)$, so $\Phi(T_r)=\Phi(t'_2)$. By induction hypothesis, $T_r=t'_2$. This leads to a contradiction since $Fr(T_r)\in \Sigma^+$ whereas  $Fr(t'_2)\in N\Sigma^*$.
	
	Consequently $r=R$, and $\Phi(T_i)=\Phi(T'_i)$ for all $1\leq i\leq r$ by induction hypothesis. Furthermore either $t'_1=B_1$, $t'_2=B_2$ so $t'_1=t'_2$,  or $t'_1\not = B_1$ and then $t'_2\not = B_2$, so $\Phi(t'_1)=\Phi(t'_2)$ and by induction hypothesis $t'_1=t'_2$.
	
	So $t_1=t_2$.
	
			\end{itemize}
\end{proof}

\begin{remark}
	\label{remark:phibij}
	$\Phi:\context_G[\Sigma\cup N]\to \Phi(\context_G[\Sigma\cup N])$ is a bijection.
\end{remark}

\begin{definition}
	We define a new constrained context-free grammars $G_\Phi=(N', \Sigma, \boxx S, D', \mathcal S)$ 
as follows, where $N'=\{\boxx{V}\ :\  V\in N\} \cup \{\boxx{V, V'}\ :\  V,V'\in N\}$ and $D'$ is defined as follows:
\begin{itemize}
	\item For every production rule in $D$ of the form $Q\xrightarrow L a P_1\ldots P_r$, where $a\in \Sigma$, and $P_i\in \Sigma\cup N$, we add to $D'$ the production rule \[\boxx Q\xrightarrow L a \boxx{P_1}\ldots \boxx{P_r},\] where for writing convenience, we take as a convention that $\boxx b=b$ for $b\in \Sigma$. Furthermore, for every $P\in N$, we add to $D'$ the production rule:\[\boxx P\xrightarrow L a \boxx{P_1}\ldots \boxx{P_r}\boxx{P,Q}\]
	\item For every production rule in $D$ of the form $Q\xrightarrow L R P_1\ldots P_r$, where $R\in N$, $r\geq 1$, and $P_i\in \Sigma\cup N$, we add to $D'$ the production rule \[\boxx{Q,R}\xrightarrow L  \boxx{P_1}\ldots \boxx{P_r},\]and for every $P\in N$, we add to $D'$ the production rule:\[\boxx{P,R}\xrightarrow L  \boxx{P_1}\ldots \boxx{P_r}\boxx{P,Q}\]
\end{itemize}
\end{definition}

\begin{example}
	We consider $G=(N, \Sigma, S_a, D, \mathcal S)$ given by $N=\{S_a,S_b\}$, $a,b\in \Sigma$, and $D$ is composed of the the two rules $S_a\xrightarrow{(1,0)}a|S_ba$ and $S_b\xrightarrow{(0,1)}b|S_ab$.
	Then $G_\Phi=(N', \Sigma, \boxx{S_a}, D', \mathcal S)$ is defined by $N'=\{\boxx{S_a}, \boxx{S_b}, \boxx{S_a, S_b}, \boxx{S_a, S_a}, \boxx{S_b, S_b}, \boxx{S_b, S_a} \}$, and $D'$ by:
	\begin{align*}
		\boxx{S_a}&\xrightarrow{(1,0)}a|a\boxx{S_a,S_a} & \boxx{S_a, S_b}&\xrightarrow{(1,0)}a|a\boxx{S_a,S_a}\\
		\boxx{S_a}&\xrightarrow{(0,1)}b\boxx{S_a,S_b}&\boxx{S_a,S_a}&\xrightarrow{(0,1)}b\boxx{S_a,S_b}\\
		\boxx{S_b}&\xrightarrow{(0,1)}b|b\boxx{S_b,S_b}&\boxx{S_b,S_a}&\xrightarrow{(0,1)}b|b\boxx{S_b,S_b}\\
		\boxx{S_b}&\xrightarrow{(1,0)}a\boxx{S_b,S_a}&\boxx{S_b, S_b}&\xrightarrow{(1,0)}a\boxx{S_b,S_a}\\
	\end{align*}
	
	Notice that the construction may generate some non terminals $\boxx{S_b}, \boxx{S_b, S_a}$ and $\boxx{S_b, S_b}$ that are not reachable from the axiom $\boxx{S_a}$.
	
	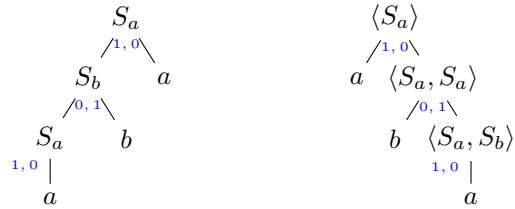
\begin{figure}[h]\centering
	\begin{tikzpicture}[level distance=0.8cm,
  level 1/.style={sibling distance=1cm},
  level 2/.style={sibling distance=1cm}]
  \node[label={[label distance=-0.10cm]-90:\color{blue}\tiny$1,0$}] {$S_a$}
   child { node[label={[label distance=-0.10cm]-90:\color{blue}\tiny$0,1$}] {$S_b$}
      		child { node[label={[label distance=-0.1cm]-100:\color{blue}\tiny$1,0$}] {$S_a$}
      				child {node {$a$}
      					  }
                  }
     	    child {node {$b$}}
    	  }
    child {node {$a$}};
\end{tikzpicture}\hspace{2cm}
\begin{tikzpicture}[level distance=0.8cm,
  level 1/.style={sibling distance=1cm},
  level 2/.style={sibling distance=1cm}]
  \node[label={[label distance=-0.10cm]-90:\color{blue}\tiny$1,0$}] {$\boxx{S_a}$}
   child {node {$a$}}
   child { node[label={[label distance=-0.10cm]-90:\color{blue}\tiny$0,1$}] {$\boxx{S_a, S_a}$}
   			child {node {$b$}}
      		child { node[label={[label distance=-0.1cm]-100:\color{blue}\tiny$1,0$}] {$\boxx{S_a, S_b}$}
      				child {node {$a$}
      					  }
                  }
    	  };
\end{tikzpicture}
\caption{On the left, a derivation tree $t$ of $G$, and on the right, a derivation tree $t'$ of $G_\Phi$, such that $\Phi(t)=t'$.}	
	\end{figure}
	
\end{example}

\begin{proposition}\label{prop:phiisccfg}
	$\context_{G_\Phi}[N']=\Phi(\context_G[N\cup\Sigma])$. More precisely, for $P,Q\in N$, 
	\begin{enumerate}
		\item The set of terminal derivation trees of $G_\Phi$ having $\boxx{P}$ as a root is the image by $\Phi$ of the set of terminal derivation tress of $G$ having $P$ as a root. In other words:
		\[\context_{G_\Phi}^{\boxx{P}}[\Sigma]=\Phi(\context_G^P[\Sigma]).\]
		\item The set of terminal derivation trees of $G_\Phi$ having $\boxx{P,Q}$ as a root is the image by $\Phi$ of the set of $Q$-contexts from $G$ having $P$ as a root. In other words \[\context_{G_\Phi}^{\boxx{P,Q}}[\Sigma]=\Phi(\context_G^P[Q]).\] 
	\end{enumerate}
\end{proposition}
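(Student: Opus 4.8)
The proof is by simultaneous induction on the height of the derivation trees involved, using the structural decomposition of contexts provided by Proposition~\ref{prop:decompostree} together with the definitions of $\Phi$ and of the grammar $G_\Phi$. The two statements (1) and (2) must be proved together, since the rules of $G_\Phi$ that produce a root $\boxx P$ (case $a \in \Sigma$ at the head) recursively invoke non-terminals of the form $\boxx{P,Q}$, and conversely the rules producing $\boxx{P,R}$ invoke $\boxx{P,Q}$. So I would phrase the induction hypothesis as: for all trees of height $<h$, (1) and (2) hold; then prove both for height exactly $h$.

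For the inclusion $\Phi(\context_G^P[\Sigma]) \subseteq \context_{G_\Phi}^{\boxx P}[\Sigma]$ and $\Phi(\context_G^P[Q]) \subseteq \context_{G_\Phi}^{\boxx{P,Q}}[\Sigma]$, I would take a $P$-rooted terminal derivation tree $t$ of $G$ (resp. a $P$-rooted $Q$-context $t$), apply Proposition~\ref{prop:decompostree} to write $t = t'\big[(B,L,V,t_1,\ldots,t_r)\big]$ where $B \xrightarrow{L} V v_1\cdots v_r \in D$, each $t_i \in \mathrm{Tree}_G(v_i)$ has frontier in $\Sigma^+$, and $t'$ is a $B$-context with $\mathrm{root}(t') = P$ (of height strictly smaller, so the $t_i$'s and $t'$ are covered by the induction hypothesis). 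Then I unfold the definition of $\Phi(t)$ according to whether $V = \mathrm{left}(t) \in \Sigma$ or $V \in N$, and whether $t' = P$ or not. In each of the four sub-cases, the top node of $\Phi(t)$ is exactly the left-hand side of one of the four families of rules of $D'$ — e.g.\ if $\mathrm{left}(t) = a \in \Sigma$ and $t' \neq P$, the rule $\boxx B \xrightarrow{L} a\,\boxx{v_1}\cdots\boxx{v_r}\boxx{B,P}$ (with $B$ the root of $t$) is used, its children being $\Phi(t_1),\ldots,\Phi(t_r)$ (terminal trees, in $\context_{G_\Phi}$ by IH part (1)) and $\Phi(t')$ (a $\boxx{B,P}$-context wait — here I must be careful: $\mathrm{root}(t')=P$ but $\Phi(t')$ is rooted at $\boxx{\mathrm{root}(t'),\mathrm{left}(t')}=\boxx{P,B}$... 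I need to re-examine the role indices). The key bookkeeping is that $\Phi$ sends the root/left pair of a context to the bracketed pair, so the rule families of $D'$ match exactly. The semilinear labels transfer verbatim since $\mathcal S(\Phi(t)) = \mathcal S(t)$.

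For the reverse inclusions I run the same argument backwards: given a $\boxx P$-rooted (resp. $\boxx{P,Q}$-rooted) terminal derivation tree of $G_\Phi$, its top rule is one of the four families; reading off the corresponding rule of $D$ and the children, I reconstruct via $\Phi^{-1}$ (well-defined by Proposition~\ref{prop:phiinjective} and Remark~\ref{remark:phibij}) a $P$-rooted terminal tree (resp. $P$-rooted $Q$-context) of $G$, again invoking the induction hypothesis on the strictly smaller child subtrees. Combining both directions and ranging over all $P,Q$ gives $\context_{G_\Phi}[N'] = \Phi(\context_G[N\cup\Sigma])$.

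The main obstacle I expect is purely notational rather than conceptual: keeping straight the two coordinates of the bracketed non-terminals (the ``root'' coordinate versus the ``pending left-corner'' coordinate) through the recursive unfolding of $\Phi$, and verifying that the optional omission of the last child $\Phi(t')$ when $t'$ is trivial corresponds precisely to the two flavours of rules (those ending in a $\boxx{P,Q}$ and those not). A secondary subtlety is the base case: trees of height $1$ come from rules $B \xrightarrow{L} a$ or $B \xrightarrow{L} a v_1 \cdots v_r$ with all $v_i \in \Sigma$ at the frontier, and one must check $\Phi$ acts as expected there and that $G_\Phi$'s rule $\boxx B \xrightarrow{L} a\,\boxx{v_1}\cdots\boxx{v_r}$ (with the convention $\boxx b = b$) produces exactly the image. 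Once the indices are pinned down, each case is a one-line check.
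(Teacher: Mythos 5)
Your plan is essentially the paper's proof: a double inclusion established by simultaneous induction on the height of the trees, using the unique decomposition of Proposition~\ref{prop:decompostree}, matching the four rule families of $D'$ according to whether the left corner is a letter or a non-terminal and whether the remaining context is trivial, and, for the converse, rebuilding a preimage from the induction hypothesis on the strictly smaller children together with the rule of $D$ that generated the rule of $D'$ (which is what your appeal to $\Phi^{-1}$ amounts to). The index slip you caught yourself resolves exactly as you suspect ($\Phi(t')$ is rooted at $\langle \mathrm{root}(t'),\mathrm{left}(t')\rangle=\langle P,B\rangle$, matching the added rule $\langle P\rangle \xrightarrow{L} a\,\langle v_1\rangle\cdots\langle v_r\rangle\langle P,B\rangle$), so the argument goes through as in the paper.
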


\begin{proof}
	By double inclusion and induction. We have to prove both affirmations simultaneously.

Let's prove by induction that for every $P,Q\in N$, $\context_{G_\Phi}^{\boxx{P}}[\Sigma]\subseteq\Phi(\context_G^P[\Sigma])$ and $\context_{G_\Phi}^{\boxx{P,Q}}[\Sigma]\subseteq\Phi(\context_G^P[Q])$. Let's do the basic cases:

Let $t\in \context_{G_\Phi}^{\boxx{P}}[\Sigma]$ of height $1$, of the form $t=(\boxx{P}, L, a_1, \ldots, a_r)$ where $a_i\in \Sigma$ for $1\leq i \leq r$. As there is no symbol of the form $\boxx{P,Q}$, it means by construction that it is the application of the rule $\boxx P\xrightarrow{L}a_1\ldots a_r$, which comes from $P\xrightarrow{L}a_1\ldots a_r$ in $G$. Then $t_G=(P, L, a_1, \ldots, a_r)\in \context_G^P[\Sigma]$ and $\Phi(t_G)=t$.
\\
Reciprocally, let $t_G\in \context_G^P[\Sigma]$ of height $1$. Then it is of the form $t_G=(P, L, a_1, \ldots, a_r)$, with $a_i\in \Sigma$ for $1\leq i\leq r$. Then $\Phi(t_G)=(\boxx P, L, a_1, \ldots, a_r)$ and as $P\xrightarrow{L}a_1\ldots a_r$ is a derivation of $G$, then by construction $\boxx P\xrightarrow{L}a_1\ldots a_r$ is a derivation of $G_\Phi$. So $\Phi(t_G)\in \context_{G_\Phi}^{\boxx{P}}[\Sigma]$.

Let $t\in \context_{G_\Phi}^{\boxx{P,Q}}[\Sigma]$ of height $1$, of the form $t=(\boxx{P,Q}, L, a_1, \ldots, a_r)$ where $a_i\in \Sigma$ for $1\leq i \leq r$. As there is no symbol of the form $\boxx{P,Q}$ in the children, it means by construction that $t$ begins with the application of the rule $\boxx{P,Q}\xrightarrow{L}a_1\ldots a_r$, which comes from $P\xrightarrow{L}Qa_1\ldots a_r$ in $G$. Then $t_G=(P, L, Q, a_1, \ldots, a_r)\in \context_G^P[Q]$ and $\Phi(t_G)=t$.
\\
Reciprocally, let $t_G\in \context_G^P[Q]$ of height $1$. Then it is of the form $t_G=(P, L, Q, a_1, \ldots, a_r)$, with $a_i\in \Sigma$ for $1\leq i\leq r$. Then $\Phi(t_G)=(\boxx{P,Q}, L, a_1, \ldots, a_r)$ and as $P\xrightarrow{L}Qa_1\ldots a_r$ is a derivation of $G$, then by construction $\boxx{P,Q}\xrightarrow{L}a_1\ldots a_r$ is a derivation of $G_\Phi$. So $\Phi(t_G)\in \context_{G_\Phi}^{\boxx{P}}[Q]$.

Let's now handle the induction step.

\begin{itemize}
\item Let $t\in \context_{G_\Phi}^{\boxx{P}}[\Sigma]$ of height $>1$. There are two cases.
\begin{enumerate}
	\item First case: $t$ is of the form $t=(\boxx P, L, a, t_1, \ldots, t_r)$, and  begins with a derivation of the form $\boxx P\xrightarrow L a \boxx{P_1}\ldots\boxx{P_r}$, so that $root(t_i)=\boxx{P_i}$. By construction of $D'$, this means that $P\xrightarrow L a P_1\ldots P_r$ is a production rule of $G$. Furthermore, for $1\leq i\leq r$, $t_i\in \context_{G_\Phi}^{\boxx{P_i}}[\Sigma]$, so by induction hypothesis, there is $t'_i\in \context_G^{P_i}[\Sigma]$ such that $\Phi(t'_i)=t_i$.
	
	So $t_G=(P,L,a,t'_1, \ldots, t'_r)\in \context_G^P[\Sigma]$. And it is immediate that $\Phi(t_G)=t$.
	\item Second case: $t$ is of the form $t=(\boxx P, L, a, t_1, \ldots, t_r,T)$ and begins with a derivation of the form $\boxx P\xrightarrow L a \boxx{P_1}\ldots\boxx{P_r}\boxx{P,Q}$, so that $root(t_i)=\boxx{P_i}$, and $root(T)=\boxx{P,Q}$. By construction of $D'$, this means that $Q\xrightarrow L a P_1\ldots P_r$ is a production rule of $G$. Furthermore, for $1\leq i\leq r$, $t_i\in \context_{G_\Phi}^{\boxx{P_i}}[\Sigma]$, so by induction hypothesis, there is $t'_i\in \context_G^{P_i}[\Sigma]$ such that $\Phi(t'_i)=t_i$. Also by induction hypothesis, as $T\in \context_{G_\Phi}^{\boxx{P,Q}}[\Sigma]$,  there exists $T'\in \context_G^P[Q]$ such that $\Phi(T')=T$.
	
	So $t_G=T'[(Q,L,a,t'_1, \ldots, t'_r)]\in \context_G^P[\Sigma]$. And it is immediate that $\Phi(t_G)=t$.
	\end{enumerate}
\item Let $t\in \context_{G_\Phi}^{\boxx{P,Q}}[\Sigma]$ of height $>1$. There are also two cases.
\begin{enumerate}
	\item First case: $t$ is of the form $t=(\boxx{P,Q}, L, t_1, \ldots, t_r)$, and  begins with a derivation of the form $\boxx{P,Q}\xrightarrow L \boxx{P_1}\ldots\boxx{P_r}$, so that $root(t_i)=\boxx{P_i}$. By construction of $D'$, this means that $P\xrightarrow L Q P_1\ldots P_r$ is a production rule of $G$. Furthermore, for $1\leq i\leq r$, $t_i\in \context_{G_\Phi}^{\boxx{P_i}}[\Sigma]$, so by induction hypothesis, there is $t'_i\in \context_G^{P_i}[\Sigma]$ such that $\Phi(t'_i)=t_i$.
	
	So $t_G=(P,L,Q,t'_1, \ldots, t'_r)\in \context_G^P[Q]$. And it is immediate that $\Phi(t_G)=t$.
	\item Second case: $t$ is of the form $t=(\boxx{P,Q}, L, t_1, \ldots, t_r,T)$ and begins with a derivation of the form $\boxx{P,Q}\xrightarrow L \boxx{P_1}\ldots\boxx{P_r}\boxx{P,R}$, so that $root(t_i)=\boxx{P_i}$, and $root(T)=\boxx{P,R}$. By construction of $D'$, this means that $R\xrightarrow L Q P_1\ldots P_r$ is a production rule of $G$. Furthermore, for $1\leq i\leq r$, $t_i\in \context_{G_\Phi}^{\boxx{P_i}}[\Sigma]$, so by induction hypothesis, there is $t'_i\in \context_G^{P_i}[\Sigma]$ such that $\Phi(t'_i)=t_i$. Also by induction hypothesis, as $T\in \context_{G_\Phi}^{\boxx{P,R}}[\Sigma]$,  there exists $T'\in \context_G^P[R]$ such that $\Phi(T')=T$.
	
	So $t_G=T'[(R,L,Q,t'_1, \ldots, t'_r)]\in \context_G^P[Q]$. And it is immediate that $\Phi(t_G)=t$.
	\end{enumerate}
\end{itemize}
	
The other direction for the inclusion is on the same idea. We only treat one case, the other cases are similar.
Let $t\in \context_G^P[Q]$ of height strictly bigger $1$, with $Q\in N$. Let's denote $R=root(t)$. Then by Proposition~\ref{prop:decompostree} there exists  a production rule of the form $B\xrightarrow{L}Q P_1\ldots P_r$ with $r\geq 0$, derivation trees $t_i\in \context_G^{P_i}[\Sigma]$, and $t'\in Tree_G(R)[B]$ a $B$-context such that $t$ can be decomposed $t=t'[(B,L,Q,t_1, \ldots, t_r)]$. By definition, $\Phi(t)=(\boxx{R,Q}, L, \Phi(t_1), \ldots, \Phi(t_r), \Phi(t'))$ ($t'$ is not reduced to $R$ otherwise the height would be $1$).

By induction hypothesis, $\Phi(t_i)\in \context_{G_\Phi}^{\boxx{P_i}}[\Sigma]$, and $\Phi(t')\in \context_{G_\Phi}^{\boxx{R,B}}[\Sigma]$. Futhermore, by construction of $D'$, $\boxx{R,Q}\xrightarrow{L}\boxx{P_1}\ldots \boxx{P_r}\boxx{R,B}$ is a production rule in $D'$.

So $\Phi(t)\in \context_{G_\Phi}^{\boxx{R,Q}}[\Sigma]$.

\end{proof}

\begin{corollary}
	$L(G)=L(G_\Phi)$ and $G_\Phi$ is weakly-unambiguous.
\end{corollary}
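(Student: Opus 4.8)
The plan is to read off the corollary directly from Proposition~\ref{prop:phiisccfg}, together with the injectivity of $\Phi$ (Proposition~\ref{prop:phiinjective}) and the elementary preservation properties of $\Phi$ recorded just before that proposition. The first step is to identify, on both sides, what an accepting derivation tree is. Since $G$ has neither nullable symbols nor unit-rules (hence $\varepsilon\notin L(G)$), the accepting derivation trees of $G$ for a word $w$ are exactly the trees $t\in\context_G^{S}[\Sigma]$ with $Fr(t)=w$ and $\mathcal{S}(t)\cap\mathcal{S}\neq\emptyset$; likewise, as every rule of $G_\Phi$ has a non-empty right-hand side, the accepting derivation trees of $G_\Phi$ for $w$ are exactly the trees $t'\in\context_{G_\Phi}^{\boxx{S}}[\Sigma]$ with $Fr(t')=w$ and $\mathcal{S}(t')\cap\mathcal{S}\neq\emptyset$.

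Next, I would specialise Proposition~\ref{prop:phiisccfg}(1) to $P=S$, which gives $\context_{G_\Phi}^{\boxx{S}}[\Sigma]=\Phi(\context_G^{S}[\Sigma])$, so by Proposition~\ref{prop:phiinjective} the map $\Phi$ restricts to a bijection between $\context_G^{S}[\Sigma]$ and $\context_{G_\Phi}^{\boxx{S}}[\Sigma]$. Because $\Phi$ leaves the frontier of a $\Sigma$-context unchanged ($Fr(\Phi(t))=Fr(t)$) and leaves the sum of the semilinear labels unchanged ($\mathcal{S}(\Phi(t))=\mathcal{S}(t)$), this bijection sends the accepting trees of $G$ for $w$ precisely onto the accepting trees of $G_\Phi$ for $w$, and conversely. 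Equality of languages is then immediate: $w\in L(G)$ iff there is an accepting tree of $G$ for $w$, iff (applying $\Phi$, resp.\ $\Phi^{-1}$) there is an accepting tree of $G_\Phi$ for $w$, iff $w\in L(G_\Phi)$; hence $L(G)=L(G_\Phi)$.

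For weak-unambiguity, I would argue by contradiction: suppose some $w\in L(G_\Phi)$ had two distinct accepting derivation trees $t'_1\neq t'_2$ in $G_\Phi$. By the bijection above, $t'_i=\Phi(t_i)$ for unique $t_i\in\context_G^{S}[\Sigma]$, and injectivity of $\Phi$ forces $t_1\neq t_2$. Since $Fr(t_i)=Fr(t'_i)=w$ and $\mathcal{S}(t_i)=\mathcal{S}(t'_i)$ meets $\mathcal{S}$, both $t_1$ and $t_2$ are accepting derivation trees of $G$ for $w$, contradicting the weak-unambiguity of $G$. Therefore $G_\Phi$ is weakly-unambiguous, which (combined with the earlier normalization steps) completes the $\varepsilon$-removal and Greibach-normalization for weakly-unambiguous \PPA.

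There is no genuine obstacle here, as the corollary is essentially a repackaging of Proposition~\ref{prop:phiisccfg}; the only point that deserves care is the identification of ``accepting derivation tree rooted at the axiom'' with ``element of $\context^{\text{axiom}}[\Sigma]$ having the prescribed frontier and semilinear value'', which uses the absence of $\varepsilon$-productions in both $G$ and $G_\Phi$ and the fact that, on $\Sigma$-contexts, $\Phi$ changes neither the frontier nor the multiset of semilinear labels.
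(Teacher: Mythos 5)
Your proof is correct and follows essentially the same route as the paper: the corollary is deduced from the fact that $\Phi$ is a bijection between $\context_G^S[\Sigma]$ and $\context_{G_\Phi}^{\boxx S}[\Sigma]$ (Propositions on injectivity and on the image of $\Phi$) preserving both the frontier and the sum of semilinear labels, together with the weak-unambiguity of $G$. Your additional care in identifying accepting derivation trees with elements of $\context^{S}[\Sigma]$ of the right frontier and semilinear value is a harmless elaboration of what the paper leaves implicit.
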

\begin{proof}
	This comes from the fact that $\Phi$ is a bijection, by Proposition~\label{prop:phiinjective}, between $\context_G^S[\Sigma]$ and $\context_{G_\Phi}^{\boxx S}[\Sigma]$ that preserves both the frontier and the sum of the semilinear set, and that $G$ is weakly-unambiguous.
\end{proof}

\begin{proposition}\label{prop:greibach}
	There exists a weakly-unambiguous semilinearly constrained context-free grammars $G'_\Phi$ equivalent to $G_\Phi$ verifying that every production rule produces a letter first.
\end{proposition}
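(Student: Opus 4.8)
The plan is to carry out one more normalization of $G_\Phi$ --- a single round of \emph{left-corner substitution} --- and then to check that it changes neither the accepted language nor weak-unambiguity, which together with the corollary $L(G)=L(G_\Phi)$ will complete the Greibach normalization for weakly-unambiguous \ccfg. First I would pin down which rules of $G_\Phi$ still need fixing. By the very shape of the rules of $G_\Phi$, every rule whose head is a one-component symbol $\boxx{Q}$ already produces a terminal first, so the only possible offenders are the rules
\[
\boxx{Q,R}\xrightarrow{L}\boxx{P_1}\cdots\boxx{P_r}
\qquad\text{and}\qquad
\boxx{P,R}\xrightarrow{L}\boxx{P_1}\cdots\boxx{P_r}\boxx{P,Q}
\]
(with $r\ge 1$), and only when the leading symbol $\boxx{P_1}$ has $P_1\in N$ (if $P_1\in\Sigma$ then $\boxx{P_1}=P_1$ and the rule is already fine). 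The crucial remarks are that such a leading symbol is always a \emph{one}-component non-terminal, and that --- because $G$ is $\varepsilon$-free with all non-terminals productive (condition $(H)$), and using Proposition~\ref{prop:phiisccfg} --- every one-component non-terminal of $G_\Phi$ is productive and every one of its rules produces a terminal first.

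Next I would define $G'_\Phi$ by deleting each offending rule, say $\boxx{Q,R}\xrightarrow{L}\boxx{P_1}\beta$ (resp. $\boxx{P,R}\xrightarrow{L}\boxx{P_1}\beta\boxx{P,Q}$) with $P_1\in N$, and adding, for every rule $\boxx{P_1}\xrightarrow{L'}\alpha$ of $G_\Phi$, the rule $\boxx{Q,R}\xrightarrow{L+L'}\alpha\beta$ (resp. $\boxx{P,R}\xrightarrow{L+L'}\alpha\beta\boxx{P,Q}$). Since every $\boxx{P_1}$-rule begins with a terminal, so does every rule added this way; moreover the substituted prefix $\alpha$ never begins with a two-component symbol, so no fresh offender is produced and a single round suffices. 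The sum of two semilinear sets is semilinear, so $G'_\Phi$ is a semilinearly constrained context-free grammar all of whose rules produce a letter first.

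Then I would establish $L(G'_\Phi)=L(G_\Phi)$ and weak-unambiguity of $G'_\Phi$ by exhibiting a bijection between the derivation trees of the two grammars that preserves both the frontier and the sum of the labelling semilinear sets. This bijection contracts, at every node of a $G_\Phi$-tree carrying an offending rule $\boxx{Q,R}\to\boxx{P_1}\beta$, that node together with its leftmost child --- which is rooted at $\boxx{P_1}$ and uses some rule $\boxx{P_1}\to\alpha$ with label $L'$ --- into a single node carrying $\boxx{Q,R}\to\alpha\beta$ with label $L+L'$; its inverse re-expands such nodes. The inverse is well-defined precisely because $G'_\Phi$ no longer contains any rule $\boxx{Q,R}\to\boxx{P_1}\beta$ with $P_1\in N$, so in any $G'_\Phi$-derivation tree the nodes created by a contraction are syntactically recognizable and can be re-expanded unambiguously. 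Language preservation is then immediate, and weak-unambiguity transfers because two distinct accepting $G'_\Phi$-trees for a word would re-expand to two distinct accepting $G_\Phi$-trees for it, contradicting the corollary above.

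The step I expect to be the main obstacle is this last one: ensuring that the substitution really yields a \emph{bijection} on derivation trees --- not merely a surjection --- so that weak-unambiguity is genuinely preserved. This is the constrained-grammar counterpart of the difficulty analysed in \cite{cheung} for ordinary context-free grammars, and it is the reason the substitution is confined to the leading one-component non-terminal, whose identity is recorded by the deleted rule; should the naive substitution ever collapse two distinct rules into one, one can introduce, for each eliminated rule, a fresh non-terminal recording which rule was substituted, restoring injectivity without altering the language or the labels.
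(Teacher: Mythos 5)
Your construction coincides with the paper's: delete every rule with a two-component head whose leading symbol is a one-component non-terminal $\langle P_1\rangle$, re-add it with each $\langle P_1\rangle$-rule substituted at the front and the semilinear labels summed, and observe that since all $\langle P_1\rangle$-rules already begin with a letter a single round suffices --- which is exactly what the paper's (terse) proof does before declaring correctness straightforward. Your closing worry about two distinct (deleted rule, $\langle P_1\rangle$-rule) pairs collapsing onto the same new rule is harmless and the fresh-non-terminal patch unnecessary: for weak-unambiguity you only need that contracting any re-expansion of a $G'_\Phi$-tree gives back that tree, so two distinct accepting $G'_\Phi$-trees for a word would re-expand to accepting $G_\Phi$-trees that must coincide by weak-unambiguity of $G_\Phi$ and hence contract back to equal trees.
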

\begin{proof}
By construction of $G_\Phi$, every production having a symbol of the form $\boxx P$ as a left hand side begins with a letter in $\Sigma$. Every production having a symbol of the form $\boxx{P,Q}$ as a left hand side begins with either a letter or a non terminal of the form $\boxx{P_1}$. So $G_\Phi$ has no left-recursion. 

We define $G'_\Phi=(N', \Sigma, \boxx S, D'', \mathcal S)$. $D''$ is obtained from $D'$ by removing every rule of the form $\boxx{Q,R}\xrightarrow L  \boxx{P_1}\ldots \boxx{P_r}$ where $\boxx{P_1}$ is a non terminal.

Then for every deleted rule $\boxx{Q,R}\xrightarrow L  \boxx{P_1}\ldots \boxx{P_r}$, for every rule in $D'$ of the form $\boxx{P_1}\xrightarrow{L_1}a \boxx{Q_1}\ldots \boxx{Q_s}$, we add to $D''$ the rule 
\[ \boxx{Q,R}\xrightarrow{L+L_1}  a \boxx{Q_1}\ldots \boxx{Q_s}\ldots \boxx{P_r} \]

It is straightforward that $G'_\Phi$ recognizes the same language as $G_\Phi$ and is weakly-unambiguous.
\end{proof}

Thus we have proved the proposition:

\begin{proposition}\label{prop:greibach}
	Every weakly-unambiguous \ccfg{} $G$  is equivalent to a weakly-unambiguous \ccfg{} $G'$ such that every production rule of $G'$ starts with a symbol in $\Sigma$.
\end{proposition}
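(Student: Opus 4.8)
The plan is to chain together the normalisation steps developed throughout this section, checking at each stage that both the generated language (up to the empty word) and weak-unambiguity are preserved. First I would bring $G$ into a shape satisfying condition $(H)$: discard the non-productive non-terminals (this changes neither $L(G)$ nor the set of derivation trees), binarise every production by introducing fresh non-terminals so that each rule carries at most three symbols on its right-hand side, prepend to each rule a leading symbol in $\Sigma\cup\{\varepsilon\}$, and replace the axiom $S$ by a fresh axiom $S_0$ with the unique rule $S_0\xrightarrow{\vect{0}}S$ so that $S_0$ never occurs in a right-hand side. Each of these is a standard rewriting of derivation trees which is visibly a bijection preserving the frontier and the sum of the vector labels, hence weak-unambiguity survives. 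Since a grammar of the target form cannot contain an $\varepsilon$-rule, if $\varepsilon\in L(G)$ I set it aside and read the statement as concerning $L(G)\setminus\{\varepsilon\}$.

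Next I would apply Proposition~\ref{prop:removalnullable} to eliminate the nullable symbols, obtaining a generalised weakly-unambiguous \ccfg{} with no nullable symbol, still satisfying $(H)$ and recognising $L(G)\setminus\{\varepsilon\}$; then Proposition~\ref{prop:removalunit} removes the unit rules (including the degenerate ones of the form $A\xrightarrow{L}\varepsilon B$), leaving a generalised weakly-unambiguous \ccfg{} $G_1$ with neither nullable symbols nor unit rules. At this point every rule of $G_1$ has the shape $A\xrightarrow{L}\alpha\,\beta$ with $\alpha\in\Sigma\cup N$, $\beta\in(\Sigma\cup N)^{\le 2}$, $L$ semilinear, and $\beta\ne\varepsilon$ whenever $\alpha\in N$ — precisely the hypotheses under which the $\Phi$-construction of this section is set up.

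I would then build $G_\Phi$ from $G_1$ as in the construction preceding Proposition~\ref{prop:phiisccfg}. By Proposition~\ref{prop:phiinjective} the map $\Phi$ is injective on $\context_{G_1}[N\cup\Sigma]$, while Proposition~\ref{prop:phiisccfg} identifies $\context_{G_\Phi}[N']$ with $\Phi(\context_{G_1}[N\cup\Sigma])$; since $\Phi$ preserves the frontier of $S_0$-contexts and the sum of the semilinear labels, it restricts to a bijection between accepting derivation trees of $G_1$ and of $G_\Phi$, so $L(G_\Phi)=L(G)\setminus\{\varepsilon\}$ and $G_\Phi$ is weakly-unambiguous. Moreover $G_\Phi$ has no left-recursion: every rule with left-hand side $\boxx{P}$ begins with a letter of $\Sigma$, and every rule with left-hand side $\boxx{P,Q}$ begins either with a letter or with a non-terminal of the form $\boxx{P_1}$. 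Finally I would perform one round of left-corner substitution — exactly the step carried out in the substitution argument above — replacing each rule $\boxx{Q,R}\xrightarrow{L}\boxx{P_1}\,w$ whose leading symbol is a non-terminal by the family of rules $\boxx{Q,R}\xrightarrow{L+L_1}a\,\boxx{Q_1}\cdots\boxx{Q_s}\,w$ ranging over all rules $\boxx{P_1}\xrightarrow{L_1}a\,\boxx{Q_1}\cdots\boxx{Q_s}$ of $G_\Phi$ (which, by the absence of left-recursion, all begin with a letter), keeping every other rule. A single substitution suffices, and the derivation trees of the resulting grammar $G'$ are obtained by contracting the top two rule applications at each such occurrence, which is again a bijection preserving frontier and semilinear sum; hence $G'$ is weakly-unambiguous, equivalent to $G_\Phi$, and each of its rules begins with a symbol of $\Sigma$, which is the claim.

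The only genuinely delicate point is the bookkeeping: condition $(H)$ is a \emph{precondition} of Propositions~\ref{prop:removalnullable} and~\ref{prop:removalunit}, so one must verify that the preprocessing into $(H)$ — the productiveness clean-up, the binarisation, and the fresh-axiom trick — can indeed be performed without creating ambiguity, and one must keep track throughout of the fact that the empty word cannot be emitted by a grammar in the target form, so it is excluded from the start (or re-added as a sole exceptional axiom rule if one insists on capturing it). Everything downstream is a routine composition of the tree bijections already supplied in this section.
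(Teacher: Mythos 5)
Your proof is correct and follows essentially the same route as the paper: normalisation to condition $(H)$, elimination of nullable symbols and unit rules via Propositions~\ref{prop:removalnullable} and~\ref{prop:removalunit}, the $\Phi$-construction with its injectivity and image characterisation, and a single left-corner substitution to make every rule start with a letter, with $\varepsilon$ set aside exactly as the paper does (it is re-added later at the \PPA{} level). The extra preprocessing you add to reach $(H)$ is harmless and consistent with the paper, which obtains $(H)$ for free from the triplet construction.
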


\subsubsection{From a \ccfg{} to a \PPA}

\begin{proposition}\label{prop:ccfgtoppa}
	Let $G=(N,\Sigma, S, D, \mathcal S)$ be a weakly-unambiguous \ccfg\ $G=(N,\Sigma, S, D, \mathcal S)$ such that every production rule begins with a non terminal letter (in particular $\varepsilon\notin L(G)$). Then $L(G)$ is recognized by a weakly-unambiguous \PPA{} without $\varepsilon$-transition.
\end{proposition}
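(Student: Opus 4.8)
The plan is to mimic the classical conversion of a context-free grammar in Greibach normal form into a pushdown automaton, while carrying the vector weights along as the Parikh component and keeping the semilinear constraint $\mathcal S$ unchanged, and to verify that the resulting \PPA{} is weakly-unambiguous because the derivation trees of $G$ are in bijection with the accepting runs of the automaton. Concretely, I would build a \PPA{} $\BB$ with a single control state (or a constant number of them), stack alphabet $N$, and initial stack content $S$. For each production rule $A \xto{\vect v} a\, B_1 \cdots B_r$ of $D$ (where $a \in \Sigma$ and $B_1, \ldots, B_r \in N$, using the Greibach form guaranteed by Proposition~\ref{prop:greibach}), $\BB$ has a transition that, with the symbol $A$ on top of the stack, reads the letter $a$, pops $A$, pushes $B_r \cdots B_1$ (so that $B_1$ is on top), and adds $\vect v$ to its vector. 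The acceptance condition is: the input has been entirely consumed, the stack is empty, and the accumulated vector lies in $\mathcal S$. This is exactly the weighted analogue of the standard Greibach-to-PDA construction, so the only real work is bookkeeping for the weights and the unambiguity argument.

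The key steps, in order: (1) Define $\BB$ formally as above, taking care that ``empty stack and end of input'' is the acceptance mode; if the \PPA{} model in the paper requires acceptance by final state, add a final-state gadget that only triggers when the stack is emptied. (2) Establish the standard correspondence between leftmost derivations of $G$ and runs of $\BB$: prove by induction on the length of the derivation that $A \xRightarrow{\ast} w$ in $G$ with accumulated vector $\vect v$ if and only if $\BB$, started with $A$ alone on its stack, has a run reading $w$ that empties the stack with accumulated vector $\vect v$. The inductive step peels off the first production applied to the leftmost nonterminal, which (being in Greibach form) produces the first letter of $w$; this matches the first transition taken by $\BB$, and the remaining nonterminals $B_1, \ldots, B_r$ on the stack are processed left to right, splitting $w$ accordingly. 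Because $G$ has no $\varepsilon$-productions, every transition of $\BB$ consumes exactly one input letter, so $\BB$ has no $\varepsilon$-transitions, as required. (3) Conclude $L(\BB) = L(G)$: $w \in L(G)$ iff $S \xRightarrow{\ast} w$ with some vector $\vect v \in \mathcal S$, iff $\BB$ has an accepting run on $w$. (4) Transfer weak-unambiguity: the bijection in step~(2), specialised to $A = S$ and to accepting runs, sends the (at most one) accepting derivation tree of $w$ to the (at most one) accepting run of $\BB$ on $w$, and is injective, so $\BB$ is weakly-unambiguous.

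The main obstacle I anticipate is making the bijection between derivation \emph{trees} (not merely derivations, since unambiguity is a statement about trees) and accepting runs genuinely one-to-one, rather than just a surjection. One must argue that two distinct runs of $\BB$ on the same word would force two distinct leftmost derivations and hence two distinct derivation trees of $G$; this is routine for ordinary CFGs but needs a clean statement here because a run of a \PPA{} records both a sequence of transitions and a vector, and one should check that the vector is determined by the transition sequence (it is: it is the sum of the per-transition increments), so that distinctness of runs is distinctness of transition sequences, which maps back faithfully to distinctness of the applied-production sequence of a leftmost derivation. A secondary, purely cosmetic issue is that Greibach-form rules in $G_\Phi'$ may have right-hand sides with a terminal followed by a string of nonterminals of unbounded but fixed length $r$; this only means $\BB$ pushes up to $r$ symbols in one step, which is permitted (or can be simulated with $\varepsilon$-free intermediate states only if $r$ is bounded — and it is, by the grammar being fixed), so no difficulty arises there.
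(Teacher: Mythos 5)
Your proposal is correct and follows essentially the same route as the paper: the classical Greibach-to-pushdown construction with a single control state and stack alphabet $N$, transporting the rule weights onto the transitions and keeping $\mathcal S$, with the bijection between leftmost derivations (hence derivation trees) and runs yielding both $L(\BB)=L(G)$ and the transfer of weak-unambiguity, and the absence of $\varepsilon$-transitions coming from the Greibach form. The only cosmetic differences (acceptance by empty stack versus final state, and the rules possibly carrying semilinear-set labels after the earlier normalization steps) do not affect the argument.
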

\begin{proof}
Without loss of generality, we can suppose that the production rules of $G$ are in $\Sigma\times N^*$ without altering the language nor the weak-unambiguity of $G$.

	The classical construction considers the \PPA{} $\BB=(\{q_I\},q_I,\{q_I\}, \Sigma, N,S,\delta,\mathcal S)$ where $\Delta$ is defined as follows:
	\[\Delta=\{q_0, A \xto{a, L}q_0, \beta\,:\, (A\xto{L}a\beta)\in D, a\in\Sigma, \beta\in N^*\}\]
	
It is classical that the runs of the automaton are in bijection with the left-derivation of $G$, and that $L(G)=L(\BB)$. It is not difficult to see that the semilinear sets labelling the runs and the left-derivation that are in bijection coincide. Hence, if $G$ is weakly-unambiguous, $\BB$ is also weakly-unambiguous.
\end{proof}

\begin{proposition}
	Let $\BB=(Q,q_I,F, \Sigma,\Gamma,\bot,\delta,\mathcal S)$ a weakly-unambiguous automaton of dimension $d$. Then there exists an equivalent generalized weakly-unambiguous automaton of dimension $d$, without $\varepsilon$-transitions, recognizing the same language.
\end{proposition}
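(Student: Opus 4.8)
The plan is to assemble, into a single pipeline, the grammar machinery developed throughout this section. Starting from the weakly-unambiguous \PPA{} $\BB$ (which may carry $\varepsilon$-transitions), I would first apply the triplet construction of Proposition~\ref{prop:PPAtoccfg} to obtain an equivalent weakly-unambiguous \ccfg{} $G$ of dimension $d$: the $\varepsilon$-transitions of $\BB$ simply turn into $\varepsilon$-productions of $G$, so $G$ may contain nullable symbols. A routine preprocessing then puts $G$ into a shape satisfying condition $(H)$: remove non-productive non-terminals (this affects neither the language nor the derivation trees), and inline the single unit rule $S\to[q_I,q_F,\bot]$ so that every production of the axiom starts with a terminal and $S$ occurs in no right-hand side (in the triplet construction all other rules already have a right-hand side of length at most three with a symbol of $\Sigma\cup\{\varepsilon\}$ in front). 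None of these steps introduces ambiguity.

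Next I would run the three transformations already established, in order: Proposition~\ref{prop:removalnullable} to eliminate nullable symbols, yielding a generalized weakly-unambiguous \ccfg{} $G_1$ with $L(G_1)=L(G)\setminus\{\varepsilon\}$ that still satisfies $(H)$; Proposition~\ref{prop:removalunit} to eliminate unit rules, yielding $G_2$; and Proposition~\ref{prop:greibach} to reach Greibach normal form, yielding $G_3$ in which every production rule begins with a terminal. Finally Proposition~\ref{prop:ccfgtoppa} turns $G_3$ back into a generalized weakly-unambiguous \PPA{} $\BB'$ of dimension $d$ and without $\varepsilon$-transitions (the stripping of non-head terminals into fresh non-terminals needed to reach a right-hand side in $\Sigma\times N^*$ is the harmless ``without loss of generality'' of that proposition). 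At this point $L(\BB')=L(\BB)\setminus\{\varepsilon\}$, and weak-unambiguity has been carried along each arrow.

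It then remains to re-insert the empty word in case $\varepsilon\in L(\BB)$. This is harmless: add a fresh state $q_I'$, declare it both initial and final with no incoming transitions, and for every transition of $\BB'$ leaving the old initial state add a copy leaving $q_I'$ (same letter, same stack action, same semilinear label). Since $\BB'$ has no $\varepsilon$-transitions, the only accepting run on $\varepsilon$ is the trivial run sitting in $q_I'$, and every accepting run on a non-empty word must start with one of the copied transitions, so these runs are in bijection with the accepting runs of $\BB'$; hence weak-unambiguity is preserved and exactly the word $\varepsilon$ is added, giving an automaton equivalent to $\BB$.

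The genuinely delicate parts --- preserving weak-unambiguity through nullable and unit elimination, and above all through Greibach normalization via the tree-surgery map $\Phi$ --- have already been discharged in the preceding propositions, so I expect the main obstacle of this final statement to be merely the bookkeeping: verifying that each intermediate (generalized) \ccfg{} meets the hypotheses of the next proposition in the chain, in particular condition $(H)$ and the absence of nullable/unit rules, together with the $\varepsilon$ case handled above. The real conceptual difficulty of the section sits in Proposition~\ref{prop:greibach}, not in this assembly.
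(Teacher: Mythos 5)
Your assembly of the pipeline (triplet construction of Proposition~\ref{prop:PPAtoccfg}, removal of nullable symbols, removal of unit rules, Greibach normalization via $\Phi$, then conversion back to a single-state \PPA{} by Proposition~\ref{prop:ccfgtoppa}) is exactly the route taken in the paper, and your bookkeeping remarks about condition $(H)$ are unproblematic. The genuine gap is in your last step, the re-insertion of the empty word.

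In the generalized model, the trivial run sitting in your new initial-and-final state is labeled by the semilinear set $\{\vect 0\}$, and it is accepting only if $\vect 0\in\mathcal S$. Your construction copies the outgoing transitions of the old initial state \emph{with the same semilinear labels} and leaves the constraint $\mathcal S$ untouched, so if $\vect 0\notin\mathcal S$ (which is not implied by $\varepsilon\in L(\BB)$: the original automaton may accept $\varepsilon$ through $\varepsilon$-transitions carrying non-zero vectors) the empty word is simply not accepted and nothing has been gained. The obvious patch, replacing $\mathcal S$ by $\mathcal S\cup\{\vect 0\}$, is not sound either: the transition labels of $\BB'$ are sums of sets such as $C^\varepsilon_{q,r}$, $T_B$ or singletons $\{\vect d\}$ with possibly $\vect d=\vect 0$, so a run on a non-empty word can perfectly well carry a label containing $\vect 0$; once $\vect 0$ is added to the constraint, such a run becomes accepting, which can enlarge the language or create a second accepting run for a word already accepted, destroying weak-unambiguity. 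The paper resolves this with a shift: the copies of the initial transitions leaving the new state carry the label $\{\vect 1\}+L$ instead of $L$, and the constraint becomes $\{\vect 0\}\cup(\mathcal S+\{\vect 1\})$. Every vector computed by a non-empty run from the new state then has all coordinates at least $1$, so acceptance of non-empty words is exactly as in $\BB'$, while $\varepsilon$ is accepted only by the trivial run. This shift, together with the adjustment of the constraint set, is the ingredient missing from your argument; the rest of your proof coincides with the paper's.
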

\begin{proof}
	By applying successively Propositions~\ref{prop:PPAtoccfg}, \ref{prop:removalnullable}, \ref{prop:removalunit}, \ref{prop:greibach} and \ref{prop:ccfgtoppa}, we can obtain a generalized weakly-unambiguous \ppa{} $\BB'=(\{q_I'\},q_I',\{q_I'\}, \Sigma,\Gamma',\bot',\Delta',\mathcal S)$ such that $L(\BB')=L(\BB)\backslash\{\varepsilon\}$.
	
	If $\varepsilon\notin L(\BB)$, the proof is complete.
	
	If $\varepsilon\in L(\BB)$, we want to add it to $L(\BB')$ while preserving the weak-unambiguity to finish the proof. We consider a copy $\BB''=(\{q_0\},\{q_I',q_0\},\{q_I',q_0\}, \Sigma,\Gamma',\bot',\Delta'',\mathcal S')$ of the automaton $\BB'$ that we modify by simply adding a new state $q_0$, which is the initial state and also a final state. The set of transitions $\Delta''$ is a copy of $\Delta'$, and for any transition $q_I, A \xto{a, L}q\in \Delta'$, we also add to $\Delta''$ the transition $q_0, A \xto{a, \{\vect{1}\}+L}q$ where $\vect{1}$ is the vector with $1$ in each of its coordinates. Thus, $\BB$ can only be in state $q_0$ at the beginning of a run, before reading any letter, and $q_0$ is basically a copy of $q_I$ except that it is final, to accept the empty word. Furthermore, the automaton leaves the state $q_0$ by adding $1$ to every coordinate. We then just have to adjust $\mathcal S'=\{\vect{0}\}\cup (\mathcal S + \{\vect{1}\})$ to be able to accept the run on the empty word, and then accept the runs outside $q_0$, which are exactly runs of $\BB'$ with an extra $1$ on every coordinate. It is straightforward that $L(\BB'')=L(\BB')\cup\{\varepsilon\}=L(\BB)$ and that $\BB''$ is weakly-unambiguous.
\end{proof}

\subsection{From $NPCM(\cdot, 0, \cdot)$ to \PPA and reciprocally}

Once the $\varepsilon$-transitions have been removed from the weakly-unambiguous \PPA, turning one model into the other is syntactic, as in the case without a stack  of proposition \ref{prop:unambparbcm}.

\section{Parikh tree automata}
\label{app:parikh-tree}

Let $\EE=(E,ar)$ a tuple, where $F$ is a set of symbols, and $ar:E\to \NN$ is a function that associates to each symbol $f\in E$ a number $ar(f)$, called its arity. Symbols of arity $0$ are called leaves. We suppose that there is at least one leaf in $\EE$.

We will write $\EE$ under the form $\{f(ar(f))\ :\ f\in E\}$. For instance, $\EE=\{a(0), b(0), f(2)\}$ is a set of three symbols: one binary node $f$ and two leaves $a$ and $b$.

In the following we will abusively write $f\in\EE$ instead of $f\in E$ where $\EE=(E,ar)$.

\begin{definition}
	The set of trees built on $\EE$, denoted $Tree_\EE$, is defined as follows:
	\begin{itemize}
		\item every leaf $a\in \EE$ is a tree built on $\EE$
		\item if $f\in E$ and $t_1, \ldots, t_r$ are in $Tree_{\EE}$, where $r=ar(f)$, then $(f, t_1, \ldots, t_r)\in Tree_{\EE}$. 
	\end{itemize}
\end{definition}

\begin{definition}
	A \emph{top-down Parikh tree automata} of dimension $d$ over a family of trees $Tree_\EE$ is a tuple $(Q, \EE, I, \Delta, S)$ where
	\begin{itemize}
		\item $Q$ is the set of states
		\item $I\subseteq Q$ is the set of initial states
		\item $\Delta$ is the set of transitions. Each transition is of the form: $(f,q)\xrightarrow{\mathbf v} q_1,\ldots,q_r$, where $f\in E$, $ar(f)=r\geq 0$, $\mathbf v\in\NN^d$, $q, q_1, \ldots, q_r\in Q$.\\ Notice that in the case where $ar(f)=0$, the transitions are of the form $(f,q)\xrightarrow v$. 
		\item $S\subseteq\NN^d$ is the accepting semilinear set
	\end{itemize}
\end{definition}

\begin{definition}
	Let $t\in Tree_\EE$ and $\AA=(Q, \EE, I, \Delta, S)$ a tree automaton. The set of runs of $A$ over $t$ starting by $q\in Q$, denoted by $Runs_\AA(t,q)$, is defined as follows:
		\begin{itemize}
			\item if $ht(t)=0$, \textit{e.g} $t$ is a leaf, and $((t,q)\xrightarrow{\textbf v})\in \Delta$, then $((t,q),\mathbf v)\in Runs_\AA(t,q)$
			\item if $t=(f,t_1, \ldots, t_r)$ with $ar(t)=r> 0$ and $(f,q)\xrightarrow{\mathbf v}q_1,\ldots,q_r \in\Delta$, and $\pi_i\in Runs_\AA(t_i,q_i)$ for $i=1\ldots r$, then $\pi=((f,q), \mathbf{v}, \pi_1, \ldots, \pi_r)\in Runs_\AA(t,q)$
		\end{itemize}
		
In other words, a run of $\AA$ over $t$ consists in labelling the nodes of $t$ with states in $Q$ and vectors in $\NN^d$ according to the transition rules of the automaton.

A run of $A$ over $t$ is said to be \emph{accepting} if the root is labeled by a state in $I$, and if the sum of the vectors appearing in the run belongs to $S$.
\end{definition}

\begin{example}
	Let $\EE=\{a(0), b(0), f(2)\}$, $\AA=(Q, \EE, I, \Delta, S)$ with $Q=\{q_1, q_2\}$, $I=\{q_1\}$, $S=\{(n,n)\,:\,n\in\NN\}$, and $\Delta$ given by the rules:	\begin{align*}
		&(f,q_1)\xrightarrow{(0,0)}q_2, q_2& &(b,q_1)\xrightarrow{(0,1)} &(a,q_2)\xrightarrow{(1,0)}\\
		&(f,q_2)\xrightarrow{(0,0)}q_1, q_1& &(b,q_2)\xrightarrow{(0,1)} & 
	\end{align*}
This automaton recognizes the set of trees in $Tree_\EE$ having the same number of $a$'s and $b$'s, and verifying that each leaf $a$ is at an odd distance of the root.
\end{example}

\begin{definition}
	A Parikh tree automaton is said to be \emph{unambiguous} if every accepted \emph{tree} has only one accepted run.
\end{definition}

Exactly like for Parikh automata, we can define an extended class of Parikh tree automata, labeled with semilinear sets in $\NN^d$ instead of vectors in $\NN^d$.

\begin{proposition}
	The class of (unambiguous) Parikh tree automata is equivalent to the class of (unambiguous) extended Parikh tree automata.
\end{proposition}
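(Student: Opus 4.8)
The plan is to follow the proof of Proposition~\ref{prop:equivalence-PA-generalized-PA} almost verbatim, transporting its bijection-of-runs argument from the linear runs of a word automaton to the tree-shaped runs of a tree automaton. One inclusion is immediate: a Parikh tree automaton $(Q,\EE,I,\Delta,S)$ is the special case of an extended one obtained by replacing each transition $(f,q)\xrightarrow{\vect v}q_1,\ldots,q_r$ with $(f,q)\xrightarrow{\{\vect v\}}q_1,\ldots,q_r$. This leaves the set of runs over every tree unchanged, hence preserves the accepted language and unambiguity.

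For the converse, I would let $\AA=(Q,\EE,I,\Delta,S)$ be an extended Parikh tree automaton of dimension $d$ and let $S_1,\ldots,S_r$ enumerate the distinct semilinear sets occurring on transitions of $\AA$ (leaf transitions included). Build a Parikh tree automaton $\BB=(Q,\EE,I,\Delta',S')$ of dimension $r$ with the same states, initial states and arities, putting into $\Delta'$ the transition $(f,q)\xrightarrow{\vect e_i}q_1,\ldots,q_r$ for each transition $(f,q)\xrightarrow{S_i}q_1,\ldots,q_r$ of $\AA$, where $\vect e_i\in\NN^r$ is the $i$-th unit vector, and set
\[ S' \eqdef \{\, (n_1,\ldots,n_r)\in\NN^r \;:\; S \cap (S_1^{n_1}+\cdots+S_r^{n_r}) \neq \emptyset \,\}. \]
The set $S'$ is semilinear, being defined by the Presburger formula
\[ \varphi_{S'}(n_1,\ldots,n_r) = \exists \vect{v_1},\ldots,\vect{v_r},\; (\vect{v_1}+\cdots+\vect{v_r}\in S) \;\wedge\; \bigwedge_{i=1}^r (\vect{v_i}\in S_i^{n_i}), \]
where each $\vect{v_i}\in S_i^{n_i}$ abbreviates the formula built in Lemma~\ref{lem:semilinear-relation}; this is exactly the set $S'$ used in Proposition~\ref{prop:equivalence-PA-generalized-PA}. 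Acceptance for the extended model is taken in the natural way: the semilinear set of a run is the sum of those labelling its transitions, and a run is accepting if it labels the root with an initial state and that set meets $S$.

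The heart of the argument is the obvious map $\varphi$ sending a run of $\AA$ over a tree $t$ to the run of $\BB$ over $t$ obtained by replacing each transition by its $\Delta'$-counterpart; it is a bijection (for each fixed $t$) between runs of $\AA$ over $t$ and runs of $\BB$ over $t$, and it preserves the underlying tree and its state labelling. I would then prove, by induction on the structure of a run $\pi$ of $\AA$, that if $m_i$ denotes the number of nodes of $\pi$ whose transition carries $S_i$ and $\vect m=(m_1,\ldots,m_r)$, then the semilinear set labelling $\pi$ equals $S_1^{m_1}+\cdots+S_r^{m_r}$, while $\varphi(\pi)$ is labelled by the vector $\vect m$. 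Consequently $\pi$ is accepting for $\AA$ (root state in $I$ and $S\cap(S_1^{m_1}+\cdots+S_r^{m_r})\neq\emptyset$) if and only if $\varphi(\pi)$ is accepting for $\BB$ (root state in $I$ and $\vect m\in S'$). Since $\varphi$ is, for every tree, a bijection between the runs of $\AA$ and of $\BB$ preserving acceptance, we get $L(\AA)=L(\BB)$, and $\BB$ is unambiguous whenever $\AA$ is. The only step needing genuine attention is the semilinearity of $S'$, which is handled verbatim by Lemma~\ref{lem:semilinear-relation}; the rest is routine bookkeeping, with no new idea beyond the word case.
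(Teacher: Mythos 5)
Your proof is correct and is precisely the ``simple adaptation of the proof for Parikh automata'' that the paper's own one-line proof invokes: you transport the run bijection of Proposition~\ref{prop:equivalence-PA-generalized-PA} to tree-shaped runs, reuse the unit-vector encoding together with the constraint $S'$ built from Lemma~\ref{lem:semilinear-relation}, and observe that acceptance, the language, and unambiguity are preserved. Nothing further is needed.
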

\begin{proof}
	Simple adaptation of the proof for Parikh automata.
\end{proof}

In terms of generating series, unambiguous Parikh tree automata have holonomic series:

\begin{proposition}
	Let $\AA$ be an unambiguous Parikh tree automaton. Let's denote by $f(x)=\sum_{n\in\NN}a_nx^n$ the generating series of $\AA$ counting the number of accepted \emph{trees} with $n$ nodes. Then $f$ is holonomic.
\end{proposition}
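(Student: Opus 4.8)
The plan is to mimic the proof of Proposition~\ref{prop:PPA_holonomic} (the pushdown Parikh automaton case), replacing one-variable power series counting \emph{words} by one-variable power series counting \emph{trees}, and using the fact that the generating series of the set of runs of a (possibly non-deterministic) tree automaton is algebraic. Concretely, let $\AA=(Q,\EE,I,\Delta,S)$ be an unambiguous Parikh tree automaton of dimension $d$, with semilinear constraint $C=S\subseteq\NN^d$. As in Proposition~\ref{prop:PPA_holonomic}, I would introduce a multivariate weighted generating series $G(x,y_1,\ldots,y_d)$ such that $[x^ny_1^{i_1}\cdots y_d^{i_d}]G$ counts the number of trees with $n$ nodes accepted by $\AA$ via a run whose vector sum is $(i_1,\ldots,i_d)$, \emph{without} imposing membership in $C$; here $x$ tracks the number of nodes. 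It suffices to show $G$ is holonomic, since then $f(x)=G(x,1,\ldots,1)$ is holonomic by Proposition~\ref{prop:substitution} (specialization to $1$): for each fixed $n$ the coefficient $[x^n]G$ is a polynomial in $y_1,\ldots,y_d$, because a tree with $n$ nodes has runs of bounded vector-weight, so the specialization is valid; and the sum over all vectors of the number of trees accepted with that vector equals the total number of accepted trees precisely because $\AA$ is \emph{unambiguous} (each accepted tree has exactly one accepting run, hence exactly one associated vector).

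The key step is to express $G$ as a Hadamard product $\overline A\odot\overline C$ of two series, one algebraic and one rational, and conclude by Theorem~\ref{thm:hadamard}. For $\overline C$, I take exactly $\overline C(x,y_1,\ldots,y_d)=\frac{1}{1-x}\,C'(y_1,\ldots,y_d)$ where $C'$ is the (rational) characteristic series of the semilinear set $C$, so that $[x^ny_1^{i_1}\cdots y_d^{i_d}]\overline C$ is $1$ if $(i_1,\ldots,i_d)\in C$ and $0$ otherwise, for every $n$. For $\overline A$, I let $[x^ny_1^{i_1}\cdots y_d^{i_d}]\overline A$ count the number of runs of $\AA$ over trees with $n$ nodes, starting from an initial state, with vector sum $(i_1,\ldots,i_d)$, again \emph{not} requiring membership in $C$. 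The point is that $\overline A$ is \emph{algebraic}: the set of accepting runs of a top-down tree automaton over $Tree_\EE$ is itself a regular tree language (one records, at each node, the chosen transition), and the generating series of a regular tree language — counted by number of nodes, with an extra variable $y_j$ per unit of the $j$-th coordinate of the transition vectors — is algebraic, as it is the component of the solution of a proper polynomial system built from the transition rules. Then $\overline A\odot\overline C$ counts, for each $(n,\vect i)$, the number of runs over $n$-node trees that are accepting (reach an initial state at the root) \emph{and} satisfy $\vect i\in C$; by unambiguity this equals the number of accepted trees with $n$ nodes and vector $\vect i$, i.e.\ $G=\overline A\odot\overline C$, which is holonomic.

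I would organize the write-up in three short steps: (1) build the polynomial system witnessing algebraicity of $\overline A$ — for each state $q$ and each symbol $f$ of arity $r$ with a transition $(f,q)\xto{\vect v}q_1,\ldots,q_r$, contribute the monomial $x\,y_1^{v_1}\cdots y_d^{v_d}\,A_{q_1}\cdots A_{q_r}$ to the equation for $A_q$ (with $x\,y_1^{v_1}\cdots y_d^{v_d}$ for leaves), and set $\overline A=\sum_{q\in I}A_q$; (2) recall $\overline C$ and its rationality from Section~\ref{sec:semilinear}; (3) verify the Hadamard-product identity coefficient by coefficient and invoke Theorem~\ref{thm:hadamard} and Proposition~\ref{prop:substitution}. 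The main obstacle, such as it is, is purely bookkeeping: checking that the specialization $y_j=1$ is legitimate (finiteness of each $[x^n]G$ as a polynomial in the $y_j$'s, which follows because an $n$-node tree has at most $n$ transitions, each adding a bounded vector, so the $\vect i$-support of $[x^n]G$ is finite) and, more importantly, using unambiguity at the right place — it is used for trees, not for runs, so $\overline A$ genuinely overcounts (a single tree may have many runs) but after intersecting with the constraint via the Hadamard product and then summing over vectors the overcount collapses to exactly the accepted trees. No genuinely new idea beyond the \PPA case is needed; one only replaces "context-free language of runs, hence algebraic series" by "regular tree language of runs, hence algebraic series."
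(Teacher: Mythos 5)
Your proposal is correct and is exactly the argument the paper intends when it says ``same techniques as for every Parikh version'': the weighted series of runs (algebraic, via the polynomial system attached to the regular tree language of runs) Hadamard-multiplied with $\frac{1}{1-x}C'(y_1,\ldots,y_d)$, followed by specialization of the $y_j$'s to $1$, with unambiguity invoked precisely where you invoke it. Your bookkeeping on the validity of the specialization and on where over-counting collapses matches the word/PPA proofs of Propositions~\ref{prop:unambPAholonomic} and~\ref{prop:PPA_holonomic}, so nothing further is needed.
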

\begin{proof}
	Same techniques of proofs as for every Parikh versions we have seen so far. 
\end{proof}

However, the generating series coming from unambiguous Parikh automata do not provide a larger class of holonomic series than the ones coming from unambiguous CCFG.

\begin{definition}[prefix description of a run]
	Let $\AA=(Q, \EE, I, \Delta, S)$ a tree automaton. Let us write $\Sigma=E\times Q\times V$, where $V$ is the set of vectors labelling the transitions of $\AA$. The prefix description of a run $\pi$ of $\AA$ is the string $Prefix(\pi)\in\Sigma^*$, defined as follows:
		\begin{itemize}
			\item if $\pi=((t,q),\mathbf v)$, where $t$ is a leaf, then $Prefix(\pi)=(t,q,\mathbf v)$
			\item if $\pi=((f,q), \mathbf{v}, \pi_1, \ldots, \pi_r)$, then $Prefix(\pi)=(f,q, \mathbf v)Prefix(\pi_1)\ldots Prefix(\pi_r)$
		\end{itemize}
		It is easy to see that from $Prefix(\pi)$ we can rebuild $\pi$, because every symbol in $\EE$ has a fixed arity, and $|\pi|=|Prefix(\pi)|$ (\textit{e.g.} the number of nodes is $\pi$ is equal to the length of $Prefix(\pi)$).
		
		We call $Prefix(\AA)$ the language of the prefixes of the accepting runs in $\AA$. 
\end{definition}

\begin{proposition}
	If $\AA$ is an unambiguous tree automaton, then the language $Prefix(\AA)$ is recognized by a weakly-unambiguous CCFG.  
\end{proposition}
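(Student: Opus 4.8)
If $\AA$ is an unambiguous (top-down) Parikh tree automaton, then the language $Prefix(\AA)$ of prefix descriptions of accepting runs of $\AA$ is recognized by a weakly-unambiguous CCFG.

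The plan is to build a constrained context-free grammar that reads the prefix description of a run left-to-right, simulating the top-down tree automaton. First I would fix the tree automaton $\AA=(Q,\EE,I,\Delta,S)$ of dimension $d$, set the terminal alphabet to $\Sigma = E\times Q\times V$ where $V$ is the set of vectors appearing in the transitions, and introduce one nonterminal $X_q$ for each state $q\in Q$, together with a fresh axiom $S_0$. The intended invariant is that $X_q$ derives exactly the prefix descriptions of the runs in $Runs_\AA(t,q)$ over all trees $t$, carrying as its associated semilinear weight the multiset-sum of the vectors appearing in that run. Concretely, for every transition $(f,q)\xto{\vect v}q_1,\ldots,q_r$ of $\Delta$ with $ar(f)=r$, I add the rule $X_q \xto{\vect v} (f,q,\vect v)\, X_{q_1}\cdots X_{q_r}$; in particular for a leaf $a$ with $(a,q)\xto{\vect v}$ in $\Delta$ the rule is $X_q \xto{\vect v}(a,q,\vect v)$. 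Finally I add $S_0 \xto{\vect 0} X_q$ for each initial state $q\in I$. The semilinear constraint of the grammar is $S$ itself. Note these rules already have a terminal symbol as their first right-hand symbol (except the axiom rules), so the grammar is in the Greibach-type form used elsewhere in the paper, though that is not required here.

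The key step is to establish, by induction on the height of $t$ (equivalently on the length of the derivation / of $Prefix(\pi)$), a bijection between $Runs_\AA(t,q)$ and the set of derivation trees of $G$ rooted at $X_q$ whose frontier equals $Prefix(\pi)$, and to check that this bijection preserves both the frontier word and the sum of vectors/semilinear sets. The base case is the leaf transitions; the inductive step peels off the root transition $(f,q)\xto{\vect v}q_1,\ldots,q_r$ of a run $\pi=((f,q),\vect v,\pi_1,\ldots,\pi_r)$, matches it with the unique grammar rule $X_q\xto{\vect v}(f,q,\vect v)X_{q_1}\cdots X_{q_r}$ having $(f,q,\vect v)$ as its head letter, and applies the induction hypothesis to each $\pi_i\in Runs_\AA(t_i,q_i)$. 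Because $Prefix$ is itself a bijection (any symbol in $\EE$ has a fixed arity, so $Prefix(\pi)$ uniquely determines $\pi$), this chains into a bijection between accepting runs of $\AA$ on trees and accepting derivation trees of $G$, preserving the accepting condition ``sum of vectors lies in $S$''. Hence $L(G)=Prefix(\AA)$.

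For weak-unambiguity, suppose some word $w\in\Sigma^*$ had two distinct accepting derivation trees in $G$. Each corresponds, via the bijection above, to an accepting run of $\AA$; and since $Prefix$ is injective, these two runs are runs on trees $t$ and $t'$ with $Prefix(\pi)=w=Prefix(\pi')$, forcing $t=t'$ and $\pi=\pi'$, contradicting that $\AA$ is unambiguous (at most one accepting run per accepted tree). So $G$ is weakly-unambiguous. The main obstacle, and the only genuinely non-routine point, is being careful that the grammar's derivation trees really are in bijection with the automaton's runs rather than merely the generated languages being equal: this is where the fixed-arity property of $\EE$ and the fact that each grammar rule is uniquely determined by its head terminal $(f,q,\vect v)$ are used, and where one must argue that no ``accidental'' second derivation arises from a rule of a different state sharing a right-hand side — which cannot happen because the head terminal records the state $q$.
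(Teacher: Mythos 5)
Your proof is correct and follows essentially the same route as the paper: translate each transition of $\AA$ into a production whose head terminal $(f,q,\vect v)$ records the transition, establish by induction a frontier- and weight-preserving bijection between terminal derivation trees and runs, and deduce weak-unambiguity from this bijection together with the injectivity of the prefix encoding / unambiguity of $\AA$. The only (harmless, in fact simplifying) difference is that you index nonterminals by states $X_q$, whereas the paper uses pairs $[f,q]$ and enumerates the children's root symbols in the rules.
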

\begin{proof}
	It is a slight modification of the construction seen before. From the automaton $\AA=(Q, \EE, I, \Delta, S)$, we build a constrained context-free grammar $G=(N, \Sigma, A, D, S)$, where $\Sigma=\{(f,q, \mathbf v)\,:\,f\in \EE, q\in Q, \mathbf v\in V\}$, $N=\{[f,q]\,:\,f\in \EE, q\in Q\}\cup \{A\}$, $A$ is the axiom, and $D$ is defined from $\Delta$ by:
	\begin{itemize}
		\item for every $q\in I$, for every $f\in \EE$, we add to $D$ the transition $A\xrightarrow{\mathbf 0}[f,q]$,
		\item for every transition of the form $(f,q)\xrightarrow{\mathbf v}q_1,\ldots,q_r \in\Delta$ with $r>0$ and for every $(f_1, \ldots, f_r)\in \EE^r$, we add to $D$ the transition $[f,q]\xrightarrow{\mathbf v}(f,q, \mathbf v)[f_1, q_1]\ldots[f_r, q_r]$,
		\item for every transition of the form $((a,q)\xrightarrow{\mathbf v})\in\Delta$ (in this case $a$ is a leaf in $\EE$), we add to $D$ the transition $[a,q]\xrightarrow{\mathbf v}(a,q,\mathbf v)$.
	\end{itemize}	
	By induction we can easily prove that the terminal derivation trees starting at $[f,q]$ simulate excatly the runs of $\AA$ over trees of root $f$, starting from state $q$; furthermore if the terminal derivation tree simulates a run $\pi$, then it recognizes the word $Prefix(\pi)$, and the sum of the vectors of the derivation is the same as the sum of the vectors in $\pi$.
	
	The weak-unambiguity of $G$ comes directly from the unambiguity of $\AA$  and the one-to-one correspondance between the runs in $\AA$ and the derivation trees of $G$.
	\end{proof}
	
\begin{remark}
	This result shows that in terms of family of generating series, unambiguous tree automata do not extend the class of holonomic series that we have seen so far.
\end{remark}

\section{Proofs omitted in Section~\ref{sec:non-ambiguity}}

\subsection{Properties of the language \leven}\label{sec:leven}

\leven is the language of sequences of encoded numbers having two consecutive equal values,  the first one being at an odd position. This language is deterministic context-free, hence its generating series is algebraic:  the corresponding deterministic pushdown automaton checks pair by pair if two consecutive numbers are equal, starting from position 1. It is also recognized by a non-deterministic PA, since it can guess the correct odd position and use a 2-dimension vector to check that the two consecutive values are equal. This automaton is weakly-ambiguous as the word $\underline{1}\underline{1}\underline{1}\underline{1} $ has two accepting runs. 

Recall that the $a$-subpath decomposition of $S$ consists of the following steps:
\begin{itemize}
    \item Follow the path until a state $q$ is met for the second time (or the path ends), forming an initial subpath of the form $w_1\sigma_1$ (or $w_1$), where $\sigma_1$ is an $a$-piece of origin $q$.
    \item Iteratively repeat on the suffix $(w_1\sigma_1)^{-1}S$ of $S$.
    \item Once completed, we obtain a decomposition of the form $S=w_1\sigma_1 w_2 \sigma_2 \cdots w_{f}\sigma_tw_{t+1}$, which we factorize into $S=w_1\sigma_1^{s_1} w_2 \sigma_2 \cdots w_{f}\sigma_f^{s_f}w_{f+1}$
    by grouping as much as possible consecutive identical $\sigma_i$'s when they are separated by empty subpaths: $\sigma_4\varepsilon\sigma_4\varepsilon\sigma_4 \mapsto \sigma_4^3$.
\end{itemize}
This decomposition is unique, as it is produced by a deterministic algorithm.

\begin{figure}[t]
\noindent\begin{tikzpicture}
\node[draw,circle] (q0) at (0,0) {$q_0$};
\node[draw,circle] (q1) at (2,0) {$q_1$};
\node[] (inv1) at (3,0) {};
\draw[->,decorate,decoration={snake,amplitude=.4mm,segment length=2mm,post length=1mm}] (q0) -- node[above]{$w_1$} (q1);
\draw[->,decorate,decoration={snake,amplitude=.4mm,segment length=2mm,post length=1mm}] (q1) to [in=130,out=40,looseness=12] node[above]{$s_1\times\sigma_1$} (q1);
\draw[->,dotted] (q1) -- (inv1);

\node[] (inv2) at (4,0) {};
\node[draw,circle] (qi) at (5,0) {$q$};
\node[] (inv3) at (6,0) {};
\draw[->,dotted] (inv2) -- (qi);
\draw[->,dotted] (qi) -- (inv3);
\draw[->,blue,decorate,decoration={snake,amplitude=.4mm,segment length=2mm,post length=1mm}] (qi) to [out=40,in=130,looseness=12] node[above]{$s_i\times\sigma_i$} (qi);

\node[] (inv4) at (7,0) {};
\node[draw,circle] (qj) at (8,0) {$q$};
\node[] (inv5) at (9,0) {};
\draw[->,dotted] (inv4) -- (qj);
\draw[->,dotted] (qj) -- (inv5);
\draw[->,blue,decorate,decoration={snake,amplitude=.4mm,segment length=2mm,post length=1mm}] (qj) to [in=130,out=40,looseness=12] node[above]{$s_j\times\sigma_j$} (qj);

\node[] (inv5) at (10,0) {};
\node[draw,circle] (qf2) at (11,0) {$q_f$};
\node[draw,circle] (qf3) at (13,0) {$\ \ $};

\draw[->,decorate,decoration={snake,amplitude=.4mm,segment length=2mm,post length=1mm}] (qf2) to [in=130,out=40,looseness=12] node[above]{$s_f\times\sigma_f$} (qf2);
\draw[->,decorate,decoration={snake,amplitude=.4mm,segment length=2mm,post length=1mm}] (qf2) -- node[above]{$w_{f+1}$} (qf3);
\draw[->,dotted] (inv5) -- (qf2);
\end{tikzpicture}

\noindent\begin{tikzpicture}
\node[draw,circle] (q0) at (0,0) {$q_0$};
\node[draw,circle] (q1) at (2,0) {$q_1$};
\node[] (inv1) at (3,0) {};
\draw[->,decorate,decoration={snake,amplitude=.4mm,segment length=2mm,post length=1mm}] (q0) -- node[above]{$w_1$} (q1);
\draw[->,decorate,decoration={snake,amplitude=.4mm,segment length=2mm,post length=1mm}] (q1) to [in=130,out=40,looseness=12] node[above]{$s_1\times\sigma_1$} (q1);
\draw[->,dotted] (q1) -- (inv1);

\node[] (inv2) at (4,0) {};
\node[draw,circle] (qi) at (5,0) {$q$};
\node[] (inv3) at (6,0) {};
\draw[->,dotted] (inv2) -- (qi);
\draw[->,dotted] (qi) -- (inv3);
\draw[->,red,decorate,decoration={snake,amplitude=.4mm,segment length=2mm,post length=1mm}] (qi) to [out=40,in=130,looseness=12] node[above]{$s_j\times\sigma_j+s_i\times\sigma_i$} (qi);

\node[] (inv4) at (7,0) {};
\node[draw,circle] (qj) at (8,0) {$q$};
\node[] (inv5) at (9,0) {};
\draw[->,dotted] (inv4) -- (qj);
\draw[->,dotted] (qj) -- (inv5);

\node[] (inv5) at (10,0) {};
\node[draw,circle] (qf2) at (11,0) {$q_f$};
\node[draw,circle] (qf3) at (13,0) {$\ \ $};

\draw[->,decorate,decoration={snake,amplitude=.4mm,segment length=2mm,post length=1mm}] (qf2) to [in=130,out=40,looseness=12] node[above]{$s_f\times\sigma_f$} (qf2);
\draw[->,decorate,decoration={snake,amplitude=.4mm,segment length=2mm,post length=1mm}] (qf2) -- node[above]{$w_{f+1}$} (qf3);
\draw[->,dotted] (inv5) -- (qf2);
\end{tikzpicture}
\caption{If in the decomposition of an $a$-subpath there are two $a$-pieces $\sigma_i$ and $\sigma_j$ having same origin, then we can alter the path in way it computes the same vector. It is therefore impossible if the automaton is weakly-unambiguous and $S$ is part of an accepting path.\label{fig:alamain}}
\end{figure}
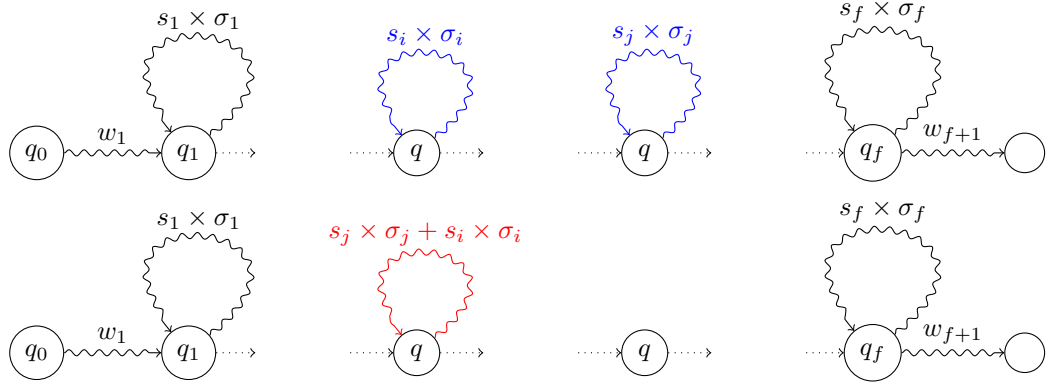

We now prove by contradiction that if $S$ has decomposition
$S=w_1\sigma_1^{s_1}w_2\cdots w_f\sigma_1^{s_f}w_{f+1}$ then  the $\sigma_i$'s have distinct origins: we assume that for $i< j$, the pieces $\sigma_i$ and $\sigma_j$ have same origin $q$.  Consider the $a$-path $S'$ defined by (it is depicted on Figure~\ref{fig:alamain}):
\[
S'=w_1\sigma_1^{s_1}w_2\cdots w_{i-1}\sigma_j^{s_j}\sigma_i^{s_i}w_i\cdots w_{j-1}w_{j} \cdots w_f\sigma_1^{s_f}w_{f+1}.
\]
The two paths are not equal: 
\begin{itemize}
    \item If $\sigma_i=\sigma_j$ then necessarily there is a non-empty $w_k$ or a $\sigma_k\neq \sigma_i$ between them, otherwise the path would contain $\sigma_i^{s_i+s_j}$ instead, by construction. 
    \item If $\sigma_i\neq\sigma_j$ then we do some $\sigma_j$-loops around $q$ before the $\sigma_j$-loops.
\end{itemize}
Moreover, the two $a$-paths $S$ and $S'$ have same length and compute the same vector, by commutativity of the vector addition. This is not possible as $S$ is part of an accepting run in a weakly-unambiguous automaton: all the $\sigma_i$'s have different extremities. This implies that there are at most $|Q_\AA|$ different $a$-pieces, which establishes the property $f\leq |Q_\AA|$.

To prove that there is a bounded number of distinct possible signatures for a given $\AA$, we just have to observe that there are finitely many possibilities for the $w_i$'s, as they are simple paths, and that there also are finitely many $a$-pieces.

\subsection{Decidability of the ambiguity of a PA}

\begin{proposition}\label{prop:decidability-weak-unambiguity}
Given a Parikh automaton $\AA$, it is decidable if it is weakly-ambiguous.
\end{proposition}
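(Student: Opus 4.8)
The plan is to reduce weak-unambiguity of a Parikh automaton $\AA$ to an emptiness test for a Parikh automaton built from a product construction. Recall that a PA is weakly-unambiguous if and only if no word has two distinct accepting runs. The natural object to consider is the ``self-product'' $\AA \times \AA$ that reads a single input word and simulates two runs of $\AA$ simultaneously, together with a mechanism to detect that the two simulated runs actually differ somewhere.

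First I would define an auxiliary PA $\mathcal{B}$ of dimension $2d$ (where $d$ is the dimension of $\AA$), whose states are pairs $(p,q) \in Q_\AA \times Q_\AA$ augmented by a single bit $b \in \{0,1\}$ recording ``the two runs have already used a different transition''. On a letter $a$, for each pair of transitions $t_1 = (p, (a, \vect{v_1}), p')$ and $t_2 = (q, (a, \vect{v_2}), q')$ of $\AA$, $\mathcal{B}$ has a transition from $(p,q,b)$ to $(p', q', b')$ labelled $(a, (\vect{v_1}, \vect{v_2}))$, where $b' = 1$ if $b = 1$ or $t_1 \neq t_2$, and $b' = b$ otherwise. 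The initial state is $(q_I, q_I, 0)$; the final states are the $(p, q, 1)$ with $p, q \in F$. For the constraint, I use the semilinear set $C_\AA \times C_\AA = \{(\vect{u},\vect{w}) : \vect{u} \in C_\AA \text{ and } \vect{w} \in C_\AA\}$, which is semilinear (as in Proposition~\ref{prop:intersectionPA}). An accepting run of $\mathcal{B}$ on a word $w$ is then exactly a pair of accepting runs of $\AA$ on $w$ whose transition sequences differ in at least one position; hence $L(\mathcal{B}) \neq \emptyset$ if and only if $\AA$ is \emph{not} weakly-unambiguous.

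It then suffices to invoke the decidability of emptiness for Parikh automata, which is classical~\cite{Klaedtke03} (it reduces to satisfiability of a Presburger formula obtained by combining the Parikh image of the underlying finite automaton — itself semilinear by Parikh's theorem — with the constraint $C_\AA \times C_\AA$). Putting these together yields an effective procedure: construct $\mathcal{B}$, test $L(\mathcal{B}) = \emptyset$, and answer accordingly. To get the polynomial-time bound mentioned in the main text, I would instead note that the underlying graph of $\mathcal{B}$ has $O(|Q_\AA|^2)$ states and $O(|\Delta_\AA|^2)$ transitions, and that deciding whether some accepting run satisfies the constraint amounts to checking, for the restriction $\mathcal{B}'$ of $\mathcal{B}$ to states reachable from the initial state and co-reachable from a final state, whether there is a reachable/co-reachable configuration — this can be handled by testing feasibility of a linear system (the cycle space of $\mathcal{B}'$ together with a path, intersected with the linear equalities/inequalities defining $C_\AA \times C_\AA$), which is solvable in polynomial time.

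The main obstacle is the constraint-handling step: emptiness of a PA is not a pure graph-reachability question, because one must certify that \emph{some} accepting run produces a vector inside the semilinear set, and the bound on that vector can be large. The clean way around it is to use the standard fact that the set of labels of accepting runs of a PA is semilinear and effectively computable (one constructs a Presburger formula describing it from the reachability relation of the underlying NFA and the constraint), so emptiness is decidable; for the complexity refinement one exploits that the reachability semilinear set of a graph has a polynomial-size description in terms of simple cycles, reducing everything to a polynomial-size mixed integer/linear feasibility test whose satisfiability over $\mathbb{N}$ can be decided in polynomial time when the number of ``dimensions'' (here fixed by $d$ and the number of periods in the presentation of $C_\AA$) is treated appropriately. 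The remaining details — verifying the bijection between accepting runs of $\mathcal{B}$ and pairs of distinct accepting runs of $\AA$, and bookkeeping the bit $b$ correctly — are routine.
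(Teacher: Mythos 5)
Your proposal is correct and follows essentially the same route as the paper: build a self-product Parikh automaton that simulates two runs of $\AA$ on the same word, detect that the runs diverge, constrain both computed vectors to lie in $C_\AA$, and reduce to emptiness of a Parikh automaton, which is decidable. The only (cosmetic) difference is that you record divergence with a bit in the state, whereas the paper records it in an extra vector dimension whose value is required to be positive by the semilinear constraint; both mechanisms yield the same bijection between accepting runs of the product and pairs of distinct accepting runs of $\AA$.
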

\begin{proof}
We prove that given a PA $\AA$, we can decide if it is weakly-ambiguous.

We use the same construction for the automaton of the intersection of two weakly-unambiguous automata: we starts by building the product-automaton $\CC$ recognizing $\AA_1 \cap \AA_2$, by taking $\AA_1=\AA_2=\AA$ and we modify it slightly by adding a new dimension. The construction is the same except for the transitions.

For every pair of transitions $(t_1, t_2)\in \delta_\AA\times\delta_\AA$, labeled by the same letter $a\in\Sigma$, with $t_1=(q_1,(a,\vect{v_1}),q'_1)$ and $t_2=(q_2,(a,\vect{v_2}),q'_2)$, then $((q_1,q_2),(a,(\vect{v_1}, \vect{v_2}, x_{t_1,t_2})),(q'_1,q'_2))\in\delta_{\CC}$, where $x_{t_1,t_2}=0$ if $t_1=t_2$, 1 otherwise. In other words, the last dimension counts by how many transitions the two simulated runs differ.

The semilinear set is the same as for the intersection, with the supplementary condition that the last value for the additional dimension must be bigger than $0$.

As $\AA$ is weakly-unambiguous if and only no accepted word has two accepted runs, it is direct to see that $\AA$ is weakly-unambiguous if and only the language recognized by $\CC$ is empty. And Emptiness is decidable for PA. 
\end{proof}

\subsection{Undecidability of inherent weak-ambiguity}

\begin{proposition}\label{prop:undecidability-inherent-weak-ambiguity}
Inherent weak-unambiguity is undecidable for PA.
\end{proposition}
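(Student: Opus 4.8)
The plan is to obtain the result as a direct application of Greibach's theorem \cite{Greibach1968}, in exactly the way one proves that inherent ambiguity is undecidable for context-free languages. Recall the relevant form of that theorem: if $\mathcal{C}$ is a family of languages that is \emph{effectively} closed under the usual AFL-type operations (union, concatenation with regular sets, intersection with regular sets, and $\lambda$-free (inverse) homomorphism), and if the universality problem ``$L=\Sigma^*$?'' is undecidable for $\mathcal{C}$, then every property $P$ of the languages of $\mathcal{C}$ that holds for all regular languages but fails for at least one language of $\mathcal{C}$ is undecidable. I would apply this with $\mathcal{C}$ the class of languages accepted by Parikh automata and $P(L)$ the predicate ``$L\in\mathrm{RCM}$'', i.e.\ ``$L$ is accepted by some weakly-unambiguous PA''. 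Since a property of languages presented by PA is decidable if and only if its complement is, undecidability of $P$ is precisely the statement of the proposition.

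Next I would check the three hypotheses for $\mathcal{C}$. \emph{Closure properties:} these are routine effective constructions on PA. Intersection with a regular language is the product construction of Proposition~\ref{prop:intersectionPA}, taking $\BB$ to be a DFA with the trivial (one-dimensional, all-zero) constraint; union is handled dually by a disjoint-union construction that combines the two semilinear constraints by stacking dimensions; concatenation with a regular language and ($\lambda$-free inverse) homomorphism act only on the transition labels and the state space and leave the semilinear constraint untouched. All of these are clearly effective. \emph{Undecidability of universality:} this is classical for PA; one may cite \cite{Ibarra78} via the equivalence with reversal-bounded counter machines, or recall the standard encoding — the complement of the set of valid halting computation histories of a Turing machine $M$ is accepted by a PA (a single reversal-bounded counter suffices to match a guessed position of one configuration block against the same position of the next block), and this language equals $\Sigma^*$ if and only if $M$ does not halt. \emph{Non-triviality of $P$:} every regular language is in RCM, since a DFA, viewed as a PA of dimension $1$ with all vectors equal to $\vect 0$ and constraint $\{\vect 0\}$, has exactly one run per accepted word and is thus weakly-unambiguous; on the other hand $P$ fails for Shamir's language $\mathcal{S}$ (equivalently $\mathcal{D}$) of Section~\ref{sec:2 examples}, which is accepted by a weakly-ambiguous PA but is inherently weakly-ambiguous.

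With the three hypotheses in place, Greibach's theorem yields the undecidability of $P$ on $\mathcal{C}$, which is exactly the undecidability of inherent weak-unambiguity for PA. The only genuine work in turning this sketch into a full proof is bookkeeping: pinning down the precise list of closure operations demanded by the exact statement of \cite{Greibach1968} and verifying that PA satisfy them (all standard, and all of the same flavour as Proposition~\ref{prop:intersectionPA}), and, if one prefers not to cite it, spelling out the short computation-histories argument for the undecidability of universality of PA. I expect this last point — producing a clean self-contained justification that universality is undecidable for PA — to be the main (and only mildly technical) obstacle; everything else is either immediate or already available in the paper.
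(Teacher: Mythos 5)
Your overall strategy is the paper's: reduce from universality of PA (undecidable) via Greibach's theorem, using an inherently weakly-ambiguous PA language such as Shamir's $\mathcal{S}$ as the witness of non-triviality. However, there is a genuine gap: the version of Greibach's theorem you invoke is mis-stated. The actual hypothesis list in \cite{Greibach1968} is not only that the \emph{class} $\mathcal{C}$ be effectively closed under union and (marked) concatenation with regular sets and that universality be undecidable; it also requires that the \emph{property} $P$ itself be preserved under left quotient by a single symbol (equivalently, by a word). Dropping that hypothesis, as you do, makes the cited statement unsupported, and it is exactly the point where the real work sits: unrolling the reduction, one builds $L = L_1 c\,\Sigma_2^* \cup \Sigma_1^* c\, L_2$ and, from the assumption that $L$ is accepted by a weakly-unambiguous PA, one must conclude that $(yc)^{-1}\bigl(L \cap yc\,\Sigma_2^*\bigr) = L_2$ is too. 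This needs (i) closure of the weakly-unambiguous class under intersection with a regular language (available via Proposition~\ref{prop:intersectionPA}) and, crucially, (ii) closure of the weakly-unambiguous class under left quotient by a fixed word, which is established by allowing $\varepsilon$-transitions and then removing them while preserving weak-unambiguity (Proposition~\ref{prop:eps-removal-wupa}). The paper even flags this as the main point and warns that the analogous closure under quotient by a regular language \emph{fails} (e.g.\ $(a^*b)^{-1}L$ can be inherently weakly-ambiguous while $L$ is not), so it is not a formality that can be absorbed into ``AFL-type bookkeeping.''

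Concretely, to repair your proposal you should (a) state Greibach's theorem with the quotient-preservation hypothesis on $P$, and (b) verify that hypothesis for $P(L)=$ ``$L$ is accepted by a weakly-unambiguous PA,'' by exhibiting the quotient construction and arguing it preserves weak-unambiguity; alternatively, bypass the black box and carry out the explicit reduction $L = L_1 c\,\Sigma_2^* \cup \Sigma_1^* c\, L_2$ as the paper does, where the two implications only use regularity of $\Sigma_1^* c\,\Sigma_2^*$, closure under intersection with $yc\,\Sigma_2^*$, and closure under quotient by the word $yc$. The remaining ingredients of your sketch (undecidability of universality for PA, regular languages being weakly-unambiguously recognizable, $\mathcal{S}$ being inherently weakly-ambiguous) are fine and match the paper's.
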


The notion of weak-unambiguity, for languages recognized by Parikh automata, verifies all the criteria given in \cite{Greibach1968}, and thus is undecidable. We unroll here the key elements of the general proof from Greibach, for completeness.

We reduce the universality problem for languages recognized by PA, which is undecidable (see \cite{Klaedtke02}). Given an instance $L_1$ of the universality problem, we want to build an instance $L$ for the problem of  weak-unambiguity such that \[\hfill L_1=\Sigma_1^* \Leftrightarrow L \textnormal{\ is\ inherently\ weakly-unambiguous}.\hfill\]

The language $L$ is built as follows. Let's denote by $\Sigma_1$ the alphabet of $L_1$, $c$ a new symbol not in $\Sigma_1$, and finally let $L_2$ be an inherently weak-ambiguous language in PA (for instance the language $S$ of Section~\ref{sec:non-ambiguity} ) over an alphabet $\Sigma_2$ (we can have $\Sigma_1\cap\Sigma_2\not = \emptyset$).

We define then $L=L_1c\Sigma_2^*\ \cup\ \Sigma_1^*cL_2$. Then: $L_1=\Sigma_1^* \Leftrightarrow L \textnormal{\ is\ weakly-unambiguous}$. Indeed:
\begin{itemize}
\item[\textcolor{itemi}{\bm{$\Rightarrow$}}] If $L_1=\Sigma_1^*$ then $L=\Sigma_1^*c\Sigma_2^*$ is regular, so can be recognized weakly-unambiguously by a PA.
\item[\textcolor{itemi}{\bm{$\Leftarrow$}}] If $L$ is weakly-unambiguous, let's show that $L_1 = \Sigma_1^*$.

By contradiction, if $L_1\not = \Sigma_1^*$, then there would exist a word $y\in \Sigma_1^*$ such that $y\not\in L_1$.

As $yc\Sigma_2^*$ is regular,  $L\cap yc\Sigma_2^*$ is weakly-unambiguous since the class of weakly-unambiguous PA languages is stable by intersection.

As $y\not\in L_1$ and $c\not\in\Sigma_1$, $L\cap yc\Sigma_2^*=ycL_2$.

So $ycL_2$ is weakly-unambiguous, and by stability by left quotient with a word, $L_2=(yc)^{-1}(ycL_2)$ is weakly-unambiguous. This leads to a contradiction since $L_2$ in inherently weakly-ambiguous.
\end{itemize}

\begin{remark}
The proof needs at least 3 symbols. It can be proved that the problem is still undecidable for $2$-letters languages by encoding every letter with a sequence of 2 letters.
\end{remark}

\begin{remark}
One of the main point in the proof is the stability of weakly-unambiguous automaton under left quotient with a word. Notice that although this property is true (by adding $\varepsilon$-transitions in an automaton without $\varepsilon$-transitions), it cannot be generalized by replacing $w$ by a regular language, since the language $L=\{a^mba^nbva^nw\ :\ v, w\in\{a,b\}^*, |v|=m\}$ is weakly-unambiguously recognizable, whereas $(a^*b)^{-1}L=S$ is not.
\end{remark}

\section{Toolbox, and general bounds}\label{sec:toolbox}

\subsection{Notations, Hadamard's bound, Cramer's rule}
When working on multivariate polynomials in $\mathbb{Q}[x_1,\ldots,x_n]$, we
denote by $\boldsymbol\alpha\in\mathbb{N}^n$ the \intro{multi-index}
$({\alpha}_1,\ldots,{\alpha}_n)$ and by $x^{\vect\alpha}$ the monomial $x^{\boldsymbol
\alpha} = \prod_{i=1}^n x_i^{\alpha_i}$. The \intro{(total) degree} of
$x^{\boldsymbol \alpha}$ is defined as $\sum_{i=1}^n \alpha_i$, but in the sequel we
will mainly work with the \intro{maxdegree} $\mdeg(x^{\boldsymbol\alpha})=\max_{i=1}^n
\alpha_i$.

A nonzero polynomial $P$ in $\mathbb{Q}[x_1,\ldots,x_n]$ is a finite $\mathbb{Q}$-linear combination of monomials
\[
P=\sum_{\boldsymbol\alpha\in\Gamma_P}\lambda_{\boldsymbol \alpha}x^{\boldsymbol\alpha},
\]
where $\Gamma_P$ is a finite set of multi-indices, called the monomial support of~$P$. The \intro{degree} $\deg(P)$ of $P$ is the maximum degree of its monomials:
\[
\deg(P)  = \max\left\{\deg(x^{\boldsymbol\alpha}):\ \boldsymbol\alpha\in\Gamma_P\text{ and }\lambda_{\boldsymbol\alpha}\neq 0 \right\}.
\]
Similarly, the maxdegree of a polynomial is the maximum maxdegree of its monomials:
\[
\mdeg(P)  = \max\left\{\mdeg(x^{\boldsymbol\alpha}):\ \boldsymbol\alpha\in\Gamma_P\text{ and }\lambda_{\boldsymbol\alpha}\neq 0 \right\}.
\]
We will use the following classical norms for polynomials of  $\mathbb{Q}[x_1,\ldots,x_n]$:
\[
\|P\|_1  = \sum_{\boldsymbol\alpha\in\Gamma_P} |\lambda_{\boldsymbol\alpha}|, \; \text{ and } \;
\|P\|_\infty  = \max_{\boldsymbol\alpha\in\Gamma_P} |\lambda_{\boldsymbol\alpha}|.   
\]
Since there are $(m+1)^n$ monomials of maxdegree at most $m$, on $n$ variables, we get directly:
\begin{equation}\label{eq:norm 1 infty}
\|P\|_1 \leq (\mdeg(P)+1)^n \|P\|_\infty  .
\end{equation}
\begin{lemma}[product]\label{lm:multiplicationpolynomes}
	If $P$ and $Q$ are polynomials in $\mathbb{Q}[x_1,\ldots,x_n]$, then 
\[\|PQ\|_\infty\leq (m+1)^n\|P\|_\infty\|Q\|_\infty,\text{ where }m=\min(\mdeg(P), \mdeg(Q)).\]
\end{lemma}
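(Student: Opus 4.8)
The plan is to bound the $\infty$-norm of the product $PQ$ by a direct combinatorial estimate on the coefficients of $PQ$, counting how many pairs of monomials from $P$ and from $Q$ can contribute to a fixed monomial of $PQ$.

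First I would write $P=\sum_{\boldsymbol\alpha} p_{\boldsymbol\alpha}x^{\boldsymbol\alpha}$ and $Q=\sum_{\boldsymbol\beta} q_{\boldsymbol\beta}x^{\boldsymbol\beta}$, so that the coefficient of a fixed monomial $x^{\boldsymbol\gamma}$ in $PQ$ is $\sum_{\boldsymbol\alpha+\boldsymbol\beta=\boldsymbol\gamma} p_{\boldsymbol\alpha}q_{\boldsymbol\beta}$. By the triangle inequality this is at most $\|P\|_\infty\|Q\|_\infty$ times the number~$N(\boldsymbol\gamma)$ of pairs $(\boldsymbol\alpha,\boldsymbol\beta)\in\Gamma_P\times\Gamma_Q$ with $\boldsymbol\alpha+\boldsymbol\beta=\boldsymbol\gamma$. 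Such a pair is entirely determined by $\boldsymbol\alpha$ (since then $\boldsymbol\beta=\boldsymbol\gamma-\boldsymbol\alpha$), and symmetrically by $\boldsymbol\beta$; hence $N(\boldsymbol\gamma)\le|\Gamma_P|$ and $N(\boldsymbol\gamma)\le|\Gamma_Q|$. The key observation is then that we may choose whichever of the two bounds is smaller, so $N(\boldsymbol\gamma)\le\min(|\Gamma_P|,|\Gamma_Q|)$.

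Next I would bound $\min(|\Gamma_P|,|\Gamma_Q|)$ by the number of monomials of maxdegree at most $m=\min(\mdeg(P),\mdeg(Q))$ in $n$ variables. If $\mdeg(P)\le\mdeg(Q)$, then every multi-index in $\Gamma_P$ has maxdegree at most $m=\mdeg(P)$, and there are exactly $(m+1)^n$ such multi-indices; hence $|\Gamma_P|\le(m+1)^n$. The symmetric argument handles the case $\mdeg(Q)\le\mdeg(P)$. Combining, $N(\boldsymbol\gamma)\le(m+1)^n$ for every $\boldsymbol\gamma$, and taking the maximum over $\boldsymbol\gamma$ yields $\|PQ\|_\infty\le(m+1)^n\|P\|_\infty\|Q\|_\infty$, as claimed.

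There is no real obstacle here; the only point needing a little care is the choice to bound the fan-in $N(\boldsymbol\gamma)$ by the \emph{smaller} of the two supports, which is exactly what produces the $\min$ in the statement rather than a less useful bound involving $\max(\mdeg(P),\mdeg(Q))$. One should also note the edge cases (either polynomial zero, or constant) are trivially covered, since then one of $\|P\|_\infty,\|Q\|_\infty$ is $0$ or $m=0$ and the bound reads $\|PQ\|_\infty\le\|P\|_\infty\|Q\|_\infty$, which is clear.
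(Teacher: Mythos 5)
Your proof is correct and follows essentially the same route as the paper: bound each coefficient of $PQ$ by the triangle inequality and count the contributing monomial pairs via the factor with smaller maxdegree, which is exactly what the paper does by bounding $|c_{\boldsymbol\gamma}|\leq\|P\|_\infty\|Q\|_1$ (after a w.l.o.g.\ on which polynomial has smaller maxdegree) and then invoking $\|Q\|_1\leq(\mdeg(Q)+1)^n\|Q\|_\infty$. Your explicit counting of pairs determined by a multi-index of the smaller support is just an unfolded version of that $\|\cdot\|_1$ step, so there is nothing to add.
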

\begin{proof}
	W.l.o.g., we can assume that $\mdeg(Q)\leq\mdeg(P)$ and hence $m=\mdeg(Q)$. For $P=\sum \lambda_{\boldsymbol\alpha}x^{\boldsymbol\alpha}$ and $Q=\sum \mu_{\boldsymbol\beta}x^{\boldsymbol\beta}$, we have $PQ=\sum c_{\boldsymbol \gamma}x^{\boldsymbol \gamma}$ with $|c_{\boldsymbol \gamma}|=|\sum_{\boldsymbol\alpha+\boldsymbol\beta=\boldsymbol\gamma} \lambda_{\boldsymbol\alpha}\mu_{\boldsymbol\beta}|\leq \|P\|_\infty\|Q\|_1\leq (m+1)^n\|P\|_\infty\|Q\|_\infty$.
\end{proof}

\begin{lemma}\label{lem:bound poly}
Let $P\neq 0$ be a polynomial in $\mathbb{Z}[x]$. 
Then $P(n) \neq 0$ for all $n \geq \|P\|_\infty+1$.
\end{lemma}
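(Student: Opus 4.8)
The plan is to prove the contrapositive-flavoured bound directly: write $P = \sum_{i=0}^{d} a_i x^i$ with $a_i \in \mathbb{Z}$, let $d = \deg(P)$, and let $k$ be the largest index with $a_k \neq 0$ (so $k = d$ unless we allow leading zeros, but in any case $a_k \neq 0$). For $n \geq 1$ we factor out the dominant term and estimate $|P(n)| \geq |a_k| n^k - \sum_{i<k} |a_i| n^i$. The aim is to show that once $n$ exceeds $\|P\|_\infty$, the leading term strictly dominates the sum of the rest.

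First I would bound the lower-order part: for $n \geq 1$ and $i \leq k-1$ we have $n^i \leq n^{k-1}$, so $\sum_{i=0}^{k-1} |a_i| n^i \leq n^{k-1} \sum_{i=0}^{k-1} |a_i| \leq n^{k-1} \cdot k \cdot \|P\|_\infty$. That last bound is slightly wasteful; a cleaner route is the standard Cauchy-type estimate: $\sum_{i=0}^{k-1} |a_i| n^i \leq \|P\|_\infty \sum_{i=0}^{k-1} n^i = \|P\|_\infty \cdot \frac{n^k - 1}{n - 1} < \|P\|_\infty \cdot \frac{n^k}{n-1}$ for $n \geq 2$. Then $|P(n)| > |a_k| n^k - \|P\|_\infty \cdot \frac{n^k}{n-1} \geq n^k\bigl(|a_k| - \tfrac{\|P\|_\infty}{n-1}\bigr)$, and since $|a_k| \geq 1$ this is strictly positive as soon as $n - 1 > \|P\|_\infty$, i.e. $n \geq \|P\|_\infty + 2$. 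To hit the sharper threshold $n \geq \|P\|_\infty + 1$ stated in the lemma, I would instead use the bound $\sum_{i=0}^{k-1} |a_i| n^i \leq \|P\|_\infty (n^{k-1} + n^{k-2} + \cdots + 1)$ and compare with $|a_k| n^k \geq n^k$: it suffices that $n^k > \|P\|_\infty(n^{k-1} + \cdots + 1)$. Dividing by $n^{k-1}$, it suffices that $n > \|P\|_\infty(1 + n^{-1} + \cdots + n^{-(k-1)})$; since the geometric sum is at most $\frac{n}{n-1} \leq 2$ for $n \geq 2$ this is not quite tight, so the honest way to the $+1$ bound is to treat the $k=0$ case separately (then $P(n) = a_0 \neq 0$ for all $n$, trivially) and for $k \geq 1$ observe $n^k - \|P\|_\infty(n^{k-1}+\cdots+1) = n^{k-1}(n - \|P\|_\infty) - \|P\|_\infty(n^{k-2} + \cdots + 1)$; when $n = \|P\|_\infty + 1$ the first term is $n^{k-1}$ and one checks $n^{k-1} > \|P\|_\infty(n^{k-2}+\cdots+1)$ precisely because $n^{k-1} - \|P\|_\infty n^{k-2} = n^{k-2}(n - \|P\|_\infty) = n^{k-2}$, iterating down to $n^0 = 1 > 0$.

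Concretely, the induction I expect to run is: for $k \geq 1$ and $n = \|P\|_\infty + 1$, prove $n^j > \|P\|_\infty(n^{j-1} + n^{j-2} + \cdots + 1)$ for all $j \geq 1$ by induction on $j$; the base case $j=1$ reads $n > \|P\|_\infty$, which holds by choice of $n$, and the step uses $n^{j+1} = n \cdot n^j > n \cdot \|P\|_\infty(n^{j-1} + \cdots + 1) = \|P\|_\infty(n^j + \cdots + n) > \|P\|_\infty(n^j + \cdots + n + 1)$ — wait, the last inequality goes the wrong way, so the clean statement to induct on is rather $n^j \geq \|P\|_\infty(n^{j-1} + \cdots + 1) + 1$, whose base case $n^1 \geq \|P\|_\infty + 1$ is an equality and whose inductive step multiplies by $n \geq 2$. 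Then for $n \geq \|P\|_\infty + 1$ (monotonicity handles the larger values) we get $|P(n)| \geq n^k - \|P\|_\infty(n^{k-1} + \cdots + 1) \geq 1 > 0$, using $|a_k| \geq 1$ and the triangle inequality in the form $|P(n)| \geq |a_k|n^k - \sum_{i<k}|a_i|n^i$.

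The main obstacle is purely bookkeeping: getting the sharp constant $\|P\|_\infty + 1$ rather than the easy $\|P\|_\infty + 2$ requires the careful telescoping described above instead of the crude geometric-series estimate, and one must not forget the degenerate cases ($P$ a nonzero constant, or $\|P\|_\infty = 0$ which cannot happen since $P \neq 0$ forces $\|P\|_\infty \geq 1$). No deep ingredient is needed — just Lemma~\ref{lm:multiplicationpolynomes}-style elementary manipulations — so I would keep the write-up to a few lines, stating the telescoping identity $n^j - \|P\|_\infty(n^{j-1} + \cdots + 1) = n^{j-1}(n - \|P\|_\infty) - \|P\|_\infty(n^{j-2} + \cdots + 1)$ and unrolling it to the constant term.
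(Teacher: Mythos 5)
Your proof is correct and takes essentially the same route as the paper: bound the low-order terms by $\|P\|_\infty(n^{d-1}+\cdots+1)$ and let the leading term $|a_d|n^d\geq n^d$ dominate, using $\|P\|_\infty\leq n-1$. The only difference is presentational: the paper argues by contradiction from $P(n)=0$ and gets the sharp threshold in one line from the closed form $\|P\|_\infty\frac{n^d-1}{n-1}\leq(n-1)\frac{n^d-1}{n-1}=n^d-1<n^d$, so your telescoping induction --- and the worry that the geometric-series estimate only yields the weaker bound $\|P\|_\infty+2$ --- is unnecessary, though not wrong.
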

\begin{proof}
Let $n \geq \|P\|_\infty+1$. We write $P(x)=\sum_{k=0}^d a_k x^k$. Assume toward a contradiction that $P(n)=0$. In particular, we have:
\[ n^d \leq  \left|a_d n^d\right| = \left|\sum_{k=0}^{d-1} a_k n^k\right| \leq \|P\|_\infty \frac{n^d-1}{n-1} \leq n^d-1  \]	
which brings the contradiction.
\end{proof}

Let $A$ be a $\mathbb{C}$-valued $p\times p$ matrix. The classical \intro{Hadamard bound} states that:
\begin{equation}\label{eqq:Hadamard bound}
|\det(A)|\leq \prod_{i=1}^p \sqrt{\sum_{j=1}^p|A_{i,j}|^2},
\end{equation}
that is, the modulus of the determinant is bounded from above by the product of the length of the column vectors of $A$.

We adapt to several variables a result of Goldstein and Graham~\cite{Goldstein}, which gives bounds for the coefficients of a determinant of polynomials, generalizing the classical Hadamard bound. Note that similar bounds appear in \cite[Lemma 3.1]{Emiris} and in a different context, in~\cite[Eq.~(21)]{Brown} (without proof).

\begin{lemma}[determinant]\label{lm:hadamardbound}
	Let $A$ be a $p\times p$ matrix whose coefficients are polynomials in $\mathbb{Q}[x_1,\ldots,x_n]$. \[
	\| \det(A)\|_\infty\leq R_1^p p^{p/2} \;\textrm{where $R_1=\max_{i,j} \|A_{i,j}\|_1$.}
	\]
In particular, using the upper bound given by Equation~\eqref{eq:norm 1 infty}, it also satisfies:
\[
	\|\det(A)\|_\infty\leq R_\infty^pp^{p/2}(M+1)^{np} \;\textrm{where $R_\infty=\max_{i,j} \| A_{i,j} \|_\infty$ and $M = \max_{i,j}  \mdeg(A_{i,j})$}.
	\]
\end{lemma}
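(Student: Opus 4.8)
The plan is to mimic the classical Hadamard-bound argument, but carried out at the level of $\infty$-norms of polynomials rather than moduli of complex numbers. The starting point is the Leibniz expansion
$\det(A)=\sum_{\pi\in S_p}\varepsilon(\pi)\prod_{i=1}^{p}A_{i,\pi(i)}$,
which expresses $\det(A)$ as a sum of $p!$ signed products of $p$ entries. Using the product bound (Lemma~\ref{lm:multiplicationpolynomes}) iteratively on each product $\prod_{i=1}^{p}A_{i,\pi(i)}$ would give a factor growing like $(M+1)^{n(p-1)}$ per permutation, and then the triangle inequality over the $p!$ permutations would contribute a factor $p!$. Combining these and bounding $p!$ crudely would yield a bound of the shape $R_\infty^{p}\,p!\,(M+1)^{n(p-1)}$, which is \emph{worse} than the claimed $R_\infty^p p^{p/2}(M+1)^{np}$ because $p!\gg p^{p/2}$. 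So the naive route loses. Instead, I would follow Goldstein–Graham~\cite{Goldstein}: the sharper bound $p^{p/2}$ (rather than $p!$) must come from a genuine Hadamard-type inequality, not from the triangle inequality.

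The key idea is therefore to reduce the polynomial statement to the scalar Hadamard bound \eqref{eqq:Hadamard bound} by \emph{specialising the variables to points on the unit circle}. Concretely, fix a multi-index $\boldsymbol\gamma$ and observe that the coefficient $[x^{\boldsymbol\gamma}]\det(A)$ can be extracted by an integral of $\det(A)$ against a character: for each variable $x_j$ set $x_j=e^{2\pi i\theta_j}$, so that
$[x^{\boldsymbol\gamma}]\det(A)=\int_{[0,1]^n}\det\bigl(A(e^{2\pi i\theta_1},\dots,e^{2\pi i\theta_n})\bigr)\,e^{-2\pi i\langle\boldsymbol\gamma,\boldsymbol\theta\rangle}\,d\boldsymbol\theta$.
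Taking absolute values and bounding the integrand pointwise, $\bigl|[x^{\boldsymbol\gamma}]\det(A)\bigr|\le \sup_{|x_j|=1}\bigl|\det(A(x))\bigr|$. Now apply the scalar Hadamard bound \eqref{eqq:Hadamard bound} to the complex matrix $A(x)$ with $|x_j|=1$: each column length is $\sqrt{\sum_i |A_{i,j}(x)|^2}$, and for $|x_j|=1$ we have $|A_{i,j}(x)|\le\|A_{i,j}\|_1\le R_1$ (the triangle inequality on the monomials, each monomial having modulus $1$). Hence each of the $p$ column lengths is at most $\sqrt{p}\,R_1$, and the product over the $p$ columns is at most $(\sqrt{p}\,R_1)^p=R_1^p p^{p/2}$. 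Since this holds for every coefficient $\boldsymbol\gamma$, it bounds $\|\det(A)\|_\infty$, giving the first inequality. The second inequality then follows immediately by substituting $R_1\le (M+1)^n R_\infty$, which is exactly Equation~\eqref{eq:norm 1 infty} applied to each entry $A_{i,j}$ (whose maxdegree is at most $M$), since then $R_1^p\le (M+1)^{np}R_\infty^p$.

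I expect the main (minor) obstacle to be the justification of the coefficient-extraction-by-integration step in a way that is clean and self-contained: one must note that $\det(A)$ is a polynomial, hence a \emph{finite} Fourier sum in the $\theta_j$'s, so the orthogonality relations $\int_{[0,1]}e^{2\pi i(k-k')\theta}\,d\theta=\delta_{k,k'}$ apply termwise without any convergence subtlety, and the sup over the torus is attained because a continuous function on a compact set attains its supremum. An alternative, entirely algebraic route that avoids integrals altogether is to invoke Goldstein and Graham's one-variable result directly and bootstrap to $n$ variables: regard $\mathbb{Q}[x_1,\dots,x_n]$ as $\mathbb{Q}[x_1,\dots,x_{n-1}][x_n]$, view $A$ as a matrix over this ring, apply the $n-1$-variable bound to the polynomial coefficients and track how the $\|\cdot\|_1$ bounds multiply; but this tends to reintroduce extra factors and obscures why $p^{p/2}$ (and not $p!$) is the right exponent, so I would present the torus argument as the primary proof and only remark on the reduction to \cite{Goldstein} as context.
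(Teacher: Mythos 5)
Your proof is correct and follows essentially the same route as the paper: specialise the variables to the unit torus, apply the scalar Hadamard bound to the resulting complex matrix, and bound each entry by its $\|\cdot\|_1$-norm (which is at most $R_1$), then transfer the torus bound back to the coefficients of $\det(A)$. The only cosmetic difference is in that last transfer step — you extract each coefficient directly by orthogonality and bound it by the sup over the torus, whereas the paper bounds $\|\det(A)\|_\infty^2$ by the mean of $|f|^2$ via Parseval — and both yield exactly the stated bound $R_1^p p^{p/2}$, with the second inequality following from Equation~\eqref{eq:norm 1 infty} as you say.
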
	
\begin{proof}
Let $f$  be the map from $\mathbb{R}^n$ to $\mathbb{C}$ defined using
the polynomial $\det(A(x_1, \ldots, x_n))$  as follows:
\[
f(t_1,\ldots, t_n) = \det\left(A(e^{i t_1},\ldots,e^{i t_n})\right).
\]
For any given $t_1,\ldots, t_n\in\mathbb{R}$, $f(t_1,\ldots, t_n)$ is defined as a complex-valued determinant, so we can apply Hadamard bound, which yields that
\[
|f(t_1,\ldots,t_n)| \leq \prod_{k=1}^p\sqrt{\sum_{j=1}^p|A_{k,j}(e^{it_1}, \ldots, e^{it_n})|^2}.
\]
But when setting $(x_1,\ldots,x_n)\mapsto(e^{it_1}, \ldots, e^{it_n})$, every monomial is changed into a complex number of modulus~$1$. Hence $|A_{k,j}(e^{it_1}, \ldots, e^{it_n})|\leq \|A_{k,j}\|_1$, for all $k,j\in\{1,\ldots,p\}$. This yields:
\[
|f(t_1,\ldots,t_n)| \leq \prod_{k=1}^p\sqrt{\sum_{j=1}^p\|A_{k,j}\|_1^2}
\leq 
\prod_{k=1}^p\sqrt{\sum_{j=1}^p R_1^2}.
\]
Therefore, we get an upper bound for $|f(t_1,\ldots,t_n)|^2$:
\begin{equation}\label{eq:upper f2}
|f(t_1,\ldots,t_n)|^2 \leq p^p R_1^{2p}.    
\end{equation}
We now consider the real number
\[
I(A) = \frac{1}{(2\pi)^n}\int_{0}^{2\pi}\cdots\int_{0}^{2\pi}
|f(t_1,\ldots,t_n)|^2dt_1\cdots dt_n.
\]
By Eq.~\eqref{eq:upper f2}, we have $I(A) \leq p^{p}R_1^{2p}$. We will now prove that
$\|\det(A)\|_\infty^2 \leq I(A)$.

Observe that $\det(A)$ is a polynomial, and can therefore be written $\det(A)=\sum_{\boldsymbol\alpha\in\Gamma}\lambda_{\boldsymbol\alpha}x^{\boldsymbol\alpha}$, for a finite set of multi-indices $\Gamma$. Thus, when setting $(x_1,\ldots,x_n)\mapsto(e^{it_1}, \ldots, e^{it_n})$, using the fact that $|z|^2=z\overline{z}$ (where $\overline{z}$ is the complex conjugate of $z$), we obtain:
\[
|f(t_1,\ldots,t_n)|^2 =
\left(\sum_{\boldsymbol\alpha\in\Gamma}\lambda_{\boldsymbol\alpha}\exp(i\alpha_1t_1+\ldots+i\alpha_nt_n)\right)\left(\sum_{\boldsymbol\beta\in\Gamma}\overline{\lambda_{\boldsymbol\beta}}\exp(-i\beta_1t_1-\ldots-i\beta_nt_n)\right).
\]
Hence,
\[
I(A) = \frac{1}{(2\pi)^n}\sum_{\boldsymbol\alpha,\boldsymbol\beta\in\Gamma}
\lambda_{\boldsymbol\alpha}\overline{\lambda_{\boldsymbol\beta}}
\int_{t_1=0}^{2\pi}\cdots\int_{t_n=0}^{2\pi}
e^{i(\alpha_1-\beta_1)t_1}\cdots e^{i(\alpha_n-\beta_n)t_n} dt_1\cdots dt_n.
\]
The integral part is null if $\boldsymbol\alpha\neq\boldsymbol\beta$ and it is equal to $(2\pi)^n$ if $\boldsymbol\alpha=\boldsymbol\beta$. Hence, 
$I(A) = \sum_{\boldsymbol\alpha\in\Gamma}|\lambda_{\boldsymbol\alpha}|^2 \geq \|\det(A(x_1, \ldots, x_n))\|_\infty^2$, concluding the proof.
\end{proof}

Recall that a \intro{minor} of an $m\times n$ matrix with $m<n$ is the determinant of a matrix obtained by selecting $m$ columns of $A$, in the same order as in $A$. We have the following classical results:

\begin{proposition}[Cramer's rule]
\label{prop:cramer-rule}
Let $\mathbb K$ be any field. 
\begin{enumerate}
	\item Let $A$ be an invertible $n\times n$ matrix with coefficients in $\mathbb K$, $\boldsymbol b$ and $\vect{v} \in \mathbb K^n$ such that $A\boldsymbol v=\boldsymbol b$.
	For all $i \in [1,n]$, ${v}_i$ is equal to $\frac{\det(A_i)}{\det(A)}$ where $A_i$ designates the matrix obtained by replacing the $i$-th column of $A$ by the vector $\vect{b}$.
	\item Let $A$ be an $m \times n$ matrix with coefficients in $\mathbb K$ where $m<n$. There exists a vector $\boldsymbol v \in \mathbb{K}^m\setminus\{\vect 0\}$ whose coordinates are  minors or opposite of minors of $A$ such that $A\boldsymbol v=\boldsymbol 0$.
\end{enumerate}
\end{proposition}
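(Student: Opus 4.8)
The statement to prove is Proposition~\ref{prop:cramer-rule} (Cramer's rule), in two parts.

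\medskip

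The plan is to recall the standard linear-algebra proof of Cramer's rule, adapted to the generality needed here (an arbitrary field $\mathbb{K}$, and the rectangular "kernel vector made of minors" variant in part~(2)).

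\medskip

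First, for part~(1), I would start from the hypothesis $A\boldsymbol v = \boldsymbol b$ with $A$ invertible. Fix an index $i\in[1,n]$ and let $A_i$ be the matrix obtained from $A$ by replacing its $i$-th column by $\boldsymbol b$. The key observation is that $\boldsymbol b = \sum_{j=1}^n v_j\, C_j$, where $C_1,\ldots,C_n$ are the columns of $A$. Substituting this expression for the $i$-th column of $A_i$ and using multilinearity of the determinant in the columns, we get $\det(A_i) = \sum_{j=1}^n v_j \det(A^{(i\leftarrow C_j)})$, where $A^{(i\leftarrow C_j)}$ denotes $A$ with its $i$-th column replaced by $C_j$. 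For $j\neq i$ this matrix has two equal columns, so its determinant vanishes by the alternating property; for $j=i$ it is simply $A$. Hence $\det(A_i) = v_i \det(A)$, and since $A$ is invertible $\det(A)\neq 0$, giving $v_i = \det(A_i)/\det(A)$. This argument is valid over any field since it only uses multilinearity and the alternating property of the determinant.

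\medskip

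For part~(2), with $A$ an $m\times n$ matrix and $m<n$, I would exhibit the kernel vector explicitly. Let $\widetilde A$ be the $(m+1)\times n$ matrix obtained by appending a formal row; more concretely, for each $k\in[1,n]$ let $M_k$ be the minor of $A$ obtained by deleting the $k$-th column (the determinant of the $m\times m$ submatrix formed by the remaining columns, in order), and set $v_k = (-1)^k M_k$. (One adjusts the sign convention so that the cofactor expansion works out; the precise sign is a routine bookkeeping matter.) The claim is $A\boldsymbol v = \boldsymbol 0$. To see this, fix a row index $r\in[1,m]$ and consider the $(m+1)\times n$-ish object obtained by Laplace/cofactor expansion: the sum $\sum_{k=1}^n v_k A_{r,k}$ is, up to sign, the cofactor expansion along a duplicated row of an $(m+1)\times(m+1)$ matrix having two equal rows (namely row $r$ of $A$ repeated as the appended row), hence it is $0$. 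It remains to argue $\boldsymbol v \neq \boldsymbol 0$: since $m<n$, the matrix $A$ has rank at most $m$, but more to the point, if all the $M_k$ vanished then every $m\times m$ submatrix of $A$ would be singular, i.e.\ $\operatorname{rank}(A)<m$; in that case we would instead pick a maximal set of independent columns and run the same construction on a smaller square subsystem, so without loss of generality we may assume some maximal minor is nonzero and hence $\boldsymbol v\neq\boldsymbol 0$. The coordinates of $\boldsymbol v$ are minors of $A$ up to sign, as required.

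\medskip

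The main obstacle I anticipate is keeping the sign conventions in part~(2) straight and handling the degenerate case where all the chosen minors vanish; the cleanest presentation is probably to first treat the generic case (some $m\times m$ minor nonzero), where one can even reduce to part~(1), and then observe that the degenerate case reduces to a strictly smaller instance. Both parts are entirely standard, so the "proof" will really just be a careful recollection of these facts; since the paper only invokes this proposition as a black box in the Hadamard-bound toolbox, a concise treatment suffices.
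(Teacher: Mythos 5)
The paper never proves this proposition at all --- its ``proof'' is a pointer to the references \cite{cramersrule} and \cite{Kauers2014} --- so there is no internal argument to compare with; what matters is whether your reconstruction is correct. Part~(1) is: the multilinearity-plus-alternating-property computation $\det(A_i)=v_i\det(A)$ is the classical argument and is valid over an arbitrary field.

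Part~(2), as written, has a concrete gap. Your definition ``let $M_k$ be the minor of $A$ obtained by deleting the $k$-th column'' only makes sense when $n=m+1$: deleting one column of an $m\times n$ matrix leaves $n-1$ columns, which form an $m\times m$ submatrix only in that case. For general $n>m+1$ you must first \emph{select} $m+1$ columns of $A$, form the alternating-sign vector of the $m$-column minors of that selection, and pad the remaining $n-m-1$ coordinates with $0$; the ``repeated row'' Laplace-expansion argument then gives $A\boldsymbol v=\boldsymbol 0$ exactly as you describe. But to get $\boldsymbol v\neq\boldsymbol 0$ the $m+1$ selected columns must contain an invertible $m\times m$ submatrix, i.e.\ you should take $m$ columns witnessing $\mathrm{rank}(A)=m$ plus one extra column --- an arbitrary choice of $m+1$ columns can yield $\boldsymbol v=\boldsymbol 0$ even when $A$ has full row rank. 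Finally, your fallback for the degenerate case (``run the same construction on a smaller square subsystem'') does not establish the statement as literally formulated: if $\mathrm{rank}(A)<m$ then \emph{every} minor in the paper's sense (all $m$ rows, $m$ chosen columns) vanishes, so no nonzero vector can have all its coordinates equal to such minors or their opposites; the construction you would actually perform there produces $r\times r$ minors with $r=\mathrm{rank}(A)$, which requires selecting rows as well. This is really an imprecision of the proposition itself (and it is harmless for its use in Lemma~\ref{lm:lipshitz_famille_liee}, where only a Hadamard-type bound on the coordinates is needed, and zero coordinates or smaller minors satisfy it a fortiori), but a clean write-up should either prove the statement under the hypothesis $\mathrm{rank}(A)=m$ and say explicitly how the rank-deficient case is handled, or prove the corrected statement with rank-sized minors. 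The sign bookkeeping you deferred is indeed routine and not the issue.
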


\begin{proof}
See \cite{cramersrule} and \cite{Kauers2014}.
\end{proof}

\subsection{Bounds for the generating series of a vector automaton}
\newcommand{\VV}{\mathcal{V}}

We introduce a very simple model of non-deterministic finite state automaton, called a vector automaton, whose transitions are labeled by vectors in $\mathbb{N}^d \setminus \{\vect{0}\}$. A run of a vector automaton is labeled  by the sum of the vectors appearing in its transitions. In particular, a vector automaton accepts a set of vectors in $\mathbb{N}^d$. 

Formally, a \intro{vector automaton} $\AA$ over $\mathbb{N}^d$ is a tuple $(Q_\AA,q_I,F,\Delta)$ where $Q_\AA$ is a finite set of states, $q_I \in Q_\AA$ is the initial state, $F \subseteq Q_\AA$ is the set of final states and $\Delta$ is the set of transitions of the form $(p,\vect{v},q)$ with $p$ and $q$ in $Q_\AA$ and $\vect{v} \in \NN^d\setminus \{\vect{0}\}$. The runs of $\AA$ are defined in the usual way. We denote $\| \AA \|_\infty \eqdef \max_{(p,\vect{v},q) \in \Delta}  \| \vect v \|_\infty$ and $d^{out}_\AA$ the maximum out-degree of $\AA$.

The (multivariate) generating series of $\AA$, denoted $A(x_1,\ldots,x_d)$, is defined by:
\[
A(x_1,\ldots,x_d) \eqdef \sum_{\vect{v} \in \mathbb{N}^d} a(\vect{v})x^{{v}_1}_1\cdots x^{{v}_d}_d,
\]
where $a(\vect{v})$ counts the number of runs of $\AA$ starting in the initial state, ending in a final state and labeled by $\vect{v}$. As we do not allow transitions labeled by $\vect{0}$, the number of such runs is finite and the series is well-defined.

\begin{lemma}\label{lm:automatontorational}
Let $\AA$ be a vector automaton over $\NN^d$ with state set $Q_\AA$. Its multivariate generating series  $A(x_1,\ldots,x_d)$ is equal to $\frac{P}{Q}$, where $P$ and $Q$ are two polynomials in $\mathbb{Z}[x_1,\ldots,x_d]$ such that:
\[
    \mdeg(P), \mdeg(Q)\leq |Q_\AA|\cdot \|\AA\|_\infty,\text{ and }
    \|P\|_\infty, \|Q\|_\infty \leq{(1+d^{out}_{\AA})}^{|Q_{\AA}|}{|Q_{\AA}|}^{|Q_{\AA}|/2}.
\]
\end{lemma}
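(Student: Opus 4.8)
The plan is to realize $A(x_1,\ldots,x_d)$ as the solution of the linear system of run-counting series attached to $\AA$, extract $P$ and $Q$ by Cramer's rule (Proposition~\ref{prop:cramer-rule}), and bound their maxdegree and $\infty$-norm using the determinant estimate of Lemma~\ref{lm:hadamardbound}. First I would introduce, for every state $q\in Q_\AA$, the series $A_q(x_1,\ldots,x_d)=\sum_{\vect v\in\NN^d}a_q(\vect v)\,x_1^{v_1}\cdots x_d^{v_d}$, where $a_q(\vect v)$ counts the runs of $\AA$ starting in $q$, ending in a final state, and labeled by $\vect v$. Since every transition carries a vector of $\NN^d\setminus\{\vect 0\}$, only finitely many runs share a given label, so each $A_q$ is a well-defined power series, and by definition $A=A_{q_I}$. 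Decomposing a run according to its first transition yields, for every $q$,
\[
A_q \;=\; b_q \;+\; \sum_{(q,\vect v,r)\,\in\,\Delta} x_1^{v_1}\cdots x_d^{v_d}\, A_r,
\]
where $b_q=1$ if $q\in F$ and $b_q=0$ otherwise. In matrix form this reads $(\mathrm{Id}-M)\,\vect A=\vect b$, with $M$ the $Q_\AA\times Q_\AA$ matrix defined by $M_{q,r}=\sum_{(q,\vect v,r)\in\Delta}x_1^{v_1}\cdots x_d^{v_d}$ and $\vect b=(b_q)_{q\in Q_\AA}$. Every entry of $M$ has zero constant term, so the constant-term matrix of $\mathrm{Id}-M$ is the identity; hence $\mathrm{Id}-M$ is invertible both over $\mathbb{Q}[[x_1,\ldots,x_d]]$ and over $\mathbb{Q}(x_1,\ldots,x_d)$, and $\vect A$ is its unique power-series solution.

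Second, by Cramer's rule, $A=A_{q_I}=\det(N)/\det(\mathrm{Id}-M)$, where $N$ is obtained from $\mathrm{Id}-M$ by replacing its $q_I$-th column with $\vect b$. I would then set $Q\eqdef\det(\mathrm{Id}-M)$ and $P\eqdef\det(N)$; these lie in $\mathbb{Z}[x_1,\ldots,x_d]$ because all entries of both matrices do. Each entry of $\mathrm{Id}-M$ is $1$ minus a sum of at most $d^{out}_\AA$ monomials of coefficient $1$ (on the diagonal), or the negative of such a sum (off the diagonal), and every such monomial has maxdegree at most $\|\AA\|_\infty$; each entry of $N$ is either such an entry or a constant in $\{0,1\}$. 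Consequently every entry of $\mathrm{Id}-M$ and of $N$ has maxdegree at most $\|\AA\|_\infty$ and $\|\cdot\|_1\le 1+d^{out}_\AA$.

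Third, I would conclude with the two bounds. A $|Q_\AA|\times|Q_\AA|$ determinant is a signed sum of products of $|Q_\AA|$ entries (one per row), and $\mdeg$ is subadditive under products, so $\mdeg(P),\mdeg(Q)\le |Q_\AA|\cdot\|\AA\|_\infty$. For the $\infty$-norm, Lemma~\ref{lm:hadamardbound} applied with $p=|Q_\AA|$ and $R_1=1+d^{out}_\AA$ gives $\|P\|_\infty,\|Q\|_\infty\le (1+d^{out}_\AA)^{|Q_\AA|}\,|Q_\AA|^{|Q_\AA|/2}$, which is exactly the claim. There is no genuinely hard step here; the only points demanding care are the identification of the combinatorial series $A$ with the rational function produced by Cramer's rule (i.e.\ checking that the linear system has a unique power-series solution and that it is the run-counting series), and the use of the \emph{subadditivity} of $\mdeg$ under multiplication — rather than the additivity of the total degree — which is what keeps the maxdegree bound linear in $|Q_\AA|$.
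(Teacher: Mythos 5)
Your proposal is correct and follows essentially the same route as the paper's proof: set up the linear system $(\mathrm{Id}-M)\vect{A}=\vect{b}$ for the run-counting series, invert via Cramer's rule (Proposition~\ref{prop:cramer-rule}), and bound the resulting determinants using Lemma~\ref{lm:hadamardbound} together with the entrywise bounds $\|\cdot\|_1\le 1+d^{out}_\AA$ and $\mdeg\le\|\AA\|_\infty$. The only cosmetic differences are your explicit remarks on the well-definedness of the series and the subadditivity of $\mdeg$, which the paper handles implicitly.
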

\begin{proof}
Let $\AA=(Q_{\AA},q_I,F,\Delta)$ be a vector automaton. For every state $q\in Q_\AA$, we consider the series 
\[q(x_1, \ldots, x_d)=\sum_{\vect{v} \in \mathbb{N}^d}a_q(\vect{v})x_1^{{v}_1}\ldots x_d^{{v}_d},\] 
where $a_q(\vect{v})$ counts the number of runs from the state $q$ to a final state, labeled by the vector $\vect{v}$. In particular, the multivariate generating series $A(x_1,\ldots,x_d)$ of $\AA$ is equal to $q_I(x_1,\ldots,x_d)$.

Classically, for any state $q$, we have the relation:
	\begin{align*}q(x_1, \ldots, x_d)= &\llbracket q\in F\rrbracket 
	+ \sum_{(q,\vect{v},q')\in \Delta}x_1^{{v}_1}\cdots x_d^{{v}_d}\cdot q'(x_1, \ldots, x_d),
	\end{align*}
where $\llbracket P\rrbracket = 1$ if $P$ holds and $\llbracket P\rrbracket = 0$ otherwise.

As a consequence, the vector of generating series $\vect q=(q(x_1, \ldots, x_d))_{q\in Q_\AA}$ satisfies the equation:
\[
\vect{q}=M\vect{q}+\vect{f},
\]
where  $\vect{f}=(\llbracket q\in F\rrbracket)_{q\in Q_\AA}$, and $M$ is the $|Q_\AA|\times |Q_\AA|$ matrix of polynomials defined by $M_{p,q}(x_1,\ldots,x_d) = \sum_{(p,\vect{v},q) \in \Delta} x_1^{\vect{v}_1} \cdots x_d^{\vect{v}_d}$ for all $p$ and $q$ in $Q_\AA$. In particular, for all $p$ and $q \in Q_\AA$,  $\|M_{p,q}\|_1 \leq d^{out}_{\AA}$ and $M$ is the null matrix if we set $x_1=\ldots=x_d = 0$.

The vector $\vect{q}$ satisfies the equation $(Id-M)\vect{q} = \vect{f}$. The matrix $Id-M$ is invertible on the field of rational fractions $\mathbb{Q}(x_1,\ldots,x_d)$. Indeed $\det(Id-M)(x_1,\ldots,x_d)$ is not the null polynomial as $\det(Id-M)(0,\ldots,0)=\det(I)=1$. Therefore, by Cramer's rule (see Proposition~\ref{prop:cramer-rule}), we have 
\[
A(x_1,\ldots,x_d)=q_I(x_1,\ldots,x_d) = \frac{P(x_1,\ldots,x_d)}{Q(x_1,\ldots,x_d)},
\]
with $P(x_1,\ldots,x_d)=\det(M')$, where $M'$ is the matrix obtained by replacing the column $q_I$ of $Id-M$ by the vector $\vect{f}$, and with $Q(x_1,\ldots,x_d)=\det(Id-M)$.

For all $p,q \in Q_\AA$, we have $\|(Id-M)_{p,q}\|_1 \leq 1 + 
d_\AA^{out}$ and $\mdeg(Id-M)_{p,q} \leq \|\AA\|_\infty$. As the vector $\vect{f}$ substituted in $Id-M$ to create $M'$ only contains $0$'s and $1$'s, the same bounds also hold for $M'$. Hence, using Lemma~\ref{lm:hadamardbound}, we can bound the $\infty$-norm of both $P$ and $Q$ by $ (1 + 
d_\AA^{out})^{|Q_\AA|}|Q_\AA|^{|Q_\AA|/2}$. 

By definition of the determinant,  $P$ and $Q$ are linear combinations of products of $|Q_\AA|$ polynomials, each of maxdegree at most $\|\AA\|_\infty$. As a consequence $\mdeg(P)$ and $\mdeg(Q)$ are at most $|Q_\AA|\cdot \|\AA\|_\infty$.
\end{proof}

\subsection{Bounds for the transformation of a differential equation into a recurrence equation}

In one variable, the coefficient sequence of a holonomic series $H(x)$ satisfies a linear recurrence relation with polynomial coefficients. In this section, we provide a bound on the order of the recurrence relation and on the degree and $\infty$-norm of the leading polynomial in terms of the similar quantities of a differential equation satisfied by $H(x)$.

\begin{restatable}{proposition}{boundsdiffeqtorec}\label{prop:boundsdiffeqtorec}
	Let $H(x)=\sum u_nx^n$ be a series verifying an equation of the form:
	\[q_r(x)\partial_r H(x)+\ldots + q_0(x)H(x)=0\]
Then $u_n$ verifies a recurrence of the form: 
	\[\sum_{k=-s}^{S} t_k(n)u_{n+k}=0, \, \text{for $n\geq s$, with $t_S\not= 0$,}\]
	and $0\leq s\leq \max_{i}(\deg(q_i))$, $0\leq S\leq r$, $\deg(t_S)\leq r$, $\|t_S\|_\infty\leq \max_i(\|q_i\|_\infty)\cdot r^{r+1}$.
\end{restatable}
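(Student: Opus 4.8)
The plan is to substitute the series $H(x)=\sum_n u_n x^n$ into the differential equation and extract the coefficient of $x^n$ to get the recurrence. First I would recall the two elementary operations on coefficients: multiplication by $x$ shifts $u_n\mapsto u_{n-1}$ (i.e.\ $[x^n]xH = u_{n-1}$), and applying $\partial_x$ sends $u_n\mapsto (n+1)u_{n+1}$, so that $[x^n]\partial_x^j H = (n+1)(n+2)\cdots(n+j)\,u_{n+j}$, a polynomial of degree $j$ in $n$ times $u_{n+j}$. Writing each $q_i(x)=\sum_{\ell} q_{i,\ell}\,x^\ell$, the term $q_i(x)\partial_x^i H$ contributes, to $[x^n]$, the sum $\sum_\ell q_{i,\ell}\,(n-\ell+1)(n-\ell+2)\cdots(n-\ell+i)\,u_{n-\ell+i}$. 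Summing over $i=0,\dots,r$ and collecting the coefficient of each $u_{n+k}$ gives the recurrence $\sum_k t_k(n)u_{n+k}=0$ valid for all $n$ large enough that no index $n-\ell+i$ is negative — concretely for $n\geq\max_i\deg(q_i)$, which gives the bound on $s$. The shift $k=i-\ell$ ranges over $i-\ell$: it is at most $r$ (when $\ell=0$, $i=r$), giving $S\leq r$, and at least $-\max_i\deg(q_i)$, giving $s\leq\max_i\deg(q_i)$.

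Next I would identify the leading coefficient $t_S$. The maximal shift $k=S$ is attained only by those pairs $(i,\ell)$ with $i-\ell=S$; in particular $S\le r$ and the only guaranteed contributor when $S=r$ is $(i,\ell)=(r,0)$. In general $t_S(n)=\sum_{i-\ell=S} q_{i,\ell}\,(n-\ell+1)\cdots(n-\ell+i)$, a sum of at most $r+1$ terms, each being a constant $q_{i,\ell}$ (with $|q_{i,\ell}|\le\|q_i\|_\infty\le\max_i\|q_i\|_\infty$) times a product of $i\le r$ linear factors $(n-\ell+m)$. Each such product is a polynomial in $n$ of degree $i\le r$, so $\deg(t_S)\le r$. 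For the $\infty$-norm: expanding $(n-\ell+1)\cdots(n-\ell+i)$ gives a polynomial whose coefficients are (signed) elementary symmetric functions of $\{1-\ell,\dots,i-\ell\}$; I would bound these crudely. Each factor $|n-\ell+m|$ has ``coefficient vector'' of $\infty$-norm at most $\max(1,|{-\ell+m}|)\le\max(1,\max_j\deg(q_j))\le r$ (using $\ell\le\deg(q_i)\le r$ and $m\le i\le r$, so $|{-\ell+m}|\le r$), hence the product of $i\le r$ such factors has $\infty$-norm at most $r^{i}\le r^{r}$ — here I would invoke a bound like Lemma~\ref{lm:multiplicationpolynomes} in one variable, or just argue directly that multiplying two univariate polynomials of degree $\le d$ multiplies the $\infty$-norm by at most $(d+1)$ and track the degrees, which stay $\le r$. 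Then summing over at most $r+1$ pairs $(i,\ell)$ and multiplying by $\max_i\|q_i\|_\infty$ gives $\|t_S\|_\infty\le \max_i(\|q_i\|_\infty)\cdot(r+1)\cdot r^r\le\max_i(\|q_i\|_\infty)\cdot r^{r+1}$, as claimed.

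The one genuinely delicate point — and the main obstacle — is showing $t_S\not\equiv 0$, i.e.\ that the recurrence really has order exactly governed by some $S\le r$ with nonvanishing leading polynomial, rather than collapsing to the zero relation. Here I would argue as follows: let $S$ be the \emph{largest} value of $i-\ell$ over all pairs $(i,\ell)$ with $q_{i,\ell}\ne 0$; such a pair exists since the differential equation is non-trivial (some $q_i\ne 0$, so some $q_{i,\ell}\ne 0$). For this maximal $S$, the polynomial $t_S(n)=\sum_{(i,\ell):\,i-\ell=S} q_{i,\ell}\,(n-\ell+1)\cdots(n-\ell+i)$ is a $\mathbb{Q}$-linear combination of the products $p_{i,\ell}(n):=(n-\ell+1)\cdots(n-\ell+i)$; I need to rule out an accidental cancellation. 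The cleanest route is to look at the leading behaviour as $n\to\infty$: among the pairs with $i-\ell=S$, pick the one(s) with the largest $i$ (equivalently largest $\ell$), say $i=i_0$; then $p_{i_0,\ell_0}(n)\sim n^{i_0}$ dominates strictly all $p_{i,\ell}$ with smaller $i$, so the coefficient of $n^{i_0}$ in $t_S$ is exactly $\sum_{(i_0,\ell_0):\,i_0-\ell_0=S} q_{i_0,\ell_0}$; but the pair $(i_0,\ell_0)$ is unique once $i_0$ and $S$ are fixed (namely $\ell_0=i_0-S$), so this coefficient equals the single nonzero scalar $q_{i_0,i_0-S}\ne 0$. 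Hence $t_S$ has a nonzero coefficient of $n^{i_0}$, so $t_S\not\equiv 0$, and moreover this refines $\deg(t_S)=i_0\le r$. Once $t_S\ne0$ is secured, rearranging the recurrence to isolate $u_{n+S}$ and the stated degree/norm bounds follow from the computations above.
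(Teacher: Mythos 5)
Your core computations — extracting $[x^n]$ from the equation, writing the contribution of $q_i(x)\partial_x^i H$ as $\sum_\ell q_{i,\ell}(n-\ell+1)\cdots(n-\ell+i)u_{n-\ell+i}$, defining $S$ as the maximal shift $i-\ell$ over nonzero coefficients, and proving $t_S\not\equiv 0$ by noting that the pair with the largest $i$ (unique once $S$ is fixed, since $\ell=i-S$) contributes a dominant term of degree $i_0$ with coefficient $q_{i_0,i_0-S}\neq 0$ — are exactly the paper's proof, which makes the same non-cancellation argument via the distinct degrees of the summands. The genuine gap is that you omit the paper's preliminary normalization: divide the equation by the largest power of $x$ dividing all the $q_i$'s, so that $q_{k_0}(0)\neq 0$ for some $k_0$. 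This is what guarantees $S\geq k_0\geq 0$, which is part of the statement and which your construction does not deliver. For instance, with $q_r(x)=x^{r+1}$ and all other $q_i=0$ (so $H$ is any polynomial of degree $<r$), the only nonzero coefficient is $q_{r,r+1}$ and your $S$ equals $-1$; no relation of the claimed form with $S\geq 0$ and $t_S\neq 0$ falls out of your collection of terms without first dividing by $x^{r+1}$.

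The same omission undermines your norm estimate as written. You bound the constants $-\ell+m$ appearing in the falling-factorial products by $r$ ``using $\ell\leq\deg(q_i)\leq r$'', but $\deg(q_i)\leq r$ is not a hypothesis: the degrees of the $q_i$ are unrelated to $r$ (only $s\leq\max_i\deg(q_i)$ is asserted, and that maximum may be huge). The correct reason the relevant constants are at most $r$ is that the pairs contributing to $t_S$ satisfy $\ell=i-S\leq i\leq r$, and this uses precisely $S\geq 0$, i.e.\ the normalization you skipped; without it $\ell$ can be as large as $\max_i\deg(q_i)$ and your bound degrades from $r^{r+1}$ to roughly $(\max_i\deg q_i)^r$. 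Adding the normalization is a one-line fix (it leaves each $\|q_i\|_\infty$ unchanged and only lowers degrees, so all stated bounds survive), after which your argument coincides with the paper's; the residual arithmetic slack in your last step ($(r+1)r^r\leq r^{r+1}$ is not literally true) is a constant-factor issue that the paper's own estimate shares in the case $S=0$ and is not the substantive problem.
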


\begin{proof}
Without loss of generality, we can assume that for some $k \in [0,r]$, $q_k(0)\neq 0$. If this is not the case, we can divide the equation by the largest $x^m$ which divides all $q_k$'s. This operation does not impact the $\infty$-norm of the polynomials and decreases their degree.

Let $D = \max_i \deg(q_i)$. We can rewrite the equation satisfied by $H(x)$ as:	
\[\sum_{k=0}^r\sum_{k'=0}^D a_{k,k'}x^{k'}\partial_x^kH(x)=0.\]	

This equation translates into the following recurrence relations on the coefficient of $H(x)$ which holds for $n \geq D$
\[
\sum_{k=0}^r\sum_{k'=0}^D a_{k,k'} (n-k'+1)(n-k'+2)\ldots (n-k'+k)u_{n-k'+k}=0.\]
By taking $j=k-k'$, we can rewrite the relation as:
	\[\sum_{j=-D}^{r} t_j(n)u_{n+j}=0 \;\text{with}\; t_j(n)=\sum_{k=\max(0,j)}^{r} a_{k,k-j} \prod_{\ell = 1}^{k}
(n+j-\ell+1)
	\]

Some of the polynomials $t_j$ are possibly zero. Therefore,  we need  to characterize the leading non-zero polynomial in the recurrence.

Consider $S \eqdef \max \{ k-k' \mid k \in [0,r], k' \in [0,D] \;\text{and}\; a_{k,k'} \neq 0 \}$. The assumption that  $q_{k_0}(0)\neq 0$ for some $k_0$ translates to $a_{k_0,0}\neq 0$ for some $k_0 \geq 0$. This implies that $S \geq k_0 \geq 0$.

We now prove that $t_S(n)$ is the leading polynomial in the recurrence relation. By maximality of $S$,  we have $t_j(n)=0$ for $S<j\leq r$. Moreover by definition, $t_S(n) = \sum_{k=S}^{r} r_k(n)$ with $r_k(n)= a_{k,k-S} \prod_{\ell = 1}^{k}
(n+S-\ell+1)$ for all $k \in [S,r]$. By definition of $S$, at least one on the $r_k$ is non-zero. As $\deg(r_k)=k$ for all $k \in [S,r]$, we have that $t_S(n)\neq 0$ and  $t_S(n)$ has degree at most $r$.

It only remains to bound the $\infty$-norm of $t_S$. Notice that when developing the product $\prod_{\ell = 1}^{k}(n+S-\ell+1)=\sum_{m=0}^kb_mn^m$, then for $m\in [k]$, $b_m=\sum_{I\subseteq [k],|I|=k-m} \prod_{\ell\in I}(S-\ell+1)$, so that $|b_m|\leq \binom{k}{m}r^{k-m}\leq k^mr^{k-m}\leq r^r$. Hence:
\[
\begin{array}{lcl}
\|t_S\|_\infty &\leq & \sum_{k=S}^{r} |a_{k,k-S}| \cdot  \| \prod_{\ell = 1}^{k} (n+S-\ell+1) \|_\infty \\
 &\leq & \sum_{k=S}^{r} |a_{k,k-S}| \cdot r^r \\
 &\leq & \max_i(\|q_i\|_\infty) \cdot r^{r+1}. \\
 \end{array}
\]
\end{proof}

\subsection{Bound for the sum of two univariate holonomic series}
\newcommand{\hauteur}{\mathrm{ht}}

In this section, we specialize the bounds of \cite[Theorem~2]{Kauers2014} for the sum of holonomic series to our notations.

\begin{proposition}[\cite{Kauers2014}]
\label{prop:bound-holonomic-sum}
Given two holonomic series $A(x)$ and $B(x)$ satisfying non-trivial differential equations of the form:
\[
\begin{array}{lcr}
p_r^A(x) \partial^r A(x) + \cdots + p_0^A(x) A(x) = 0  & \textrm{and} & 
p_r^B(x) \partial^r B(x) + \cdots + p_0^B(x) B(x) = 0 \\
\end{array},
\]
let $D=\max_{i\in[r]} (\deg(p_i^A),\deg(p_i^B))$ and $S_\infty=\max_{i\in[r]} (\|p_i^A)\|_\infty,\|p_i^B\|_\infty)$. Then, $C(x)=A(x)+B(x)$ satisfies a non-trivial differential equation of the form:
\[
q_{2r}(x) \partial^{2r} C(x) + \cdots + q_0(x) C(x) = 0,
\]
with for all $i \in [2r]$, $\deg(q_i) \leq 2(r+1)D$ and $\log(\|q_i\|) \leq O(r^2(1+\log(r)+\log(D)+\log(S_\infty))$.
\end{proposition}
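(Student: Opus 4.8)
The statement is the specialization of \cite[Theorem~2]{Kauers2014} to the notation used in this paper, so the plan is to recall the operator-theoretic reason why $C(x)$ is annihilated by an operator of order at most $2r$, and then re-derive the quantitative bounds by the standard linear-algebra construction, tracking every parameter through it.

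First I would set up the algebraic framework. Writing $L_A = p_r^A(x)\partial^r + \cdots + p_0^A(x)$ and $L_B$ analogously, these operators annihilate $A$ and $B$. Over $\mathbb{Q}(x)$, the family $\{\partial^i A\}_{i\geq 0}$ spans a space of dimension at most $r$, and likewise for $B$; hence the $2r+1$ functions $C,\partial C,\ldots,\partial^{2r}C$ all lie in a space of dimension at most $2r$ and are $\mathbb{Q}(x)$-linearly dependent. This already yields an annihilating operator of order at most $2r$ for $C$; the work is to make it effective with controlled polynomial coefficients. To do so, I would make the reductions explicit: by induction on $j$, one writes $\partial^j A = \sum_{i=0}^{r-1} \tfrac{N^A_{j,i}(x)}{(p_r^A(x))^{e_j}}\,\partial^i A$ with $N^A_{j,i}\in\mathbb{Q}[x]$ and a common denominator a bounded power of $p_r^A$. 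Each step $\partial^{j+1}A=\partial(\partial^j A)$ requires differentiating a rational function (which adds $O(D)$ to the relevant degrees and a bounded amount to heights) and one reduction of the emerging $\partial^r A$ term through $L_A$ (which multiplies coefficients by at most $\|p_i^A\|_\infty$ and adds $D$ to degrees). Iterating $2r$ times, the $N^A_{j,i}$ have degree $O(rD)$ and $\log\|N^A_{j,i}\|_\infty = O\bigl(r(\log r+\log D+\log S_\infty)\bigr)$, with the analogous bounds for $B$; combining $\partial^j C=\partial^j A+\partial^j B$ and clearing the common denominator $(p_r^A p_r^B)^{e}$ gives, for $j=0,\ldots,2r$, polynomial vectors $\vect{w}_j\in\mathbb{Q}[x]^{2r}$ (coordinates indexed by $\partial^0A,\ldots,\partial^{r-1}A,\partial^0B,\ldots,\partial^{r-1}B$) of degree $O(rD)$ and logarithmic height $O\bigl(r(\log r+\log D+\log S_\infty)\bigr)$ such that, after division by that denominator, $\partial^j C$ equals the pairing of $\vect{w}_j$ with the vector of the $\partial^i A$ and $\partial^i B$.

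Next I would invoke Cramer's rule in the form of Proposition~\ref{prop:cramer-rule}(2): stacking $\vect{w}_0,\ldots,\vect{w}_{2r}$ as the rows of a $(2r+1)\times 2r$ matrix $M$ over $\mathbb{Q}[x]$, there is a nonzero vector $(q_0,\ldots,q_{2r})$ whose entries are signed $2r\times2r$ minors of $M$ and which satisfies $\sum_j q_j\vect{w}_j=\vect 0$, hence $\sum_{j=0}^{2r} q_j(x)\,\partial^j C(x)=0$, a non-trivial differential equation of the announced shape. For the size bounds, each $q_i$ is the determinant of a $2r\times2r$ submatrix of $M$: Lemma~\ref{lm:hadamardbound} gives $\|q_i\|_\infty\leq R_1^{2r}(2r)^{r}$ with $R_1=\max_{k,l}\|M_{k,l}\|_1$, and combining this with Equation~\eqref{eq:norm 1 infty} and the height bound on the $\vect{w}_j$ yields $\log\|q_i\|_\infty = O\bigl(r^2(1+\log r+\log D+\log S_\infty)\bigr)$. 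For the degrees, a crude count via the determinant would only give $\deg(q_i)=O(r^2D)$; a sharper accounting, keeping track of the exact power of $p_r^A p_r^B$ appearing in the common denominator and of how many of the $\vect{w}_j$ actually reach maximal degree, sharpens this to $\deg(q_i)\leq 2(r+1)D$, which is the bound stated in \cite{Kauers2014}.

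The main obstacle is precisely this bookkeeping in the explicit reduction step: one must show that iterating ``differentiate, then reduce modulo $L_A$'' $2r$ times keeps the denominators as powers of $p_r^A$ of bounded exponent and keeps the numerator degrees \emph{linear} in $r$, since a naive analysis multiplies degrees at each step and would produce $\deg(q_i)=D^{O(r)}$ instead of $O(rD)$, and similarly would give an exponential-in-$r$ logarithmic height instead of the quadratic one. This is exactly the content of the analysis of \cite{Kauers2014}, which one can cite directly or reproduce by the careful degree/height tracking sketched above; once the $\vect{w}_j$ are under control, the Hadamard bound (Lemma~\ref{lm:hadamardbound}) and Cramer's rule (Proposition~\ref{prop:cramer-rule}) finish the argument mechanically.
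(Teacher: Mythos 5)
Your fallback option (``cite \cite[Theorem~2]{Kauers2014} directly'') is in fact exactly what the paper does: its proof of this proposition is a one-line invocation of that theorem for the degree bound, followed by arithmetic simplification of the height expression from Kauers' statement into the form $O\bigl(r^2(1+\log r+\log D+\log S_\infty)\bigr)$. So the statement itself is safe. However, the self-contained re-derivation that constitutes the body of your proposal has a genuine gap at the degree bound, and it is not a matter of bookkeeping that can be waved through.

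In your construction the rows of the matrix are the coordinate vectors of $\partial^j C$, $j=0,\dots,2r$, over the basis $\{\partial^i A,\partial^i B\}_{i<r}$, with denominators cleared row by row. The entries of row $j$ then have degree growing linearly in $j$ (roughly $2(j-r+1)D$ after multiplying by $(p_r^Ap_r^B)^{\,j-r+1}$), so any $2r\times 2r$ minor has degree of order $\sum_{j}(j-r)D=\Theta(r^2D)$; Cramer's rule (Proposition~\ref{prop:cramer-rule}) and Lemma~\ref{lm:hadamardbound} give you \emph{that} solution and nothing better. The ``sharper accounting'' you invoke is precisely the missing argument: the minors are in general a nontrivial polynomial multiple (their common content, essentially a power of $p_r^Ap_r^B$) of the primitive annihilator, and bringing $\Theta(r^2D)$ down to $2(r+1)D$ requires identifying and dividing out this content, which your sketch neither does nor reduces to a routine computation. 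The way the linear-in-$r$ degree bound is actually obtained in \cite{Kauers2014} is via a \emph{different} linear system, the cofactor/LCLM ansatz $Q_AL_A=Q_BL_B$: there one has $2r+2$ unknown coefficient functions and $2r+1$ equations, and the coefficient matrix has polynomial entries of degree at most $D$ (they are coefficients of the products $\partial^jL_A$, $\partial^jL_B$), so the Cramer minors have degree at most $(2r+1)D$ and the coefficients of $Q_AL_A$ at most $(2r+2)D=2(r+1)D$; the height bound then follows from Hadamard-type estimates on that same small matrix, which is visibly the shape of the expression quoted in the paper's proof. Your height estimate $O\bigl(r^2(1+\log r+\log D+\log S_\infty)\bigr)$ does come out of your matrix (dimension $O(r)$ times entry height $O(r(\log r+\log D+\log S_\infty))$), but the degree bound $\deg(q_i)\le 2(r+1)D$ does not follow from the construction as you describe it; either switch to the cofactor ansatz or simply cite the theorem, as the paper does.
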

\begin{proof}
By \cite[Theorem 2]{Kauers2014}, we immediately have that $C(x)$ satisfies a non-trivial differential equation of the form: \[
q_{2r} \partial^{2r} C(x) + \cdots + q_0 C(x) = 0 
\]
with for all $i \in [2r]$, $\deg(q_i) \leq 2(r+1)D$.
Furthermore for all $i \in [2r]$,
\[
\hauteur(\|q_i\|_\infty) \leq \hauteur(2r)+\hauteur((2r+1)!)+(2r+1)\hauteur(d)+(2r+2)((2r)(\hauteur(1)+\hauteur(d))+\hauteur(S_\infty))
\]
with $\hauteur(x)=\log(1+|x|)$.

As for all $x \geq 1$, $\log(1+x) \leq 1 + \log(x)$, we have that:
\[
\log(\|q_i\|_\infty)) \leq 8(r+1)^2  + (2r+2)\log(r) + (4r^2+6r+1)\log(D) + (2r+2)\log(S_\infty).
\]
\end{proof}

\subsection{Bounds for specialization}

In this section, we consider a holonomic multivariate series $H(x,y_1,\ldots,y_d)$ such that $H(x)=H(x,1,\ldots,1)$ is well-defined. As stated in Proposition~\ref{prop:specialisation_holonome}, $H(x)$ is holonomic. We derive bounds on the differential equation satisfied by $H(x)$, given an equation satisfied by $H(x, \vect{y})$.

Before proceeding, we need a technical lemma.

\begin{lemma}
\label{lem:polynomial-div-instance}
Let $P_0, P_1$, \ldots, $P_d$ be a sequence of polynomials such that for all $\ell \geq 0$, $P_\ell$ belongs to $\mathbb{Q}[y_\ell,\ldots,y_{d}]$ and for all $\ell \geq 0$, $P_{\ell+1}(y_{\ell+1},\ldots,y_{d})$ is obtained by dividing 
$P_{\ell}(y_\ell,\ldots,y_d)$ by $(y_{\ell}-1)^{k_{\ell}}$ for some $k_{\ell} \geq 0$ and setting $y_{\ell}$ to $1$.

We have $\deg(P_d) \leq \deg_m(P_0)$ and $\| P_d \|_{\infty} \leq (\deg_m(P_0)+1)^d \; 2^{\deg(P_0)} \| P_0 \|_{\infty}$.
\end{lemma}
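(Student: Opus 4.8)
The statement to prove is Lemma~\ref{lem:polynomial-div-instance}, which tracks how the degree and $\infty$-norm of a polynomial evolve under the iterated operation "divide by the largest power of $(y_\ell-1)$, then set $y_\ell=1$". The plan is to proceed by induction on the number of variables, i.e.\ to analyse a single step of the form $P \mapsto Q$, where $Q(y_{\ell+1},\dots,y_d)$ is obtained from $P(y_\ell,\dots,y_d)$ by dividing by $(y_\ell-1)^{k_\ell}$ and specialising $y_\ell=1$, and then compose the $d$ steps.

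First I would handle the degree. Dividing by $(y_\ell-1)^{k_\ell}$ can only lower the total degree, and specialising a variable to $1$ also does not increase the (total) degree of the remaining polynomial; moreover the maxdegree in the surviving variables is bounded by the maxdegree of $P$. So after the full chain, $\deg(P_d)\le \mdeg(P_d)\le\cdots\le\mdeg(P_0)$, and in fact since $P_d$ has no variables left beyond $y_d$ one gets $\deg(P_d)\le\mdeg(P_0)$ directly. This part is essentially bookkeeping and I would dispatch it quickly.

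The substantive part is the norm bound, and the main obstacle is controlling the growth of coefficients under Euclidean division by $(y_\ell-1)^{k_\ell}$. Division is the dangerous operation: coefficients of a quotient can a priori be much larger than those of the dividend. The key observation I would use is that division by $(y_\ell-1)$ in $\mathbb{Q}[y_\ell][y_{\ell+1},\dots,y_d]$, viewing $P$ as a polynomial in $y_\ell$ of degree $e\le\mdeg(P_0)$ with coefficients in $\mathbb{Q}[y_{\ell+1},\dots,y_d]$, multiplies the $\infty$-norm by at most a factor depending only on $e$: if $P=\sum_{j=0}^e c_j y_\ell^j$ and $P=(y_\ell-1)\tilde P$ with $\tilde P=\sum_{j=0}^{e-1}\tilde c_j y_\ell^j$, then the recurrence $\tilde c_{e-1}=c_e$, $\tilde c_{j-1}=c_j+\tilde c_j$ gives $\|\tilde c_j\|_\infty\le\sum_{i>j}\|c_i\|_\infty$, hence $\|\tilde P\|_\infty\le (e+1)\|P\|_\infty$ coefficientwise; iterating $k_\ell$ times costs a factor $(e+1)^{k_\ell}\le 2^{k_\ell\log_2(e+1)}$. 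Then specialising $y_\ell=1$ sums at most $(e+1)$ coefficients per monomial, costing another factor $\le(\mdeg(P_0)+1)$. Since $\sum_\ell k_\ell\le\deg(P_0)$ (each $(y_\ell-1)^{k_\ell}$ divides $P_\ell$, whose total degree is at most $\deg(P_0)$, so the $k_\ell$ sum to at most $\deg(P_0)$), the total division cost is at most $(\mdeg(P_0)+1)^{\sum k_\ell}\le(\mdeg(P_0)+1)^{\deg(P_0)}$; combined with the $d$ specialisation factors $(\mdeg(P_0)+1)$ this is dominated by $(\mdeg(P_0)+1)^d\,2^{\deg(P_0)}$ after using $(\mdeg(P_0)+1)^{\deg(P_0)}\le 2^{\deg(P_0)\log_2(\mdeg(P_0)+1)}$ and folding the logarithmic factor into the constant — or, more cleanly, by noting each division-by-$(y_\ell-1)$ step over all $\ell$ contributes at most $\prod_\ell(e_\ell+1)^{k_\ell}$ where $e_\ell\le\deg(P_\ell)$ are decreasing, which one can bound by $2^{\deg(P_0)}$ when $\mdeg$ is small and otherwise absorb. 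I expect the cleanest writeup to separate the $2^{\deg(P_0)}$ factor (from the $(y_\ell-1)$-divisions, using the crude bound $(e+1)\le 2^e$ only when needed) from the $(\mdeg(P_0)+1)^d$ factor (from the $d$ specialisations), matching exactly the form in the statement; reconciling the exact exponents with the claimed bound is the one place requiring care rather than routine manipulation.
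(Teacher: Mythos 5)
Your degree bound is fine, and your route (iterated Euclidean division by $(y_\ell-1)$ with coefficient-norm tracking, then summing at $y_\ell=1$) is genuinely different from the paper's, but as written it does not reach the stated inequality. Tracking one division at a time, each division costs a factor of about $(\mdeg(P_0)+1)$ and each specialisation another $(\mdeg(P_0)+1)$; since $\sum_\ell k_\ell$ can be as large as $\deg(P_0)$, your argument yields $\|P_d\|_\infty \leq (\mdeg(P_0)+1)^{\deg(P_0)+d}\,\|P_0\|_\infty$, which exceeds the claimed $(\mdeg(P_0)+1)^d\,2^{\deg(P_0)}\,\|P_0\|_\infty$ as soon as $\mdeg(P_0)\geq 2$. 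The lemma is an exact inequality with no asymptotic slack, so ``folding the logarithmic factor into the constant'' is not available, and the fallback ``bound by $2^{\deg(P_0)}$ when $\mdeg$ is small and otherwise absorb'' is not an argument. Even the natural refinement of your scheme --- doing the $k_\ell$ divisions and the specialisation in one shot per variable, which costs $\binom{e_\ell+1}{k_\ell+1}\leq 2^{e_\ell}$ --- only gives $2^{\sum_\ell \deg_{y_\ell}(P_0)}$, and $\sum_\ell \deg_{y_\ell}(P_0)$ can be of order $d\cdot\mdeg(P_0)$, again larger than $\deg(P_0)$: any analysis that multiplies a single norm bound per variable inevitably mixes the two sources of growth.

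The missing idea, and what the paper does, is to work at the level of individual monomials rather than of norms. The composite step ``divide by $(y_\ell-1)^{k_\ell}$ and set $y_\ell=1$'' is exactly $\tfrac{1}{k_\ell!}\,\partial_{y_\ell}^{k_\ell}(\cdot)\big|_{y_\ell=1}$, so writing $P_0=\sum_{\vect\alpha} a_{\vect\alpha}\,y_0^{\alpha_0}\cdots y_d^{\alpha_d}$ one gets the closed form $P_d=\sum_{\vect\alpha} a_{\vect\alpha}\binom{\alpha_0}{k_0}\cdots\binom{\alpha_{d-1}}{k_{d-1}} y_d^{\alpha_d}$, the sum running over the monomials with $\alpha_{\ell}\geq k_{\ell}$ for all $\ell<d$. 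Each product of binomials is at most $2^{\alpha_0+\cdots+\alpha_{d-1}}\leq 2^{\deg(P_0)}$, and at most $(\mdeg(P_0)+1)^d$ monomials of $P_0$ collapse onto a given power $y_d^{\alpha_d}$; this keeps the ``binomial'' factor and the ``collision-count'' factor separate and gives exactly the stated bound. You would need to replace your per-step norm recursion by this monomial-level bookkeeping to close the gap.
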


\begin{proof}
The polynomial $P_0$ can be written as the sum of its monomials $\sum_{\vect{\alpha} \in A_0} a_{\vect{\alpha}} y_0^{\alpha_0}\cdots y_d^{\alpha_d}$ for some set $A_0 \subset \mathbb{N}^{d+1}$.

We will prove by induction on $\ell$ that for all $\ell \in [0,d]$, \[
P_\ell(y_\ell,\ldots,y_{d}) = \sum_{\vect{\alpha} \in A_\ell} a_{\vect{\alpha}} \binom{\alpha_0}{k_0} \cdots \binom{\alpha_{\ell-1}}{k_{\ell-1}} y_\ell^{\alpha_\ell}\cdots y_d^{\alpha_d}.
\]
where $A_\ell = \{ \vect{\alpha} \in A_0 : \alpha_{\ell'} \geq k_{\ell'}\;\textrm{for $\ell' \in [0,\ell-1]$} \}$.

The property trivially holds for $\ell=0$. For the induction step, it is enough to remark that:
\[
P_{\ell+1}(y_{\ell+1},\ldots,y_{d}) = \frac{1}{k_\ell!} (\partial_{y_{\ell}}^{k_\ell} P_{\ell})(1,y_{\ell+1},\ldots,y_{d}).
\]

In particular, we have the following characterization of $P_d$ in terms of $P_0$.
\[
P_d(y_{d}) = \sum_{\vect{\alpha} \in A_d} a_{\vect{\alpha}} \binom{\alpha_0}{k_0} \cdots \binom{\alpha_{d-1}}{k_{d-1}}  y_d^{\alpha_d}.
\]
Clearly, the degree $\deg(P_d)$ is bounded by the maxdegree of $P_0$. For all $k \in [0,\deg(P_d)]$, consider the coefficient of $c_k$ of $y_d^k$ in $P_d$. We have:
\[
\begin{array}{lcl}
|c_k| & = &  |\sum_{\vect{\alpha} \in A_d, \alpha_d=k} a_{\vect{\alpha}} \binom{\alpha_0}{k_0} \cdots \binom{\alpha_{d-1}}{k_{d-1}}| \\
 & \leq & \sum_{\vect{\alpha} \in A_d, \alpha_d=k}  |a_{\vect{\alpha}}|\cdot 2^{\alpha_0+\cdots+\alpha_{d-1}} \\
& \leq & \|P_0\|_\infty \cdot 2^{\deg(P_0)} \cdot ( \sum_{\vect{\alpha} \in A_d, \alpha_d=k} 1 )\\
& \leq & \|P_0\|_\infty \cdot 2^{\deg(P_0)} \cdot (\mdeg(P_0)+1)^d.
\end{array}
\]
This implies the announced bound on $\|P_d\|_\infty$.

\end{proof}

\begin{restatable}{proposition}{boundsspecialization}\label{prop:boundsspecialization}
	Let $H(x,y_1, \ldots, y_d)$ a multivariate series satisfying a non-trivial equation of the form:
	 \[p_r(x,y_1, \ldots, y_d)\partial_{x}^r H(x,y_1, \ldots, y_d)+\ldots+ p_0(x,y_1, \ldots, y_d)H(x,y_1, \ldots, y_d)=0.\] 
If $H(x)=H(x,1,\ldots,1)$ is well defined, then $H(x)$ satisfies a non-trivial equation of the form:
	\[q_s(x)\partial_x^s H(x)+\cdots+ q_0(x)H(x)=0\]
with $s \leq r$ and for $i \in [0,s]$, $\deg(q_i) \leq \mdeg(p_i)$ and $\|q_i\|_\infty \leq \|p_i\|_\infty   (\deg_m(p_i)+1)^d \; 2^{\deg(p_i)}$.
\end{restatable}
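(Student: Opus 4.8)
The plan is to combine the iterative construction underlying Proposition~\ref{prop:specialisation_holonome} with the quantitative bookkeeping of Lemma~\ref{lem:polynomial-div-instance}, carried out coefficient by coefficient. First I would set the variables $y_1,\dots,y_d$ to $1$ one at a time, in this order, exactly as in the proof of Proposition~\ref{prop:specialisation_holonome}. At step $\ell$ one starts from a non-trivial equation $\sum_{i} p_i^{(\ell-1)}(x,y_\ell,\dots,y_d)\,\partial_x^i H^{(\ell-1)}=0$ (with $p_i^{(0)}=p_i$ and $H^{(0)}=H$), divides every coefficient $p_i^{(\ell-1)}$ by $(y_\ell-1)^{k_\ell}$, where $k_\ell$ is the largest power of $(y_\ell-1)$ dividing all of them simultaneously, and then substitutes $y_\ell=1$, obtaining $p_i^{(\ell)}(x,y_{\ell+1},\dots,y_d)$ and the series $H^{(\ell)}=H^{(\ell-1)}|_{y_\ell=1}$, which is well defined as a power series because $H(x,1,\dots,1)$ is assumed to be. Since $\partial_x$ does not involve $y_\ell$, a direct induction gives $(\partial_x^i H^{(\ell-1)})|_{y_\ell=1}=\partial_x^i H^{(\ell)}$, so the substituted equation is again a valid differential equation for $H^{(\ell)}$; moreover it stays non-trivial, because after removing the \emph{largest} common power of $(y_\ell-1)$ at least one coefficient is not divisible by $(y_\ell-1)$ and hence survives the substitution $y_\ell=1$. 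After $d$ steps one reaches $q_i:=p_i^{(d)}\in\mathbb{Q}[x]$ and a non-trivial equation $\sum_{i=0}^r q_i(x)\partial_x^i H(x)=0$; taking $s$ to be the largest index with $q_s\neq 0$ gives $s\le r$ and the announced shape of the equation (the order can only drop, never grow).

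It then remains to bound $\deg(q_i)$ and $\|q_i\|_\infty$ for $i\in[0,s]$, which is where Lemma~\ref{lem:polynomial-div-instance} enters. Fix $i$ and follow the fate of the single coefficient $p_i$ through the process. Since $k_\ell$ is a minimum over all coefficients of the multiplicity of $(y_\ell-1)$ as a factor, it is in particular at most the multiplicity of $(y_\ell-1)$ in $p_i^{(\ell-1)}$, so every division $p_i^{(\ell-1)}\mapsto p_i^{(\ell-1)}/(y_\ell-1)^{k_\ell}$ is exact. Hence the sequence $p_i=p_i^{(0)},p_i^{(1)},\dots,p_i^{(d)}=q_i$ is precisely of the type handled by Lemma~\ref{lem:polynomial-div-instance}, after relabelling the ambient variables so that $(y_1,\dots,y_d,x)$ play the roles of $(y_0,\dots,y_{d-1},y_d)$ in that lemma: we kill $d$ variables and keep the last one, which is $x$. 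Applying the lemma with $P_0=p_i$ and $P_d=q_i$ then yields exactly $\deg(q_i)\le\mdeg(p_i)$ and $\|q_i\|_\infty\le(\mdeg(p_i)+1)^d\,2^{\deg(p_i)}\,\|p_i\|_\infty$ (and these bounds hold trivially when $q_i=0$).

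The only genuine difficulty is already encapsulated in Lemma~\ref{lem:polynomial-div-instance}: both the exact division by $(y_\ell-1)^{k_\ell}$ and the evaluation $y_\ell=1$ can inflate the $\infty$-norm, since many monomials collapse onto the same one and their coefficients add up, and keeping this under control requires the explicit binomial-coefficient expression for $P_d$ in terms of $P_0$ proved there. Everything else is bookkeeping: checking that the successive divisions are exact, that the order never increases, that the resulting equation stays non-trivial, and that the relabelling matches the conventions of Lemma~\ref{lem:polynomial-div-instance}.
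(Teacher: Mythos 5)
Your proof is correct and follows essentially the same route as the paper's: specialize the $y_\ell$'s to $1$ one at a time, dividing out the largest common power of $(y_\ell-1)$ to keep the equation non-trivial, and then invoke Lemma~\ref{lem:polynomial-div-instance} on the resulting sequence $p_i=p_i^{(0)},\ldots,p_i^{(d)}=q_i$ for the degree and norm bounds. The only differences (order of elimination of the variables, explicitly noting exactness of the divisions and the commutation of $\partial_x$ with the substitution) are cosmetic or slightly more careful than the paper's own write-up.
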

	
\begin{proof}
For all $\ell \in [0,d]$, we inductively define a series $H_\ell$ in the variables $x,y_1,\ldots,y_{d-\ell}$ by taking
$H_0(x,y_1,\ldots,y_d)=H(x,y_1,\ldots,y_d)$ and for $\ell \in [1,d]$, $H_\ell$ is obtained by setting the variable $y_{d-\ell+1}$ to 1 in $H_{\ell-1}$ (i.e., $H_{\ell}(x,y_1,\ldots,y_{d-\ell})=H_{\ell-1}(x,y_1,\ldots,y_{d-\ell},1)$).
In particular, $H_d(x)$ is equal to $H(x)$.

We prove by induction that for all $\ell \in [0,d]$, $H_\ell$ satisfies a non-trivial equation of the form:
\[p_r^{[\ell]}(x,y_1, \ldots, y_{d-\ell})\partial_{x}^r H_\ell(x,y_1, \ldots, y_{d-\ell})+\ldots+ p^{[\ell]}_0(x,y_1, \ldots, y_{d-\ell})H_\ell(x,y_1, \ldots, y_{d-\ell})=0\]
and for all $\ell \in [1,d]$, there exists $k_\ell \geq 0$ such that for all $j \in [0,r]$, $p_j^{[\ell]}$ is obtained by dividing $p_k^{[\ell-1]}$ by $(y_{d-\ell+1}-1)^{k_\ell}$ and then setting $y_{d-\ell+1}$ to $1$.

The announced bounds on the polynomial $q_i$ are then obtained by applying Lemma~\ref{lem:polynomial-div-instance} to the sequence of polynomial $p_i^{[0]}=p_i, p_i^{[1]}, \ldots, p_i^{[d]}=q_i$.

The base case is immediate. Assume that the property holds for some $\ell$. That is to say that $H_\ell$ satisfies a non-trivial equation:
\[p_r^{[\ell]}(x,y_1, \ldots, y_{d-\ell})\partial_{x}^r H_\ell(x,y_1, \ldots, y_{d-\ell})+\ldots+ p^{[\ell]}_0(x,y_1, \ldots, y_{d-\ell})H_\ell(x,y_1, \ldots, y_{d-\ell})=0\]

Let $k_{\ell+1} \geq 0$ be the greatest $k_{\ell+1}$ such for all $j \in [0,r]$, $p_j^{[\ell]}$ can be written as $(y_{d-\ell}-1)^{k_{\ell+1}} \cdot s_j^{[\ell]}(x,y_1,\ldots,y_{d-\ell})$. Hence the series $H_\ell$ satisfies the non-trivial equation:
\[s_r^{[\ell]}(x,y_1, \ldots, y_{d-\ell})\partial_{x}^r H_\ell(x,y_1, \ldots, y_{d-\ell})+\ldots+ s^{[\ell]}_0(x,y_1, \ldots, y_{d-\ell})H_\ell(x,y_1, \ldots, y_{d-\ell})=0\]

The series $H_{\ell+1}$ satisfies the equation:
\[s_r^{[\ell]}(x,y_1, \ldots, y_{d-\ell-1},1)\partial_{x}^r H_{\ell+1}(x,y_1, \ldots, y_{d-\ell-1})+\ldots+ s^{[\ell]}_0(x,y_1, \ldots, y_{d-\ell-1},1)H_{\ell+1}(x,y_1, \ldots, y_{d-\ell-1})=0\]
This equation is non-trivial as otherwise the polynomials
$s_j^{[\ell]}$ would all be divisible by $(y_{d-\ell}-1)$ which would contradict the maximality of $k_{\ell+1}$.
Hence we can take $p_j^{[\ell+1]}(x,y_1, \ldots, y_{d-\ell-1})=s_j^{[\ell]}(x,y_1, \ldots, y_{d-\ell-1},1)$ for all $j \in [0,r]$.
\end{proof}

\subsection{Bounds for the Hadamard product of two rational series}\label{sec:hadamard}

In this section, we are interested in giving upper bounds on 
the differential equations for 
the Hadamard product of two rational series $F_1$ and $F_2$: we therefore aim at providing upper bounds for the maxdegrees and the coefficients of the polynomials that appear in the description of $F_1\odot F_2$
as partial differential equations in its variables $x_1$, \ldots $x_n$.

Our approach consists in taking Lipshitz's construction that proves that the Hadamard product of two holonomic series is holonomic, and studying it in details when the two series are rational. Let $\vect x = (x_1,\ldots,x_n)$ and $\vect y=(y_1,\ldots,y_n)$ be vectors of formal variables. In a nutshell, 
starting with $F_1(\vect x)$ and $F_2(\vect x)$, Lipshitz introduces the series $F(\vect x,\vect y)$ defined by
\begin{equation}\label{eq:F}
F(\vect x,\vect y) = \frac1{y_1\cdots y_n}
F_1\left(\frac{x_1}{y_1},\ldots,\frac{x_n}{y_n}\right)\cdot F_2(y_1,\ldots,y_n).
\end{equation}
Though this doubles the number of variables,
it transforms the Hadamard product into a simple product and a coefficient extraction, as one can establish that
\begin{equation}\label{eq:hadamard lipshitz}
F_1(\vect x)\odot F_2(\vect x) = [y_1^{-1}\cdots y_n^{-1}]F(\vect x, \vect y).
\end{equation}
Indeed, let us write $a_{i_1,\ldots, i_n}=[x^{i_1}\ldots x^{i_n}]F_1$ and $b_{i_1,\ldots, i_n}=[x^{i_1}\ldots x^{i_n}]F_2$, so that: 
\[F(\vect x,\vect y)=\frac{1}{y_1\ldots y_n}\sum_{i_1,\ldots, i_n}\sum_{j_1,\ldots, j_n}a_{i_1,\ldots, i_n}b_{j_1,\ldots, j_n}\big(\tfrac{x_1}{y_1}\big)^{i_1}\ldots \big(\tfrac{x_n}{y_n}\big)^{i_n}y_1^{j_1}\ldots y_n^{j_n}.\]
Hence $\displaystyle[(y_1\ldots y_n)^{-1}]F=[(y_1\ldots y_n)^{0}]\sum_{\substack{i_1,\ldots, i_n\\j_1,\ldots, j_n}}a_{i_1,\ldots, i_n}b_{j_1,\ldots, j_n}x_1^{i_1}\ldots x_n^{i_n}y_1^{j_1-i_1}\ldots y_n^{j_n-i_n}$. The terms in $y_1^0$ in the sum appear when $i_1=j_1$. More generally, the terms in $y_1^0\ldots y_n^0$ in the sum appear every time in the sum we have $i_1=j_1$, \ldots, $i_n=j_n$.

So $[(y_1\ldots y_n)^{-1}]F=\sum_{i_1,\ldots, i_n}a_{i_1,\ldots, i_n}b_{i_1,\ldots, i_n}x_1^{i_1}\ldots x_n^{i_n}=F_1(\vect x)\odot F_2(\vect x)$.

\begin{remark}
	Multiplying by a factor $\frac{1}{y_1\ldots y_n}$ in the definition of $F$, to  extract later  $[(y_1\ldots y_n)^{-1}]F$, may seem useless. It turns out that Lipshitz's proof relies strongly on the fact that the extracted coefficient to retrieve the Hadamard's product is indeed $[(y_1\ldots y_n)^{-1}]F$. The main reason is that $(y_1\ldots y_n)^{-1}$ cannot come from the derivation of any term $y_1^{i_1}\ldots y_n^{i_n}$, so that the differential equations for the Hadamard's product can be extracted from the differential equations for $F$.
\end{remark}

\begin{remark}
We used the most straightforward approach, as our primary goal is to provide a first upper bound for the complexity of the inclusion test for weakly-unambiguous automata. Other techniques, based on integral representation of the Hadamard product, may give better bounds, as it was done for some combinatorial walks in~\cite{bostan2017}.
\end{remark}

In this section, as we  already did in the presentation above,  we use vector notations for readability:
$\vect x = (x_1,\ldots,x_n)$, $\vect x^{\vect \alpha}=x_1^{\alpha_1}\cdots x_n^{\alpha_n}$, $[\vect y^{-\vect 1}]=[y_1^{-1}\cdots y_n^{-1}]$, and so on.

\subsubsection{Statement and main steps}
The bounds for the Hadamard product of two rational series are given in the following theorem.%
\begin{restatable}{theorem}{boundhadamardproductoftworational}\label{thm:boundhadamardproductoftworational}
Let $P_1,Q_1,P_2$ and $Q_2$ be four polynomials of $\mathbb{Z}[\vect x]$. We define the quantities
\begin{align*}
M & = 1 + \max\left\{\mdeg(P_1), \mdeg(P_2), \mdeg(Q_1), \mdeg(Q_2)\right\}\\
\normF & = \max\left\{\|P_1\|_\infty,\|P_2\|_\infty,\|Q_1\|_\infty,\|Q_2\|_\infty\right\}
\end{align*}
For every $j\in\{1,\ldots,n\}$ there exist $r\in\mathbb{N}$ and $r+1$ polynomials $p_0(\vect x)$,\ldots, $p_r(\vect x)$ of $\mathbb{Z}[\vect x]$, not all null, such that the series  $\frac{P_1}{Q_1}\odot \frac{P_2}{Q_2}(\vect x)$ satisfies the equation:
\[p_r(\vect x)\partial_{x_j}^rf(\vect x)+\ldots+ p_0(\vect x)f(\vect x)=0,\]
with $r+\deg({p_{i}})< (nM)^{O(n)}$ and $\log\|p_{i}\|_\infty\leq (nM)^{O(n^2)}(1+\log(\normF))$.
\end{restatable}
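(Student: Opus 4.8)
The plan is to follow Lipshitz's construction as presented in the excerpt, namely equations~\eqref{eq:F} and~\eqref{eq:hadamard lipshitz}, and to track all size parameters (maxdegree and $\infty$-norm of the polynomial coefficients, and the order) through each step. Writing $F_1 = P_1/Q_1$ and $F_2 = P_2/Q_2$, the auxiliary series is
\[
F(\vect x, \vect y) = \frac{1}{y_1\cdots y_n}\cdot\frac{P_1(x_1/y_1,\ldots,x_n/y_n)}{Q_1(x_1/y_1,\ldots,x_n/y_n)}\cdot\frac{P_2(\vect y)}{Q_2(\vect y)}.
\]
After clearing denominators by multiplying numerator and denominator by $\vect y^{\vect m}$ for a suitable $\vect m$ with each $m_i \leq \mdeg(Q_1)$, $F$ becomes an honest rational fraction $\widetilde P/\widetilde Q$ in the $2n$ variables $\vect x, \vect y$, with $\mdeg(\widetilde P), \mdeg(\widetilde Q) \leq O(M)$ and $\|\widetilde P\|_\infty, \|\widetilde Q\|_\infty \leq \normF^2 \cdot M^{O(n)}$ by Lemma~\ref{lm:multiplicationpolynomes}. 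In particular $F$ is rational, hence holonomic; but I need an \emph{explicit} holonomic system for $F$ with controlled parameters. The cleanest route is: a rational fraction $\widetilde P/\widetilde Q$ satisfies, for each variable $z$, the first-order equation $\widetilde Q\,\partial_z F - (\partial_z \widetilde P - F\,\partial_z \widetilde Q) = 0$, i.e. $\widetilde Q^2 \partial_z F = \widetilde Q\, \partial_z\widetilde P - \widetilde P\, \partial_z\widetilde Q$; but to extract the diagonal coefficient I actually need a differential system in the $\vect y$ variables whose coefficients do not involve the unknown higher-order data. I would instead write the standard telescoping system: since $\partial_z F = (\partial_z \widetilde P \cdot \widetilde Q - \widetilde P \cdot \partial_z \widetilde Q)/\widetilde Q^2$, iterating gives that each $\partial_{\vect y}^{\vect\beta} F$ is of the form $(\text{polynomial})/\widetilde Q^{|\vect\beta|+1}$, and the $\mathbb{Q}(\vect x,\vect y)$-span of all derivatives has dimension $1$ — this is the key simplification from rationality: $F$ is D-finite of order $1$ in every variable, with leading coefficient a power of $\widetilde Q$.

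Next I would carry out the coefficient extraction of~\eqref{eq:hadamard lipshitz}. This is the technical heart. From a holonomic system for $F(\vect x,\vect y)$ in the $\vect y$-variables, Lipshitz's argument produces a holonomic system for $H(\vect x) := [\vect y^{-\vect 1}] F(\vect x,\vect y)$ in the $\vect x$-variables: one takes the differential equations for $F$ in $x_j$, which have polynomial coefficients in $\vect x$ and $\vect y$, multiplies through to clear $\vect y$-denominators, and extracts the coefficient of $\vect y^{-\vect 1}$ — the point (as stressed in the remark after~\eqref{eq:hadamard lipshitz}) is that $\vect y^{-\vect 1}$ cannot arise from differentiating any monomial in $\vect y$, so the $x_j$-differential equation survives the extraction. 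Concretely, I would work variable by variable: fix $j$, consider the relation expressing that $F, \partial_{x_j}F, \ldots$ are $\mathbb{Q}(\vect x,\vect y)$-linearly dependent (order at most $1$ thanks to rationality, but after clearing the $\vect y$-dependence the effective order in $\vect x$ over $\mathbb{Q}(\vect x)$ can grow), then use a linear-algebra / Cramer's rule argument (Proposition~\ref{prop:cramer-rule} and the Hadamard-type bound of Lemma~\ref{lm:hadamardbound}) to produce an equation for $H$ with coefficients in $\mathbb{Z}[\vect x]$ only. The number of $\vect y$-monomials one must kill is bounded polynomially in $M$ per variable and exponentially in $n$ overall, which is where the $(nM)^{O(n)}$ in the order and maxdegree comes from, and the Hadamard bound on the resulting determinant (a matrix of size $(nM)^{O(n)}$ with entries of $\infty$-norm $\normF^{O(n)} M^{O(n)}$) produces the $\log\|p_i\|_\infty \leq (nM)^{O(n^2)}(1+\log\normF)$ bound — note the square in the exponent: it comes precisely from raising an entry bound of the form $(\text{poly in }M,\normF)$ to a power that is itself $(nM)^{O(n)}$, and taking logarithms.

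Finally I would assemble the pieces: Step 1 (build $\widetilde P/\widetilde Q$ and bound its parameters — routine, via Lemmas~\ref{lm:multiplicationpolynomes} and~\eqref{eq:norm 1 infty}); Step 2 (write a first-order-per-variable holonomic system for the rational fraction $F$ with leading coefficient a power of $\widetilde Q$, and bound the degrees/norms of the numerators of $\partial_{\vect y}^{\vect\beta}F$ — this is controlled because each such numerator is a polynomial combination of $\widetilde P, \widetilde Q$ and their derivatives of bounded order); Step 3 (the coefficient extraction $[\vect y^{-\vect 1}]$, producing a $\mathbb{Z}[\vect x]$-equation for $H = F_1\odot F_2$ in $x_j$ via elimination / Cramer / Hadamard bound); Step 4 (collect the estimates). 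The main obstacle I anticipate is Step 3: making the coefficient-extraction argument quantitative requires setting up the right finite-dimensional linear system — one must argue that a bounded number of "shifts" $\vect y^{\vect\gamma}\partial_{\vect y}^{\vect\beta}\partial_{x_j}^{i}F$ spans a finite-dimensional space modulo the ideal generated by $\widetilde Q$-clearing, and that the elimination of the $\vect y$-monomials can be done with a matrix whose size and entry-norms are as claimed. A secondary subtlety is bookkeeping the difference between total degree and maxdegree throughout (Lipshitz's substitution $x_i \mapsto x_i/y_i$ interacts well with maxdegree, badly with total degree), so I would consistently use $\mdeg$ as in Section~\ref{sec:toolbox} and only convert to $\deg$ at the very end via~\eqref{eq:norm 1 infty}. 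With these in hand the stated bounds $r + \deg(p_i) < (nM)^{O(n)}$ and $\log\|p_i\|_\infty \leq (nM)^{O(n^2)}(1+\log\normF)$ follow.
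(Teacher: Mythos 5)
Your overall plan is the same as the paper's: clear denominators to write $F$ as a quotient of two polynomials in the $2n$ variables with controlled maxdegree and norm (Lemma~\ref{lem:preparation_hadamard}), observe that every derivative of $F$ is a polynomial over a power of the denominator (Lemma~\ref{lm:linear1}), and obtain a dependency by a counting argument combined with Cramer's rule and the Hadamard-type bound (Lemma~\ref{lm:lipshitz_famille_liee}), then read off the equation for the diagonal by extracting a $\vect y$-coefficient. But the step you yourself identify as the heart (your Step~3) is exactly where the proposal, as written, would fail. First, the elimination family is mis-specified: you propose to find an integer dependency among the shifts $\vect y^{\vect\gamma}\partial_{\vect y}^{\vect\beta}\partial_{x_j}^{i}F$. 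With \emph{integer} coefficients attached to $\vect y$-monomial multipliers, extracting a fixed $\vect y$-coefficient yields a constant-coefficient relation that mixes several different $\vect y$-coefficient series of $F$; it does not produce a polynomial-coefficient equation in $\vect x$ for the single series $f=[\vect y^{-\vect 1}]F$ (and the diagonal of a rational series does not in general satisfy a constant-coefficient equation). The multipliers must be monomials in $\vect x$, i.e.\ the family $x_1^{\alpha_1}\cdots x_n^{\alpha_n}\partial_{y_1}^{\beta_1}\cdots\partial_{y_n}^{\beta_n}\partial_{x_j}^{\gamma}F$ with $\|\vect\alpha\|_1+\|\vect\beta\|_1+\gamma<N$: the same counting (about $\binom{N+2n}{2n+1}$ operators against $(NM)^{2n}$ monomials after multiplying by $Q^N$) and the same Cramer/Hadamard estimates go through, but now grouping the dependency by $\vect\beta$ produces operators $p_{\vect\beta}(\vect x,\partial_{x_j})$ whose coefficients lie in $\mathbb{Z}[\vect x]$, which is what the statement demands.

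Second, and independently, you never address the non-triviality of the extracted equation. The remark you invoke (that $\vect y^{-\vect 1}$ cannot arise from differentiating a monomial in $\vect y$) only guarantees that the blocks with $\vect\beta\neq\vect 0$ do not contaminate the coefficient of $\vect y^{-\vect 1}$; extraction then gives $p_{\vect 0}(\vect x,\partial_{x_j})f=0$, and nothing prevents the block $p_{\vect 0}$ from being identically zero in the dependency that Cramer's rule hands you, in which case you have proved $0=0$. The paper's proof needs an extra device here: take the lexicographically minimal $\vect\beta$ with $p_{\vect\beta}\neq 0$ and extract the coefficient of $\vect y^{-\vect\beta-\vect 1}$ instead, verifying by the direct computation~\eqref{eq:tech} that only that block survives, with the nonzero factor $(-1)^{\beta_1+\cdots+\beta_n}\beta_1!\cdots\beta_n!$. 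Some such argument is indispensable, since producing a \emph{non-trivial} equation is the entire content of the theorem. Your Steps~1, 2 and 4 are essentially fine (the slightly weaker bound $\normF^{2}M^{O(n)}$ after clearing denominators, instead of the paper's $\normF^{2}$, is harmless for the final estimates), so the gap is concentrated in the extraction step.
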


\begin{remark}
In the statement of the previous theorem, the notation $O(n)$ (resp. $O(n^2)$) is only here to simplify the writing and does not imply that $n$ tends to infinity. It simply means that there exists a constant $c$ (independent of all the parameters) such that for all $n \geq 1$, the inequality holds when $O(n)$ (resp. $O(n^2)$) is replaced by $cn$ (resp. $c n^2$). 
\end{remark}

The proof consists of several steps. As announced at the beginning of the section, we first build a rational function $F$ in $2n$ variables. In the following lemma we compute bounds for $F$.

\begin{lemma}\label{lem:preparation_hadamard}
With the notations of Theorem~\ref{thm:boundhadamardproductoftworational},
if $F$ is defined by
\[
F(\vect x, \vect y) = \frac1{y_1\cdots y_n}
\frac{P_1}{Q_1}\left(\frac{x_1}{y_1},\ldots,\frac{x_n}{y_n}\right)\cdot \frac{P_2}{Q_2}(y_1,\ldots,y_n),
\]
then there exist two polynomials $P$ and $Q$ of $\mathbb{Z}[\vect x, \vect y]$ such that $F=P/Q$,
$\mdeg(P)< 2M$, $\mdeg(Q)< 2M$,
 $\|P\|_\infty\leq\normF^2$ and  $\|Q\|_\infty\leq \normF^2$.
\end{lemma}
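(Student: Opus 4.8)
The statement asks for explicit bounds on the rational function
\[
F(\vect x, \vect y) = \frac1{y_1\cdots y_n}
\frac{P_1}{Q_1}\left(\frac{x_1}{y_1},\ldots,\frac{x_n}{y_n}\right)\cdot \frac{P_2}{Q_2}(y_1,\ldots,y_n),
\]
so the plan is simply to put $F$ over a common denominator, keeping careful track of degrees and coefficient norms at each algebraic manipulation. First I would clear the scaled arguments: write $P_1\!\left(\tfrac{x_1}{y_1},\ldots,\tfrac{x_n}{y_n}\right) = \widetilde P_1(\vect x,\vect y)\big/(y_1\cdots y_n)^{M-1}$, where $\widetilde P_1(\vect x,\vect y) = (y_1\cdots y_n)^{M-1} P_1\!\left(\tfrac{x_1}{y_1},\ldots,\tfrac{x_n}{y_n}\right)$ is an honest polynomial in $\mathbb{Z}[\vect x,\vect y]$ since $\mdeg(P_1)\le M-1$. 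Each monomial $\lambda_{\boldsymbol\alpha}\vect x^{\boldsymbol\alpha}$ of $P_1$ becomes $\lambda_{\boldsymbol\alpha}\vect x^{\boldsymbol\alpha} y_1^{M-1-\alpha_1}\cdots y_n^{M-1-\alpha_n}$, so $\widetilde P_1$ has the same coefficients as $P_1$ (hence $\|\widetilde P_1\|_\infty = \|P_1\|_\infty$), maxdegree in the $x_i$'s at most $M-1$, and maxdegree in the $y_i$'s at most $M-1$. The same holds for $\widetilde Q_1$.

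Next I would assemble $F$. We get
\[
F(\vect x,\vect y) = \frac{1}{y_1\cdots y_n}\cdot\frac{\widetilde P_1(\vect x,\vect y)\,(y_1\cdots y_n)^{1-M}}{\widetilde Q_1(\vect x,\vect y)\,(y_1\cdots y_n)^{1-M}}\cdot\frac{P_2(\vect y)}{Q_2(\vect y)}
= \frac{1}{y_1\cdots y_n}\cdot\frac{\widetilde P_1(\vect x,\vect y)\,P_2(\vect y)}{\widetilde Q_1(\vect x,\vect y)\,Q_2(\vect y)},
\]
so taking $P(\vect x,\vect y) = \widetilde P_1(\vect x,\vect y)\,P_2(\vect y)$ and $Q(\vect x,\vect y) = y_1\cdots y_n\cdot \widetilde Q_1(\vect x,\vect y)\,Q_2(\vect y)$ does the job, provided I check the bounds. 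For the degrees: $\mdeg_{x_i}(P) = \mdeg_{x_i}(\widetilde P_1)\le M-1$, and $\mdeg_{y_i}(P)\le \mdeg_{y_i}(\widetilde P_1) + \mdeg_{y_i}(P_2)\le (M-1)+(M-1) = 2M-2 < 2M$; similarly for $Q$ the extra $y_1\cdots y_n$ factor only adds $1$, giving $\mdeg_{y_i}(Q)\le (M-1)+(M-1)+1 = 2M-1 < 2M$, and $\mdeg_{x_i}(Q)\le M-1 < 2M$. For the coefficients I would invoke Lemma~\ref{lm:multiplicationpolynomes} (the product bound): a naive application gives $\|P\|_\infty \le (m+1)^{2n}\|\widetilde P_1\|_\infty\|P_2\|_\infty$, which is too weak, so instead I would observe that $\widetilde P_1$ and $P_2$ are polynomials with \emph{disjoint variable supports} when we separate $P_2$'s dependence — no, more carefully: $\widetilde P_1$ depends on $\vect x$ and $\vect y$, $P_2$ only on $\vect y$. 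The right observation is that multiplying $\widetilde P_1(\vect x,\vect y)$ by $P_2(\vect y)$: each coefficient of the product is a sum of products $\lambda\mu$ over pairs of monomials whose $\vect y$-exponents add up to a fixed value, and since the $\vect x$-part of the product monomial is determined entirely by $\widetilde P_1$, this is really $n$-dimensional (only the $y$-variables overlap), so Lemma~\ref{lm:multiplicationpolynomes} with $n$ variables gives $\|P\|_\infty\le (m+1)^n\|\widetilde P_1\|_\infty\|P_2\|_\infty$ with $m = \mdeg(P_2)\le M-1$, hence $\|P\|_\infty\le M^n\,\normF^2$. Hmm — but the statement claims the cleaner bound $\|P\|_\infty\le\normF^2$.

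\textbf{Re-examination and the main obstacle.} The sharper bound $\normF^2$ (with no $M^n$ factor) must come from the fact that in the product $\widetilde P_1(\vect x,\vect y)\cdot P_2(\vect y)$, \emph{no cancellation or collision of monomials occurs}: each monomial $\vect x^{\boldsymbol\alpha}\vect y^{\boldsymbol\beta}$ of $\widetilde P_1$, when multiplied by a monomial $\vect y^{\boldsymbol\gamma}$ of $P_2$, lands on $\vect x^{\boldsymbol\alpha}\vect y^{\boldsymbol\beta+\boldsymbol\gamma}$, and — crucially — because $\widetilde P_1$ comes from the substitution $y_i\mapsto x_i/y_i$, the pair $(\alpha_i,\beta_i)$ satisfies $\alpha_i+\beta_i = M-1$ identically, so $\boldsymbol\alpha$ determines $\boldsymbol\beta$. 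Therefore knowing the product monomial $\vect x^{\boldsymbol\alpha}\vect y^{\boldsymbol\delta}$ recovers $\boldsymbol\alpha$ (from the $\vect x$-part), hence $\boldsymbol\beta = (M-1)\vect 1 - \boldsymbol\alpha$, hence $\boldsymbol\gamma = \boldsymbol\delta - \boldsymbol\beta$ — so the map from (monomial of $\widetilde P_1$, monomial of $P_2$) to (monomial of the product) is \emph{injective}, meaning every coefficient of $P$ is a single product $\lambda_{\boldsymbol\alpha}^{(1)}\mu_{\boldsymbol\gamma}^{(2)}$, whence $\|P\|_\infty\le\|P_1\|_\infty\|P_2\|_\infty\le\normF^2$. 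The same reasoning applies to $Q = y_1\cdots y_n\cdot\widetilde Q_1\cdot Q_2$: the extra factor $y_1\cdots y_n$ is a single monomial so it doesn't change injectivity, and $\|Q\|_\infty\le\|Q_1\|_\infty\|Q_2\|_\infty\le\normF^2$. I expect the delicate point — the ``main obstacle'' in an otherwise routine bookkeeping argument — to be precisely articulating this injectivity/no-collision property cleanly, since it is what upgrades the generic product bound $(m+1)^n\normF^2$ to the sharp $\normF^2$; everything else is elementary degree arithmetic. I would state it as a short sub-lemma: for polynomials $R(\vect x,\vect y)$ all of whose monomials satisfy $\alpha_i + \beta_i = c$ for a fixed constant $c$ and all $i$, and any $S(\vect y)$, one has $\|RS\|_\infty\le\|R\|_\infty\|S\|_\infty$ and $\mdeg_{y_i}(RS)\le c + \mdeg(S)$, then apply it with $R=\widetilde P_1$, $c = M-1$, $S = P_2$ (and analogously for $Q$).
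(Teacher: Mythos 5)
Your proof is correct and follows essentially the same route as the paper: clear the denominators by multiplying by $(y_1\cdots y_n)^{M-1}$ (resp.\ one extra factor of $y_1\cdots y_n$ for the denominator), do the elementary maxdegree arithmetic, and observe that no two monomial products collide, so each coefficient of $P$ (resp.\ $Q$) is a single product of a coefficient of $P_1$ and one of $P_2$, giving the bound $\normF^2$. Your explicit injectivity sub-lemma (using $\alpha_i+\beta_i=M-1$ in the rescaled $P_1$) is exactly the paper's terser remark that, since $P_2$ involves no $x_i$, "there is only one way to produce each monomial of $P$," so you have simply spelled out the same key observation in more detail.
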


\begin{proof}
	In $P_1(\frac{x_1}{y_1}, \ldots, \frac{x_n}{y_n})$, the variables $y_1, \ldots, y_n$ have negative exponents, the lowest negative exponent being $\mdeg(P_1)\leq M-1$. Hence $(y_1\ldots y_n)^{M-1}P_1(\frac{x_1}{y_1}, \ldots, \frac{x_n}{y_n})$ is a polynomial in $\vect x$ and $\vect y$, of maxdegree at most  $M$, and of same infinite norm as $P_1$. Hence the function $P$ defined by
	\[
	P(\vect x,\vect y) = (y_1\cdots y_n)^M P_1\left(\frac{x_1}{y_1},\ldots,\frac{x_n}{y_n}\right)P_2(y_1,\ldots,y_n),
	\]
	is a polynomial of $\mathbb{Z}[\vect x, \vect y]$, with $\mdeg(P)\leq 2M-2$. Moreover, as $P_2$ has no variable of the form $x_i$, when expanding the product of $P_1$ and $P_2$, both written as a sum of monomials, there is only one way to produce each monomial of $P$: each coefficient of $P$ is the product of a coefficient of $P_1$ and a coefficient of $P_2$. Hence 
	$\|P\|_\infty\leq\|P_1\|_\infty\cdot\|P_2\|_\infty\leq \normF^2$.
	
	The same argument applies for $Q=(y_1\ldots y_n)^{M }Q_1(\frac{x_1}{y_1}, \ldots, \frac{x_n}{y_n})Q_2(y_1, \ldots, y_n)$, except for the additionnal factor $(y_1\ldots y_n)$ which increases by $1$ the maxdegree: $\mdeg(Q)\leq 2M-1$
\end{proof}

Once we have bounds on the representation of $F$ as a quotient of two polynomials, the main computations consist in studying the partial differential equations in $F$ by following Lipshitz's constructions. This is done in the following proposition, whose technical proof is given in the next subsection.

\begin{restatable}{proposition}{boundhadamardlipshitz}\label{prop:boundhadamardlipshitz}
Let $F$ be a rational fraction given under the form $F=\frac{P}{Q}\in\mathbb{Z}(\vect x,\vect y)$, where $P(\vect x, \vect y)$ and $Q(\vect x, \vect y)$ are  two polynomials of $2n$ variables with integer coefficients. Let $M=\max(\mdeg(P), \mdeg(Q))+1$,  
$\normF = \max(\|P\|_\infty,\|Q\|_\infty)$, and $N=((2n+1)M)^{2n}-2n$. Then $f=[y_1^{-1}\ldots y_n^{-1}] F$ is an element of $\mathbb{Z}(\vect x)$ that satisfies for every $j\in\{1,\ldots,n\}$ a non-trivial partial differential equation in $x_j$ with polynomial coefficients (the $p_i$'s and $r$ depends on $j$): \[p_r(\vect x)\partial_{x_j}^rf(\vect x)+ p_{r-1}(\vect x)\partial_{x_j}^{r-1}f(\vect x)+\ldots+ p_0(\vect x)f(\vect x)=0,\]
with $r+\mdeg({p_{i}})< N$, and $\log\|p_{i}\|_\infty\leq 6 N^{2n+1}M^{2n} \log (N\normF)$.
\end{restatable}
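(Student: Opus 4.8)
The plan is to turn Lipshitz's reduction of the Hadamard product to a $\vect y$-residue \cite{LIPSHITZ1988} (recalled around~\eqref{eq:hadamard lipshitz}) into an entirely quantitative statement. First I would dispose of the easy half, namely $F=P/Q$ itself. Since $F$ is rational, one derivative with respect to any variable $z$ gives $\partial_z F=A/Q^2$ with $A=(\partial_z P)Q-P(\partial_z Q)$, hence the first-order equation $PQ\,\partial_z F-AF=0$, whose coefficients have maxdegree $<2M$ and, by Lemma~\ref{lm:multiplicationpolynomes}, $\infty$-norm $(M+1)^{O(n)}\normF^2$. Iterating, $\partial_{x_j}^i F=A_i/Q^{\,i+1}$ with $A_i\in\mathbb Z[\vect x,\vect y]$ of maxdegree $<(i+1)M$ and, by repeated use of the product bound, $\infty$-norm at most $(NM)^{O(N)}\normF^{\,i+1}$ for $i<N$; these are the building blocks of the final equation.

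The core step is the identity $f=[\vect y^{-\vect 1}]F$ together with the elementary fact that the residue operator $[\vect y^{-\vect 1}]=[y_1^{-1}\cdots y_n^{-1}]$ kills divergences: for any Laurent series $G$ in $\vect y$ the coefficient of $y_k^{-1}$ in $\partial_{y_k}G$ is $0$, so $[\vect y^{-\vect 1}]\partial_{y_k}G=0$. Since $\partial_{x_j}$ commutes with $[\vect y^{-\vect 1}]$, I would look for polynomials $p_0(\vect x),\dots,p_r(\vect x)$, not all zero, such that $\sum_{i=0}^r p_i(\vect x)\,\partial_{x_j}^i F$ lies in the $\mathbb Q(\vect x)$-span of $\partial_{y_1}(\cdot),\dots,\partial_{y_n}(\cdot)$ inside $\mathbb Q(\vect x)[\vect y,\vect y^{-1},Q^{-1}]$; applying $[\vect y^{-\vect 1}]$ then yields $\sum_i p_i(\vect x)\,\partial_{x_j}^i f=0$. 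Clearing the common denominator $Q^{\,r+1}$ turns this requirement into a $\mathbb Q[\vect x]$-linear dependence among finitely many polynomials of $\mathbb Z[\vect x,\vect y]$ of maxdegree $O(NM)$; the relevant ambient space is that of polynomials in the $2n$ variables $\vect x,\vect y$ of maxdegree $<(2n+1)M$, of dimension $((2n+1)M)^{2n}$ over $\mathbb Q(\vect x)$. A counting argument, arranged so as to trade the order $r$ off against the degrees $\mdeg(p_i)$, then forces a nontrivial dependence with $r+\mdeg(p_i)<N$, $N=((2n+1)M)^{2n}-2n$. That $f$ is honestly an element of $\mathbb Z(\vect x)$ follows because the $\vect y$-residue of a rational Laurent series is rational and a denominator can be cleared over $\mathbb Z$.

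For the coefficient bounds I would observe that the dependence above is a nonzero solution of a homogeneous linear system over $\mathbb Q(\vect x)$ whose unknowns are the coefficients of the $p_i$'s and whose matrix entries are coefficients of the products $A_i\,Q^{\,r-i}$; by the product bound these entries are polynomials in $\vect x$ of maxdegree $O(NM)$ and of $\infty$-norm $(N\normF)^{O(N)}$. By Cramer's rule (Proposition~\ref{prop:cramer-rule}) a nonzero solution can be taken with entries equal to signed minors of this matrix, i.e., determinants of size $<N$, and the Hadamard determinant bound (Lemma~\ref{lm:hadamardbound}) bounds the $\infty$-norm of each such minor by $R_\infty^{\,N}N^{N/2}(D+1)^{2nN}$ in terms of the entry norm $R_\infty$ and maxdegree $D$ just estimated. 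Assembling and simplifying these estimates is what delivers $\log\|p_i\|_\infty\le 6N^{2n+1}M^{2n}\log(N\normF)$.

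The hard part is precisely this last bookkeeping. Every power $Q^e$ occurring in the construction has maxdegree about $eM$ and $\infty$-norm about $\normF^e(M^{2n})^e$, so the linear system is already doubly delicate, and Cramer's rule then raises everything to a power of order $N$, which is itself exponential in $n$. Keeping all of these compounded quantities below the announced double-exponential bound, and, on the degree side, making the candidate count tight enough that $r+\mdeg(p_i)$ stays below $N$ rather than below $NM$ or worse, is where essentially all of the effort goes; the differential-algebra content is routine once the parameters are pinned down.
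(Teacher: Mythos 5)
Your outline shares the paper's general Lipshitz-style skeleton (clear denominators so that $Q^{s+1}$ times a derivative of $F$ is a polynomial with controlled maxdegree and height, then a dimension count plus Cramer/Hadamard for the size of the coefficients), but it misses the one step where the real difficulty lies: guaranteeing that the equation you extract for $f$ is \emph{non-trivial}. As stated, the condition ``$\sum_{i} p_i(\vect x)\,\partial_{x_j}^i F$ lies in the $\mathbb Q(\vect x)$-span of $\partial_{y_1}(\cdot),\dots,\partial_{y_n}(\cdot)$'' is not a finite linear-algebra problem, because the certificates under the $\partial_{y_k}$'s are unconstrained. The only natural way to make it finite is to do what Lipshitz (and the paper) does: work with the whole family $x^{\vect\alpha}\partial_{y_1}^{\beta_1}\cdots\partial_{y_n}^{\beta_n}\partial_{x_j}^{\gamma}F$ with $\|\vect\alpha\|_1+\|\vect\beta\|_1+\gamma<N$, multiply by $Q^N$, and compare the count $\binom{N+2n}{2n+1}$ of such terms with the number $(NM)^{2n}$ of monomials of maxdegree at most $NM-1$. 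But the dependence this produces may be supported entirely on terms with $\vect\beta\neq\vect 0$; in that case applying $[\vect y^{-\vect 1}]$, as you propose, yields only $0=0$, and you have not produced any differential equation for $f$. Nothing in your sketch rules this out, and it cannot be ruled out by the counting argument alone. The paper's proof resolves exactly this point: writing the relation as $\sum_{\vect\beta}p_{\vect\beta}(\vect x,\partial_{x_j})\partial_{y}^{\vect\beta}F=0$, it takes the lexicographically minimal $\vect\beta^{\min}$ with $p_{\vect\beta^{\min}}\neq 0$ and extracts the coefficient of $y_1^{-\beta^{\min}_1-1}\cdots y_n^{-\beta^{\min}_n-1}$ (not of $\vect y^{-\vect 1}$); all other terms of the relation contribute $0$ (any contributing $\vect\beta$ would have to satisfy $\vect\beta\preceq\vect\beta^{\min}$ componentwise, contradicting lexicographic minimality), while the minimal one reproduces $[\vect y^{-\vect 1}]F=f$ up to the nonzero factor $(-1)^{|\vect\beta^{\min}|}\beta^{\min}_1!\cdots\beta^{\min}_n!$, giving the non-trivial equation $p_{\vect\beta^{\min}}(\vect x,\partial_{x_j})f=0$. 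This is the missing idea; the paper's Remark after Eq.~\eqref{eq:hadamard lipshitz} even flags that the whole construction hinges on which coefficient is extracted.

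A secondary problem is that your quantitative bookkeeping does not cohere with the bounds you are supposed to prove. If the unknowns are the $p_i\in\mathbb Q(\vect x)$ and you solve a system over $\mathbb Q(\vect x)$ by Cramer, the degrees of the resulting polynomial solutions are of the order (size of the minor)$\times O(NM)$, which does not yield $r+\mdeg(p_i)<N$; and your ``determinants of size $<N$'' and ambient dimension ``$((2n+1)M)^{2n}$'' are inconsistent both with your own ``maxdegree $O(NM)$'' and with the target height bound $6N^{2n+1}M^{2n}\log(N\normF)$, which comes from Hadamard's bound applied to \emph{integer} minors of size up to $(NM)^{2n}$. In the paper the degree bound is automatic because the $\vect x$-degrees of the $p_{\vect\beta}$ come only from the explicit monomials $x^{\vect\alpha}$ with $\|\vect\alpha\|_1<N$ in the family, and the heights come from Cramer's rule applied to the integer matrix whose columns are the expansions of $Q^N x^{\vect\alpha}\partial_y^{\vect\beta}\partial_{x_j}^{\gamma}F$; your version would need to be reorganized along these lines before the announced bounds could be claimed.
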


The proof of Theorem~\ref{thm:boundhadamardproductoftworational} just consists in combining Lemma~\ref{lem:preparation_hadamard} and Proposition~\ref{prop:boundhadamardlipshitz}, keeping in mind that Lemma~\ref{lem:preparation_hadamard} doubles the maxdegree bound and squares the infinite norm bound.

\subsubsection{Proof of Proposition \ref{prop:boundhadamardlipshitz}}\label{sec:boundhadamardlipshitz}

In this section, we prove Proposition~\ref{prop:boundhadamardlipshitz}. We thus use the notations of its statement. We start with three technical lemmas.

\begin{lemma}\label{lm:linear1}
Let $\vect{\alpha}=(\alpha_1, \ldots, \alpha_{2n})\in\NN^{2n}$. We denote by $\partial_{\vect{\alpha}}$ the operator defined by $\partial_{\vect{\alpha}}=\partial_{x_1}^{\alpha_1}\ldots\partial_{x_n}^{\alpha_n}\partial_{y_1}^{\alpha_{n+1}}\ldots\partial_{y_n}^{\alpha_{2n}}$. There exists a polynomial $P_{\vect\alpha}$ of $\mathbb{Z}[\vect x,\vect y]$ such that $Q^{s+1}\partial_{\vect{\alpha}}F=P_{\vect{\alpha}}$, $\mdeg(P_{\vect{\alpha}})\leq (s+1)(M-1)$ and
$\|P_{\vect{\alpha}}\|_{\infty}\leq 2^ss!M^{s(2n+1)}\normF^{s+1}$,
where $s=\|\vect\alpha\|_1=\sum_i \alpha_i$.
\end{lemma}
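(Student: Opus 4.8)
\textbf{Proof plan for Lemma~\ref{lm:linear1}.}
The plan is to proceed by induction on $s=\|\vect\alpha\|_1$. The base case $s=0$ is immediate: $Q^1\partial_{\vect 0}F = Q\cdot\frac{P}{Q}=P$, so we may take $P_{\vect 0}=P$, and the claimed bounds read $\mdeg(P)\leq M-1$ and $\|P\|_\infty\leq\normF$, which hold by hypothesis (recall $M-1=\max(\mdeg(P),\mdeg(Q))$). For the inductive step, suppose the statement holds for some multi-index $\vect\beta$ with $\|\vect\beta\|_1=s$, so $Q^{s+1}\partial_{\vect\beta}F=P_{\vect\beta}$ with the stated degree and norm bounds. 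Fix a variable $z\in\{x_1,\ldots,x_n,y_1,\ldots,y_n\}$ and set $\vect\alpha=\vect\beta+\vect e_z$, so $\|\vect\alpha\|_1=s+1$. Differentiating the identity $Q^{s+1}\partial_{\vect\beta}F=P_{\vect\beta}$ with respect to $z$ gives
\[
(s+1)Q^{s}(\partial_z Q)\,\partial_{\vect\beta}F + Q^{s+1}\partial_z\partial_{\vect\beta}F = \partial_z P_{\vect\beta}.
\]
Multiplying through by $Q$ and using $Q^{s+1}\partial_{\vect\beta}F=P_{\vect\beta}$ to rewrite the first term yields
\[
Q^{s+2}\,\partial_{\vect\alpha}F = Q\,\partial_z P_{\vect\beta} - (s+1)(\partial_z Q)\,P_{\vect\beta}.
\]
So we define $P_{\vect\alpha} := Q\,\partial_z P_{\vect\beta} - (s+1)(\partial_z Q)\,P_{\vect\beta}$, which is a polynomial in $\mathbb{Z}[\vect x,\vect y]$.

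It remains to track the degree and coefficient bounds through this recursion. For the maxdegree: $\partial_z P_{\vect\beta}$ has maxdegree at most $\mdeg(P_{\vect\beta})\leq (s+1)(M-1)$; multiplying by $Q$ (maxdegree $\leq M-1$) gives maxdegree $\leq (s+2)(M-1)$; the term $(\partial_z Q)P_{\vect\beta}$ has maxdegree at most $(M-1)+(s+1)(M-1)=(s+2)(M-1)$ as well. Hence $\mdeg(P_{\vect\alpha})\leq (s+2)(M-1)$, which is exactly the bound with $s$ replaced by $s+1$. For the $\infty$-norm: differentiation $\partial_z$ multiplies each coefficient by an exponent at most $\mdeg(P_{\vect\beta})\leq (s+1)(M-1)\leq sM+M\leq$ (a quantity bounded by $M^{\,?}$ — here one uses $(s+1)(M-1)< M^{2}$ when $s<M$, or more crudely bounds the exponent by $\mdeg(P_{\vect\beta})$), so $\|\partial_z P_{\vect\beta}\|_\infty\leq \mdeg(P_{\vect\beta})\,\|P_{\vect\beta}\|_\infty$, and similarly $\|\partial_z Q\|_\infty\leq\mdeg(Q)\,\|Q\|_\infty\leq (M-1)\normF$. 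Then by Lemma~\ref{lm:multiplicationpolynomes} (product bound) applied with $2n$ variables and the smaller maxdegree being $\mdeg(Q)\leq M-1$, we get $\|Q\,\partial_z P_{\vect\beta}\|_\infty\leq M^{2n}\normF\cdot\mdeg(P_{\vect\beta})\,\|P_{\vect\beta}\|_\infty$ and $\|(\partial_z Q)P_{\vect\beta}\|_\infty\leq M^{2n}\cdot (M-1)\normF\cdot\|P_{\vect\beta}\|_\infty$. Combining, $\|P_{\vect\alpha}\|_\infty\leq M^{2n}\normF\,\|P_{\vect\beta}\|_\infty\bigl(\mdeg(P_{\vect\beta})+(s+1)(M-1)\bigr)\leq 2(s+1)M^{2n+1}\normF\,\|P_{\vect\beta}\|_\infty$, using $\mdeg(P_{\vect\beta})\leq(s+1)(M-1)\leq (s+1)M$ so the bracket is at most $2(s+1)M$. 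Plugging in the inductive bound $\|P_{\vect\beta}\|_\infty\leq 2^{s}s!\,M^{s(2n+1)}\normF^{s+1}$ and using $2(s+1)=2\cdot(s+1)$ gives $\|P_{\vect\alpha}\|_\infty\leq 2^{s+1}(s+1)!\,M^{(s+1)(2n+1)}\normF^{s+2}$, which is the claimed bound with $s\mapsto s+1$. The only delicate point — and the step I expect to require the most care — is the uniform bound on the exponent appearing when differentiating (to pass from $\|\partial_z P_{\vect\beta}\|_\infty$ to a clean multiple of $\|P_{\vect\beta}\|_\infty$) and keeping the resulting constant consistent with the stated $2^s s! M^{s(2n+1)}$; one must check that the factor $2(s+1)M$ absorbs cleanly into the increment of $2^{s+1}(s+1)!\,M^{(s+1)(2n+1)}$, i.e. that $M^{2n+1}\cdot 2(s+1)M\leq M^{2n+1}\cdot 2(s+1)$ leaves enough room — it does, since $M^{(s+1)(2n+1)}=M^{s(2n+1)}\cdot M^{2n+1}$ and $2(s+1)M\le 2(s+1)M^{2n+1}$ for $n\ge 1$, wait that would overshoot; the clean way is to note $2(s+1)M\leq 2^{s+1}/2^s\cdot (s+1)!/s!\cdot M$ only if the extra $M$ is absorbed into the $M^{2n+1}$ factor, which it is since $2n+1\geq 1$. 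I would write this absorption out explicitly in the final proof.
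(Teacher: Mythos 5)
Your proof is correct and follows essentially the same route as the paper's: induction on $\|\vect\alpha\|_1$, differentiating $Q^{s+1}\partial_{\vect\beta}F=P_{\vect\beta}$ in one variable, multiplying by $Q$ to get $P_{\vect\alpha}=Q\,\partial_z P_{\vect\beta}-(s+1)(\partial_z Q)P_{\vect\beta}$, and bounding the two terms via Lemma~\ref{lm:multiplicationpolynomes} together with $\|\partial_z R\|_\infty\le\mdeg(R)\,\|R\|_\infty$. The hesitation at the end is unnecessary: your own computation $2(s+1)M^{2n+1}\normF\cdot 2^{s}s!\,M^{s(2n+1)}\normF^{s+1}=2^{s+1}(s+1)!\,M^{(s+1)(2n+1)}\normF^{s+2}$ already closes the induction exactly, with no leftover factor of $M$ to absorb.
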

\begin{proof}
By induction on $s=\|\vect\alpha\|_1$. If $s=0$, then $QF=P$ with $\mdeg(P)< M$, and $\|P\|_\infty\leq \normF$.

If the proposition is true for $s-1$,   let $\vect{\alpha}$ be such that $\|\vect\alpha\|_1=s$. We write $\vect{\alpha}=\vect{\beta} + \vect{e_i}$, for some $i$ such that $\alpha_i>0$, and we have $\|\vect\beta\|_1=s-1$. 
By induction hypothesis, $Q^{s}\partial_{\vect{\beta}}F=P_{\vect{\beta}}$, where $P_{\vect{\beta}}$ satisfies the hypotheses for $s-1$. Deriving this relation according to $x_i$ yields: 
\[(\partial_{x_i}Q)s Q^{s-1}\partial_{\vect{\beta}}F+Q^s\partial_{\vect{\alpha}}F=\partial_{x_i}P_{\vect{\beta}}.\]
By multiplying by $Q$, we obtain \[Q^{s+1}\partial_{\vect{\alpha}}F=Q\partial_{x_i}P_{\vect{\beta}}-(\partial_{x_i}Q)s Q^{s}\partial_{\vect{\beta}}F=Q\partial_{x_i}P_{\vect{\beta}}-(\partial_{x_i}Q)s P_{\vect{\beta}}.\]
Let $P_{\vect{\alpha}}=Q\partial_{x_i}P_{\vect{\beta}}-(\partial_{x_i}Q)s P_{\vect{\beta}}$. The polynomial
 $P_{\vect{\alpha}}$ has integer coefficients, and 
$\mdeg(P_{\vect{\alpha}})\leq M-1 + \mdeg(P_{\vect{\beta}})\leq (s+1)(M-1)$, as $\mdeg (P_{\vect\beta})\leq s(M-1)$ by induction hypothesis. Moreover, by Lemma~\ref{lm:multiplicationpolynomes}, we have
\[
\|Q\partial_{x_i}P_{\vect{\beta}} \|_\infty
\leq M^{2n}\|Q\|_\infty\cdot\|\partial_{x_i}P_{\vect{\beta}} \|_\infty \leq M^{2n}\|Q\|_\infty\cdot s(M-1)\|P_{\vect{\beta}} \|_\infty. 
\]
And similarly, 
\[
\|(\partial_{x_i}Q)sP_{\vect{\beta}} \|_\infty
\leq M^{2n}\|\partial_{x_i}Q\|_\infty\cdot s\|P_{\vect{\beta}} \|_\infty \leq M^{2n+1}\|Q\|_\infty\cdot s\|P_{\vect{\beta}} \|_\infty. 
\]
Thus, $\|P_{\vect{\alpha}}\|_\infty\leq2 M^{2n+1}s\|Q\|_\infty\cdot \|P_{\vect{\beta}}\|_\infty$. This concludes the proof as $P_{\vect\beta}$ satisfies the induction hypothesis.
\end{proof}

\begin{lemma}\label{lm:lipshitz_rationnel_family_derivative}
Let $N=((2n+1)M)^{2n}-2n$, 
let $\vect\alpha$ and $\vect\beta$ be two vectors of $\mathbb{N}^n$ and let $\gamma \in\NN$ such that $\sum_i\alpha_i + \sum_i\beta_i + \gamma < N$. %
Let also $j\in\{1,\ldots,n\}$. There exists a polynomial $R_{\vect\alpha,\vect\beta,\gamma,j}$ of variables $x_1$,\ldots, $x_n$, $y_1$, \ldots $y_n$, with integer coefficients, such that
\[Q^{N}x_1^{\alpha_1}\ldots x_n^{\alpha_n}\partial_{y_1}^{\beta_1}\ldots \partial_{y_n}^{\beta_n}\partial_{x_j}^{\gamma}F =  R_{\vect\alpha,\vect\beta,\gamma,j}.\]
Moreover, $\mdeg(R_{\vect\alpha,\vect\beta,\gamma,j})\leq NM$, and $\log \|R_{\vect\alpha,\vect\beta,\gamma,j}\|_\infty \leq 3 N \log (N  \normF)$.
\end{lemma}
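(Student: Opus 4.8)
Looking at Lemma~\ref{lm:lipshitz_rationnel_family_derivative}, the statement asserts that multiplying $x_1^{\alpha_1}\cdots x_n^{\alpha_n}\partial_{y_1}^{\beta_1}\cdots\partial_{y_n}^{\beta_n}\partial_{x_j}^{\gamma}F$ by a fixed power $Q^N$ of the denominator clears all denominators, with explicit control on the maxdegree and the $\infty$-norm of the resulting polynomial. The plan is to derive it as a corollary of Lemma~\ref{lm:linear1}, which already handles the mixed partial derivative $\partial_{\vect\alpha}F$ for any multi-index $\vect\alpha\in\mathbb N^{2n}$.

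First I would set $s=\sum_i\beta_i+\gamma$, the total order of differentiation appearing on the left-hand side. Since $s<N$ by hypothesis, we have $Q^{s+1}\mid Q^N$ as polynomials, so writing $Q^N\partial_{\vect\beta}\partial_{x_j}^\gamma F = Q^{N-s-1}\cdot(Q^{s+1}\partial_{\vect\beta}\partial_{x_j}^\gamma F)$ and applying Lemma~\ref{lm:linear1} (with the multi-index encoding the $\beta_i$-derivatives in the $y$-block and the $\gamma$ derivatives in the $x_j$-slot) shows the bracketed factor equals a polynomial $P_{\vect\beta,\gamma}$ with $\mdeg(P_{\vect\beta,\gamma})\le(s+1)(M-1)$ and $\|P_{\vect\beta,\gamma}\|_\infty\le 2^s s!\,M^{s(2n+1)}\normF^{s+1}$. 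Then $R_{\vect\alpha,\vect\beta,\gamma,j} := x_1^{\alpha_1}\cdots x_n^{\alpha_n}\,Q^{N-s-1}\,P_{\vect\beta,\gamma}$ is the desired polynomial with integer coefficients.

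The remaining work is purely bookkeeping on the two estimates. For the maxdegree: the monomial $x^{\vect\alpha}$ contributes $\max_i\alpha_i\le\sum_i\alpha_i<N$; the factor $Q^{N-s-1}$ contributes at most $(N-s-1)(M-1)\le NM$; and $P_{\vect\beta,\gamma}$ contributes at most $(s+1)(M-1)$. Summing and using $s<N$ together with $\alpha$-sum $<N$, one checks the total is $\le NM$ after absorbing the lower-order terms (this is where I would be a little careful that the crude bound $NM$ actually dominates — it does because $\sum\alpha_i+(N-s-1)(M-1)+(s+1)(M-1)=\sum\alpha_i+N(M-1)\le N+N(M-1)=NM$ using $\max_i\alpha_i\le\sum_i\alpha_i$... actually one must verify $\max_i\alpha_i$ rather than $\sum_i\alpha_i$ enters the maxdegree, which it does, so the bound is comfortably $\le NM$). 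For the $\infty$-norm: repeatedly apply Lemma~\ref{lm:multiplicationpolynomes} to the product $x^{\vect\alpha}\cdot Q^{N-s-1}\cdot P_{\vect\beta,\gamma}$. Each multiplication by a polynomial in $2n$ variables of maxdegree $<NM$ costs a factor $(NM+1)^{2n}$, and there are $O(N)$ such multiplications (the $N-s-1$ copies of $Q$ plus the $x^{\vect\alpha}$ factor). Combining with $\|P_{\vect\beta,\gamma}\|_\infty\le 2^N N!\,M^{N(2n+1)}\normF^{N}$ and $\|Q\|_\infty\le\normF$, taking logarithms turns everything into a sum: $\log\|R\|_\infty \le O(N)\cdot 2n\log(NM) + N\log 2 + N\log N + N(2n+1)\log M + N\log\normF$, and every term is bounded by $3N\log(N\normF)$ up to the constant, using $M\le N$ (which follows from $N=((2n+1)M)^{2n}-2n\ge M$ for $n\ge1$).

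I do not anticipate a genuine obstacle here — the only mild subtlety is making the constant in $\log\|R\|_\infty\le 3N\log(N\normF)$ come out exactly as stated rather than with a larger constant, which may require tracking that $2n\le N$ and $M\le N$ so that factors like $(NM+1)^{2n}$ and $M^{N(2n+1)}$ can both be absorbed into powers of $N$; one then checks $\log$ of the whole product is at most $3N\cdot(2n+1)\log N$-ish and re-examines whether the claimed exponent $N$ (not $N\cdot 2n$) on the outside is consistent — if the honest bound is slightly weaker, I would expect the paper to have chosen $N$ large enough (it is doubly exponential in the relevant parameters) that the discrepancy is harmless downstream. The cleanest writeup simply states the reduction to Lemma~\ref{lm:linear1}, then does the two one-line degree and norm computations.
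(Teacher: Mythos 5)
Your construction is exactly the paper's: encode the derivatives as a multi-index, invoke Lemma~\ref{lm:linear1} to get $Q^{s+1}\partial_{y_1}^{\beta_1}\cdots\partial_{y_n}^{\beta_n}\partial_{x_j}^{\gamma}F=P_{\vect u}$ with $s=\sum_i\beta_i+\gamma<N$, and set $R_{\vect\alpha,\vect\beta,\gamma,j}=x^{\vect\alpha}Q^{N-s-1}P_{\vect u}$; your maxdegree computation is also the paper's and is correct. The gap is in the $\infty$-norm estimate. You charge each of the $O(N)$ multiplications a factor $(NM+1)^{2n}$, i.e.\ you apply Lemma~\ref{lm:multiplicationpolynomes} with the maxdegree of the accumulated product. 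But that lemma's factor is $(m+1)^{2n}$ with $m$ the \emph{minimum} of the two maxdegrees, and when you multiply by a fresh copy of $Q$ this minimum is $\mdeg(Q)\leq M-1$, so the cost per step is only $M^{2n}$ (and multiplying by the monomial $x^{\vect\alpha}$ changes the $\infty$-norm not at all). With your cruder factor the logarithm of the norm picks up a term of order $2nN\log(NM)$, which is \emph{not} absorbed into $3N\log(N\normF)$: that would require $(NM)^{2n}\leq N^{3}$, false already for $n=1$ since $N\geq M$. So as written you only obtain a bound of order $nN\log(N\normF)$, i.e.\ the lemma with a constant depending on $n$, not the stated constant $3$; your closing remark that a weaker constant is "harmless downstream" may be true for the final $O(\cdot)$ statements of the paper, but it does not prove the lemma as stated.

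The repair is the paper's bookkeeping: $\|R_{\vect\alpha,\vect\beta,\gamma,j}\|_\infty=\|Q^{N-s-1}P_{\vect u}\|_\infty\leq M^{2n(N-s)}\|Q\|_\infty^{N-s-1}\|P_{\vect u}\|_\infty\leq 2^{s}s!\,M^{2nN+s}\normF^{N}\leq 2^{N}N^{N}M^{(2n+1)N}\normF^{N}$, and then, using $M\leq N$ and $2M^{2n}\leq N$ (both consequences of $N=((2n+1)M)^{2n}-2n$), this is at most $N^{2N}(2M^{2n})^{N}\normF^{N}\leq (N\normF)^{3N}$, giving exactly $\log\|R_{\vect\alpha,\vect\beta,\gamma,j}\|_\infty\leq 3N\log(N\normF)$. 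In short: same route, correct degree bound, but the norm bound needs the min-degree form of the product lemma (cost $M^{2n}$ per factor of $Q$, zero cost for the monomial), not the $(NM+1)^{2n}$ per-step factor you used.
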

\begin{proof}
Let  $\vect{u}=(\gamma, 0, \ldots, 0, \beta_1, \ldots, \beta_n)$, so that by Lemma~\ref{lm:linear1}, we have
$Q^{s+1}\partial_{\vect{u}}F=P_{\vect{u}}$, where
 $\partial_{\vect{u}}F=\partial_{y_1}^{\beta_1}\ldots \partial_{y_n}^{\beta_n}\partial_{x_j}^{\gamma}F$,  and $s=\|\vect\beta\|_1+\gamma$. Then 
\[
R_{\vect\alpha,\vect\beta,\gamma,j}=Q^{N}x_1^{\alpha_1}\ldots x_n^{\alpha_n}\partial_{\vect{u}}F=Q^{N-s-1}x_1^{\alpha_1}\ldots x_n^{\alpha_n}P_{\vect u}\]
is a polynomial as $N>s$.
Moreover, using Lemma~\ref{lm:linear1} again, we have
\begin{align*}
\mdeg(R_{\vect\alpha,\vect\beta,\gamma,j})
& \leq (N-s-1)(M-1)+\max(\alpha_1,\ldots,\alpha_n) + (s+1)(M-1)\\
& \leq N(M-1) + N = NM.
\end{align*}
For the bound on the infinite norm of $R_{\vect\alpha,\vect\beta,\gamma,j}$ we have, by Lemma~\ref{lm:multiplicationpolynomes}
\[
  \|R_{\vect\alpha,\vect\beta,\gamma,j}\|_\infty
   = \|Q^{N-s-1}P_{\vect u}\|_\infty 
  \leq M^{2n}\|Q\|_\infty\|Q^{N-s-2}P_{\vect u}\|_\infty 
  \leq M^{2n(N-s)}\|Q\|_\infty^{N-s-1} \cdot \|P_{\vect u}\|_\infty. 
\]
Lemma~\ref{lm:linear1} provides a bound on $\|P_{\vect u}\|_\infty$, yielding
\[
  \|R_{\vect\alpha,\vect\beta,\gamma,j}\|_\infty
  \leq M^{2n(N-s)}\|Q\|_\infty^{N-s-1} 2^s s!M^{s(2n+1)}\normF^{s+1}
  \leq 2^s s! M^{2nN+s}\normF^{N}.
\]
As $s<N$ and $M<N$, we get $\|R_{\vect\alpha,\vect\beta,\gamma,j}\|_\infty\leq 2^N N^N M^{(2n+1)N} \normF^N\leq N^{2N} (2M^{2n})^{N} \normF^N$. Therefore, as $2M^{2n}\leq 2M^{2n}+2n(M^{2n}-1)\leq (2n+2)M^{2n}-2n\leq N$,
\begin{align*}
\log \|R_{\vect\alpha,\vect\beta,\gamma,j}\|_\infty
& \leq \log\left(N^{3N} \normF^{3N}\right)
\leq 3N\log (N\normF),
\end{align*}
as announced.
\end{proof}

\begin{lemma}\label{lm:lipshitz_famille_liee}
	Let $N=((2n+1)M)^{2n}-2n$. %
	For every $j\in\{1,\ldots,n\}$, the family $\{x_1^{\alpha_1}\ldots x_n^{\alpha_n}\partial_{y_1}^{\beta_1}\ldots\partial_{y_n}^{\beta_n}\partial_{x_j}^\gamma F \ :\ \sum_i\alpha_i + \sum_i\beta_i+\gamma < N\}$ is dependent over $\mathbb Z$. More precisely, there is a non-trivial dependency relation of the form	
	\[\sum_{\|\vect \alpha\|_1+\|\vect \beta\|_1+\gamma < N}v_{\vect \alpha, \vect \beta, \gamma}\;x_1^{\alpha_1}\ldots x_n^{\alpha_n}\partial_{y_1}^{\beta_1}\ldots\partial_{y_n}^{\beta_n}\partial_{x_j}^\gamma F=0,\]
where each $v_{\vect \alpha, \vect \beta, \gamma}$,  is an integer. Moreover, we have
$\log |v_{\vect \alpha, \vect \beta, \gamma}|\leq 6 N^{2n+1}M^{2n} \log (N\normF)$.
\end{lemma}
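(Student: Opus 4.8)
The plan is to exploit a standard linear-algebra counting argument, in the spirit of Lipshitz's proof that holonomic series are closed under Hadamard product (Theorem~\ref{thm:hadamard}), but keeping explicit track of degrees and coefficient sizes via Lemma~\ref{lm:lipshitz_rationnel_family_derivative}. First I would fix $j\in\{1,\ldots,n\}$ and consider, for each multi-index $(\vect\alpha,\vect\beta,\gamma)$ with $\|\vect\alpha\|_1+\|\vect\beta\|_1+\gamma<N$, the rational function $x_1^{\alpha_1}\cdots x_n^{\alpha_n}\partial_{y_1}^{\beta_1}\cdots\partial_{y_n}^{\beta_n}\partial_{x_j}^{\gamma}F$. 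By Lemma~\ref{lm:lipshitz_rationnel_family_derivative}, multiplying each of these by $Q^{N}$ turns them all into polynomials $R_{\vect\alpha,\vect\beta,\gamma,j}$ in $\mathbb{Z}[\vect x,\vect y]$, all of maxdegree at most $NM$ in $2n$ variables, hence all lying in a common $\mathbb{Q}$-vector space of dimension at most $(NM+1)^{2n}$.

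The key counting step is to choose $N$ large enough that the number of multi-indices $(\vect\alpha,\vect\beta,\gamma)$ under consideration strictly exceeds $(NM+1)^{2n}$. The number of tuples $(\vect\alpha,\vect\beta,\gamma)\in\NN^{2n+1}$ with coordinate sum $<N$ is $\binom{N+2n}{2n+1}$, which is at least $\left(\frac{N}{2n+1}\right)^{2n+1}$. Choosing $N$ so that $\left(\frac{N}{2n+1}\right)^{2n+1}>(NM+1)^{2n}$, i.e., roughly $N>((2n+1)M)^{2n}$, guarantees that the polynomials $\{R_{\vect\alpha,\vect\beta,\gamma,j}\}$ are $\mathbb{Q}$-linearly dependent; this is exactly the value $N=((2n+1)M)^{2n}-2n$ in the statement (the $-2n$ is a convenient adjustment so that the degree bound in Lemma~\ref{lm:lipshitz_rationnel_family_derivative} reads cleanly and the inequality $2M^{2n}\leq N$ used there holds). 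Since a nontrivial $\mathbb{Q}$-linear relation among polynomials can be cleared of denominators to give one over $\mathbb{Z}$, and since dividing the $R$'s by the common factor $Q^{N}$ does not affect linear dependence, I obtain integers $v_{\vect\alpha,\vect\beta,\gamma}$, not all zero, with $\sum v_{\vect\alpha,\vect\beta,\gamma}\,x_1^{\alpha_1}\cdots x_n^{\alpha_n}\partial_{y_1}^{\beta_1}\cdots\partial_{y_n}^{\beta_n}\partial_{x_j}^{\gamma}F=0$.

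The main work, and the step I expect to be the real obstacle, is bounding $\log|v_{\vect\alpha,\vect\beta,\gamma}|$. The relation among the $R_{\vect\alpha,\vect\beta,\gamma,j}$'s is the kernel of a matrix whose rows are indexed by monomials (at most $(NM+1)^{2n}$ of them) and whose columns are indexed by the tuples $(\vect\alpha,\vect\beta,\gamma)$; its entries are the integer coefficients of the $R_{\vect\alpha,\vect\beta,\gamma,j}$'s, each bounded in absolute value by $\|R_{\vect\alpha,\vect\beta,\gamma,j}\|_\infty\le N^{3N}\normF^{3N}$ by Lemma~\ref{lm:lipshitz_rationnel_family_derivative}. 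By Cramer's rule for non-square systems (Proposition~\ref{prop:cramer-rule}, part~2), one can take the $v$'s to be minors (up to sign) of this matrix, and then the Hadamard-type determinant bound of Lemma~\ref{lm:hadamardbound} gives $\log|v_{\vect\alpha,\vect\beta,\gamma}|$ bounded by roughly (number of rows) times (log of the largest entry) plus lower-order terms, i.e., $O\!\left((NM)^{2n}\cdot N\log(N\normF)\right)$; massaging the exponents gives the claimed bound $\log|v_{\vect\alpha,\vect\beta,\gamma}|\le 6N^{2n+1}M^{2n}\log(N\normF)$. Care is needed that the number of monomials used as the size of the square minor is $(NM+1)^{2n}$, not something larger, and that the entries' bound is uniform over all columns; both follow from Lemma~\ref{lm:lipshitz_rationnel_family_derivative}. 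Finally, I would note that such a dependency relation, once we collect the terms with the same power $x_j^{\alpha_j}$ and recall that $f=[y_1^{-1}\cdots y_n^{-1}]F$ commutes with $\partial_{x_j}$ and with multiplication by the $x_i$'s (and that the coefficient extraction $[\vect y^{-\vect 1}]$ annihilates everything obtainable by $\partial_{y_i}$ from a power $y_i^{\ge 0}$), extracting $[y_1^{-1}\cdots y_n^{-1}]$ from the relation yields the desired non-trivial partial differential equation for $f$ in $x_j$ — but this last transfer is the content of the statement \emph{following} Lemma~\ref{lm:lipshitz_famille_liee} and of Proposition~\ref{prop:boundhadamardlipshitz}, so here I only need to establish the dependency relation itself with its coefficient bounds.
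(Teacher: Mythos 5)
Your proposal is correct and takes essentially the same route as the paper's proof: clear denominators by multiplying by $Q^{N}$ via Lemma~\ref{lm:lipshitz_rationnel_family_derivative}, compare the number $\binom{N+2n}{2n+1}$ of operators with the dimension of the space of integer polynomials of bounded maxdegree in the $2n$ variables, and take the kernel coefficients to be minors via Proposition~\ref{prop:cramer-rule}, bounded through the Hadamard-type estimate. The only cosmetic difference is the dimension count: the paper uses monomials of maxdegree at most $NM-1$ (the bound actually produced inside the proof of Lemma~\ref{lm:lipshitz_rationnel_family_derivative}), i.e.\ $(NM)^{2n}$ of them, which is exactly what yields the stated constant $6N^{2n+1}M^{2n}$, whereas your $(NM+1)^{2n}$ gives the same conclusion with a negligibly weaker constant.
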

\begin{proof}
	Let	$\mathcal D$ be the set of every derivative operators $x_1^{\alpha_1}\ldots x_n^{\alpha_n}\partial_{y_1}^{\beta_1}\ldots\partial_{y_n}^{\beta_n}\partial_{x_j}^\gamma$, with $\sum_i\alpha_i + \sum_i\beta_i+\gamma < N$.
	The number of such operators is $|\mathcal D|=\binom{N+2n}{2n+1}$.
	
	Let $\mathcal M$ be the set of all monomials of the form $\vect x ^{\vect i}\vect y^{\vect j} = x_1^{i_1}\ldots x_n^{i_n}{y_1}^{j_1}\ldots{y_n}^{j_n}$, such that
	$\mdeg(\vect x ^{\vect i}\vect y^{\vect j}) \leq NM-1$. We have $|\mathcal M| =(NM)^{2n}$.

By Lemma~\ref{lm:lipshitz_rationnel_family_derivative}, for every $\delta\in \mathcal D$,
$Q^{N}\delta F$  can be written as a linear combination over $\mathbb Z$ of elements of $\mathcal P$. These linear combinations can be synthesised in a $(NM)^{2n}\times \binom{N+2n}{2n+1}$ matrix $A$, over $\mathbb Z$. Our choice for $N$ implies that 
$\binom{N+2n}{2n+1}>(NM)^{2n}$, as detailed below, so $A$ has more columns than rows. Thus,  the family of all $Q^{N}\delta F$, for $\delta\in \mathcal D$, is linearly dependent.

According to Cramer's rule (see Proposition~\ref{prop:cramer-rule}), there exists a non-trivial solution $\vect v$ of the equation $A\vect v=\vect 0$, where every coordinate of $\vect v$ is a minor of $A$ or the opposite of a minor of $A$. As $A$ has integer entries, the coordinates of $\vect v$ are integers. Moreover, they are bounded by Hadamard's bound (in the non-polynomial case) by 
$\|A\|_\infty^p p^{p/2}$, where $p\leq (NM)^{2n}$ so that  $p\leq(2n+1)^{4n^2}M^{4n^2+2n}$; 
hence $\log p \leq 6 n^2 \log((2n+1)M)$. We have
\[
\log(p^{p/2}) \leq  3n^2 p \log\left((2n+1)M\right).
\]
Moreover, since each column of $A$ corresponds to a $R_{\vect \alpha,\vect\beta,\gamma,j}$ of Lemma~\ref{lm:lipshitz_rationnel_family_derivative}, we have
\[
\log \|A\|_\infty^p
\leq 3 N p \log (N \normF).\]
Hence, since the latter bound is greater than the former:
\[
\log|v_{\vect \alpha,\vect\beta,\gamma}|
\leq 6 N p \log (N \normF) = 6 N(NM)^{2n}\log(N\normF).
\]
This concludes the proof, provided we  verify that $\binom{N+2n}{2n+1}>(NM)^{2n}$:
\[
\frac{\binom{N+2n}{2n+1}}{(NM)^{2n}}
 = \frac{(N+2n)(N+2n-1)\cdots (N+1)N}{(2n+1)!N^{2n}M^{2n}}
 \geq \frac{N+2n}{(2n+1)^{2n}M^{2n}},
\]
using the fact that $k! \leq k^{k-1}$ for $k\geq 1$.
\end{proof}

\begin{proof}[Proof of Proposition \ref{prop:boundhadamardlipshitz}]
By Lemma~\ref{lm:lipshitz_famille_liee}, we have a non-trivial dependency relation  of the form	
	\[\sum_{\|\vect \alpha\|_1+\|\vect \beta\|_1+\gamma< N}\;v_{\vect \alpha, \vect \beta, \gamma}x_1^{\alpha_1}\ldots x_n^{\alpha_n}\partial_{y_1}^{\beta_1}\ldots\partial_{y_n}^{\beta_n}\partial_{x_j}^\gamma F=0,\]
where $v_{\vect \alpha, \vect \beta, \gamma}$ are integers with $\log|v_{\vect \alpha, \vect \beta, \gamma}|\leq 6N^{2n+1}M^{2n}\log(N\normF)$.

We  factorize this relation according to the derivatives in $y_1, \ldots, y_n$: it therefore rewrites 
\begin{equation}\label{eq:beta}
\sum_{\vect \beta}p_{\vect \beta}(\vect x, \partial_{x_j})\partial_{y_1}^{\beta_1}\ldots\partial_{y_n}^{\beta_n} F=0,
\end{equation}
where $p_{\vect \beta}(\vect x, \partial_{x_j})$ is a polynomial in the variables $\vect x$ and $\partial_{x_j}$, with integer coefficients, verifying $\log\|p_{\vect \beta}(\vect x, \partial_{x_1})\|_\infty\leq 6N^{2n+1}M^{2n}\log(N\normF)$ and $\mdeg(p_{\vect \beta}(\vect x, \partial_{x_j})) < N$ (seen as a polynomial $x_1$, \ldots $x_n$ and in $\partial_{x_j}$, where $\partial_{x_j}$ do not commute with the $x_i$'s).

\newcommand{\bbetamin}{\ensuremath{\vect{\beta^{\mathrm{min}}}}}
\newcommand{\betamin}{\ensuremath{\beta^{\mathrm{min}}}}
\newcommand{\vbeta}{\ensuremath{\vect{\beta}}}

Let $\bbetamin$ be the smallest vector, for the lexicographic order, amongst the $\vect{\beta}$ such that $p_{\vect{\beta}} \neq 0$ in Equation~\eqref{eq:beta}. It exists as the relation is non-trivial. 

We now show that $p_{\bbetamin}(\vect{x},\partial x_j)(f)=0$. We use the vector notation for monomials: $\vect y^{-\bbetamin}=y_1^{-\betamin_1}\cdots y_n^{-\betamin_n}$, $\vect y^{\boldsymbol{-1}} = y_1^{-1}\cdots y_n^{-1} $, \ldots

We rely on the fact, established at the end of this proof, that for all $\vect{\beta}$ we have
\begin{equation}
\label{eq:tech}
[\vect y^{-\bbetamin-\vect 1}] \partial_{y_1}^{\beta_1} \cdots \partial_{y_n}^{\beta_n}  F=
\begin{cases}
    0  & \text{if } \vbeta \neq \bbetamin  \\
    (-1)^{\beta_1+\ldots\beta_n}\beta_1!\ldots \beta_n![\vect y^{-\vect 1}]F & \text{if } \vbeta = \bbetamin \\
\end{cases}
\end{equation}
Once established, we
extract the coefficient in $\vect y^{-\bbetamin_1-\vect 1}$ in Equation~\eqref{eq:beta}, yielding:
\[
 (-1)^{\beta_1+\ldots\beta_n}\beta_1!\ldots \beta_n!\cdot p_{\bbetamin}(\vect{x},\partial x_j) [\vect y^{-\vect 1}]F = 0.
\]
Since $f=[\vect y^{-\vect 1}]F$ by definition, we have established that $p_{\bbetamin}(\vect{x},\partial x_j)(f)=0$. This concludes the proof; the bound on the max-degrees and on the coefficients is a consequence of the bounds stated in Lemma~\ref{lm:lipshitz_famille_liee}.

So, it only remains to prove Equation~\eqref{eq:tech}.
Recall that $\bbetamin$ is lexicographically smaller than $\vbeta$. By grouping monomials, we can write $F$ under the form:
\[
F = \sum_{\vect{\alpha}} H_{\vect \alpha}(\vect{x}) y_1^{\alpha_1} \cdots y_n^{\alpha_n}. 
\]
Hence  we have:
\[[\vect y^{-\bbetamin-\vect{1}}]\partial_{y_1}^{\beta_1} \cdots \partial_{y_n}^{\beta_n}  F  
  =  [\vect y^{-\bbetamin-\vect{1}}] \sum_{\vect{\alpha}} H_{\vect{\alpha}}(\vect{x}) \partial_{y_1}^{\beta_1} \cdots \partial_{y_n}^{\beta_n} y_1^{\alpha_1} \cdots y_n^{\alpha_n}.\]
At this point, one can see that if any of the $\alpha_i$'s is non-negative, then
$\partial_{y_1}^{\beta_1} \cdots \partial_{y_n}^{\beta_n} y_1^{\alpha_1} \cdots y_n^{\alpha_n}$ cannot contribute to the coefficient of $\vect y^{-\bbetamin-\vect{1}}$. Thus we can limit the sum over the $\vect\alpha$'s with negative coefficients only. Moreover, if $\vect\beta \neq \bbetamin$, then there is one coordinate $i$ such that $\beta_i>\betamin_i$, and  the exponent in $y_i$ of $\partial_{y_1}^{\beta_1} \cdots \partial_{y_n}^{\beta_n} y_1^{\alpha_1} \cdots y_n^{\alpha_n}$ is $\alpha_i-\beta_i < -1 - \betamin_i$: it does not contribute to the coefficient of $\vect y^{-\bbetamin-\vect{1}}$ either. Similarly, by minimality of $\bbetamin$ for the lexicographic order, $\vect\alpha$ has to be $-\vect 1$ to contribute.
Therefore, we have:
\begin{align*}
[\vect y^{-\bbetamin-\vect{1}}]\partial_{y_1}^{\beta_1} \cdots \partial_{y_n}^{\beta_n}  F  
  & =  [\vect y^{-\bbetamin-\vect{1}}]
  H_{-\vect{1}}(\vect{x}) \partial_{y_1}^{\betamin_1} \cdots \partial_{y_n}^{\betamin_n} \vect y^{\vect -1} \\
  & = H_{-\vect{1}}(\vect{x}) (-1)^{\betamin_1+\ldots+\betamin_n}\betamin_1!\cdots\betamin_n!\\
  & = (-1)^{\betamin_1+\ldots+\betamin_n}\betamin_1!\cdots\betamin_n!\;[\vect y^{-\vect 1}]F.
\end{align*}
This establishes Equation~\eqref{eq:tech}, and therefore concludes the proof.
\end{proof}

\section{Proof of Proposition \ref{prop:bound-rec-pa}}

This section is devoted to finding bounds on the order and coefficients of a differential equation verified by the generating series of a weakly-unambiguous automaton. 

For this section, we fix a weakly-umambiguous $\PA$ $\AA=(\Sigma, Q, q_I, F, C, \Delta)$ of dimension $d$ whose semi-linear constraint $C$ given in an unambiguous presentation $C\eqdef\uplus_{i=1}^p \vect{c_i} + P_i^*$.  Recall that we denote by $|\AA| \eqdef |Q|+|\Delta|+p+\sum_{i} |P_i|$ and by $\|\AA\|_{\infty}$ the maximum coordinate of a vector appearing in $\Delta$, the $\vect{c_i}$'s and the $P_i$'s. 

In this section, we assume that the automaton is non-trivial: it contains at least one state, at least one transition with a non-null vector,  it semilinear constraint set contains at least one non-null vector. Note that if the automaton is trivial, the computation of its generating series is straightforward.

Recall that in the the proof of Proposition~\ref{prop:unambPAholonomic}, we showed that the
generating series $A(x)$ of $\AA$ is equal to the specialization $G(x,1,\ldots,1)$ of the weighted series of $\AA$. Furthermore, the weighted series can be expressed as 
\[G(x,y_1,\ldots,y_d) = \overline{A}(x,y_1,\ldots,y_n) \odot  \overline{C}(x,y_1,\ldots,y_d) \]
where $\overline{A}(x,y_1,\ldots,y_n)$  and $\overline{C}(x,y_1,\ldots,y_d)$ are the two rational series defined by:
\begin{itemize}
\item for all $(n,i_1,\ldots,i_d) \in \mathbb{N}^{d+1}$, $[x^n y_1^{i_1} \cdots y_d^{i_d}] \overline{A}$ counts the numbers of runs of $\AA$ of length $n$ labeled by the vector $(i_1,\ldots,i_d)$ starting in the initial state and ending in a final state;
\item $\overline{C}(x,y_1,\ldots,y_d) = \frac{C(y_1,\ldots,y_d)}{1-x}$ where $C(y_1,\ldots,y_d)$ is the support series of the  semi-linear set $C$. 	
\end{itemize}

Using Lemma~\ref{lm:automatontorational}, we bound rational fractions for $\overline{A}$ and $\overline{C}$ in terms of $d$, $|\AA|$ and $\|\AA|_\infty$.

\begin{lemma}
\label{lem:bounds-a-c}
 The rational generating series $\overline{A}$ and $\overline{C}$ can be written as $\frac{P}{Q}(x,y_1,\ldots,y_d)$ with $\mdeg(P),\mdeg(Q) \leq |\AA| \cdot \|\AA\|_\infty -1$ and $\|P\|_\infty,\|Q\|_\infty \leq  |\AA|^{5|\AA|/2}$.	
\end{lemma}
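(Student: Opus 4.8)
The strategy is to apply Lemma~\ref{lm:automatontorational} separately to the two series $\overline{A}$ and $\overline{C}$, after expressing each of them as the generating series of a suitable vector automaton over $\NN^{d+1}$ (the extra dimension being the one associated with the length variable $x$). The whole point is to produce a vector automaton whose number of states, maximum out-degree and $\|\cdot\|_\infty$ are all controlled by $|\AA|$ and $\|\AA\|_\infty$, and then let Lemma~\ref{lm:automatontorational} do the work.

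\textbf{The series $\overline{A}$.} First I would build a vector automaton $\VV_{\overline{A}}$ over $\NN^{d+1}$ directly from $\AA$: keep the same state set $Q$, the same initial and final states, and replace every transition $(p,(a,\vect v),q)$ of $\AA$ by the transition $(p,(1,\vect v),q)$ of $\VV_{\overline{A}}$ (the new first coordinate counts the length, and since every letter contributes $1$ to the length the vectors are never $\vect 0$, so the model of Section~\ref{sec:toolbox} applies). By construction $\VV_{\overline{A}}$ has $|Q|$ states, $\|\VV_{\overline{A}}\|_\infty = \max(1,\|\AA\|_\infty) \le \|\AA\|_\infty$ (recall the non-triviality assumption), and out-degree at most $|\Delta|$. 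Its multivariate generating series is exactly $\overline{A}(x,y_1,\ldots,y_d)$. Lemma~\ref{lm:automatontorational} then yields $\overline{A} = P/Q$ with $\mdeg(P),\mdeg(Q) \le |Q|\cdot\|\AA\|_\infty \le |\AA|\cdot\|\AA\|_\infty - 1$ (using $|Q| < |\AA|$, which holds whenever there is at least one transition) and $\|P\|_\infty,\|Q\|_\infty \le (1+|\Delta|)^{|Q|}|Q|^{|Q|/2} \le |\AA|^{|\AA|}\cdot|\AA|^{|\AA|/2} \le |\AA|^{5|\AA|/2}$, where I would be generous with the crude bounds $1+|\Delta| \le |\AA|$ and $|Q| \le |\AA|$.

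\textbf{The series $\overline{C}$.} Here the argument is slightly more involved because $\overline{C} = \frac{1}{1-x}\,C(y_1,\ldots,y_d)$ and I need a vector automaton for $C(y_1,\ldots,y_d)$, the characteristic series of the semilinear set $C = \uplus_{i=1}^p \vect{c_i}+P_i^*$. For each linear piece $\vect{c_i}+P_i^*$ I would build a small vector automaton: a chain of $\|\vect{c_i}\|_1 \le d\|\AA\|_\infty$ transitions reading out the constant $\vect{c_i}$ one unit vector at a time (or, more economically, a single transition labeled $\vect{c_i}$, which is legitimate since the vectors may be nonzero — but one must handle $\vect{c_i}=\vect 0$, e.g.\ with an $\varepsilon$-free trick by absorbing it into the first period), followed by a single ``hub'' state carrying $|P_i|$ self-loops labeled by the periods $\vect{p}\in P_i$ (again nonzero, since periods in a semilinear presentation may be assumed nonzero). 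Taking the disjoint union over $i$ with a common fresh initial state, and then prepending a single self-loop labeled $\vect{e}_x = (1,0,\ldots,0)$ on a new initial state to account for the factor $\frac{1}{1-x}$, gives a vector automaton $\VV_{\overline{C}}$ whose generating series is $\overline{C}$. Its number of states is $O(p + \sum_i |P_i| + pd\|\AA\|_\infty)$, its $\|\cdot\|_\infty$ is $\le \|\AA\|_\infty$, and its out-degree is $\le \max_i|P_i|+1 \le |\AA|$. The main care point is to make the state count $\le |\AA|\cdot\|\AA\|_\infty$ (not larger): using single-transition encodings of the $\vect{c_i}$'s rather than unit-vector chains keeps the number of states at $O(p+\sum_i|P_i|) = O(|\AA|)$, and one checks $p + \sum_i |P_i| + p + 1 \le |\AA|$ up to harmless additive slack absorbed in the $O(\cdot)$ of the exponents. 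Lemma~\ref{lm:automatontorational} then gives the same shape of bounds as for $\overline{A}$: degrees at most $(\text{number of states})\cdot\|\AA\|_\infty \le |\AA|\cdot\|\AA\|_\infty - 1$ and $\|\cdot\|_\infty \le (1+\text{out-deg})^{\#\text{states}}\cdot(\#\text{states})^{\#\text{states}/2} \le |\AA|^{5|\AA|/2}$.

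\textbf{Expected obstacle.} The genuinely delicate part is purely bookkeeping: making sure that both vector automata can be forced to have $\le |\AA|$ states and out-degree $\le |\AA|$ after handling the degenerate cases (zero constants, the factor $\frac{1}{1-x}$, alphabet vs.\ dimension), so that the exponents genuinely close up to $\mdeg \le |\AA|\cdot\|\AA\|_\infty - 1$ and $\|\cdot\|_\infty \le |\AA|^{5|\AA|/2}$ rather than something slightly larger like $|\AA|^{3|\AA|}$. I would double-check the constant $5/2$ by writing $(1+|\AA|)^{|\AA|}|\AA|^{|\AA|/2} \le |\AA|^{|\AA|}\cdot 2^{|\AA|}\cdot|\AA|^{|\AA|/2}$ and verifying $2^{|\AA|} \le |\AA|^{|\AA|}$ for $|\AA|\ge 2$, which gives the stated exponent with room to spare. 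No deep idea is needed beyond Lemma~\ref{lm:automatontorational}; the statement is essentially a corollary.
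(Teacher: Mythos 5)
Your treatment of $\overline{A}$ is exactly the paper's: the same vector automaton of dimension $d+1$ obtained by relabelling each transition with $(1,\vect v)$, the same invocation of Lemma~\ref{lm:automatontorational}, the same arithmetic. The problem is the second half. The automaton you propose for the support series of $C$ --- one ``hub'' state per linear component $\vect{c_i}+P_i^*$ carrying a self-loop for each period --- does not have the right generating series. Lemma~\ref{lm:automatontorational} bounds the series counting \emph{runs}, and in your hub automaton a vector $\vect{c_i}+\lambda_1\vect{p_1}+\cdots+\lambda_k\vect{p_k}$ is reached by one run per \emph{ordering} of the multiset of periods, i.e.\ by $\binom{\lambda_1+\cdots+\lambda_k}{\lambda_1,\ldots,\lambda_k}$ runs. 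Unambiguity of the presentation makes the multiset of periods unique, not the ordering, so your automaton's series is $\sum_i \vect y^{\vect{c_i}}\bigl(1-\sum_{\vect p\in P_i}\vect y^{\vect p}\bigr)^{-1}$ rather than the characteristic series $\sum_i \vect y^{\vect{c_i}}\prod_{\vect p\in P_i}\bigl(1-\vect y^{\vect p}\bigr)^{-1}$; already for $P_i=\{(1,0),(0,1)\}$ the coefficient of $y_1y_2$ is $2$, not $1$. Since the whole point of $\overline{C}$ is that its coefficients are $0/1$ indicators of membership in $C$ (this is what lets the Hadamard product select accepting runs), the rational fraction you extract from this automaton is not a representation of $\overline{C}$, and the bounds you derive do not apply to it.

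The paper avoids exactly this trap: it builds an \emph{unambiguous} vector automaton for $C$ with one state per period, in which a transition leaving the state of $\vect{v}_{k'}^j$ may only enter states of periods of index at least $k'$, so the periods are emitted in a canonical sorted order and each vector of $C$ has exactly one accepting run (the case where some $\vect{c_i}=\vect 0$ is treated separately, since transitions labelled $\vect 0$ are forbidden). This automaton has $\sum_i|P_i|+1\le|\AA|$ states and out-degree at most $|\AA|-1$, after which Lemma~\ref{lm:automatontorational} gives the stated bounds; the factor $\frac{1}{1-x}$ is then incorporated not by adding a looping initial state but simply by multiplying the denominator by $(1-x)$, which changes neither $\mdeg$ nor $\|\cdot\|_\infty$. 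To repair your proof you would have to either enforce an ordering of the periods inside the automaton (which is precisely the paper's construction) or bypass automata for this part and bound $\sum_i \vect y^{\vect{c_i}}\prod_{\vect p\in P_i}(1-\vect y^{\vect p})^{-1}$ over a common denominator directly; as written, the $\overline{C}$ half of the argument fails.
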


\begin{proof} For $\bar{A}$, we define a vector automata $\VV$ of dimension $d+1$ whose generating series $V(x_1,y_1,\ldots,y_n)$ satisfies $V(x,y_1,\ldots,y_d) = A(x,y_1,\ldots,y_d)$.

The vector automata $\VV=(Q,q_I,F,\Delta_\VV)$ has the same state set $Q$ as $\AA$, the same inital state and final states. For every transition $(p,a,\vect{v},q)$ of $\AA$, we have a transition $(p,(1,v_1,\ldots,v_d),q)$ in $\Delta_\VV$. By construction, we have $V(x,y_1,\ldots,y_d) = A(x,y_1,\ldots,y_d)$.
From Lemma~\ref{lm:automatontorational}, we have that $A=\frac{P}{Q}(x,y_1,\ldots,y_d)$ with  $\mdeg(P), \mdeg(Q)\leq |Q|\cdot \|\VV\|_\infty \leq |\AA| \cdot \|\AA\|_\infty -1$ and $
    \|P\|_\infty, \|Q\|_\infty \leq {(1+d^{out}_{\VV})}^{|Q|}{|Q|}^{|Q|/2} \leq |\AA|^{|\AA|/2+1} \leq |\AA|^{5|\AA|/2}$.

We now derive similar bounds for $C(y_1,\ldots,y_d)$ the 
support series of the semi-linear set $C$. From the unambiguous presentation of $C$, we build an unambiguous vector automaton $\VV'$ accepting $C$. It immediately follows that $\VV'(y_1,\ldots,y_d)=C(y_1,\ldots,y_d)$. 

For all $j \in [1,p]$, we fix an enumeration $\vect{v}_1^j,\ldots,\vect{v}_{|P_j|}^j$ of the 
vectors in $P_j$. The vector automaton $\VV'$ has an initial state $q_I$ and a state for each vector in the $P_j$'s. For $j \in [1,p]$ and $k \in [1,|P_j|]$, we denote by $q_k^j$ the state corresponding to the vector $\vect{v}_k^j$.

If none of the $\vect{c_i} \in [1,p]$ is equal to $\vect{0}$, the construction is straightforward. 
From the initial state $q_I$, there is a transition labeled by $\vect{c}_j$ to every state $q_k^j$ for all $j \in [1,p]$ and $k \in [1,|P_j|]$. In addition, for all $j \in [1,p]$ and $k'\leq k \in [1,|P_j|]$, there is a transition from $q_{k'}^j$ to $q_k^j$ labeled by $\vect{v}_{k'}^j$. The unambiguity of the automaton $\VV'$ immediately follows from the unambiguity of the presentation of $C$. Note that we need to require that all the $\vect{c_i}$'s are non-null as our definition of vector automaton does not allow for transitions labeled by $\vect{0}$.

If one of the vectors $\vect{c_i}$ is equal to $\vect{0}$, then by unambiguity of the presentation, exactly one of the $\vect{c_i}$ is equal to $\vect{0}$.  W.l.o.g. we assume that $\vect{c_0}=\vect{0}$. From the initial state $q_I$, there is a transition labeled by $\vect{c}_j$ to every state $q_k^j$ for all $j \in [2,p]$ and $k \in [1,|P_j|]$. Furthermore, for all $k \in [1,|P_1|]$, there is a transition from $q_I$ to $q_k^1$ labelled by $\vect{v}_k^1$. In addition, for all $j \in [1,p]$ and $k'\leq k \in [1,|P_j|]$, there is a transition from $q_{k'}^j$ to $q_k^j$ labeled by $\vect{v}_{k'}^j$. The unambiguity of the automaton $\VV'$ immediately follows from the unambiguity of the presentation of $C$.

In both cases, the out-degree of $\VV'$ is at most $\sum_{j \in [1,p]} (|P_j|)\leq |\AA|-1$, its number of states is $\sum_{j \in [1,p]} (|P_j|)+1 \leq |\AA|$ and $\|\VV'\|_\infty \leq \|\AA\|_\infty$.
By Lemma~\ref{lm:automatontorational},  we have that $C=\frac{P}{Q}(x,y_1,\ldots,y_d)$ with  $\mdeg(P), \mdeg(Q)\leq |\AA|\cdot \|\AA\|_\infty-1$ and $
    \|P\|_\infty, \|Q\|_\infty \leq {|\AA|}^{|\AA|}{|\AA|}^{|\AA|/2}  \leq |\AA|^{5|\AA|/2}$.
    
For the bounds on $\overline{C}$, 
remark that $\overline{C}= \frac{1}{1-x} \cdot C = \frac{P(y_1,\ldots,y_d)}{(1-x) \cdot Q(y_1,\ldots,y_d)}$. As $\|(1-x) \cdot Q(y_1,\ldots,y_d)\|_\infty = \|Q(y_1,\ldots,y_d)\|_\infty$ and $\mdeg((1-x) \cdot Q(y_1,\ldots,y_d))=\mdeg(Q(y_1,\ldots,y_d))$, we can conclude.
\end{proof}
	
We can now apply the bounds for the Hadamard product of two rational series obtained in Theorem~\ref{thm:boundhadamardproductoftworational} and the bound for the specialization to $1$ obtained in Proposition~\ref{prop:boundsspecialization} to obtain a bound on differential equations satisfied by $A(x)$.

\propboundrecpa*

\begin{proof}
Combining the bounds for the Hadamard product obtained in Theorem~\ref{thm:boundhadamardproductoftworational} with the bounds for $\overline{A}$ and $\overline{C}$ in Lemma~\ref{lem:bounds-a-c}, we obtain that the weighted generating series $G(x,y_1,\ldots,y_d)$ satisfy a non-trivial equation of the form:
\[
p_r(x,\vect y)\partial_{x}^r G(x,\vect y)+\ldots+ p_0(\vect y)G(x,\vect y)=0,
\]
with for all $r+\deg({p_{i}})< ((d+2)M)^{O(d)}$ and $\log\|p_{i}\|_\infty\leq ((d+2)M)^{O(d^2)}(1+\log(\normF))$ where
$M  = |\AA| \cdots \|\AA\|_\infty$ and $\normF  = |\AA|^{5|\AA|/2}$. For all $i \in [0,r]$, these bounds simplify to:
$r+\deg({p_{i}})< ((d+1)|\AA|\|\AA\|_\infty)^{O(d)}$ and $\log\|p_{i}\|_\infty\leq ((d+1)|\AA| \|\AA\|_\infty)^{O(d^2)}$.

By Proposition~\ref{prop:boundsspecialization}, we obtain that $A(x)=G(x,1,\ldots,x)$ satisfies a non-trivial equation of the form:
	\[q_s(x)\partial_x^s A(x)+\cdots+ q_0(x)A(x)=0\]
with $s \leq r$ and for $i \in [0,s]$, $\deg(q_i) \leq \mdeg(p_i)$ and $\|q_i\|_\infty \leq \|p_i\|_\infty   (\deg_m(p_i)+1)^d \; 2^{\deg(p_i)}$.

In particular for all $i \in [0,s]$, $\log \|q_i\|_\infty \leq ((d+1)|\AA| \|\AA\|_\infty)^{O(d^2)}$.

\end{proof}

\section{Proof of Theorem \ref{thm:inclusionbound}}

\theoreminclusion*

\begin{proof}
By Proposition~\ref{prop:intersectionPA}, we can construct a weakly-unambiguous $\PA$ $\CC$ accepting $L(\AA) \cap L(\BB)$ with $|\CC| \leq |\AA| |\BB|$, $\|\CC\|_\infty = \max(\|\AA\|_\infty,\|\BB\|_\infty)$ of dimension $d_\AA+d_\BB$.  The series $H(x)\eqdef A(x)-C(x)$ counts the number of words of length $n$ in $L(\AA) \subseteq L(\BB)$. In particular, $L(\AA) \subseteq L(\BB)$ if and only if $H(x)=0$. 

Being the difference of two holonomic series, $H(x)=\sum_{n \geq 0} u_n x^n$ is holonomic. Therefore, its coefficients satisfy a recurrence equation of the form:
\[\sum_{k=-s}^{S} t_k(n)u_{n+k}=0, \, \text{for $n\geq s$, with $t_S\not= 0$}\] 
which we can be rewritten as:
\[
t_S(n-S)u_{n} = -\sum_{k=1}^{S+s}t_{S-k}(n-S) u_{n-k} \, \text{for $n\geq S+s$}.
\]
In particular if $t_S(n-S)$ is not null, if $u_{n-1}=\cdots=u_{n-S-s}=0$ then $u_n=0$. To ensure that $t_S(n-S)$ is not null, it is enough to take $n-S \geq \|t_S\|_\infty+1$ (see Lemma~\ref{lem:bound poly}). In particular, if we take $W = s+S+\|t_S\|+1$, we have that $H(x)=0$ if and only if for all $n \leq W$, $u_n=0$. In terms of languages, this equivalence can be restated as $L(\AA) \neq\subseteq L(\BB)$ if and only if $L(\AA)\setminus L(\BB)$ contains a word of length at most $W$.

To obtain a bound on $W$ in terms of the size of $\AA$ and $\BB$, we first obtain bounds on differential partial equation satisfies by $H$. If we take $d=d_\AA+d_\BB$, $M=|A||B|\|A\|_\infty \|B\|_\infty$, we have, by Proposition~\ref{prop:bound-rec-pa}, that both $A$ and $C$ satisfies a non-trivial differential equation of the form:
\[
p_t(x) \partial^t F(x) + \cdots + p_0(x) F(x) = 0
\]
with $t \leq (d M)^{O(d)}$, $\deg(p_i) \leq (dM)^{O(d)}$ and $\log(\|p_i\|)\leq (dM)^{O(d^2)}$.

By Proposition~\ref{prop:bound-holonomic-sum}, $H$ satisfies a non-trivial differential equation of the form:
\[
q_r(x) \partial^r H(x) + \cdots + q_0(x) H(x) = 0
\]
with $r \leq (d M)^{O(d)}$, $\deg(p_i) \leq (dM)^{O(d)}$ and $\log(\|p_i\|)\leq (dM)^{O(d^2)}$.

By Proposition~\ref{prop:boundsdiffeqtorec}, its coefficients satisfy a recurrence equation of the form:
\[\sum_{k=-s}^{S} t_k(n)u_{n+k}=0, \, \text{for $n\geq s$, with $t_S\not= 0$}\] 
with $s \leq (dM)^{O(d)}$, $S \leq (d M)^{O(d)}$ and $\log(\|t_S\|_\infty) \leq  (dM)^{O(d^2)}$.

Hence we can take $W = 2^{2^{O(d^2\log(dM))}}$. 

\end{proof}

\section{Algorithmic consequences: the inclusion problem}

The aim of this section is to prove our announced result on the complexity of testing the inclusion:

\corcomplexite*

To prove this statement, we need some preliminary algorithms to be able to count the numbers of accepted words of a fixed size $n$ in a weakly-unambiguous automaton.

\subsection{Counting the runs in a vector automaton}
\renewcommand{\VV}[0]{\mathcal V}

In this section, we are given a vector automaton $\VV=(Q, q_I, F, \Delta)$ over $\NN^d$. We recall that $Q$ is the set of states, $q_I$ the initial state, $F$ the final state, $\Delta$ the set of transitions of the form $(q,\vect v, q')$ with $\vect v\neq \vect 0$. We denote by $\|\VV\|_\infty$ the greatest coordinate appearing in the vectors in $\Delta$.

For two vectors $\vect{v},\vect{v'}$ in $\NN^d$, we note $\vect{v}\preceq\vect{v'}$ if for every $i\in [d]$, $v_i\leq v'_i$. 

For every $q\in Q$, for every vector $\vect v\in \NN^d$, we denote by $q[\vect v]$ the number of runs in the automaton from $q_I$ to $q$ labeled by the vector $\vect v$.

The goal of this section is to compute, for any given a bound $K$, the numbers $q[\vect v]$ for all state $q\in Q$, and all vector $\vect{v}$ such that $\|\vect v\|_\infty\leq K$.  

A naive approach consists in enumerating every run of length at most $(K+1)^d$ in the automaton, and keep track of the vectors labelling them, leading to an algorithm exponential in $K$. To improve on this, first observe that the numbers $q[\vect v]$ for $q\in Q$, $\vect v\in \NN^d$ verify the following recurrence:
	\begin{equation}\label{eq:recurrenceforvectorautomaton}q[\vect v]=\sum_{\substack{(q',\vect u,q)\in \Delta\\\text{ with } \vect u\preceq \vect v}}q'[\vect v - \vect u].
	\end{equation}
Using Equation~\eqref{eq:recurrenceforvectorautomaton} and
dynamic programming, we get the following statement.

\begin{proposition}\label{prop:computecoeffsvectorautomaton}
	Given $K\in\mathbb{N}$ and a vector automaton $\VV=(Q, q_I, F, \Delta)$, we can compute all the values $q[\vect v]$ for $q\in Q$ and $\vect v\in\mathbb{N}^d$ such that $\|\vect v\|_\infty\leq K$ in $\OO((K+1)^{2d}|\Delta|\log{|\Delta|})$ time.
\end{proposition}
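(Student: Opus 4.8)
The plan is to turn the recurrence~\eqref{eq:recurrenceforvectorautomaton} into a dynamic-programming table indexed by pairs $(q,\vect v)$ with $q\in Q$ and $\|\vect v\|_\infty\le K$, filled in an order compatible with $\preceq$. First I would record the base case: since every transition of $\VV$ carries a vector $\neq\vect 0$, the only run labelled by $\vect 0$ is the empty run, so $q[\vect 0]=1$ if $q=q_I$ and $0$ otherwise. For $\vect v\neq\vect 0$, cutting off the last transition of a run from $q_I$ to $q$ labelled $\vect v$ yields a bijection between such runs and pairs consisting of a transition $(q',\vect u,q)\in\Delta$ with $\vect u\preceq\vect v$ and a run from $q_I$ to $q'$ labelled $\vect v-\vect u$; this is exactly~\eqref{eq:recurrenceforvectorautomaton}. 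Since $\vect u\neq\vect 0$, one has $\|\vect v-\vect u\|_1<\|\vect v\|_1$ and $\|\vect v-\vect u\|_\infty\le\|\vect v\|_\infty\le K$, so every entry needed to compute $q[\vect v]$ lies strictly earlier for the order ``increasing $\|\cdot\|_1$'' (or simply for the lexicographic order on $\mathbb N^d$, which is also a linear extension of $\preceq$ and is trivial to enumerate by $d$ nested loops).

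The algorithm then enumerates the $(K+1)^d$ vectors $\vect v$ with $\|\vect v\|_\infty\le K$ in this order; for each $\vect v$ it sets $q[\vect v]:=0$ for all $q$, and for every transition $(q',\vect u,q)\in\Delta$ with $\vect u\preceq\vect v$ it adds $q'[\vect v-\vect u]$ to $q[\vect v]$ (the entry $q'[\vect v-\vect u]$ is already available by the previous paragraph). Correctness is then a straightforward induction on the rank of $\vect v$ in the chosen order, using the recurrence and the base case; note that $F$ plays no role here, it is only used when this table is later aggregated.

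For the running time, observe first that a run counted by $q[\vect v]$ with $\|\vect v\|_\infty\le K$ has length at most $\|\vect v\|_1\le dK$, hence $q[\vect v]\le 1+|\Delta|+\dots+|\Delta|^{dK}\le|\Delta|^{dK+1}$ (assuming $|\Delta|\ge2$; the cases $|\Delta|\le1$ are trivial). So every table entry is a non-negative integer of bit-size $O((dK+1)\log|\Delta|)=O((K+1)^d\log|\Delta|)$, using the elementary inequality $dK+1\le(K+1)^d$ valid for all $d\ge1$, $K\ge0$. Enumerating the vectors and testing $\vect u\preceq\vect v$ costs $O(d)$ per $(\text{vector},\text{transition})$ pair, which is dominated; the expensive operations are the at most $(K+1)^d\cdot|\Delta|$ integer additions, each of cost $O((K+1)^d\log|\Delta|)$ in the bit model, giving the announced $O((K+1)^{2d}\,|\Delta|\,\log|\Delta|)$. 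The only mildly delicate point is this last bit-size bookkeeping — bounding the size of the computed integers and checking that the extra factor $dK+1$ is absorbed by $(K+1)^d$ so as to match the stated form; everything else is routine dynamic programming.
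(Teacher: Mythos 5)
Your proposal is correct and follows essentially the same route as the paper: unroll the recurrence~\eqref{eq:recurrenceforvectorautomaton} by dynamic programming over the lexicographic enumeration of the $(K+1)^d$ vectors, and charge each of the at most $(K+1)^d|\Delta|$ additions the bit-size $\OO((K+1)^d\log|\Delta|)$ of the table entries. The only cosmetic difference is that you bound the run length by $\|\vect v\|_1\le dK$ and then absorb $dK+1\le(K+1)^d$, whereas the paper directly uses the looser bound of $(K+1)^d$ transitions; both give the same entry-size bound and the same final complexity.
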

\begin{proof}
	Notice that since there are no transitions labeled by $\vect 0$, a run in $\mathcal V$ labeled by $\vect v$ verifying $\|\vect v\|_\infty\leq K$ takes at most $(K+1)^d$ transitions. So $q[\vect v]\leq |\Delta|^{(K+1)^d}$ for every $q\in Q$ and every $\vect v$ such that $\|\vect v\|_\infty\leq K$.
	
	We simply unroll the recurrence formula in Equation~\eqref{eq:recurrenceforvectorautomaton}, by enumerating all vectors $\vect v$ such that $\|\vect{v}\|_\infty\leq K$ in lexicographic order. 
	 Enumerating such vectors is done with $(K+1)^d$ iterations. Inside these iterations, we make at most $|\Delta|$ operations consisting in computing the vectors $\vect v-\vect u$, which is done in $\OO(d\log K)$, and adding $q'[\vect v-\vect u]$ to $q[\vect v]$, which is done in $\OO((K+1)^d\log|\Delta|)$. 
\end{proof}

\subsection{Counting the accepted words in a weakly-unambiguous automaton}

In this section, we want an algorithmic solution to count the total numbers of words of length at most $n\in\NN$ accepted by a given weakly-unmabiguous automaton.

\begin{proposition}\label{prop:enumeration_weakly_unambiguous}Given a weakly-unambiguous PA $\mathcal{B}=(Q, q_I, F, C, \Delta)$ of dimension $d$, without $\varepsilon$-transitions, the total number of words of length at most $n$ accepted by $\BB$
can be computed in $(n\|\BB\|_\infty|\BB|)^{\OO(d)}$ time. 
\end{proposition}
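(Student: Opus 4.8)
The plan is to reduce the word-counting problem to run-counting in an associated weighted structure, and then apply the dynamic-programming machinery of the previous subsection. First I would use the weak-unambiguity hypothesis: since $\BB$ has at most one accepting run per word, the number of words of length exactly $m$ accepted by $\BB$ equals the number of accepting runs of $\BB$ labeled by a word of length $m$. So it suffices to count, for each $m\le n$, the accepting runs of $\BB$ of length $m$, and sum.

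Next I would bring in the semilinear constraint $C\eqdef\uplus_{i=1}^p \vect{c_i}+P_i^*$. An accepting run produces a vector $\vect v\in\NN^d$ and is accepting iff $\vect v\in C$. Mimicking the proof of Proposition~\ref{prop:unambPAholonomic} and Lemma~\ref{lem:bounds-a-c}, I would build a vector automaton $\VV$ of dimension $d+1$ (the extra coordinate tracking the length, each transition of $\BB$ labeled $(a,\vect v)$ becoming a transition labeled $(1,v_1,\dots,v_d)$) whose runs from $q_I$ to a final state correspond bijectively to runs of $\BB$, so that $q_f[(m,\vect v)]$ counts runs of $\BB$ of length $m$ labeled by $\vect v$. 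Crucially, if the value of a counter coordinate ever exceeds $n\|\BB\|_\infty$ along a length-$\le n$ run it can be discarded; in fact any relevant vector $\vect v$ satisfies $\|\vect v\|_\infty\le n\|\BB\|_\infty$. So, taking $K = n\|\BB\|_\infty$, Proposition~\ref{prop:computecoeffsvectorautomaton} computes all values $q_f[(m,\vect v)]$ with $m\le n$ and $\|\vect v\|_\infty\le K$ in time $\OO((K+1)^{2(d+1)}|\Delta_\VV|\log|\Delta_\VV|)$, which is $(n\|\BB\|_\infty|\BB|)^{\OO(d)}$ since $|\Delta_\VV|=|\Delta|\le|\BB|$.

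It then remains to sum $q_f[(m,\vect v)]$ over $\vect v\in C$, $\vect v\preceq (K,\dots,K)$, and over all final states $q_f$ and $m\le n$. Here I would note that whether a given $\vect v$ with bounded coordinates belongs to $C$ can be tested within the same time bound: using the unambiguous presentation, build (as in Lemma~\ref{lem:bounds-a-c}) an unambiguous vector automaton $\VV'$ accepting $C$ with $\le|\BB|$ states and out-degree $\le|\BB|$; by Proposition~\ref{prop:computecoeffsvectorautomaton} with the same $K$ we obtain, for each $\vect v$, whether $\vect v\in C$ (indeed the count of runs of $\VV'$ reaching a final state with label $\vect v$ is $1$ iff $\vect v\in C$ and $0$ otherwise, by unambiguity). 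Iterating over the $\le(K+1)^d$ candidate vectors $\vect v$, the $\le|Q|$ final states and the $\le n$ lengths, accumulating $q_f[(m,\vect v)]$ whenever $\vect v\in C$, stays within $(n\|\BB\|_\infty|\BB|)^{\OO(d)}$ time, including the cost of arithmetic on the integers involved (which are bounded by $|\Delta|^{(K+1)^{d+1}}$, hence have $(n\|\BB\|_\infty|\BB|)^{\OO(d)}$ bit-length).

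The main obstacle I anticipate is purely bookkeeping rather than conceptual: making sure the length coordinate is handled so that the vector-automaton reduction is faithful (no transition of $\VV$ labeled by $\vect 0$, which is why adding the length component is convenient — it guarantees every transition carries a nonzero vector), and carefully justifying that restricting to $\|\vect v\|_\infty\le K$ loses no accepting run of length $\le n$ while keeping the polynomial-in-$K$ state space. Handling the edge case $\vect c_i=\vect 0$ in the presentation of $C$ (exactly one such $i$, by unambiguity) must be done as in Lemma~\ref{lem:bounds-a-c}. Once those are in place, the running time is the product of $(K+1)^{\OO(d)}$, $|\BB|^{\OO(1)}$ and $n$, which absorbs into $(n\|\BB\|_\infty|\BB|)^{\OO(d)}$.
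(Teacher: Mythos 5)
Your proposal is correct and follows essentially the same route as the paper: reduce word-counting to run-counting via weak-unambiguity, build the $(d+1)$-dimensional vector automaton with a length coordinate (which conveniently rules out $\vect 0$-labeled transitions), apply Proposition~\ref{prop:computecoeffsvectorautomaton} with $K=n\|\BB\|_\infty$ both to it and to the unambiguous vector automaton for $C$ from Lemma~\ref{lem:bounds-a-c} (including the $\vect{c_i}=\vect 0$ edge case), and sum over final states, lengths and constraint-satisfying vectors within the stated time bound. Nothing essential is missing.
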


\begin{proof}
    We rely on the same constructions as in the proof of Lemma~\ref{lem:bounds-a-c}. First, from $\BB$ we can build the vector automaton $\VV$ of dimension $d+1$ for the paths in $\BB$ from $q_I$ to a final state: such a path $(w,\vect v)$ in $\BB$ is in one-to-one relation with a path labeled $(|w|,v_1,\ldots,v_d)$ in $\VV$. By construction, $\VV$ and $\BB$ share the same transition system. As there is no $\varepsilon$-transitions in $\BB$, any run labeled by $(w,\vect v)$ has length $|w|$ and is such that $\|\vect v\|_\infty\leq |w|\,\|\BB\|_\infty$. Hence, applying Proposition~\ref{prop:computecoeffsvectorautomaton} to $\VV$ with $K=n\|\BB\|_\infty$ we can compute in time $\OO((n\|\BB\|_\infty+1)^{2d+2}|\Delta|\log|\Delta|)=\OO((n\|\BB\|_\infty+1)^{2d+2}|\BB|\log|\BB|)=(n\|\BB\|_\infty|\BB|)^{\OO(d)}$ every $q[\vect v]$ for $q\in Q$ and $\vect v\in\NN^{d+1}$ such that $\|v\|_\infty\leq n\|\BB\|_\infty$.

    For the semilinear part, we use the construction of the automaton $\VV'$, also given in the proof of Lemma~\ref{lem:bounds-a-c}. For a unambiguous representation
 $C=\cup_{i=1}^pC_i$, where $C_i=\vect{c_i}+P_i^*$, the associated vector automaton $\VV'$ has $p+1$ states, $s_I$ the initial state and $s_1,\ldots s_p$ for the $C_i$. Applying Proposition~\ref{prop:computecoeffsvectorautomaton}, we can compute all the $s_i[\vect v]$, for all $i$ and all $\vect v\in\mathbb{N}^d$ such that $\|\vect v\|_\infty\leq \|\vect \BB\|_\infty$, also in time $(n\|\BB\|_\infty|\BB|)^{\OO(d)}$.
 As $\VV'$ is unambiguous, $s_i[\vect v]$ can only value $1$ or $0$, whether $\vect v$ is in $C_i$ or not (except for $s_I[\vect v]$, which value $0$ iff $\vect v=\vect 0$ and $\vect 0\in C$).

Finally, to obtain the total number of words of length $k\leq n$ accepted by the $\BB$, we sum the values $q[\vect{v'}]$ for every state $q\in F$, and over every vector $\vect{v'}=[k,\vect v]$, such that $s[\vect v]=1$ for at least one state $s$ of $\VV'$. For the cost of computing these additions, notice that $q[\vect{v'}]\leq |\Delta|^k\leq |\BB|^n$. Thus, this sum can be done in $\OO((n\|B\|_\infty+1)^d(p+|F|n\log|\BB|))=(n\|\BB\|_\infty|\BB|)^{\OO(d)}$ time.
\end{proof}

\subsection{Proof of Corollary~\ref{cor:complexite}}

We can finally prove the statement.

\begin{proof}[Proof of of Corollary~\ref{cor:complexite}]
	As sketched in the main presentation, the proof relies on the fact that $L(\AA)\subseteq L(\BB)$ if and only if $L(\AA)\cap L(\BB)=L(\AA)$. By Theorem~\ref{thm:inclusionbound}, if 
	$L(\AA)\cap L(\BB)\neq L(\AA)$, there exists a word $u\in L(\AA) $ that is not in $L(\AA)\cap L(\BB)$ of length at most
	$2^{2^{\OO(d^2\log(dM))}}$, where $d=d_\AA+d_\BB$ and 
	$M=|\AA|\,|\BB|\,\|\AA\|_\infty\|\BB\|_\infty$: we get this bound by studying the maxdegree and infinite norm of the product automaton of $\AA$ and $\BB$ built to recognize the intersection.
	It means that there exists some constant $c>0$ such that looking for words up to size $W =\lfloor 2^{2^{cd^2\log(dM)}}\rfloor$ is sufficient to decide whether
	$L(\AA)\subseteq L(\BB)$ or not.
	
	Thus, we just have to compute the number of words in 
	$L(\AA)\subseteq L(\BB)$ and $L(\AA)$ up to size $W$ to check if they are equal. By Proposition~\ref{prop:enumeration_weakly_unambiguous}, this can be done in time  $(WM)^{\OO(d)}$. Hence there exists some constant $c'>0$ such that
	\[
	(WM)^{\OO(d} \leq (WM)^{c'd}
	\leq \left(2^{2^{cd^2\log(dM)}}\right)^{c'd}M^{c'd}
	= 2^{c'd2^{cd^2\log(dM)}+c'd\log M} \leq 2^{2^{c''d^2\log(dM)}}.
	\]
	for a sufficiently large constant $c''$. This concludes the proof.
\end{proof}

\end{document}